\newtheorem*{thmunb}{Theorem}	
\newcommand{\Cin}{\underline{C}_{in}}
\newcommand{\rtheta}{\Re \theta}
\newcommand{\itheta}{\Im \theta}
\newcommand{\rxi}{\Re \xi}
\newcommand{\rphi}{\Re \phi}
\newcommand{\iphi}{\Im \phi}
\newcommand{\ixi}{\Im \xi}
\newcommand{\usi}{u_{\mathcal{S}_{i^+}}}
\newcommand{\ttheta}{\tilde{\theta}}
\newcommand{\txi}{\tilde{\xi}}
\newcommand{\tphi}{\tilde{\phi}}
\newcommand{\ep}{\epsilon}
\newcommand{\uch}{u_{\CH}}
\newcommand{\exd}{e^{\frac{K_-}{4}v}}
\newcommand{\uS}{u_{\mathcal{S}}(v)}
\newcommand{\vS}{v_{\mathcal{S}}(u)}
\newcommand{\A}{\mathcal{A}}
\newcommand{\T}{\mathcal{T}}
\newcommand{\R}{\mathcal{R}}
\newcommand{\uR}{u_{\mathcal{R}}}
\newcommand{\CH}{\mathcal{CH}_{i^+}}
\newtheorem{theo}{Theorem}
\theoremstyle{plain}
\newtheorem{thm}{Theorem}[section]
\newtheorem{lemma}[thm]{Lemma}
\newtheorem{prop}[thm]{Proposition}
\newtheorem{cor}[thm]{Corollary}
\newtheorem{open}[thm]{Open problem}
\newtheorem{conjecture}[thm]{Conjecture}
\theoremstyle{remark}
\newtheorem{rmk}{Remark}[section]
\newtheorem*{rmks}{Remark}
\theoremstyle{definition}
\newcommand{\RR}{\mathbb{R}}
\newcommand{\barL}{\underline{L}}
\newcommand{\rd}{\partial}
\newcommand{\ls}{\lesssim}
\theoremstyle{plain}
\theoremstyle{remark}
\theoremstyle{definition}
\newcommand{\vR}{v_{\mathcal{R}}}
\numberwithin{equation}{section}
\title{The  coexistence of null and spacelike singularities \\inside spherically symmetric black holes}
\author[1]{Maxime~Van~de~Moortel\thanks{maxime.vandemoortel@rutgers.edu}}
\affil[1]{\small  Department of Mathematics, Rutgers University, 
	Hill~Center,~New~Brunswick~NJ~08854,~United~States~of~America \vskip.1pc \  }
\date{\today}
\begin{document}
	\maketitle
	\thispagestyle{empty}
	
	\thispagestyle{empty}
	
	\begin{abstract}  
		In  our previous work  \emph{[Van de Moortel, The breakdown of weak null singularities, Duke Mathematical Journal 172 (15), 2957-3012, 2023]}, we showed that  dynamical black holes formed in charged spherical collapse generically feature both a null weakly singular Cauchy horizon and a stronger (presumably spacelike) singularity, 
		confirming a longstanding conjecture in the physics literature. However, this previous result, based on  a contradiction argument, did not provide  quantitative estimates on  
		the stronger singularity.

		In this study, we adopt a  new approach by analyzing   local initial data inside the black hole that are  consistent with a breakdown of the Cauchy horizon. We  prove that the remaining portion is spacelike  and obtain  sharp spacetime estimates  near the null-spacelike transition. 	Notably, we show that 
		the Kasner exponents of the spacelike portion are positive, in  contrast to the well-known Oppenheimer--Snyder model of gravitational collapse. Moreover, these exponents degenerate to $(1,0,0)$ towards the null-spacelike transition. 
		
		Our result provides the first quantitative instances of a null-spacelike singularity transition inside  a black hole.  In our companion paper \cite{bif2}, we moreover apply our analysis to carry out the construction of a large class of asymptotically flat one or two-ended black holes featuring coexisting null and spacelike singularities.

	\end{abstract}

	\section{Introduction}

	Understanding the strength of the singularity in the interior of an astrophysical black hole is of crucial importance in the mathematics of gravitation. In particular, the statement of the Strong Cosmic Censorship conjecture, at the heart of the very question of determinism in General Relativity,  relies 	on characterizing the   \emph{generic} black hole singularities, see  \cite{ChristoSCC,PenroseSCC} or the review \cite{review}. In the context of gravitational collapse, the  first example of a dynamical black hole was constructed in the celebrated paper of Oppenheimer--Snyder \cite{OppenheimerSnyder}  considering spherically-symmetric solutions of  the Einstein equation in the presence of dust. The singularity inside Oppenheimer--Snyder black holes is spacelike, 
	and coincides in parts with the  singularity $\mathcal{S}=\{r=0\}$ inside the  Schwarzschild black hole solution, which is well-known for its spaghettification
	\footnote{Spaghettification  is typically understood as the manifestation of infinite tidal deformations with a dilation in the radial direction and a contraction in orthoradial directions jointly experienced in finite time by infalling observers \cite{gravitation}.
	} experienced by all infalling observers when reaching $\mathcal{S}$, see e.g.\ \cite{Hawking,gravitation}: \begin{equation}\label{Schwarz}
		g_S = -(1-\frac{2M}{r}) dt^2+ (1-\frac{2M}{r})^{-1} dr^2+ r^2 ( d\theta^2+ \sin^2(\theta) d\varphi^2).
	\end{equation} The prevalence  of  spacelike singularities of Schwarzschild-type (together with other considerations related to the blue-shift instability of the Kerr black hole interior, see Section~\ref{CH.section} for more detail) had given credence to the  ambitious statement that in generic gravitational collapse, the black hole terminal boundary is \emph{everywhere spacelike}, see  \cite{KerrStab,claylecturenotes,JonathanICM,review} for further discussion. The groundbreaking work of Dafermos--Luk \cite{KerrStab}, however, showed that this conjecture is false for the Einstein equations in vacuum, in proving that small perturbations of the Kerr black hole admit a Cauchy horizon, which is a null terminal boundary component (not spacelike). Ironically, this begs the question as to whether the terminal boundary of generic black holes even\footnote{There exists, in fact, two-ended asymptotically flat black holes with no spacelike singularities \cite{nospacelike}; however, our attention here is mostly focused on the setting of gravitational collapse in which black holes are one-ended asymptotically flat, see Section~\ref{WCC.section}.} \emph{contains a spacelike portion}.  A gravitational collapse scenario in which the Cauchy horizon is the only boundary component and ``closes-off'' the spacetime as depicted in Figure~\ref{fig:disproof} is indeed consistent with the result of \cite{KerrStab}, see \cite{MihalisICM}. 	This problem was, however, resolved in the author's  work \cite{breakdown}  for a spherically-symmetric model of gravitational collapse (see Section~\ref{WCC.section} and Section~\ref{breakdown.section}) given by the  Einstein--Maxwell--Klein--Gordon system
	
	\begin{equation} \label{1.1}   Ric_{\mu \nu}(g)- \frac{1}{2}R(g)g_{\mu \nu}= \mathbb{T}^{EM}_{\mu \nu}+  \mathbb{T}^{KG}_{\mu \nu} ,    \end{equation} 
	\begin{equation} \label{2.1} \mathbb{T}^{EM}_{\mu \nu}=2\left(g^{\alpha \beta}F _{\alpha \nu}F_{\beta \mu }-\frac{1}{4}F^{\alpha \beta}F_{\alpha \beta}g_{\mu \nu}\right),
	\end{equation}
	\begin{equation} \label{3.1} \mathbb{T}^{KG}_{\mu \nu}= 2\left( \Re(D _{\mu}\phi \overline{D _{\nu}\phi}) -\frac{1}{2}(g^{\alpha \beta} D _{\alpha}\phi \overline{D _{\beta}\phi} + m ^{2}|\phi|^2  )g_{\mu \nu} \right), \end{equation} \begin{equation} \label{4.1} \nabla^{\mu} F_{\mu \nu}= \frac{ q_{0} }{2}i (\phi \overline{D_{\nu}\phi} -\overline{\phi} D_{\nu}\phi) , \; F=dA ,
	\end{equation} \begin{equation} \label{5.1} g^{\mu \nu} D_{\mu} D_{\nu}\phi = m ^{2} \phi , 	\end{equation} 	where $\phi$ is a scalar field of charge $q_0 \neq 0$ and of mass $m^2 \geq 0$ and $D_{\mu}= \nabla_{\mu}+iq_0 A_{\mu}$ is the gauge derivative.  The system \eqref{1.1}--\eqref{5.1} has  been extensively studied in spherical symmetry and is generally considered as one of the most adequate spherically symmetric models for gravitational collapse \cite{Kommemi} emulating many features of the non-spherically-symmetric Einstein equations, following Wheeler's idea that the effect of electromagnetic charge in the Einstein equations is analogous to that of angular momentum, see \cite{Dafermos:2004jp,Dafermos.Wheeler,review}.
	
	In  \cite{Moi}, the author proved that a generic black hole  solution of \eqref{1.1}--\eqref{5.1} in spherical symmetry admits a (null) Cauchy horizon $\CH$. It is then proven   in \cite{Moi4} that $\CH$ is  weakly singular. Finally, it is shown in \cite{breakdown} that, due to this weak singularity, the Cauchy horizon $\CH$ cannot be the only component of the terminal boundary due to a novel phenomenon identified as ``the breakdown of weak null singularities'':  \begin{thmunb}[\cite{breakdown}]
		The interior of a spherically symmetric dynamical black hole  solution of \eqref{1.1}--\eqref{5.1} admits a weakly singular Cauchy horizon $\CH$ which must break down and give rise to an additional singularity.
	\end{thmunb}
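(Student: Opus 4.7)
The plan is to argue by contradiction: assume the weakly singular Cauchy horizon $\CH$ from \cite{Moi4} is the only component of the terminal boundary of the maximal development. In double null coordinates $(u,v)$ with $\CH = \{v = +\infty\}$ and the event horizon at $u = u_+$, this amounts to saying that the spacetime covers the entire rectangle $\{u_+ \leq u < u_{\max}\} \times [v_+, +\infty)$, where $u_{\max}$ is either $+\infty$ or corresponds to a degeneration at which $r \to 0$ purely along $\CH$. The goal is to derive a contradiction from this assumption.

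\textbf{Step 1 (gathering the inputs).} The preceding works \cite{Moi, Moi4} provide, on $\CH$: the area radius $r$ extends continuously to $r_{\CH}(u) > 0$, the Hawking mass $\rho$ blows up to $+\infty$, and, for generic data, the ingoing energy flux $\int^{\infty} |D_v \phi|^2 / \partial_v r \, dv$ is infinite. This last divergence is the analytic manifestation of the ``weak null singularity'' and is the key quantitative ingredient.

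\textbf{Step 2 (transversal propagation).} Use the outgoing Raychaudhuri equation $\partial_u(\partial_v r / \Omega^2) = -r |D_u \phi|^2 / \Omega^2$ together with the wave equation for $\phi$ and the Maxwell equations to propagate estimates in the $u$-direction from the event horizon. The divergence of the ingoing flux on $\CH$, fed into the transversal evolution through the nonlinear couplings (in particular the gauge term $iq_0 A_u \phi$), sources the blow-up of a geometric quantity --- such as the Hawking mass, or the degeneration $\Omega^2 \to 0$, or $r_{\CH}(u) \to 0$ --- at some finite retarded time $u_* \leq u_{\max}$ along $\CH$.

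\textbf{Step 3 (contradiction).} The blow-up of the geometry at $(u_*, v)$ for all sufficiently large $v$ means that the maximal development cannot contain the entire rectangle posited in the opening assumption. Therefore an additional terminal boundary component $\Sc$ --- either a spacelike singularity $\{r = 0\}$ emanating from the endpoint of $\CH$, or a further null singular component --- must appear, contradicting the assumption that $\CH$ was the only boundary component.

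\textbf{Main obstacle.} Because the singularity on $\CH$ is only \emph{marginally} divergent (logarithmic-type blow-up of the Hawking mass), every estimate in Step 2 is quantitatively borderline and must be carried out sharply. The hardest technical point is handling the full coupled Einstein--Maxwell--Klein--Gordon system transversally to $\CH$: the gauge derivative $D_u \phi = \partial_u \phi + iq_0 A_u \phi$ ties the scalar and electromagnetic unknowns together, so the bootstrap must simultaneously absorb the blow-up of $\rho$ on $\CH$ while retaining control of the Maxwell potential $A_u$, without which the charge-coupling term becomes unmanageable. A further subtlety is to ensure that the derived blow-up is geometric and not a coordinate artefact, which requires working with gauge-invariant quantities and a precise description of the Penrose diagram near the conjectured endpoint of $\CH$.
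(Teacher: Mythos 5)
Your plan reproduces the broad logic of \cite{breakdown} (argue by contradiction against the Penrose diagram of Figure~\ref{fig:disproof}), but the core of your argument, Step~2, is not an argument: you do not identify \emph{which} gauge-invariant quantity must blow up at a finite retarded time $u_*$, nor why the divergent ingoing flux on $\CH$ forces this. In fact the candidates you list cannot do the job: the Hawking mass already blows up identically along $\CH$ (mass inflation) in the assumed scenario without terminating the development, and $r_{\CH}(u)\to 0$ at the endpoint of $\CH$ is precisely what happens in the true solution (it is the \emph{conclusion} of Proposition~\ref{apriori.prop2}, not a contradiction with the hypothesis that $\CH$ closes off the spacetime, since in Figure~\ref{fig:disproof} the horizon is allowed to collapse to $r=0$ where it meets the center). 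More decisively, your proposal is purely local and transversal: it uses only the data of \cite{Moi,Moi4} near $i^+$ and propagation in $u$ from the event horizon, and never invokes the one-ended/gravitational-collapse structure (the center $\Gamma$). But the statement is false without that global input: as recalled in Section~\ref{twoended.section}, there exist two-ended spacetimes (Dafermos \cite{nospacelike}, and Luk--Oh) whose Cauchy horizon is weakly singular with infinite ingoing flux and which nevertheless have $\mathcal{S}=\emptyset$, i.e.\ no breakdown and no additional singularity at any finite $u$. Since the local geometry near $i^+$ is identical in the one- and two-ended cases, any argument of your form, if valid, would ``prove too much'' and contradict those examples; so the implication ``infinite flux on $\CH$ $\Rightarrow$ geometric blow-up at finite $u_*$'' cannot be correct as stated.

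For comparison, the route taken in this paper (and its companion \cite{bif2}) is constructive rather than by contradiction, and its key mechanism is quite different from what you sketch. Lemma~\ref{lemma.hardy} uses only the Raychaudhuri equation \eqref{RaychU} to get $\Omega^2 \lesssim \Omega^2_0(v)\,e^{-\mathcal{F}}$ and the flux-absorption estimate \eqref{apriori.intro}, so that \emph{no quantitative control of $\phi$, $Q$ or $A_u$ is needed} -- the exponential of the scalar-field flux absorbs the polynomial scalar-field and charge terms; this entirely sidesteps what you single out as the ``main obstacle'' (the coupling through $iq_0A_u\phi$). Proposition~\ref{apriori.prop1} then shows that the whole rectangle under $\CH$ is trapped with the sharp bounds \eqref{lambda.est.prelim}--\eqref{nu.est.prelim}, and it is this trapped causal diamond right under $\CH$, played against the global one-ended structure, that precludes the diagram of Figure~\ref{fig:disproof}; Proposition~\ref{apriori.prop2} further excludes an ingoing component $\mathcal{S}_{i^+}$ by integrating $r\partial_u r\approx -1$ against $r^2\lesssim v^{1-2s}$. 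If you want to salvage a contradiction argument in the spirit of \cite{breakdown}, you must inject the one-endedness (the behavior near $\Gamma$ and $b_\Gamma$) explicitly; as written, your Steps~2--3 have no mechanism that distinguishes the impossible one-ended picture from the perfectly possible two-ended one.
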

	In other words, the Penrose diagram of Figure~\ref{fig:disproof} is impossible (see Section~\ref{breakdown.section} for further details).
	While the exact nature of this additional singularity is not obtained in \cite{breakdown}, it can be proven that it is either a locally-naked singularity or a singularity $\mathcal{S}=\{r=0\}$ foliated by spheres of zero area-radius $r$. 	However, due to the fact that \cite{breakdown} proceeds by contradiction, it is not known whether $\mathcal{S}$ is spacelike, let alone how $(g,\phi)$ behave quantitatively near $\mathcal{S}$. Conjecturally, locally-naked singularities are non-generic (see \cite{ChristoCQG,Kommemi,review}, footnote~\ref{footnote3} and Section~\ref{WCC.section}) hence the breakdown of weak null singularities of \cite{breakdown} strongly suggests the presence of a \emph{spacelike} singularity $\mathcal{S}$. It also gives rise to the following fundamental open problem on black hole dynamics in gravitational collapse as already introduced by Dafermos \cite{Dafermos:2004jp} in 2004, see also \cite{MihalisICM,Kommemi}.

	%Christodoulou  \cite{Christo4} proved the instability of the Oppenheimer--Snyder spacetime within spherically-symmetric Einstein-dust perturbations, this instability$\CH$ is routinely interpreted as a symptom of the unphysical character of dust in the absence of pressure \cite{Kommemi,claylecturenotes}.

	\begin{open}[\cite{Dafermos:2004jp,MihalisICM,Kommemi,breakdown}] \label{spacelike.conj}
		Show that the  black hole terminal boundary  in generic gravitational collapse admits both a null Cauchy horizon $\CH$ which is weakly singular, and a spacelike singularity $\mathcal{S}$.
	\end{open}
	
	The main result of this paper (see  Theorem~\ref{thm.I} below) resolves Open Problem~\ref{spacelike.conj} for the spherically-symmetric model \eqref{1.1}--\eqref{5.1}, for spacetimes that do not have a locally-naked singularity\footnote{\label{footnote3}This assumption is natural, as it is conjectured that a \emph{generic black hole} does not admit any such locally-naked singularity. This statement is related to the Weak Cosmic Censorship Conjecture, see Section~\ref{WCC.section} for an extended discussion.} and obey certain  decay assumptions \eqref{decay.intro} in the black hole interior (we moreover construct a large class of gravitational collapse spacetimes satisfying these assumptions in our companion paper \cite{bif2}, see  already Theorem~\ref{thm.II}). 
	For such spacetimes,  we demonstrate that the terminal boundary  consists of a null Cauchy horizon $\CH$ and  a singularity $\mathcal{S}=\{r=0\}$, as depicted in the Penrose diagram of Figure~\ref{fig:spacelikeconj}. Moreover, we show that $\mathcal{S}$ is spacelike near $\CH$ and provide a comprehensive quantitative description of $(g,\phi)$ in the vicinity of $\CH\cap \mathcal{S}$. 
	Arguably, the most surprising feature of our analysis is that $\mathcal{S}$ differs significantly from the Schwarzschild-like singularity of Oppenheiner--Snyder's black hole;  indeed,  infalling observers experience infinite \emph{tidal contractions} in all directions at $\mathcal{S}$, as opposed to Schwarzschild's spaghettification. We will explain in Section~\ref{Kasner.section} why it is natural to conjecture that this phenomenon persists for solutions of \eqref{1.1}-\eqref{5.1} \emph{outside of spherical symmetry}\footnote{\label{BKLf}Conjectures for the Einstein vacuum equations, however, are more delicate to formulate in view of the chaotic dynamics that are expected near  spacelike singularities from  the celebrated BKL scenario \cite{BKL1,BKL2,mixmaster}, see Section~\ref{Kasner.section}.}.

	Our main theorem  can, in fact, be formulated  as a local result (with no reference to the global topology of spacetime) on a causal rectangle on which we \emph{assume that the % of the form $[u_1,u_2) \times [v_0,+\infty)$ (see Figure) 
		Cauchy horizon breaks down}. Our statement  is that the rest of the terminal boundary then contains a spacelike component $\mathcal{S}$, which we describe quantitatively  in  detail using an Eddington--Finkelstein type coordinate $v$. We provide a simplified version of the result below:

	\begin{theo} \label{thm.I} Consider  local initial data in the interior of a black hole  consisting of an ingoing cone $\Cin$ and an outgoing cone $C_{out}$ terminating at the sphere of a weakly singular Cauchy horizon $\CH$, and denote  $\mathcal{B}$ the terminal boundary of the resulting solution of \eqref{1.1}--\eqref{5.1}.  Assume   a  breakdown of  the Cauchy horizon and no locally-naked singularity, i.e., $ \CH = \{v=+\infty\} \underset{\neq}{\subset} \mathcal{B}  $, 
		and that there exists $s>1$ such that the following  hold: \begin{equation}\label{decay.intro}
			v^{-s} \lesssim	|D_v \phi|_{|C_{out}}(v) \lesssim v^{-s},\  |\Im(\bar{\phi} D_v \phi)|_{|C_{out}}(v) \ll v^{-s},\  	|D^2_{vv} \phi|_{|C_{out}}(v) \lesssim v^{-s-1} \text{ as } v\rightarrow +\infty.
		\end{equation}		Then, $\mathcal{B}$ contains a  spacelike singularity $\mathcal{S} \neq \emptyset$ intersecting $\CH$ as depicted in Figure~\ref{fig:local} and the metric near $\CH \cap \mathcal{S}$ is approximated by a Kasner metric of $v$-dependent positive Kasner exponents  $(1-2p(u,v), p(u,v),p(u,v))$
		% that degenerate to $(1,0,0)$, 
		with $$ p(u,v) \approx \frac{1}{v} 	\text{ as } v \rightarrow +\infty.$$
	\end{theo}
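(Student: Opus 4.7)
The plan is to work in a spherically-symmetric double-null gauge $g = -\Omega^2(u,v)\,du\,dv + r^2(u,v)\,d\sigma_{\mathbb{S}^2}$, which reduces \eqref{1.1}--\eqref{5.1} to a coupled system of transport and wave equations for $(r,\Omega^2,Q,\phi)$, with $Q$ the spherically-symmetric charge function. I normalize coordinates so that $C_{out}=\{u=u_0\}$ is parametrized by $v$, with $\CH \cap C_{out}$ located at $v=+\infty$. The breakdown hypothesis $\CH \subsetneq \mathcal{B}$, combined with the no-locally-naked-singularity assumption, implies that for $u$ beyond the terminal sphere of $\CH$, outgoing cones $\{u=\text{const}\}$ are cut off at a finite $v_{\mathcal{S}}(u)<+\infty$ where $r\to 0$, with $v_{\mathcal{S}}(u)\to+\infty$ as $u$ approaches the terminal $u$-value. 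The target is then a quantitative description of the geometry in a neighborhood of $\CH\cap\mathcal{S}$.

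The first substantive step is to propagate the decay \eqref{decay.intro} from $C_{out}$ into the spacetime interior up to (but not including) $\mathcal{S}$. In the region close to $\CH$ I would invoke the local weak null singularity analysis of \cite{Moi,Moi4} to obtain sharp control on $r$, $\Omega^2$, $Q$, $\phi$; past the Hawking-mass blow-up locus I would continue via a bootstrap for the Einstein-Maxwell-Klein-Gordon system. The lower bound $|D_v\phi|\gtrsim v^{-s}$ provides nontrivial scalar kinetic energy, while the smallness of $|\Im(\bar\phi D_v\phi)|$ ensures through the Maxwell equation that the charge stays close to its asymptotic value along $v$, so that $Q^2/r^2$ does not dominate once $r$ shrinks. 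Next, along each outgoing cone $\{v=v_0\}$ with $v_0$ large, I would extract a renormalized asymptotic expansion $r(u,v_0)\sim(u_{\mathcal{S}}(v_0)-u)^{p(v_0)}$ and $\phi(u,v_0)\sim P(v_0)\log(u_{\mathcal{S}}(v_0)-u)$. The key heuristic is that near $\mathcal{S}$ the mass term $m^2\Omega^2\phi$ and the electromagnetic term $q_0 Q\Omega^2/r^2$ in the wave equation become subdominant, reducing the system asymptotically to the Einstein-massless-uncharged scalar model, whose spherically-symmetric solutions are Kasner with spatial exponents $(1-2p,p,p)$ satisfying the scalar constraint $P^2=2p-3p^2$. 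Converting to the proper time $\tau$ perpendicular to the level sets of $r$ yields the Kasner form of the theorem, with $p>0$ (hence positive Kasner exponents for small $p$) forced by the nondegeneracy $|D_v\phi|\gtrsim v^{-s}$.

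The most delicate step, and the main obstacle I anticipate, is obtaining the precise asymptotic $p(u,v)\approx 1/v$ as $v\to+\infty$. Since the rate $1/v$ is independent of the decay exponent $s$, it cannot come from a naive energy count but must reflect a structural feature of the $\CH\to\mathcal{S}$ transition itself. I expect to identify $P(v)$ as the renormalized limit $\lim_{u\to u_{\mathcal{S}}(v)}(u_{\mathcal{S}}(v)-u)\,D_u\phi(u,v)$ and to compute it by integrating the wave equation along ingoing cones linking $C_{out}$ to the near-$\mathcal{S}$ region. The rate $p\sim 1/v$ should then emerge from the scaling of $u_{\mathcal{S}}(v)-u_{\CH}$ against $v$ dictated by how fast the Cauchy horizon breaks down, combined with the Kasner identity $P^2\sim 2p$. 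The technical hurdle is controlling all error terms uniformly in the joint limit $r\to 0$, $v\to+\infty$, since these two limits interact nontrivially and cannot be taken iteratively.
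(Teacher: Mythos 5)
Your skeleton (double-null reduction, propagation of the decay \eqref{decay.intro} from $C_{out}$, an AVTD-type reduction near $r=0$, the Kasner constraint $|p_\phi|^2=2p-3p^2$, and the recognition that the $s$-independent rate $p\approx 1/v$ is the crux) matches the paper's broad outline, but two essential mechanisms are missing, and without them the argument does not close. First, under the Cauchy horizon there is no smallness parameter, so ``continue via a bootstrap for the EMKG system'' is precisely the step the paper identifies as problematic and replaces by the a priori estimates of Lemma~\ref{lemma.hardy} and Proposition~\ref{apriori.prop1}: the Raychaudhuri equation gives $\Omega^2=\frac{|\nu|}{|\nu|_0}\Omega^2_0\,e^{-\mathcal{F}}$, and the exponential absorbs every polynomial power of $\phi$, yielding \eqref{apriori.intro} and hence $-r\partial_v r\approx v^{-2s}$, $r^2(u,v)\approx r^2_{CH}(u)+O(v^{1-2s})$ \emph{with no quantitative control of the scalar field at all}; only then are the linear Gr\"onwall estimates for $\phi$ run on this a priori metric background. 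Relatedly, your claim that the breakdown hypothesis immediately forces every later outgoing cone to terminate at finite $v_{\mathcal{S}}(u)$ with $r\to0$, and that $r\to 0$ at the endpoint of $\CH$, is not a direct consequence of the hypotheses: it is proved from these a priori bounds (Proposition~\ref{apriori.prop2}), which also exclude an ingoing null component $\mathcal{S}_{i^+}$.

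Second, and more seriously, your plan never produces the scalar-field \emph{lower bounds} up to $r=0$ that generate the quantitative conclusion. In the paper, $p\approx \frac{2}{3+|\Psi|^2}$ with $\Psi\approx \frac{r\,(r\partial_v\phi)}{r|\lambda|}$, so $p\approx v^{-1}$ requires $|\Psi|\approx v^{1/2}$ pointwise up to $\mathcal{S}$. This needs (a) a propagated pointwise lower bound on $\Re(r\partial_v\phi)$ via the approximate monotonicity \eqref{monotonicity} — and this is exactly where the hypothesis $|\Im(\bar\phi D_v\phi)|\ll v^{-s}$ enters, as a phase non-oscillation condition (Proposition~\ref{lower.bound.prop1}), not, as you suggest, as a device to keep $Q$ near its asymptotic value; and (b) since that bound is not sharp once $r\ll \ep r_0(v)$, the $X$-commuted AVTD estimates with $X=\frac{\partial_v}{-r\partial_v r}-\frac{\partial_u}{-r\partial_u r}$, $X(r)=0$, i.e. \eqref{X.intro}, which show $\partial_u(r^2\partial_v\phi)$ is negligible so that the lower bound $r^2\partial_v\phi\gtrsim \ep\, v^{\frac12-2s}$ on $\{r=\ep r_0(v)\}$ transfers into $\{r\le \ep r_0(v)\}$ using $\ep^2\log^2\ep\ll\ep$. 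Without this two-step lower-bound mechanism the central bootstrap $\Omega^2\lesssim e^{2K_-v}[r]^{\delta v}$ of \eqref{bootstrap.intro} cannot be closed, and your proposed renormalized limit $\lim_{u\to u_{\mathcal{S}}(v)}(u_{\mathcal{S}}(v)-u)D_u\phi$ remains an ansatz: integrating the wave equation along ingoing cones yields only upper bounds, never the nonvanishing (and $\approx v^{1/2}$) limit you need. Finally, the Kasner form is not obtained by matching to the exact massless uncharged model but by the explicit change of variables $\tau\approx r^2\Omega\, v^{s-1}$, $x\approx v^{2-2s}$, with $p$ defined through $\tau^{2p}=r^2$, and the error terms are controlled by the same $X$-commuted estimates; tracking the $v$-weights alongside the $r$-weights is what distinguishes this from the non-degenerate Kasner analyses you implicitly invoke.
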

	We refer to Section~\ref{Kasner.section} for more details on Kasner-like metrics with $v$-dependent  exponents; see, in particular, the estimates \eqref{kasner.deg.intro}, \eqref{kasner.deg.intro2}. Note that the Kasner exponent $p(u,v)$ degenerates to $0$ as $v\rightarrow+\infty$.
	\begin{rmk}
		%While we thrived to provide a simplified albeit as general as possible version of our theorem, w
		%Let us note that 
		Theorem~\ref{thm.I} is formulated as a result conditional on $ \CH \underset{\neq}{\subset} \mathcal{B}  $  (which requires  the absence of an outgoing locally-naked singularity emanating from $\Cin$)
		%(namely: conditional on the breakdown of the Cauchy horizo)% and the absence of an outgoing locally-naked singularity emanating from $\Cin$) 
		for the sake of greater generality. However, we highlight that  $ \CH \underset{\neq}{\subset} \mathcal{B}  $ is satisfied if $\Cin$ is trapped and its endpoint is a collapsed sphere of zero-area radius  as depicted in Figure~\ref{fig:local}. We furthermore \emph{construct a  large class of initial data} for which $\Cin$ satisfies these conditions; in other words, for such initial data, the conclusion of Theorem~\ref{thm.I} holds unconditionally, see Remark~\ref{rmk.3} below.

	\end{rmk}

	\begin{figure}\label{fig:local}
		
		\begin{center}
			
			\includegraphics[width=64 mm, height=40 mm]{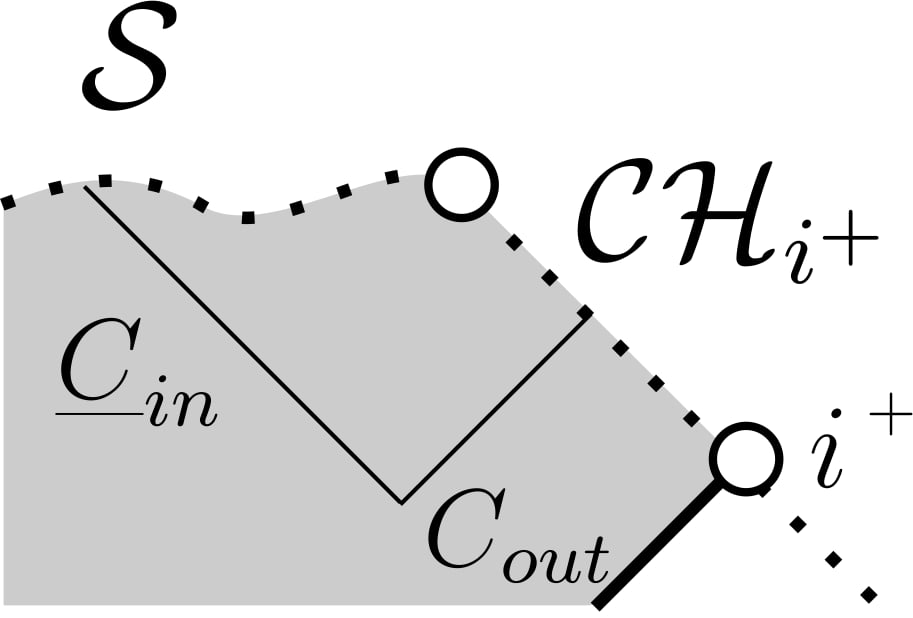}
			
		\end{center}
		\caption{Penrose diagram of a solution obtained in Theorem~\ref{thm.I} with bifurcate initial data $\Cin \cup C_{out}$.}

	\end{figure}
	
	\begin{rmk}\label{rmk.3}
		To  prove Theorem~\ref{thm.I}, we establish roughly three main statements, which are all of independent of interest (see Theorem~\ref{main.thm} in Section~\ref{thm.section} which can be viewed as a more precise formulation of Theorem~\ref{thm.I}): \begin{enumerate}[i.]
			\item  \label{I.intro} (Theorem~\ref{main.thm}, statement~\ref{main.thm.I}). Unconditional estimates under the Cauchy horizon $\CH$ \emph{which do not require that $\CH$ breaks down}. We show that for $v_0$ sufficiently large, $[u_0,\uch] \times [v_0,+\infty)$  is in the trapped  region  and there exists $s>1$ such that for all  $(u,v)\in[u_0,\uch] \times [v_0,+\infty)$, the area-radius $r(u,v)$ satisfies   $$ r(u,v) \gtrsim v^{\frac{1}{2} -s},$$	
			where $\uch$ is defined so that $\{\uch\}\times [v_0,+\infty)$ is the outgoing light cone terminating at the endpoint of the Cauchy horizon $\CH$.	As a corollary, we show that there can never exist a null ingoing boundary component $\mathcal{S}_{i^+}$ on which $r$ extends to $0$   (see Figure~\ref{fig:one-ended} for a depiction of the case we exclude).
			\item \label{II.intro} (Theorem~\ref{main.thm}, statement~\ref{main.thm.II}). Now \emph{assuming that the Cauchy horizon breaks down} locally, i.e.,  $ \CH \underset{\neq}{\subset} \mathcal{B}  $, we show that $r$ tends to $0$ towards the endpoint of $\CH$ denoted $(\uch,+\infty)$, i.e., $$ \lim_{ (u,v) \rightarrow (\uch,+\infty)} r(u,v) = 0.$$
			
			Moreover, we show that there exists a spacelike component of the boundary $\mathcal{S}=\{r=0\}$ connected to $\CH$ and described by a Kasner metric of  exponents  $(1-2p(u,v),p(u,v),p(u,v))$ with $p(u,v) \approx\frac{1}{v}$  as $v \rightarrow +\infty$. 
			\item \label{III.intro}  (Theorem~\ref{main.thm}, statement~\ref{main.thm.III}) We provide (a large class of) \emph{local initial data} leading to a Cauchy horizon breakdown ($ \CH \underset{\neq}{\subset} \mathcal{B}  $). Thus, the conclusion of Statement~\ref{II.intro} applies unconditionally for such  initial data.
		\end{enumerate}
	\end{rmk}

	To explain the terminology, we recall the Kasner metric \cite{Kasner}, a solution of \eqref{1.1}-\eqref{5.1} with $F\equiv 0$, $m^2=0$:
	\begin{equation} \label{Kasner.exact}
		g_{Kas}= -d{\tau}^2 + {\tau}^{2 p_{1}}  dx_1^2 +    {\tau}^{2 p_{2}} dx_2^2+ {\tau}^{2 p_{3}}dx_3^2, \hskip 5 mm \phi({\tau})= p_{\phi} \cdot \log({\tau}^{-1}); %  \tilde{\tau}^{2 p_{rad}}  d\rho^2  +    \tilde{\tau}^{2 p_{x}} dx^2+ \tilde{\tau}^{2 p_{y}}dy^2, \hskip 5 mm \phi(\tilde{\tau})= p_{\phi} \cdot \log(\tilde{\tau}^{-1});
	\end{equation} \begin{equation} \label{p.Kasner}
		p_{1}+ p_2+ p_3 =1, \hskip 10 mm p_{1}^2+ p_2^2+ p_3^2+ 2 p_{\phi}^2 =1,
	\end{equation} where $(p_{1},p_2,p_3)$ are \underline{constants} called the Kasner exponents. If these exponents depend on $(x_1,x_2,x_3)$ instead, \eqref{Kasner.exact} do not solve \eqref{1.1}-\eqref{5.1}, but it may approximate a solution of \eqref{1.1}-\eqref{5.1} as $\tau \rightarrow 0$. Such solutions describe a large amount of spacelike singularities \cite{FournodavlosLuk}. %, where $\{\tau=0\}$ corresponds to $\{r=0\}$.
	The case  $(p_{1},p_2,p_3)=(1,0,0)$ is degenerate: it corresponds to the Minkowski metric and has no singularity at $\tau=0$. Despite many Kasner-with-variable-exponents having been constructed \cite{FournodavlosLuk}, it appears that the spacetimes of Theorem~\ref{thm.I} provide the first examples of \emph{Kasner exponents which converge to the degenerate values} $(1,0,0)$. It is striking that this phenomenon  conjecturally arises in \emph{any} generic black hole (even outside of spherical symmetry), see Section~\ref{Kasner.section}.

	As we explained, our main result Theorem~\ref{thm.I} is a local statement, that concerns null/spacelike transition, and their quantitative behavior, regardless of the global topology of the spacetime. In our companion paper \cite{bif2}, we return to the global setting of gravitational collapse and show that the estimates of Theorem~\ref{thm.I} apply in this context, modulo obstructions related to locally naked singularities emanating from the center of the collapsing star $\Gamma$ (Theorem~\ref{thm.II}, Statement~\ref{II.A}). Moreover, we construct a large class of one-ended asymptotically flat black holes with a center $\Gamma$  satisfying \eqref{decay.intro} in which these obstructions are not present and thus to which we can apply Theorem~\ref{thm.I} (Theorem~\ref{thm.II}, Statement~\ref{II.B}). In \cite{bif2}, we also derive a  simpler and more constructive proof of the breakdown of weak null singularities first proven in \cite{breakdown}, see Section~\ref{breakdown.section}. We note   that Theorem~\ref{thm.II} provides the first-time construction of a one-ended black hole with a spacelike singularity $\mathcal{S}$ that has a  non-trivial charge. 
	
	\begin{theo} \label{thm.II}[Gravitational collapse black hole with coexisting singularities, our companion paper \cite{bif2}].
		\begin{enumerate}
			\item\label{II.A} (Theorem IV, Statement A in \cite{bif2}). Consider a spherically-symmetric one-ended solution of \eqref{1.1}--\eqref{5.1} with $q_0\neq 0$ that has a weakly singular null Cauchy horizon $\CH  \neq \emptyset$ towards which \eqref{decay.intro} is satisfied, and no locally-naked singularity emanating from the center $\Gamma$. Then,  the rest of the spacetime boundary is a crushing singularity $\mathcal{S}=\{r=0\}$, which is spacelike in a neighborhood of $b_{\Gamma}$, and in a neighborhood of $\CH\cap \mathcal{S}$, where it obeys the Kasner asymptotics of Theorem~\ref{thm.I}.
			
			\item \label{II.B} (Theorem II  in \cite{bif2}). There exists a large class of  spherically-symmetric one-ended asymptotically flat black hole solutions of \eqref{1.1}--\eqref{5.1} with $q_0\neq0$, $m^2=0$, with  a regular center $\Gamma \neq \emptyset$,   a weakly singular null Cauchy horizon $\CH  \neq~ \emptyset$, and a crushing singularity $\mathcal{S}=\{r=0\}$, which is spacelike in a neighborhood of $\CH\cap \mathcal{S}$ and obeys the Kasner asymptotics of Theorem~\ref{thm.I}. 
		\end{enumerate}
		
	\end{theo}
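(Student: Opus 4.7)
The strategy is to reduce Part~A to a direct application of Theorem~\ref{thm.I}. First, I would invoke Kommemi's classification of spherically symmetric terminal boundaries for Einstein--Maxwell--scalar matter: beyond $\CH$, the remaining boundary can only consist of a crushing singularity $\mathcal{S}=\{r=0\}$ (spacelike or null), a locally-naked singularity issuing from $\Gamma$, or an ingoing null component $\mathcal{S}_{i^+}$ on which $r$ extends continuously to $0$. The hypothesis excludes locally-naked singularities from $\Gamma$, while the breakdown theorem of \cite{breakdown} forces $\CH\subsetneq \mathcal{B}$, so some additional component must exist. Choose $C_{out}=\{u=u_0\}$ as a late outgoing null cone inside the trapped region, and $\Cin=\{v=v_0\}$ with $v_0$ large; the global decay assumption \eqref{decay.intro} transfers by continuity to this $C_{out}$ when $u_0$ is sufficiently close to $\uch$. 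The unconditional estimate from Statement~\ref{I.intro} of Remark~\ref{rmk.3}, namely $r(u,v)\gtrsim v^{1/2-s}$, rules out $\mathcal{S}_{i^+}$, so the additional component must be $\mathcal{S}=\{r=0\}$. Then Theorem~\ref{thm.I} applied to this bifurcate rectangle delivers both the spacelike character of $\mathcal{S}$ near $\CH\cap\mathcal{S}$ and the $v$-dependent Kasner asymptotics.

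\textbf{Plan for Part B.} For the construction, I would start from small, compactly supported perturbations of the exact subextremal Reissner--Nordström one-ended black hole, posed as characteristic initial data on an outgoing cone emanating from the centre $\Gamma$ and supplemented by event-horizon data. The event-horizon scalar field is prescribed to carry a sharp polynomial tail with non-zero leading coefficient matching the exponent $s$ of \eqref{decay.intro}, and with initial phase chosen so that the charge current $\Im(\bar\phi D_v\phi)$ is subdominant. The exterior is then set up as a one-ended asymptotically flat solution with a regular centre, adapting Christodoulou-style small-data arguments; the interior analyses of \cite{Moi} and \cite{Moi4} propagate this event-horizon decay across the red-shift and blue-shift regions to produce a weakly singular Cauchy horizon $\CH$. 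Finally, one verifies that \eqref{decay.intro} is inherited by a late outgoing cone $C_{out}$ in the trapped region, so that Part~A applies. Modulo a possibly present but non-generic locally-naked $\mathcal{CH}_\Gamma$ (footnote~\ref{naked.foot}), this would deliver the full conclusion.

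\textbf{Main obstacle.} The principal technical hurdle is in Part~B: showing that the \emph{lower} bound $v^{-s}\lesssim |D_v\phi|_{|C_{out}}(v)$ of \eqref{decay.intro} persists through transport from the event horizon across the interior without cancellation by oscillations or destructive interference with the background Reissner--Nordström gauge phase. Simultaneously, the smallness of $\Im(\bar\phi D_v\phi)$ relative to $|D_v\phi|^2$ requires a delicate resonance-avoidance condition on the initial phase of $\phi$ relative to the coupling constant $q_0$, which is not preserved by generic perturbations and therefore demands a careful selection of the data phase. A secondary subtlety in Part~A is ensuring that the chosen $C_{out}$ actually captures the endpoint $(\uch,+\infty)$ of $\CH$; this reduces to controlling the gauge normalisation of the outgoing null coordinate $u$ along $\CH$ so that $\uch$ is well-defined and finite.
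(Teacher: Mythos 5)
First, note that this paper does not itself prove Theorem~\ref{thm.II}: the statement is quoted from the companion paper \cite{bif2}, and what the present paper supplies are the ingredients — Theorem~\ref{main.thm} (Statements~\ref{main.thm.I}--\ref{main.thm.III}), the a priori estimates of Propositions~\ref{apriori.prop1}--\ref{apriori.prop2}, and the data construction of Section~\ref{section.data}. With that caveat, your plan for Part~A is essentially the intended reduction: Kommemi's boundary classification, exclusion of an ingoing component $\mathcal{S}_{i^+}$ via the unconditional bound $r\gtrsim v^{\frac{1}{2}-s}$ (Proposition~\ref{apriori.prop2}), breakdown of $\CH$ (in \cite{bif2} this is re-proved constructively from Proposition~\ref{apriori.prop1} rather than taken from the contradiction argument of \cite{breakdown}, but either works for the conditional Statement~A), and then Theorem~\ref{thm.I}/Theorem~\ref{main.thm} on a bifurcate rectangle under $\CH$. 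One small inaccuracy: the hypotheses \eqref{hyp1} on $C_{out}$ (in particular $-r\partial_v r\approx v^{-2s}$ and boundedness of $\phi,Q$) are not obtained by ``transfer by continuity'' but by integrating the Raychaudhuri equation from \eqref{decay.intro}, as the paper remarks after Theorem~\ref{main.thm}; this is minor.

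The genuine gap is in Part~B. You propose to perturb ``the exact subextremal Reissner--Nordstr\"{o}m one-ended black hole'', but no such background exists: Reissner--Nordstr\"{o}m has no regular centre, and there is no stationary one-ended solution of \eqref{1.1}--\eqref{5.1} with $\Gamma\neq\emptyset$ and non-trivial charge to perturb. More fundamentally, a forward, small-data-perturbative construction from exterior or event-horizon data cannot deliver the hypotheses of Theorem~\ref{thm.I}: the lower bound $v^{-s}\lesssim |D_v\phi|_{|C_{out}}$ and the phase condition $|\Im(\bar\phi D_v\phi)|\ll v^{-s}$ are precisely the (conjectural) generalized-Price-law lower-bound and oscillation statements that remain open for generic collapse — indeed the paper explicitly identifies establishing \eqref{decay.intro} on an interior cone from generic asymptotically flat data as the missing step towards Conjecture~\ref{spacelike.conj}, with \cite{Moi2,MoiDejan} cited only as progress. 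You flag this as the ``main obstacle'', but your proposed fix (a careful choice of data phase on $\mathcal{H}^+$) does not resolve it, because in a one-ended spacetime the event horizon is not free data: prescribing sharp tails on $\mathcal{H}^+$ and simultaneously demanding that it arise as the event horizon of an asymptotically flat solution with regular centre is itself a hard global problem. The route indicated by this paper is different: the assumptions \eqref{hyp1}, \eqref{hyp2}, \eqref{hyp4}--\eqref{hyp3} and the trapped, collapsing ingoing cone $\Cin$ are arranged \emph{by construction} on characteristic data (Statement~\ref{main.thm.III} of Theorem~\ref{main.thm} and Proposition~\ref{prop.data}, including the large-scalar-field Kasner-type data), and the companion paper then builds the one-ended asymptotically flat spacetime with regular centre around such data, accepting a possibly present, non-generic $\mathcal{CH}_{\Gamma}$ (footnote~\ref{naked.foot}). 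Your proposal does not use this data construction at all, and without it the lower bounds in \eqref{decay.intro} — and hence the Kasner asymptotics — are not obtained.
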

	
	\begin{figure}[H] \begin{center}\includegraphics[width=79 mm, height=45 mm]{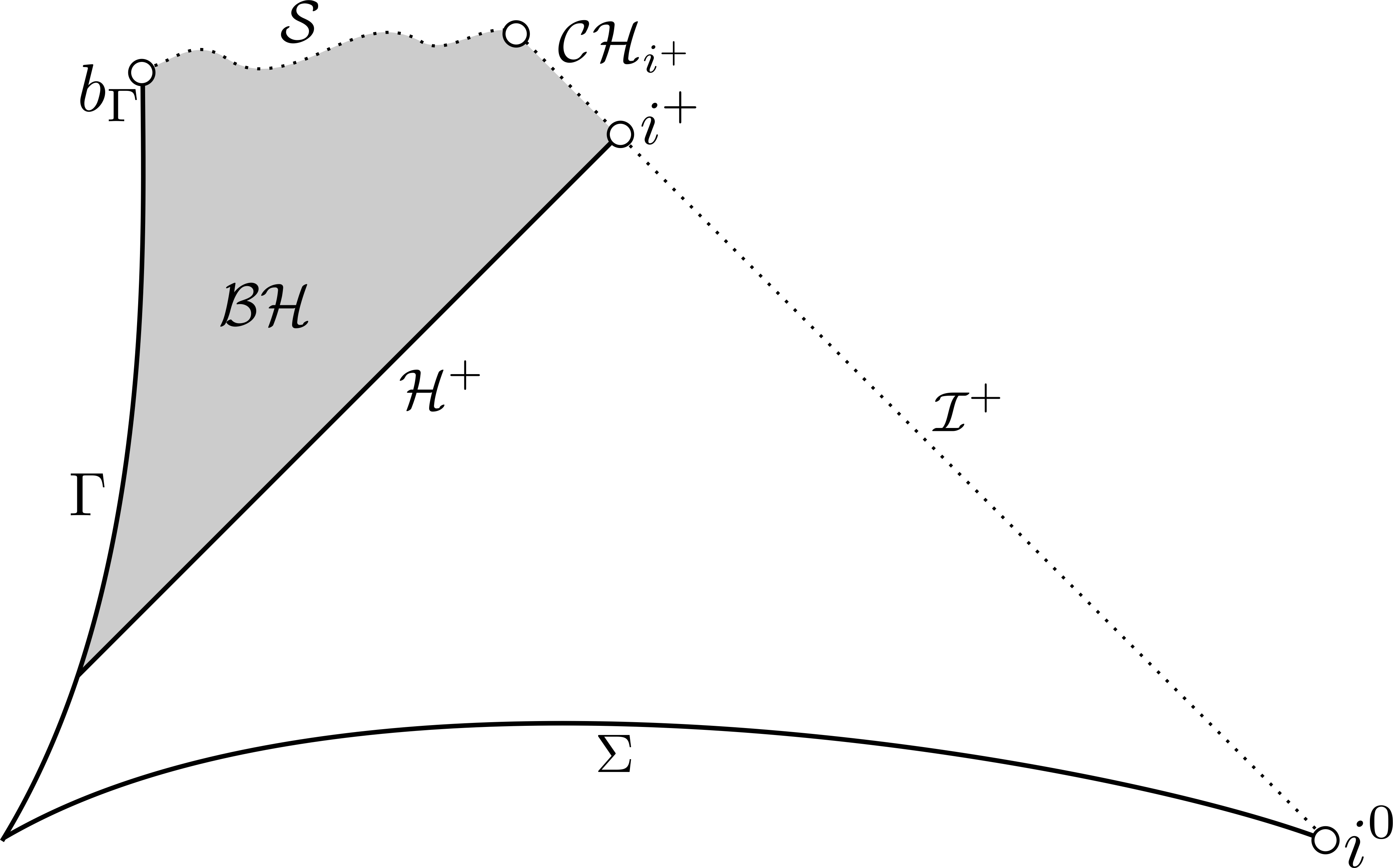}				
			\caption{Penrose diagram of the gravitational collapse (=one-ended) spacetimes obtained in Theorem~\ref{thm.II}. \\ $\CH$ is a weak null singularity and $\mathcal{S}$ a strong spacelike singularity.}
			\label{fig:spacelikeconj}	\end{center}\end{figure}
	
	\begin{rmk}\label{rmk.Oppen}
		We emphasize that the model of gravitational collapse offered by Theorem~\ref{thm.II} differs from the Oppenheimer--Snyder and Christodoulou's model  in the following crucial ways: \begin{itemize}
			\item There exists a non-trivial null boundary component emanating from $i^+$:   the Cauchy horizon $\CH$. The spacetime curvature is infinite at $\CH$,  but in-falling observers  experience finite tidal deformations. 
			\item The singularity $\mathcal{S}$ is spacelike and the infinite tidal deformations are compressive in all directions, corresponding to positive Kasner exponents
			(at least in a neighborhood of $\CH$).
		\end{itemize}
		In Sections~\ref{CH.section}--\ref{breakdown.section} and Section~\ref{Kasner.section}, we will provide heuristic arguments indicating that a generic black hole solving  the Einstein-scalar-field equations (without symmetry) possesses the same qualitative features.
	\end{rmk}
	
	We refer the reader to Theorem 3.4 in \cite{bif2} for the precise result corresponding to Theorem~\ref{thm.II}. We also highlight that we obtain an analogue\footnote{In the two-ended case, locally-naked singularities do not exist; however, one must assume that $\mathcal{S}\neq \emptyset$, instead of proving it.} of Theorem~\ref{thm.II} for two-ended spacetimes, see \cite{bif2} and Theorem~\ref{thm.III} in Section~\ref{twoended.section}.% Theorem~\ref{main.thm.2end}, Statement~\ref{Itwo}.

	%	Lastly, we provide the construction of a large class of two-ended asymptotically flat and spherically-symmetric black hole solutions of \eqref{1.1}--\eqref{5.1} with $m^2=q_0=0$ featuring a weakly singular Cauchy horizon $\CH$ and a spacelike singularity $\mathcal{S}$, see Figure~\ref{fig:fluctuating}, described by the asymptotics of Theorem~\ref{thm.I} (second statement of Theorem~\ref{thm.III}).	 This global result combines Theorem~\ref{thm.I} with the work of Luk--Oh \cite{JonathanStab,JonathanStabExt} in the black hole ex terior for \eqref{1.1}--\eqref{5.1} with $m^2=q_0=0$ in spherical symmetry; we do not provide any such two-ended construction in the $(m^2,q_0)\neq (0,0)$ case solely because the analogous analysis in the black hole exterior is currently lacking.

	We finally conclude this preamble by highlighting the importance of  estimates in the fashion of Theorem~\ref{thm.I} for the analogue of Open Problem~\ref{spacelike.conj} \emph{outside of spherical symmetry}. It is indeed unlikely that a contradiction argument in the style of \cite{breakdown} can be employed in this case, and thus, the quantitative analysis of the spacetime metric ought to be instrumental in an eventual resolution of Strong Cosmic Censorship in generic non-spherical gravitational collapse, see Section~\ref{Kasner.section} for further discussions. We hope to return to this problem in future works.

	%Lastly, we mention that we will obtain an \emph{unconditional} analogue of Theorem~\ref{thm.II} in the two-ended case, where locally singularities do not exist (see Section~\ref{twoended.section}). 
	%	We however emphasize that this third theorem has no bearing on the dynamics of black holes in  gravitational collapse,  our main case of interest.
	
	\subsection{Connections to the Strong Cosmic Censorship Conjecture}  \label{SCC.section}
	
	\paragraph{The mathematical statement of Strong Cosmic Censorship}	The statement of Strong Cosmic Censorship concerns the solutions of the Einstein equations in the presence of a reasonable matter model, and comes back to Penrose \cite{penrose1974gravitational}. We provide its modern formulation (see \cite{nospacelike,ChristoCQG}) making use of the Maximal Globally Hyperbolic Development (MGHD) of Choquet-Bruhat and Choquet-Bruhat--Geroch \cite{MGHD,GHD}:
	\begin{conjecture}[Strong Cosmic Censorship] \label{SCC.conj}
		The Maximal Globally Hyperbolic Development of generic, one-ended asymptotically
		flat complete initial data  is
		inextendible as a solution to the Einstein equations.
	\end{conjecture}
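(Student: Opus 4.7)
The statement is the full Strong Cosmic Censorship conjecture, which remains a central open problem; any realistic ``proof proposal'' must therefore be read as a long-term research program rather than a completed argument. The overall plan is to reproduce the structural picture established in this paper — a terminal boundary $\mathcal{B} = \CH \cup \mathcal{S}$ with $\CH$ null and weakly singular and $\mathcal{S}$ spacelike — outside of spherical symmetry, and then to prove $C^0$- or $C^{0,1}$-inextendibility on each component.

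First, I would attempt to extend the Dafermos--Luk construction of a null Cauchy horizon $\CH$ from \cite{KerrStab} to the MGHD of arbitrary asymptotically flat data in a full (dynamical) neighborhood of subextremal Kerr exterior data; this step already requires the resolution of the nonlinear stability of the Kerr exterior in a sufficiently strong norm, and it provides the null component of the terminal boundary. Second, assuming $\CH \neq \emptyset$, I would try to generalize the breakdown mechanism of \cite{breakdown} and the quantitative near-transition analysis of Theorem~\ref{thm.I} to the non-spherical setting, in order to show that $\CH$ cannot exhaust $\mathcal{B}$ and that the remainder is a spacelike singularity $\mathcal{S}$ with Kasner-type asymptotics, possibly decorated by BKL-type oscillatory corrections as alluded to in footnote~\ref{BKLf}. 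Third, I would use the positivity of the Kasner exponents on $\mathcal{S}$ near $\CH \cap \mathcal{S}$ — the striking feature highlighted in Remark~\ref{rmk.Oppen} — together with the weak null singularity structure on $\CH$, to rule out any $C^0$-extension of the metric across $\mathcal{B}$, following the methodology of the $C^0$-inextendibility literature adapted to each regime.

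The hard part, and the reason Conjecture~\ref{SCC.conj} is open, is two-fold. On the spacelike side, the BKL heuristics predict \emph{chaotic} oscillations of Kasner exponents near a generic spacelike singularity for the vacuum Einstein equations; taming this chaos quantitatively has so far been possible only under symmetry assumptions or in stable ``matter-dominated'' regimes with positive exponents, and the present paper's analysis should be viewed as a local model for the latter scenario. On the null side, the very existence of $\CH$ rests on full nonlinear Kerr stability, which is only now being established. Finally, the genericity clause in Conjecture~\ref{SCC.conj} must be compatible with the exclusion of locally-naked singularities (a form of Weak Cosmic Censorship), itself unresolved; the contribution of the present paper to this program is precisely the rigorous null--spacelike transition picture under spherical symmetry, which one would hope serves as a local template for the non-spherical case.
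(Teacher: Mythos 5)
There is nothing to compare here: the statement you were asked about is Conjecture~\ref{SCC.conj}, which the paper states as an open problem and does not prove. The paper's actual theorems (Theorem~\ref{thm.I}, Theorem~\ref{main.thm}, and the companion results) establish only a spherically symmetric, conditional analogue -- quantitative control of the null--spacelike transition for the system \eqref{1.1}--\eqref{5.1} under the decay assumptions \eqref{decay.intro} and absence of locally-naked singularities -- which the author explicitly frames as \emph{supporting} a spherically symmetric version of Strong Cosmic Censorship at the $C^2$ level, not as proving Conjecture~\ref{SCC.conj}. Your text, as you yourself acknowledge, is a research program whose every unconditional input (nonlinear Kerr stability with interior control, a non-spherical breakdown of weak null singularities, taming of BKL oscillations in vacuum, and a genericity clause compatible with Weak Cosmic Censorship) is itself open, so it cannot be assessed as a proof of the statement.

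Beyond that, one concrete step in your outline would fail as formulated: in your third step you propose to ``rule out any $C^0$-extension of the metric across $\mathcal{B}$,'' including across the Cauchy horizon portion. But by Dafermos--Luk \cite{KerrStab} (and, in the spherically symmetric charged setting, by \cite{Moi,MoiChristoph}) the metric \emph{is} continuously extendible across $\CH$ under natural assumptions; this is precisely why the $C^0$ formulation of Strong Cosmic Censorship is false, as the paper recalls in Section~\ref{SCC.section}. Any viable inextendibility statement on the null component must be at the $C^2$, $C^1$, or $H^1$ level (cf.\ \cite{Moi4,JanC1}), with the $C^0$-inextendibility expectation reserved for the spacelike component $\mathcal{S}$. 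So even as a program, the null-side endgame needs to be restated in the weaker regularity classes, and the genericity and exterior-decay hypotheses need to be built into the statement rather than assumed.
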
 We remark that Conjecture~\ref{SCC.conj} is formulated in the context of gravitational collapse (one-ended initial data with trivial topology, see Section~\ref{WCC.section}) following the original spirit of Penrose \cite{PenroseSCC,penrose1974gravitational,Penrose1979}.%, see the review \cite{review}.
	
	\paragraph{Connection to black holes}		As it turns out, Strong Cosmic Censorship, although its statement does not directly refers to black holes, is very much related to black hole dynamics. The simplest black hole, namely Schwarzschild's solution \eqref{Schwarz}  of the Einstein equations, is non-rotating and terminates at a spacelike singularity $\mathcal{S}=\{r=0\}$, where tidal deformations are infinite as we discussed earlier. It is known \cite{JanC0} that the Schwarzschild MGHD is inextendible as a $C^0$ Lorentzian manifold, and thus satisfies  the statement of Conjecture~\ref{SCC.conj}.	Penrose, however, noted that the well-known Kerr black hole \cite{MR0156674} (the rotating analogue of \eqref{Schwarz}) seemingly does not respect Strong Cosmic Censorship \cite{Penroseblue,penrose1974gravitational}: this is because the MGHD of Kerr initial data, in contrast, terminates at a null boundary -- the Cauchy horizon $\CH$ -- which is smoothly extendible. The original motivation of  Penrose for Conjecture~\ref{SCC.conj} came from an argument that dynamical perturbations of the Kerr black hole suffer from a blue-shift instability \cite{Penroseblue,Penrose} which restores Strong Cosmic Censorship at least for  generic solutions of the Einstein equations. In essence, a proof of Conjecture~\ref{SCC.conj} will require to characterize the terminal boundary of a generic black hole and show that it is sufficiently singular to prevent any extension as a solution to the Einstein equations in the appropriate sense (see the review \cite{review} for an extended discussion).

	\paragraph{Perturbations of the Kerr black hole for the Einstein equations in vacuum}	While a full proof of Conjecture~\ref{SCC.conj}, even in vacuum, remains elusive, it is known by the work of Dafermos--Luk \cite{KerrStab} that small perturbations of the Kerr black hole admit a continuously extendible Cauchy horizon $\CH$ (despite the potential occurence of a blue-shift instability). As a corollary, the so-called $C^0$-version of Strong Cosmic Censorship \cite{MihalisICM,JonathanICM,KerrStab}, a version of Conjecture~\ref{SCC.conj} precluding any continuous extension of a generic MGHD, is false! In view of the blue-shift instability mechanism, it is nonetheless  likely that the Cauchy horizon of generic MGHDs is $C^2$ or $H^1$ inextendibible, see \cite{ChristoSCC,KerrStab,review}.  The proof of such a statement remains open at present; see, however, the following linear  instability results \cite{SbierskiTeukolsky,KerrInstab,blueyakovM} on the Kerr black hole interior.

	\paragraph{Spherically-symmetric version of Strong Cosmic Censorship}	Finally, we note that, as a simplified problem for Conjecture~\ref{SCC.conj}, it is possible to examine the inextendibility of spherically-symmetric spacetimes solutions of the Einstein equations coupled with Maxwell--Klein--Gordon, i.e., \eqref{1.1}--\eqref{5.1}, namely a spherically-symmetric analogue of Strong Cosmic Censorship. In gravitational collapse, this problem is open, even assuming spherical symmetry (see Section~\ref{CH.section} and Section~\ref{breakdown.section} for a discussion of partial results). For  two-ended black holes, however, a $C^2$ version of the spherically-symmetric Strong Cosmic Censorship was obtained by Luk--Oh \cite{JonathanStab,JonathanStabExt} for \eqref{1.1}--\eqref{5.1} assuming $q_0=m^2=0$ (uncharged massless scalar field model), building up on previous works by Dafermos \cite{Mihalis1,MihalisPHD} and Dafermos--Rodnianski \cite{PriceLaw}.

	\paragraph{Inextendibily  across the singularities obtained in our  result}	We conclude this section  discussing the (future)-inextendibility of the spacetimes obtained in Theorem~\ref{thm.I}. It is proven in \cite{Moi,MoiChristoph} that the Cauchy horizon $\CH$ is continuously extendible (under natural\footnote{\label{fMk}these assumptions include oscillation conditions on the scalar field  conjectured to hold generically, see \cite{MoiChristoph,review} for a discussion.} assumptions on the event horizon). However, $\CH$ is $C^2$-inextendible as proven in \cite{Moi4}. The techniques of \cite{JanC1} moreover ensure that $\CH$ is $C^1$-inextendible, even though a stronger $H^1$-inextendibility statement might be speculated, see \cite{review} for a discussion. As for the spacelike component $\mathcal{S}=\{r=0\}$ of the terminal boundary, it is manifestly $C^2$-inextendible due to the blow-up of the Kretschmann scalar (see e.g. \cite{Kommemi}). However, in view of \cite{JanC0} and the blow-up of tidal deformations at $\mathcal{S}$, we may conjecture that the metric  is $C^0$-inextendible across $\mathcal{S}$. For both boundary components, Theorem~\ref{thm.I} thus supports the conclusion of a spherically-symmetric version of Strong Cosmic Censorship, at least at the $C^2$-level.\\

	We refer the reader to the review \cite{review} for further details on the history of Strong Cosmic Censorship and how it articulates with modern mathematical results in black hole dynamics.
	
	\subsection{Local aspects of the black hole interior terminal boundary near $i^+$}  \label{CH.section}
	The only stationary solution of \eqref{1.1}--\eqref{5.1} in spherical symmetry is the Reissner--Nordstr\"{o}m metric (see \cite{Hawking}) \begin{equation}\label{RN}
		g_{RN} = -(1-\frac{2M}{r}+ \frac{e^2}{r^2}) dt^2+ (1-\frac{2M}{r}+  \frac{e^2}{r^2})^{-1} dr^2+ r^2 ( d\theta^2+ \sin^2(\theta) d\varphi^2),
	\end{equation} corresponding to $\phi \equiv 0 $ and $F= \frac{e}{r^2} dt \wedge dr$. In the sub-extremal case $0<|e|<M$, \eqref{RN} is a two-ended black hole solution whose MGHD terminates a smooth Cauchy horizon $\CH=\{r= M-\sqrt{M^2-e^2}\}$.%, see Figure~\ref{fig:twoended}.
	%which can be viewed as a charge generalization of \eqref{Schwarz}.
	
	The first step in the understanding of black interior dynamics for \eqref{1.1}--\eqref{5.1} is to consider the terminal boundary of the MGHD in the vicinity of $i^+$, interpreted as timelike infinity. To study this problem, one  poses characteristic data of a future affine-complete outgoing cone $\mathcal{H}^+$ interpreted as the black hole event horizon and a regular ingoing cone $\Cin$ penetrating $\mathcal{H}^+$ as depicted in Figure~\ref{fig:interior} and assume that $\mathcal{H}^+$ asymptotically converges to a sub-extremal Reissner--Nordstr\"{o}m  black hole. For \eqref{1.1}--\eqref{5.1} in spherical symmetry, it is conjectured  (based on heuristic/numerical arguments \cite{HodPiran1,HodPiran2,BurkoKhanna,KoyamaTomimatsu,KoyamaTomimatsu2,KoyamaTomimatsu3}, see  \cite{Moi,MoiChristoph,review} for further discussion) that for generic regular  solutions in the black hole exterior, this convergence occurs an the inverse-polynomial rate such that	\begin{equation}\label{decay}
		[1+|v|^{s}]  \phi_{|\mathcal{H}^+}(v) \text{ and its derivatives are bounded in an unweighted-in-}v\text{ norm.} 
	\end{equation}in an Eddington--Finkelstein advanced-time coordinate $v$ on $\mathcal{H}^+$ as $v\rightarrow +\infty$,  for some $s\geq \frac{5}{6}$; such inverse-polynomial decay statements are sometimes called generalized Price's law, after Price's original paper \cite{Pricepaper}  (see \cite{Moi2,MoiDejan,KGSchw1,JonathanStabExt,twotails,PriceLaw,AAG1,AAG2,AAG3,Hintz,Tataru,Tataru2} for related mathematical results or Section 6 in the review \cite{review}).

	Dafermos first studied this problem in spherical symmetry for solutions of \eqref{1.1}--\eqref{5.1} with $q_0=m^2=0$ (uncharged massless scalar field model) converging to a sub-extremal Reissner--Nordstr\"{o}m solution \eqref{RN} at rates consistent with \eqref{decay} and proved \cite{Mihalis1,MihalisPHD} that the terminal boundary admits a null component -- the Cauchy horizon -- which is moreover weakly singular  (consistently with the mass blow-up scenario, see \cite{review}, Section 4.5 and references therein). Here, and later in this article, we use the word ``Cauchy horizon'' for a null component of the MGHD terminal boundary foliated by topological spheres  of non-zero area (see \cite{Kommemi,KerrStab,MihalisSStrapped}). Detailed estimates together with a complete study of the black hole exterior were later obtained in Luk--Oh \cite{JonathanStab,JonathanStabExt}, in the context of their proof of  the $C^2$ version of Strong Cosmic Censorship  discussed in Section~\ref{SCC.section}.

	For the charged scalar field model, i.e., \eqref{1.1}--\eqref{5.1} with $q_0\neq 0$, the existence a Cauchy horizon was obtained in the author's work \cite{Moi}, and its weakly singular character was proven in \cite{Moi,Moi4}.

	\begin{thm}[\cite{Moi}]\label{CH.thm.SS} 
		
		%Consider spherically symmetric characteristic initial data on $\underline{C}_{in} \cup \mathcal{H}^+$ as depicted in Figure~\ref{fig:interior}, and under the assumption that $(g,\phi)$ converge to a sub-extremal Reissner--Nordstr\"{o}m black hole with  inverse-polynomial upper bounds  for $\phi$ on the event horizon $\mathcal{H}^+$.		
		
		Consider spherically symmetric characteristic initial data  for \eqref{1.1}--\eqref{5.1} 		on the event horizon $\mathcal{H}^+$ converging to a sub-extremal  Reissner--Nordstr\"{o}m black hole and on a $C^1$-regular ingoing cone $\Cin$.

		\begin{enumerate}[I.]			\item (\cite{Moi}, Theorem 3.2) Assume \eqref{decay}		holds as an upper bound on $\mathcal{H}^{+}=[v_0,+\infty)$ for some  decay rate $s>\frac{1}{2}$.					Then the spacetime is bound to the future by an ingoing null boundary $\CH \neq \emptyset$  (the Cauchy horizon) foliated by spheres of  positive radius and emanating from $i^+$, and the Penrose diagram is given by the dark gray region in Figure~\ref{fig:interior}. Moreover, if $s>1$, then $\phi$ is uniformly bounded and $g$ is continuously-extendible.			\item (\cite{Moi}, Theorem 3.3 \& \cite{Moi4}) If, moreover,  \eqref{decay} holds as a $L^2$ lower bound  on $\mathcal{H}^{+}$, then $\CH$ is a weak null singularity. In particular, the curvature of $g$ is infinite at $\CH$.		\end{enumerate}
	\end{thm}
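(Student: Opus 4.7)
The plan is to work in double-null coordinates $(u,v)$ in which the metric takes the form $g = -\Omega^2 \, du \, dv + r^2 d\sigma_{\mathbb{S}^2}^2$, with the event horizon $\mathcal{H}^+$ identified with $\{u=0\}$ and the black hole interior lying in $\{u\geq 0\}$. The system \eqref{1.1}--\eqref{5.1} reduces to transport equations for $r$, $\Omega^2$, the Hawking mass $\varpi$, and the charge $Q$, coupled to a covariant wave equation for $\phi$; the crucial monotonicity identity is the Raychaudhuri equation $\partial_v(\partial_v r /\Omega^2) = -r\,\Omega^{-2}|D_v\phi|^2$ together with its $\partial_u$ analogue. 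Following the strategy developed in the uncharged case, I would partition the interior into three dynamically defined regions propagated in order of increasing $u$: a red-shift region $\R$ adjacent to $\mathcal{H}^+$, a no-shift region $\mathcal{N}$, and a blue-shift region $\mathcal{B}$ reaching up to the putative Cauchy horizon $\CH = \{v=+\infty\}$.

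For Part I, in $\R$ (where $r \geq r_+ - \delta$) the event-horizon red-shift effect gives exponential decay of transverse derivatives in $u$, propagating the upper bound in \eqref{decay} at rate $s > 1/2$ from $\mathcal{H}^+$ inward. In $\mathcal{N}$ (roughly $r_- + \delta' \leq r \leq r_+ - \delta$), where $r$ drives from near $r_+$ to near $r_-$ over a bounded $u$-interval, one closes the estimates by Grönwall since no shift effect dominates. In $\mathcal{B}$, the decisive step is to fix a gauge in which $\Omega^2 \sim e^{2K_- v}$, with $K_-$ the surface gravity of the inner Reissner--Nordström horizon, and bootstrap the pointwise bound $|D_v\phi|(u,v) \lesssim v^{-s}$ together with $|r - r_-|\ll 1$. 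The integrability $\int_{v_0}^\infty v^{-2s}\,dv < +\infty$, available precisely because $s > 1/2$, feeds into Raychaudhuri to keep $-\partial_v r/\Omega^2$ bounded away from both $0$ and $+\infty$, which prevents $r$ from collapsing and produces a Cauchy horizon $\CH$ foliated by spheres of positive radius $r_{\CH}(u)>0$. The strengthened conclusion at $s > 1$ follows from the additional integrability $\int_{v_0}^\infty v^{-s}\,dv < +\infty$, which yields pointwise convergence of $\phi$ and continuous extendibility of $g$ across $\CH$.

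For Part II, the strategy is \emph{mass inflation}: showing that the Hawking mass $\varpi$ diverges as $v \to +\infty$ for every $u > 0$. The $L^2$ lower bound on $|D_v\phi|_{|\mathcal{H}^+}$ can be propagated through $\R$ and $\mathcal{N}$ into $\mathcal{B}$ by the same regional bootstrap, and once inside $\mathcal{B}$ the evolution equation for $\varpi$ contains the inverse blue-shift factor $\Omega^{-2} \sim e^{-2K_- v}$ multiplied by $|D_v\phi|^2$ and divided by $-\partial_v r/\Omega^2$; combined with the $L^2$ lower bound this forces $\int^\infty\Omega^{-2}|D_v\phi|^2\,dv = +\infty$, hence divergence of $\varpi$ and blow-up of the Kretschmann scalar at $\CH$.

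The main obstacle is the control inside the blue-shift region $\mathcal{B}$: the exponential factor $e^{2K_- v}$ threatens to amplify every error, and because the scalar field is charged ($q_0 \neq 0$) the gauge-covariant derivative $D_v = \partial_v + iq_0 A_v$ couples $\phi$ and $A_v$ through the Maxwell equation \eqref{4.1}, producing a backreaction absent from the uncharged analyses of \cite{Mihalis1,MihalisPHD,JonathanStab}. One must choose a gauge for $A$ asymptotic to the Reissner--Nordström background and exploit the exact cancellations in the wave equation \eqref{5.1} in order to close the bootstrap using only the hypothesis $s > 1/2$ and not a stronger assumption.
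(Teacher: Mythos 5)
First, a structural remark: Theorem~\ref{CH.thm.SS} is not proven in this paper at all — it is imported verbatim from \cite{Moi} (Theorems 3.2 and 3.3) and \cite{Moi4}, so there is no in-paper argument to measure you against. Judged against those cited works, your Part~I outline is essentially the standard strategy they follow: a double-null gauge adapted to the Reissner--Nordstr\"{o}m interior, propagation of the upper bound \eqref{decay} through red-shift, no-shift and blue-shift regions, the Raychaudhuri equations as the monotonicity backbone, the gauge $\Omega^2\sim e^{2K_-(u+v)}$, and the integrability of $v^{-2s}$ (resp.\ $v^{-s}$) for $s>\frac12$ (resp.\ $s>1$) driving the existence of $\CH$ with $r_{|\CH}>0$ (resp.\ boundedness of $\phi$ and $C^0$-extendibility). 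As a sketch of Part~I this is faithful, though of course it compresses the genuinely hard charged-field bookkeeping (the $A_u$ gauge, the evolution of $Q$, and the fact that for $\frac12<s\leq 1$ one cannot hope to bound $\phi$ itself, cf.\ \cite{MoiChristoph}) into a single sentence.

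The genuine gap is in Part~II. You argue: $L^2$ lower bound on $\mathcal{H}^+$ $\Rightarrow$ $\int^{\infty}\Omega^{-2}|D_v\phi|^2\,dv=+\infty$ $\Rightarrow$ $\varpi\to+\infty$ $\Rightarrow$ Kretschmann blow-up. The first implication (blue-shift amplification of the $H^1$-flux) is indeed the content of the instability mechanism, but the second implication — that divergence of this flux forces mass inflation — is exactly the step that does \emph{not} follow for a charged scalar field, and it is not how the cited works proceed. The evolution equations for $\varpi$ contain charge terms ($Q^2/2r$ and the Maxwell source \eqref{chargeUEinstein}--\eqref{ChargeVEinstein}) with no definite sign relative to the scalar-field flux, and the classical mass-inflation arguments (Dafermos \cite{Mihalis1,MihalisPHD}, Luk--Oh \cite{JonathanStab}) need either pointwise lower bounds or a delicate $L^2$ bootstrapping that does not transfer verbatim once $q_0\neq 0$. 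This is precisely why \cite{Moi4} exists as a separate paper: there the weak-null-singularity / $C^2$-inextendibility conclusion is obtained through a dichotomy-type argument in which mass inflation is \emph{not} established unconditionally — the curvature blow-up at $\CH$ is extracted from the divergence of the transversal derivative/flux quantities even in the scenario where $\varpi$ stays bounded. So your chain of implications would fail at the step ``hence divergence of $\varpi$''; to repair the sketch you would either have to prove mass inflation for the charged field under only an $L^2$ lower bound (which is not available) or, as in \cite{Moi,Moi4}, decouple the weak-null-singularity conclusion from mass inflation and derive the curvature blow-up directly from the blow-up of the blue-shifted flux in a frame regular at $\CH$.
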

	
	\begin{figure}[H]	\begin{center}\includegraphics[width=0.3\linewidth]{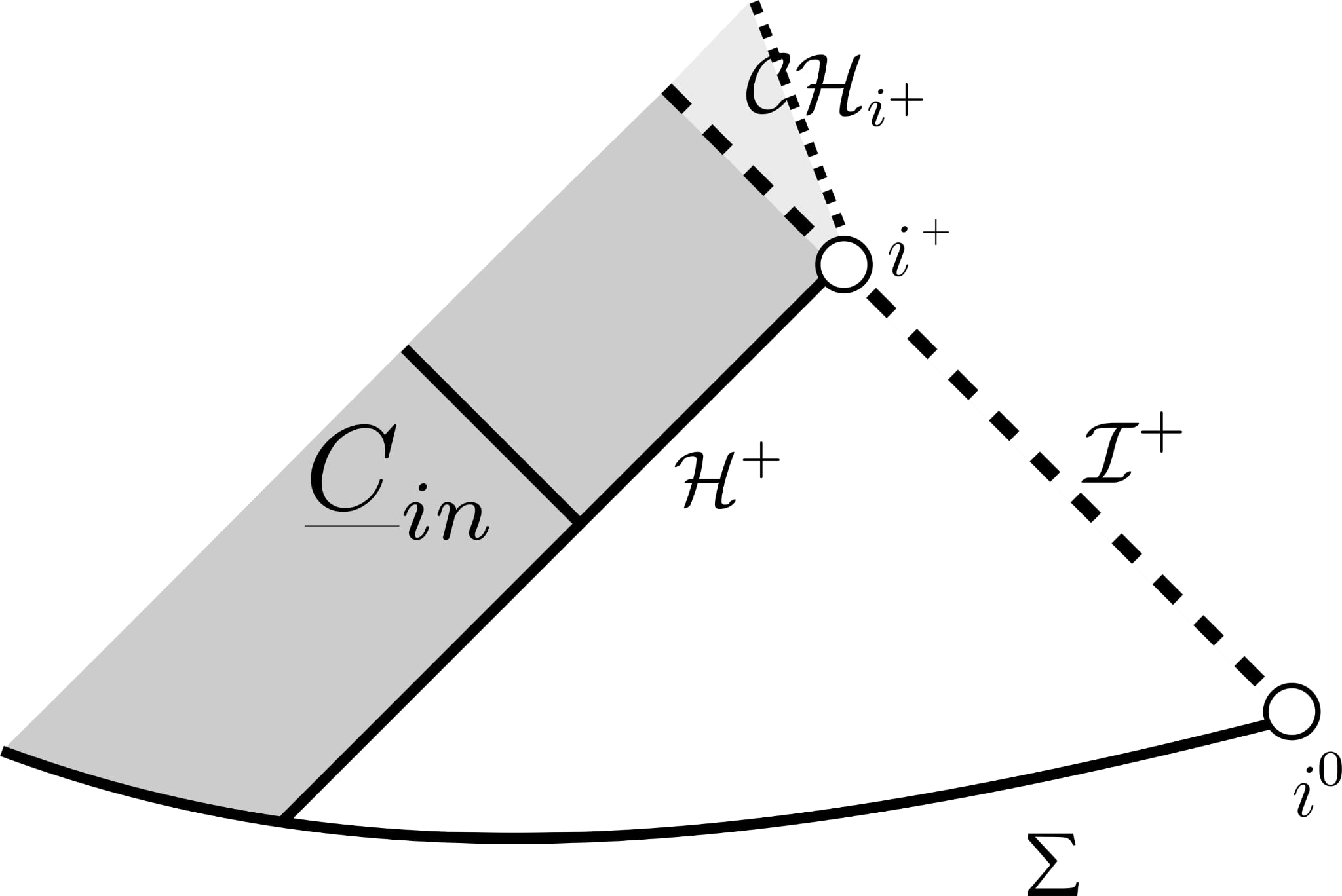}			\caption{\small Local structure of the black hole terminal boundary near $i^+$ for a spherically-symmetric solution of \eqref{1.1}--\eqref{5.1}.}		\label{fig:interior}\end{center}	\end{figure}

	The assumptions \eqref{decay.intro} made in Theorem~\ref{thm.I} (see Theorem~\ref{main.thm} for more specifics) are all consistent with the estimates obtained in Theorem~\ref{CH.thm.SS}. While, as we discussed, \eqref{decay} is conjecturally satisfied for some $s\geq \frac{5}{6}$, this $s$-value is different depending on the black hole parameters (we refer the reader again to the section 6 of \cite{review}): \begin{itemize}
		\item If $m^2\neq 0$, it is conjectured that $s=\frac{5}{6}$.
		
		\item If $m^2= 0$ and $|q_0 e| \geq \frac{1}{2}$, where $e$ is the black hole asymptotic charge, it is conjectured that $s=1$.
		
		\item If $m^2= 0$ and $|q_0 e| < \frac{1}{2}$, where $e$ is the black hole asymptotic charge, it is conjectured that $s>1$.
	\end{itemize}	
	If $s>1$, Theorem~\ref{CH.thm.SS} provides comprehensive estimates, the boundedness of $\phi$ and the metric coefficients. However, the black hole dynamics near $i^+$ are more delicate in Theorem~\ref{CH.thm.SS}  if one only assumes \eqref{decay} for $\frac{1}{2}<s\leq 1$. Indeed, there exist examples of initial data satisfying  the assumptions of Theorem~\ref{CH.thm.SS} but such as  $\phi$ is unbounded as showed in \cite{MoiChristoph}. However, the boundedness problem was settled in \cite{MoiChristoph} which proved more precise estimates under additional assumptions that are conjecturally satisfied in the exterior  (see footnote \ref{fMk}). 
	
	The assumptions \eqref{decay.intro} in Theorem~\ref{thm.I} are consistent with \eqref{decay} for\footnote{However, even without assuming $s>1$, we still obtain a priori estimates (\eqref{quant1} in Theorem~\ref{main.thm})  sufficient to rule out collapsed  ingoing cones  $\mathcal{S}_{i^+}$ and  reprove the breakdown of weak null singularities, see Proposition~\ref{apriori.prop2} and the related discussion in    \cite{bif2}.} $s>1$ (see Theorem~\ref{main.thm}) to avoid dealing with the subtleties of the    $\frac{1}{2}<s\leq 1$ case addressed in \cite{MoiChristoph}. We hope to cover this additional case in future works. We furthermore note that the Dafermos--Luk result in vacuum \cite{KerrStab}, while not assuming spherical symmetry (contrary to Theorem~\ref{CH.thm.SS}), proceeds under an assumption of fast decay on the event horizon analogous to \eqref{decay} for $s>1$. Therefore, our assumption that $s>1$ in Theorem~\ref{thm.I} will not be too restrictive in eventually generalizing the analysis 
	of Theorem~\ref{thm.I} to the  Einstein equations in vacuum outside of spherical symmetry.

	\subsection{Gravitational collapse and local aspects of the black hole interior terminal boundary near $\Gamma$} \label{WCC.section}

	\paragraph{Mathematical setting} Gravitational collapse  (see \cite{review}, Section 5) is modeled by the MGHD of asymptotically flat initial data $(\Sigma,g)$ with one-end, meaning that $\Sigma$ is diffeomorphic to $\mathbb{R}^3$ and thus has a center $\Gamma$ corresponding to the origin of  $\mathbb{R}^3$. Gravitational collapse spacetimes and their global aspects will be described further in Section~\ref{breakdown.section}. In contrast, the two-ended black holes discussed in   Section~\ref{twoended.section} are not models of gravitational collapse. 
	
	\paragraph{Locally-naked singularities}	The presence of a center $\Gamma$ allows for the existence of locally-naked singularities (see $\mathcal{CH}_{\Gamma}$, $\mathcal{S}_{\Gamma}^1$ and $\mathcal{S}_{\Gamma}^2$ in Figure~\ref{fig:one-ended}) emanating from $\Gamma$, which have been constructed by Christodoulou in \cite{Christo.existence} for the system \eqref{1.1}--\eqref{5.1} in spherical symmetry. The null component $\mathcal{CH}_{\Gamma}$ (see \cite{Kommemi} and Section 3 in \cite{bif2}) is potentially smoothly extendible, which is a threat to Strong Cosmic Censorship (see Section~\ref{SCC.section}).
	\begin{figure}\begin{center}	\includegraphics[width=105 mm, height=60 mm]{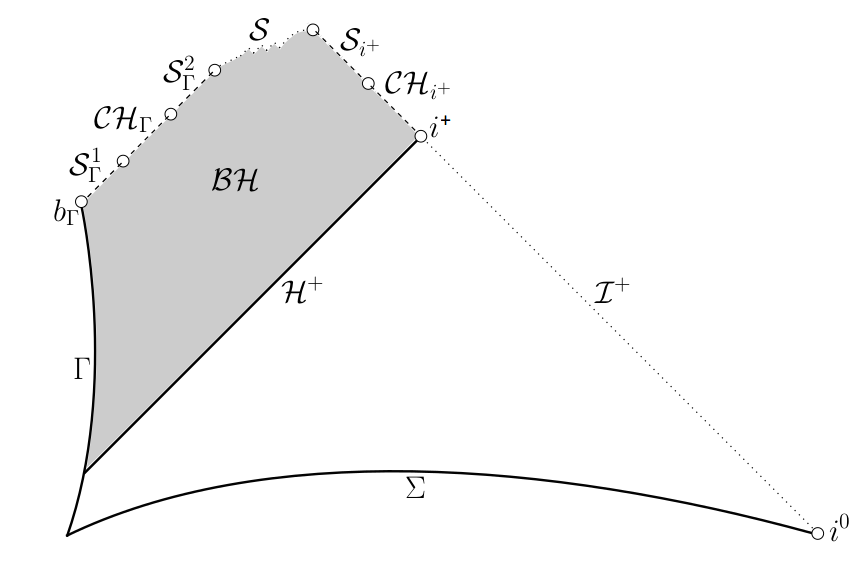}\caption{General Penrose diagram of a one-ended spherically-symmetric black hole solution after \cite{Kommemi}.}\label{fig:one-ended}\end{center}\end{figure}
	
	\paragraph{Weak Cosmic Censorship}	We also note that such locally-naked singularities can also arise outside of the black hole region: there are then \emph{naked singularities}. The celebrated statement of Weak Cosmic Censorship \cite{PenroseSCC,GerochHorowitz1979,ChristoCQG} precludes the existence of such naked singularities for generic gravitational collapse spacetimes.
	
	\paragraph{Trapped surface conjecture} In his monumental works on Weak Cosmic Censorship \cite{Christo1,Christo2,Christo3}, Christodoulou proved the instability of any locally-naked singularity with respect to the system \eqref{1.1}--\eqref{5.1}  with $F\equiv 0$ (namely, the Einstein-scalar-field model) in spherical symmetry. To do so, he obtained a proof of the  trapped surface conjecture, involving $b_{\Gamma}$ the endpoint of the center (see Figure~\ref{fig:one-ended}), whose simplified version is given below:	
	\begin{conjecture}[Spherical trapped surface conjecture, \cite{Kommemi,JonathanICM}] \label{trappedsurfaceconj}
		In generic gravitational collapse, a black hole spacetime features  a sequence of trapped surfaces asymptoting to $b_{\Gamma}$, the endpoint of the center $\Gamma$. As a consequence, the spacetime has no locally-naked singularity emanating from $\Gamma$.
	\end{conjecture}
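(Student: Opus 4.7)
The plan is to reduce the conjecture to a local analysis in a causal neighborhood of $b_\Gamma$, the endpoint of the center $\Gamma$, in double-null coordinates $(u,v)$, and then to combine this local analysis with a genericity/instability step in the style of Christodoulou. Spherical symmetry reduces \eqref{1.1}--\eqref{5.1} to a $1+1$-dimensional system for $(r,\varpi,Q,\phi)$ with $\Gamma$ identified with $\{r=0\}$; at $b_\Gamma$ we face the dichotomy between (i) existence of a trapped sequence $\{(u_n,v_n)\}\to b_\Gamma$, which is the desired conclusion, and (ii) a uniform subcritical bound $\mu := \tfrac{2\varpi}{r}-\tfrac{Q^2}{r^2} \leq 1-\delta$ in a punctured causal neighborhood of $b_\Gamma$. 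In case (ii), the objective is to show that the resulting locally-naked singularity at $b_\Gamma$ is removable by an arbitrarily small perturbation of the initial data, so that case (ii) forms a non-generic stratum.

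In case (ii), I would set up a BV-type propagation framework for $(r\phi, Q, r\partial_u r)$ in the spirit of Christodoulou's proof of the trapped surface conjecture for the uncharged scalar field. Concretely, $BV$-control on $\partial_u(r\phi)$ along each outgoing characteristic from $\Gamma$, combined with the subcriticality bound $\mu<1-\delta$, should propagate to a $BV$-extension of the solution through $b_\Gamma$; in the charged case, the transport equation $\partial_u Q \sim -q_0 r^2 \Im(\bar{\phi} D_u \phi)$ allows propagation of $BV$-bounds on $Q$ provided $\phi$ and $D_u\phi$ are themselves $BV$-controlled, and a Coulomb-type gauge near $\Gamma$ is needed to suppress spurious gauge singularities in $A_u$. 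This $BV$-extension criterion would show that case (ii) either yields a regular extension past $b_\Gamma$ or reduces to a purely gauge artifact, thereby identifying genuine locally-naked singularities as a thin set in the data space.

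The final and most delicate step is the genericity argument: to produce a one-parameter family $(g_\lambda, \phi_\lambda, F_\lambda)$ of initial data interpolating between dispersive and trapped end-states and to show that the locally-naked singularity occurs precisely at a codimension-one threshold $\lambda=\lambda_c$, so that case (ii) is non-generic. The hard part, and the reason the conjecture remains open for $q_0 \neq 0$, is that both the charge coupling $q_0$ and the scalar mass $m^2$ break the scale invariance of the uncharged massless problem on which Christodoulou's self-similar critical-collapse analysis decisively relies; there is no canonical self-similar profile against which to compare the near-$b_\Gamma$ dynamics. I would attempt to circumvent this by using a near-$\Gamma$ expansion for $(\phi,Q)$ matched to an approximately self-similar interior profile, but propagating the requisite quantitative unique continuation along the timelike $\Gamma$ in the simultaneous presence of a charge current and a massive potential is the principal analytic obstacle I do not currently see how to overcome.
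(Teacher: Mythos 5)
There is a fundamental mismatch here: the statement you are trying to prove is not a theorem of the paper at all. It is stated as Conjecture~\ref{trappedsurfaceconj} and the paper explicitly says it ``remains open at present,'' citing only partial progress (An--Zhang and related works); the paper never proves it, and instead uses its conclusion (absence of locally-naked singularities emanating from $\Gamma$) as a \emph{hypothesis} in Theorem~\ref{thm.I} and Theorem~\ref{thm.II}. So there is no proof in the paper to match your proposal against, and any review must judge your argument on its own terms as an attempt at an open problem.

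On those terms, the proposal is a program sketch rather than a proof, and the gaps are exactly at the load-bearing steps. First, in your case (ii) the claim that a uniform subcriticality bound $\mu\leq 1-\delta$ near $b_\Gamma$ plus BV control yields an extension through $b_\Gamma$, or reduces the singularity to a gauge artifact, is an extension-principle statement that is genuinely delicate for $q_0\neq 0$, $m^2\neq 0$: the charge equation couples $Q$ back into $\partial_u\partial_v r$ and $\partial_u\partial_v\log\Omega^2$ with $r^{-2}$ and $r^{-4}$ weights near the center, and no argument is given that the BV norms you propose actually close there (this is precisely where Kommemi-type a priori characterizations stop short of the conjecture). Second, and more seriously, the entire genericity step --- producing a one-parameter family with a codimension-one threshold and showing locally-naked singularities occur only at the threshold --- is asserted, not argued; Christodoulou's instability proof for $F\equiv 0$ is not a critical-collapse/threshold argument but an explicit perturbation construction exploiting the exact monotonicity and scale invariance of the massless uncharged system, both of which are destroyed by $q_0\neq 0$ and $m^2>0$, as you yourself acknowledge in the final sentence. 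Since you concede you do not see how to carry out that step, the proposal does not establish the statement; it restates why the conjecture is open. Note also that even granting the instability of locally-naked singularities, the first clause of the conjecture (a sequence of trapped surfaces asymptoting to $b_\Gamma$) requires a separate quantitative argument that your dichotomy, as formulated, does not supply.
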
 
	A successful resolution of Conjecture~\ref{trappedsurfaceconj} will turn Theorem~\ref{thm.II} into an unconditional statement (assuming  the decay estimates \eqref{decay.intro} are also satisfied for generic solutions) and eliminate our assumptions regarding the absence of locally-naked singularities in gravitational collapse. 
	While  Conjecture~\ref{trappedsurfaceconj} (without assuming   $F\equiv 0$) remains open at present, see \cite{AnZhang} for progress towards its proof for \eqref{1.1}--\eqref{5.1} in spherical symmetry.

	\subsection{Global aspects of the black hole interior  in gravitational collapse} \label{breakdown.section}
	
	\paragraph{Breakdown of the weak null singularity}	The weak singularity obtained in Theorem~\ref{CH.thm.SS} is instrumental in proving the breakdown of the Cauchy horizon in gravitational collapse in \cite{breakdown}. We note that, combining Conjecture~\ref{trappedsurfaceconj} with the theorem on the breakdown of weak null singularities shows that the terminal boundary of the spacetime only consists of $\CH$ and a not-entirely-null component $\mathcal{S}=\{r=0\}$, where $r$ is the area-radius of the metric. In other words, the Penrose diagram is  then given by Figure~\ref{fig:spacelikeconj}, where, however, it is not known whether $\mathcal{S}$ is spacelike, and even whether $\mathcal{S}$ contains any spacelike portion (see \cite{Kommemi} for a  definition of $\mathcal{S}$).
	
	\begin{figure}[H]	\begin{center}
			\includegraphics[width=62.5 mm, height=50 mm]{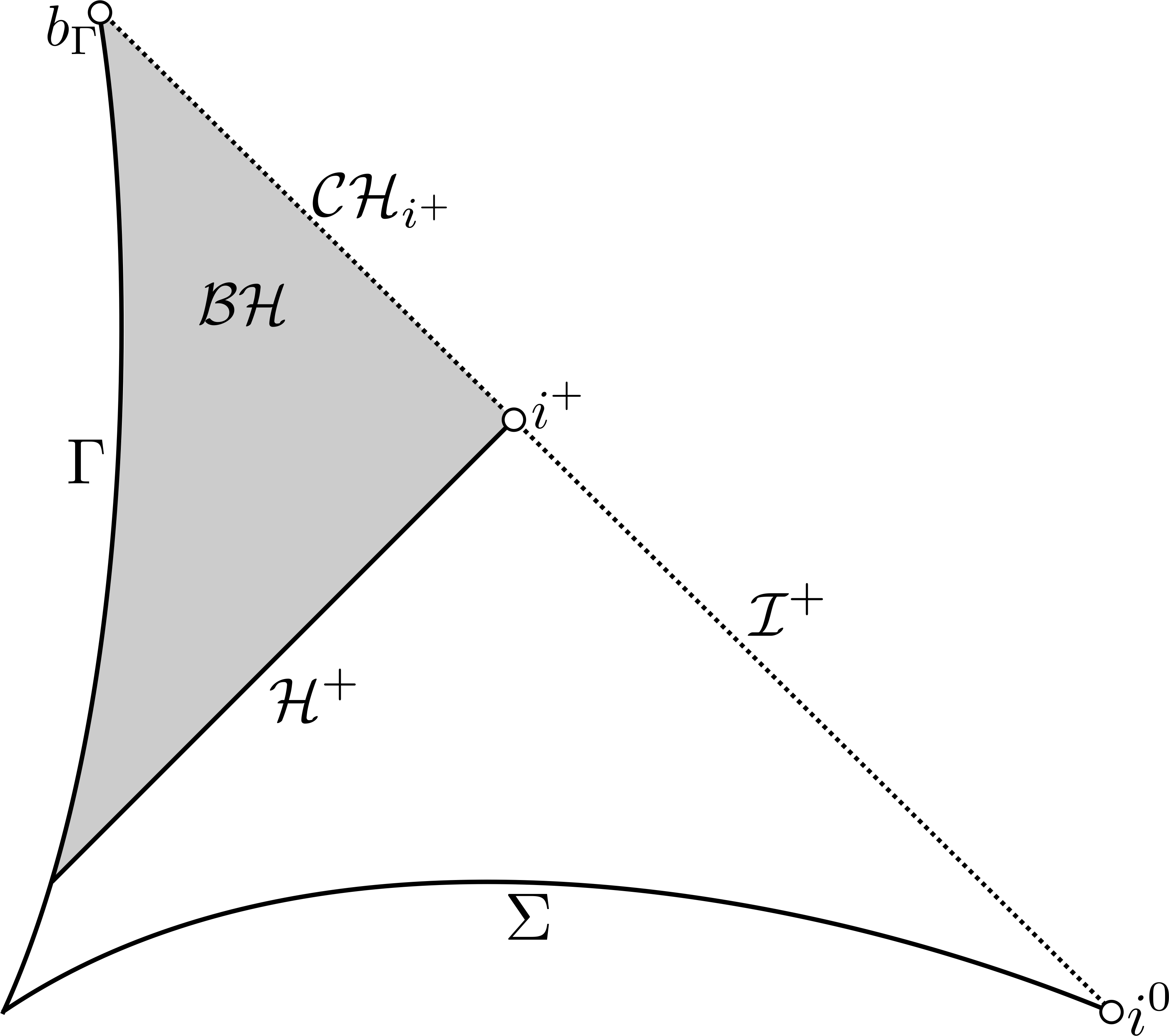}
		\end{center}
		\caption{The impossible Penrose diagram if $\CH$ is weakly singular, as a consequence of the result of \cite{breakdown}.}\label{fig:disproof}\end{figure}

	\paragraph{Towards a resolution of Conjecture~\ref{spacelike.conj}}	The new result, Theorem~\ref{thm.I}, applied to the global one-ended setting, however, shows that $\mathcal{S}$ is spacelike in the proximity of $\CH\cap\mathcal{S} $ and provides sharp and detailed quantitative estimates, as already stated in Theorem~\ref{thm.II}. It still not known, however, whether $\mathcal{S}$ is entirely spacelike up to $\mathcal{S}\cap \Gamma$, see Figure~\ref{fig:spacelikeconj}. To summarize, under the assumptions of Theorem~\ref{thm.II}, it is proven that the terminal boundary indeed contains a null singularity $\CH$ and a spacelike singularity $\mathcal{S}'\subset\mathcal{S}$ that coexist as claimed in Conjecture~\ref{spacelike.conj}. To complete the proof  of Conjecture~\ref{spacelike.conj}, one  additionally needs to study the black hole exterior and show that for generic, localized  asymptotically flat initial data, the decay assumptions \eqref{decay.intro} are satisfied on some outgoing cone $C_{out}$ in the black hole interior, see \cite{Moi2,MoiDejan} for progress in that direction. Note, however, that Theorem~\ref{thm.II} (Statement~\ref{II.B}) already provides an unconditional construction of  one-ended asymptotically flat black hole solutions for which \eqref{decay.intro} holds and the Penrose diagram features a spacelike singularity  $\mathcal{S}$ and a null Cauchy horizon $\CH$ as depicted in  Figure~\ref{fig:spacelikeconj}.
	
	\paragraph{Proof of the breakdown of weak null singularities: old and new} The logic of the breakdown of weak null singularities in \cite{breakdown} follows a contradiction argument.
	In our companion paper \cite{bif2}, we offer a simpler, constructive proof, although it requires slightly stronger assumptions than the result in \cite{breakdown}. It relies on a novel a priori estimate (see Section~\ref{section.proof}) which guarantees the existence of a trapped causal diamond right under the Cauchy horizon $\CH$, and thus precluding spacetimes with the Penrose diagram of Figure~\ref{fig:disproof}. These  a priori estimates also play a role in the  analysis of the present paper and are stated  in Proposition~\ref{apriori.prop1} in Section~\ref{apriori.est.section}.

	\subsection{Global aspects of the black hole interior  in the two-ended case} \label{twoended.section}
	
	Two-ended black holes have initial data with the topology $\mathbb{R}\times \mathbb{S}^2$ (see Section 3 in \cite{bif2} or the review \cite{review}) which is incompatible with the $\mathbb{R}^3$ topology of gravitational collapse. For this reason, they do not possess a center, i.e., $\Gamma=\emptyset$, and thus spherically symmetric locally-naked singularities cannot exist in such spacetimes \cite{MihalisSStrapped,nospacelike}.
	
	However, it is \emph{not true} that weak null singularities break down for two-ended black holes. Indeed, a result of Dafermos \cite{nospacelike} shows that two-ended small spherically-symmetric perturbations of the Reissner--Nordstr\"{o}m black hole for \eqref{1.1}--\eqref{5.1} do not have any spacelike singularity, i.e., $\mathcal{S}=\emptyset$. This result moreover generalizes to small perturbations of the Kerr black hole for the Einstein equations in vacuum \cite{KerrStab}. The  Penrose diagram is then the left-most in  Figure~\ref{fig:nospacelike} with $\CH$ as the only terminal boundary component.

	Our local Theorem~\ref{thm.I} also has applications to two-ended black holes, albeit only the ones  such that $\mathcal{S}\neq \emptyset$ depicted on the rightmost Penrose diagram of Figure~\ref{fig:nospacelike}. The following theorem, whose proof is contained in our companion paper \cite{bif2}, states that for such two-ended black holes, the quantitative estimates of Theorem~\ref{thm.I} apply (conditional result), and  also provides examples of such black holes if $q_0=m^2=0$ (unconditional result).

	\begin{theo} [Two-ended black holes with coexisting singularities, our companion paper \cite{bif2}]\label{thm.III}~ \begin{enumerate}
			\item (Theorem IV, Statement B in \cite{bif2}). Consider a spherically-symmetric two-ended black hole solution of \eqref{1.1}--\eqref{5.1} with the rightmost Penrose diagram of Figure~\ref{fig:nospacelike} and assume that \eqref{decay.intro} holds on an outgoing cone $C_{out}$ under the Cauchy horizon $\CH$. Then $\mathcal{S}$ is spacelike in a neighborhood of $\CH\cap \mathcal{S}$, where it obeys the Kasner asymptotitcs of Theorem~\ref{thm.I}. 
			\item  (Theorem III in \cite{bif2}). There exists a large class of spherically-symmetric two-ended black hole solution of \eqref{1.1}--\eqref{5.1} for $q_0=m^2=0$ with the rightmost Penrose diagram of Figure~\ref{fig:nospacelike}   satisfying the above assumptions  and conclusions.
		\end{enumerate}
	\end{theo}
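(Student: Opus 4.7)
The plan is to derive both items essentially as corollaries of Theorem~\ref{thm.I}, with the second item requiring an additional global construction.

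For the first (conditional) item, the rightmost Penrose diagram of Figure~\ref{fig:nospacelike} has both $\CH$ and $\mathcal{S}$ nonempty. Pick a sphere $S_0 \subset \CH$ arbitrarily close to the endpoint $\CH \cap \mathcal{S}$, let $\Cin$ be the past-ingoing null cone from $S_0$, and let $C_{out}$ be the outgoing cone from the hypothesis on which \eqref{decay.intro} holds. These form bifurcate characteristic data terminating at $S_0 \in \CH$. The hypothesis $\CH \underset{\neq}{\subset} \mathcal{B}$ of Theorem~\ref{thm.I} is precisely the non-emptiness of $\mathcal{S}$ past $S_0$, and the no-locally-naked-singularity assumption is automatic in the two-ended setting, since there is no center $\Gamma$ from which such a singularity could emanate (cf.~\cite{MihalisSStrapped,nospacelike}). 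Theorem~\ref{thm.I} then yields the spacelike character of $\mathcal{S}$ in a neighborhood of $\CH \cap \mathcal{S}$ together with the Kasner asymptotics, and letting $S_0$ vary along $\CH$ gives the conclusion on the full neighborhood.

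For the second (unconditional) item, the task is to construct two-ended asymptotically flat initial data for \eqref{1.1}--\eqref{5.1} with $q_0 = m^2 = 0$ whose MGHD realizes that Penrose diagram with \eqref{decay.intro} holding on some $C_{out}$. I would proceed in two stages. First, using the exterior decay and stability results of Dafermos--Rodnianski \cite{PriceLaw} and Luk--Oh \cite{JonathanStab,JonathanStabExt}, construct two-ended data whose evolution has two event horizons converging to sub-extremal Reissner--Nordstr\"{o}m at rates consistent with \eqref{decay} for some $s>1$. Theorem~\ref{CH.thm.SS} then produces a bifurcate weakly singular Cauchy horizon $\CH$, and \eqref{decay.intro} propagates to outgoing cones just under $\CH$ (noting that the condition $|\Im(\bar{\phi} D_v \phi)|_{|C_{out}}(v) \ll v^{-s}$ is trivially satisfied for real scalar fields when $q_0 = 0$, since then the charge current vanishes identically). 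Second, to avoid the all-null boundary scenario of \cite{nospacelike}, arrange the data asymmetrically enough between the two ends (e.g., with substantially different Bondi-mass profiles, or with an additional ingoing perturbation supported in a central region) so that some ingoing cone $\Cin$ under $\CH$ becomes trapped and terminates at a sphere of zero area-radius. By statement~\ref{III.intro} of Remark~\ref{rmk.3} together with the a priori estimates of Proposition~\ref{apriori.prop1}, this forces $\CH \underset{\neq}{\subset} \mathcal{B}$, and the first item then applies.

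The main obstacle is threading this needle in the second stage: the interior perturbation must be strong enough to collapse some ingoing cone to a sphere of zero area-radius (so that $\mathcal{S} \neq \emptyset$), yet controlled enough that it does not destroy the exterior estimates nor disrupt the formation of $\CH$ and the propagation of \eqref{decay.intro} to $C_{out}$. Producing a genuinely \emph{large class} of such data, rather than an isolated example, moreover requires establishing openness of the relevant set of initial data in an appropriate topology; this is presumably the most delicate step of the construction in~\cite{bif2}.
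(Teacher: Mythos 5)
Your first bullet is essentially the intended argument: in the two-ended setting there is no center $\Gamma$, hence no locally-naked singularity to exclude, and the rightmost diagram of Figure~\ref{fig:nospacelike} gives $\mathcal{S}\neq\emptyset$, so the breakdown hypothesis of Theorem~\ref{thm.I} (i.e.\ $\uch<u_F$ in Theorem~\ref{main.thm}) holds and Statement~\ref{main.thm.II} applies to the given $C_{out}$. Note, though, that the proof of Theorem~\ref{thm.III} is not in this paper at all — it is deferred to the companion paper \cite{bif2} — so the comparison is against the route the present paper sets up. Also your description of the data is off: ingoing cones are constant-$v$ hypersurfaces, while $\CH=\{v=+\infty\}$, so there is no ``past-ingoing null cone from a sphere $S_0\subset\CH$'' terminating at $S_0$. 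What you want is simply the ingoing cone $\{v=v_0\}$ through the past endpoint of the given $C_{out}$, extended to some $u_F>\uch$; such an extension exists inside the spacetime precisely because $\mathcal{S}\neq\emptyset$, and then $\Cin\cup C_{out}$ is admissible bifurcate data for Theorem~\ref{main.thm}.

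The second bullet is where there is a genuine gap. Your mechanism for forcing $\mathcal{S}\neq\emptyset$ — ``arrange the data asymmetrically enough between the two ends (e.g., with substantially different Bondi-mass profiles, or with an additional ingoing perturbation)'' — is unsupported and is the wrong dial: Dafermos's theorem \cite{nospacelike} shows that \emph{small} two-ended perturbations of Reissner--Nordstr\"{o}m have $\mathcal{S}=\emptyset$ regardless of any asymmetry, so what is needed is interior data that are large in a specific, quantified sense, and you give no argument that your perturbation produces a trapped ingoing cone collapsing to zero area-radius while preserving the exterior estimates. The machinery the paper builds for exactly this purpose is Statement~\ref{main.thm.III} together with Section~\ref{section.data}: one \emph{prescribes} characteristic data with $\Cin\subset\T$ and $r\to 0$ at its endpoint (e.g.\ Kasner-type profiles \eqref{kasner.data.intro} satisfying \eqref{construction.main.assumption}, with the five explicit families of Proposition~\ref{prop.data}), and Proposition~\ref{localbreak.prop} then forces $\uch<u_F$; the ``large class'' comes from these explicit families, not from an openness-in-a-topology argument as you speculate. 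In addition, you gloss the step ``\eqref{decay.intro} propagates to outgoing cones just under $\CH$'': the upper bounds follow from the interior analysis of Theorem~\ref{CH.thm.SS}, but the \emph{lower} bound $|D_v\phi|\gtrsim v^{-s}$ on $C_{out}$ (assumption \eqref{hyp3}) does not come for free from event-horizon Price-law tails and is a nontrivial ingredient that the companion construction must arrange. As written, your outline for the unconditional item would not close.
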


	In Section~\ref{SCC.section}, we have already discussed the results of Luk--Oh \cite{JonathanStab,JonathanStabExt} achieving a proof of spherically-symmetric Strong Cosmic Censorship for \eqref{1.1}--\eqref{5.1} with $q_0=m^2=0$ . Interestingly, their argument does not operate any distinction between spacetimes with the leftmost diagram of Figure~\ref{fig:nospacelike}, and those with   rightmost diagram of Figure~\ref{fig:nospacelike}. This is because $C^2$-inextendibility though $\mathcal{S}=\{r=0\}$ is guaranteed due to the blow-up of Kretschmann scalar, and thus, no quantitative estimates on $\mathcal{S}$ are necessary for their proof of Strong Cosmic Censorship. Therefore, even in the $q_0=m^2=0$ case, Theorem~\ref{thm.III} provides the first  non-trivial quantitative estimates on the singularity $\mathcal{S}$ in two-ended spherically-symmetric black hole solutions of \eqref{1.1}--\eqref{5.1}.

	\begin{figure}[H]	\begin{center}
			\includegraphics[width=150 mm, height=50 mm]{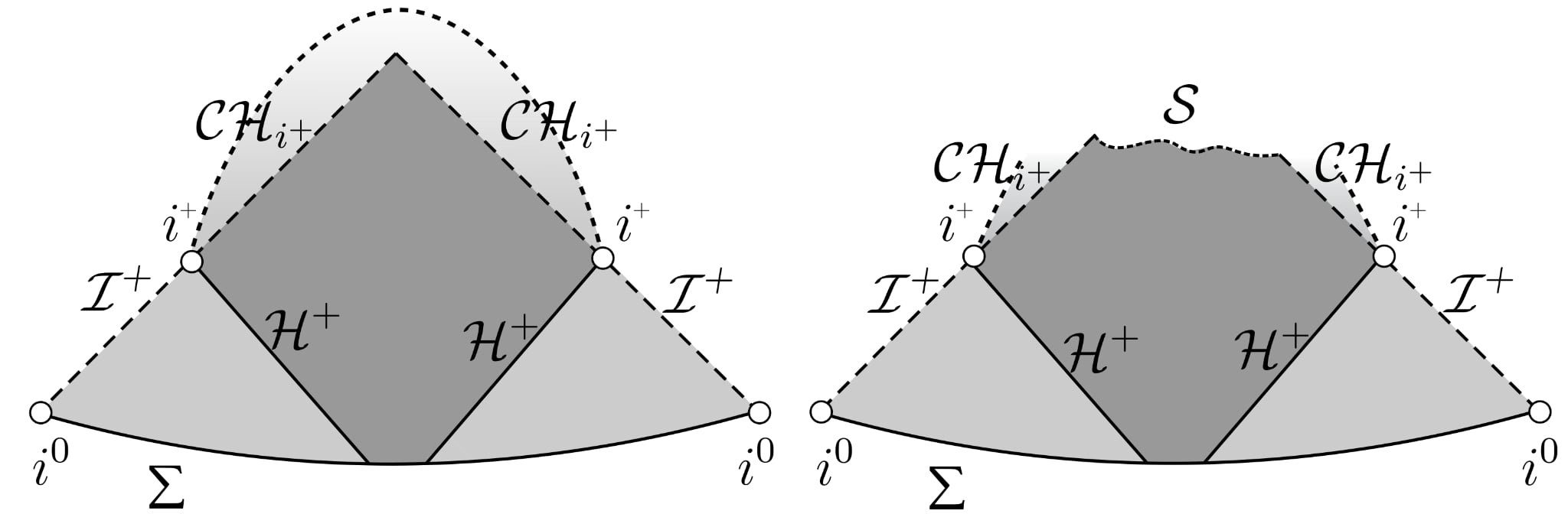}
		\end{center}
		\caption{Left: Two-ended black hole with a Cauchy horizon $\CH$ and no spacelike singularity. \\Right: Two-ended black hole with coexisting Cauchy horizon $\CH$ and   singularity $\mathcal{S}=\{r=0\}$.}\label{fig:nospacelike}\end{figure}
	
	\subsection{Dynamics of Kasner metrics and Kasner metrics with variable exponents} \label{Kasner.section}
	
	\paragraph{Properties of the Kasner solution} We recall the Kasner spacetime \eqref{Kasner.exact}, where $\mathcal{S}=\{\tau=0\}$ is a spacelike singularity if $(p_1,p_2,p_3) \notin \{(0,0,1), (1,0,0), (0,1,0)\}$, at which tidal deformations are infinite. The case  $(p_1,p_2,p_3) \in \{(0,0,1), (1,0,0), (0,1,0)\}$ is degenerate, however, and we recall that in this instance the Kasner spacetime is locally isometric to Minkowski spacetime, i.e., it is flat. The sign of $p_i$ moreover indicates the type of deformation: $p_i>0$  corresponds to a contraction in the $x_i$ direction, and  $p_i<0$ a dilation. If $\phi \equiv 0$ in \eqref{1.1}--\eqref{5.1}, the Kasner spacetime \eqref{Kasner.exact} is a solution of the Einstein vacuum equations and \eqref{p.Kasner} (where $p_{\phi}=0$) mandates that at least one of the $p_i$'s is negative. Note that this constraint disappears in the presence of a scalar field (i.e., with $p_{\phi}\neq0$),  or under $\mathbb{U}(1)$-symmetry, or in $(d+1)$ spacetime-dimension for $d\geq 10$, and thus there exists Kasner solutions where all $p_i$'s are positive in these cases, see e.g.\ the introduction of \cite{FournodavlosRodnianskiSpeck}.
	\paragraph{Comparison between the Schwarzschild  and the Kasner  spacelike singularities}	 The Schwarzschild metric \eqref{Schwarz} can be expressed in the unit proper-time gauge to compare $\mathcal{S}=\{r=0\}$ to the Kasner spacelike singularity $\{\tau=0\}$. This gauge change shows that asymptotically as $\tau \rightarrow 0$,  the Schwarzschild metric converges to a Kasner solution \eqref{Kasner.exact} with $(p_1,p_2,p_3)= (-\frac{1}{3},\frac{2}{3},\frac{2}{3})$, which explains  why we previously mentioned an infinite dilation in the radial direction and infinite contraction in the orthoradial directions at $\mathcal{S}=\{r=0\}$ -- the so-called ``spaghettification'' \cite{gravitation,Hawking}. We moreover note that, in view of the fact that the Oppenheimer--Snyder spacetime coincides with Schwarzschild \eqref{Schwarz} outside a spatially-compact region, the Kasner exponents associated to its spacelike singularity are also $(p_1,p_2,p_3)= (-\frac{1}{3},\frac{2}{3},\frac{2}{3})$, at least near timelike infinity $i^+$.

	\paragraph{Kasner-like metrics with variable exponents} As a generalization of the exact Kasner spacetime \eqref{Kasner.exact}, one can also  consider generalized Kasner solutions with variable exponents $(p_1(x),p_2(x),p_3(x))$ of the  form: \begin{equation}\label{Kasner.approx}\begin{split}
			&	g= -d\tau^2 + \sum_{1\leq i,j \leq 3} a_{i j}(\tau,x) \tau^{2 p_{\max\{i,j\}}(x)} dx^{i} dx^{j},\\ & \lim_{\tau \rightarrow 0}  a_{i j}(\tau,x) = c_{i j}(x).\end{split}
	\end{equation} where $(p_1(x),p_2(x),p_3(x))$ still satisfy the Kasner relation \eqref{p.Kasner}. While typically \eqref{Kasner.approx} is expressed for $(x_1,x_2,x_3) \in \mathbb{T}^3$, there also exist spherically-symmetric analogues of \eqref{Kasner.approx} where $p_2(x)=p_3(x)$. It was shown in \cite{Warren1}  (see also \cite{AnZhang}) that spacelike singularities occurring in spherically-symmetric solutions of \eqref{1.1}--\eqref{5.1} with $q_0=0$ (uncharged scalar field) are locally described by a Kasner-like metric with variable exponents of the above form.\\
	Moreover, a large class of solutions of the Einstein vacuum equations of the form \eqref{Kasner.approx} where $c_{i j}(x)$, $p_i(x)$ are merely assumed to be smooth functions, was constructed in the work of Fournodavlos--Luk \cite{FournodavlosLuk} assuming\footnote{\label{f}More precisely, it is enough to assume that none of the $p_i$'s vanish (which is assumed in \cite{FournodavlosLuk})  and use the compactness of $\mathbb{T}^3$ in the setting of \cite{FournodavlosLuk} to obtain \eqref{nondeg}. This non-degeneracy condition \eqref{nondeg} moreover plays an important role in the analysis of \cite{FournodavlosLuk}.} a non-degeneracy  condition of the following form on the Kasner exponents: there exists $\epsilon>0$ such that for all $x$ \begin{equation}\begin{split}\label{nondeg} 
			\min\{|p_1|(x),|p_2|(x),|p_3|(x)\} > \epsilon.
		\end{split}
	\end{equation}The results in \cite{Warren1,AnZhang}  are similarly restricted to Kasner-like metric obeying the non-degeneracy assumption \eqref{nondeg} (relatedly, see  footnote~\ref{f}). \\Finally, we  discuss  spherically-symmetric solutions of \eqref{1.1}--\eqref{5.1} in spherical symmetric with $F\equiv 0 $ (Christodoulou's model from \cite{Christo1,Christo2,Christo3}). Due to the absence of electromagnetism, the terminal boundary $\mathcal{S}$ is spacelike (in contrast to the general case, compare with Theorem~\ref{CH.thm.SS}) with no Cauchy horizon emanating from $i^+$.  The work of An--Gajic \cite{DejanAn} considers the asymptotic behavior of $\mathcal{S}$ towards $i^+$ and finds that it converges to the Schwarzschild singularity of \eqref{Schwarz}. Thus, $\mathcal{S}$ is a Kasner-like metric with non-degenerate exponents $(p_1(v),p_2(v),p_3(v))$ indexed by an Eddington--Finkelstein coordinate $v$ as in Theorem~\ref{thm.I}, where $i^+$ corresponds to $v=+\infty$ and $$ \lim_{v \rightarrow +\infty} (p_1(v),p_2(v),p_3(v)) = (-\frac{1}{3},\frac{2}{3},\frac{2}{3}),$$ consistently with the earlier discussion on the Kasner-like character of the Schwarzschild singularity $\mathcal{S}=\{r=0\}$.
	
	\paragraph{Kasner singularities with degenerate exponents} In Theorem~\ref{thm.I} and subsequent results, we construct \emph{the first non-trivial examples of Kasner-like metrics with degenerating exponents}. Such singularities were previously conjectured to occur at the junction between the spacelike singularity  and the Cauchy horizon in view of numerical results in  \cite{Chesler21}. We show precisely that, as $\tau \rightarrow0$, the metric takes the following form near $\mathcal{S}=\{\tau=0\}$ \begin{equation}\label{kasner.deg.intro}\begin{split}
			&		g\approx -d\tau^2 +  \tau^{2(1-2p(x))} dx^2+  \tau^{2p(x)} (d\theta^2+\sin^2(\theta) d\varphi^2),\\ & p(x) \approx x^{\frac{1}{2s-1}} \text{ as } x \rightarrow 0,  \end{split}
	\end{equation} where $\{\tau=x=0\}$ corresponds to $\CH\cap \mathcal{S}$. Note that, since $s>\frac{1}{2}$,  $(1-2p(x),p(x),p(x)) \rightarrow (1,0,0)$ as $x\rightarrow 0 $, therefore \eqref{nondeg} is indeed violated for the spacetimes of Theorem~\ref{thm.I}. It is instructive to express \eqref{kasner.deg.intro}  and the scalar-field asymptotics near $\mathcal{S}$ in the Eddington--Finkelstein coordinate $v$ already introduced in Theorem~\ref{thm.I} 
	
	\begin{equation}\label{kasner.deg.intro2}\begin{split}
			&	 p(u,v) \approx v^{-1},\\ & \phi(u,v) \approx \sqrt{v}\ \log(r^{-1}(u,v))\approx \frac{1}{\sqrt{v}}\ \log(\tau^{-1}) , \end{split}
	\end{equation}
	where we have introduced the area-radius $r(u,v)$. We remark on the following fundamental features of \eqref{kasner.deg.intro2}: \begin{itemize}
		\item The exponents  $(1-2p(v),p(v),p(v))$ are all positive, and thus in-falling observer experience infinite tidal contractions in all directions, in contrast to the Schwarzschild \eqref{Schwarz} and the Oppenheimer--Snyder models.
		\item $p(u,v)$ converges to $0$ at an inverse-polynomial rate: it is expected  due to the estimates of Theorem~\ref{CH.thm.SS}.
		\item The $v$-dependence of $p(u,v)$ and $\phi(u,v)$, however,  is surprisingly independent of $s$ introduced in \eqref{decay.intro}.% This is surprising in view of the fact that $s$ appears in the assumptions of Theorem~\ref{thm.I} (see \eqref{hyp1} and \eqref{hyp2} in Theorem~\ref{main.thm.I}).
		
		\item It is unclear whether  it is possible to construct Kasner-like metrics of the form \eqref{kasner.deg.intro} with different inverse-polynomial (or even exponential) rates than those prescribed by \eqref{kasner.deg.intro2}.
	\end{itemize}
	We will return to the $s$-independence of \eqref{kasner.deg.intro2} when discussing the proof  in Section~\ref{section.proof}.

	\paragraph{Stability of Kasner singularities} In the breakthrough work of Fournodavlos--Rodnianski--Speck \cite{FournodavlosRodnianskiSpeck}, it is proven that Kasner spacetimes \eqref{Kasner.exact} with all positive $p_i$'s are stable with respect to small perturbations for the Einstein vacuum or scalar-field equations. However, it is conjectured that \eqref{Kasner.exact} with one of the $p_i$'s negative is unstable, since the celebrated BKL heuristics \cite{BKL1,BKL2}, see \cite{fluctuating,ringstrombianchi,Warreninv,Warreninv2} for related mathematical works.\\
	We remark that the $v$-dependent Kasner exponents in \eqref{kasner.deg.intro}, \eqref{kasner.deg.intro2} are  positive, which may lead to conjecturing the stability of $\mathcal{S}$ obtained in Theorem~\ref{thm.I} against \emph{non-spherical perturbations} for the Einstein-scalar-field system. \begin{rmks} As we will see in Section~\ref{section.proof}, the presence of a non-trivial scalar field $\phi$ is crucial in the proof of Theorem~\ref{thm.I}, beyond the mere fact that it enables the existence of Kasner metric of the form \eqref{Kasner.exact}  with all $p_i$ positive solving \eqref{1.1}--\eqref{5.1}.
		Recall, however, that there does not exist  Kasner  metrics \eqref{Kasner.exact}  with all $p_i$ positive  solving the Einstein equations in vacuum in $(3+1)$ spacetime dimensions. Because of this, the presence and dynamics of a spacelike singularity in vacuum, as well as the formulation of a conjecture analogous to Conjecture~\ref{spacelike.conj}, are unclear. Celebrated BKL heuristics  from the 70's \cite{BKL1,BKL2,mixmaster} describe oscillatory dynamics which could involve chaotic behavior. How this  scenario articulates with a presence of a Cauchy horizon in the dynamics of the vacuum Einstein equations is not answered in the current work and remain  open at present.
	\end{rmks}\noindent The proof of \cite{FournodavlosRodnianskiSpeck}, however, takes advantage of a non-degeneracy condition of the form \eqref{nondeg} (see footnote~\ref{f}) and thus does not apply to the spacelike singularities constructed in our Theorem~\ref{thm.I}. This is because the estimates employed in \cite{FournodavlosRodnianskiSpeck} lose an arbitrarily amount of derivatives when $\epsilon\rightarrow 0 $ in \eqref{nondeg}, similarly to the scheme employed in the backwards construction of solutions in \cite{FournodavlosLuk}, see  \cite{Warrendeg} for a related  general scattering result in this context. No such loss of derivatives is encountered, however, in the proof of Theorem~\ref{thm.I} (see Section~\ref{section.proof}).

	\subsection{Key ingredients in the proof of Theorem~\ref{thm.I}}\label{section.proof}
	We will use the following notation for the spherically-symmetric metric $g$ and electromagnetic field $F$.	\begin{equation} 
		g= -\Omega^2(u,v) du dv+ r^{2}(u,v)   (d\theta^2 + \sin(\theta)^2 d\varphi^2),\ F = \frac{Q(u,v)}{r^2(u,v)}\ \Omega^2(u,v) du \wedge dv, \label{SS.intro}
	\end{equation} 
	
	\subsubsection{General proof strategy}\label{proof.section1}
	
	We  prove that the Cauchy development of our initial data on $C_{out} \cup \Cin$ comprises a spacelike singularity $\mathcal{S}$ and a null Cauchy horizon $\CH$ as depicted in Figure~\ref{fig:local}. At the endpoint of the Cauchy horizon, denoted $(\uch,v=+\infty)$ (see Section~\ref{geometricframework}),  we prove that the area-radius $r$ is zero  (recalling the notations of \eqref{SS.intro}), i.e.,
	
	\begin{equation} \lim_{v \rightarrow +\infty} r(\uch,v)= \lim_{u \rightarrow \uch}  r_{|\CH}(u)=0, \end{equation}
	
	It is well-known that the system \eqref{1.1}--\eqref{5.1} in spherical-symmetry is energy-supercritical when the area-radius $r$ tends to $0$. Therefore, the precise $r$-weights and degenerations of $(g,\phi,F)$ will be crucial in our analysis.	The proof will be divided into two parts, which will involve distinct regimes for the Einstein equations: \begin{itemize}
		\item Below the Cauchy horizon, i.e., for $u\leq \uch$, where $r(u,v)$ tends to $0$ only as $(u,v) \rightarrow (\uch,+\infty)$.
		\item Below the spacelike singularity $\mathcal{S}$,  i.e., for $ \uch < u \leq \uch+ \ep $, where for every $u$, there exists $v_{\mathcal{S}}(u)<+\infty$ such that $(u,v_{\mathcal{S}}(u))\in \mathcal{S}$, hence $r(u,v_{\mathcal{S}}(u))=0$. Here, $\ep>0$ is a small constant.
	\end{itemize}
	The highly nonlinear nature of \eqref{1.1}--\eqref{5.1} typically requires a bootstrap method  to obtain quantitative estimates and the presence of a smallness parameter is necessary to close the argument. In the region below the spacelike singularity $\mathcal{S}$, this smallness parameter is $\ep$ and thus, we only describe $\mathcal{S}$ in a neighborhood of $\CH \cap \mathcal{S}$. 
	
	\subsubsection{A priori estimates under the Cauchy horizon}\label{proof.section2}

	Under the Cauchy horizon, there is, however, no analogous smallness parameters, which makes it difficult to close a bootstrap argument. We rely instead on novel a priori estimates of the following form (recalling  \eqref{SS.intro})	\begin{equation}\label{apriori.intro}
		\int_{u_0}^{u} \Omega^2(u',v) |\phi|^p(u',v) du' \lesssim_p e^{2K_- v},
	\end{equation}
	for any $p\geq 0$ and	with the usual Eddington--Finkelstein $v$-coordinate  choice (see Section~\ref{geometricframework}). The $e^{2K_- v}$ term, where $K_-<0$, already appears in the corresponding term at the Reissner--Nordstr\"{o}m Cauchy horizon: $$ \Omega^2_{RN}(u,v) \approx e^{2K_-(u+v)} \text{ as } v \rightarrow +\infty.$$
	
	The key feature of 	 \eqref{apriori.intro} is a control of  $\Omega^2$ by an exponential scalar-field flux, following from a careful application of the Raychaudhuri equation, that can absorb polynomial scalar-field powers like $|\phi|^p$ (see  Lemma~\ref{lemma.hardy}).

	\eqref{apriori.intro}  gives a priori control of many geometric quantities \emph{without any quantitative knowledge of the scalar field}  (see Proposition~\ref{apriori.prop1}). It is at the heart of our new proof of Cauchy horizon breakdown discussed in Section~\ref{breakdown.section} and proved in \cite{bif2}, and also constitutes a crucial ingredient in the nonlinear analysis of Theorem~\ref{thm.I}.
	
	One of the key estimates provides the rate at which $r$ tends to $0$ in terms of $v$, schematically given by \begin{equation}\label{r.intro}\begin{split} &  r^2(u,v) \approx (\uch-u) + O(v^{1-2s}), \\ & -r\rd_v r \approx v^{-2s},\  -r\rd_u r \approx  1.
		\end{split}
	\end{equation} In particular, we show that under the Cauchy horizon (i.e., for $u\leq \uch$), $r$ degenerates to $0$ at worst at an inverse polynomial rate in $v$. We remark that \eqref{r.intro} follows from the propagation equations on $\rd_u(r\rd_v r)=...$  and  $\rd_v(r\rd_u r)=...$  whose right-hand-side $...$ is a negligible error term.

	\subsubsection{Scalar field estimates under the Cauchy horizon} \label{proof.section3} A priori estimates on $(r,\Omega^2)$ are already given  in  Section~\ref{proof.section2}, so addressing the region under the Cauchy horizon reduces to propagating linear estimates on $\phi$ for a metric given by $(r,\Omega^2)$. The  quantitative behavior of $\phi$ originates from the propagation of estimates on $rD_v \phi$ in the ingoing direction, as follows:
	\begin{equation}\label{phi.intro}\begin{split} & |\phi|(u,v) \lesssim r^{-\frac{1}{2s-1}}(u,v), \\ & r|D_v \phi| \lesssim v^{-s},\  r|D_u\phi|(u,v) \lesssim r^{-\frac{s}{s-\frac{1}{2}}}(u,v),
		\end{split}
	\end{equation} where we recall that $s>\frac{1}{2}$. Note that \eqref{phi.intro} is consistent with the following blow-up behavior of the scalar field: \begin{equation}\label{phi.intro2}
		\phi(\uch,v) \approx \sqrt{v}.
	\end{equation}
	
	\subsubsection{Key geometric estimates near the spacelike singularity} \label{proof.section4} In the region $\{ u \geq \uch\}$, we prove that a spacelike singularity $\mathcal{S}=\{r=0\}$ develops. Solutions to the linear wave equation  on Schwarzschild $\Box_{g_s} \phi=0$ (recall \eqref{Schwarz}) blow-up at a logarithmic rate in $r$ as $r\rightarrow 0$, see e.g.~\cite{JanGreg} $$ \phi(t,r,\theta,\varphi) \sim A(t)\log(r) \text{ as } r \rightarrow 0.$$
	Consistently with \eqref{phi.intro2}, we prove an upper-bound estimate of the following form, for $\uch \leq u\leq \uch+\epsilon$: \begin{equation}\label{phi.intro3}
		|\phi|(u,v) \lesssim \sqrt{v}\ [ 1+ \log(\frac{r_0(v)}{r(u,v)})],
	\end{equation} where $r_0(v)= r(\uch,v) \approx v^{\frac{1}{2}-s}$ as given by \eqref{r.intro}.	Assuming  the metric coefficients $(r,\Omega^2)$ take a Kasner form \eqref{Kasner.approx}, \eqref{phi.intro3} is consistent with our following key bootstrap assumption, for some small $\delta>0$:  \begin{equation}\label{bootstrap.intro}
		\Omega^2(u,v) \lesssim e^{2K_- v}\cdot [r(u,v)]^{\delta v }.
	\end{equation}
	Such estimates were already used  in our previous work \cite{violent} in the spatially-homogeneous case, where a large constant $\epsilon^{-2}$ appeared in lieu of $v$. Following the strategy of \cite{violent}, our main idea is  then to exploit  \eqref{bootstrap.intro} to close all remaining estimates on the scalar field, such as \eqref{phi.intro3}, together with estimates on $r$ of the form \begin{equation}\label{lim.intro}
		\lim_{u \rightarrow u_{\mathcal{S}}(v)} -r\rd_v r(u,v) \text{ exists  and }   -r\rd_v r(u,v)  \approx v^{-2s},
	\end{equation} and its ingoing equivalent. Estimates such as \eqref{lim.intro} are crucial in obtaining a stable ``quiescent'' spacelike singularity, see for instance \cite{Warren1}, where \eqref{lim.intro} must be assumed a priori before proving Kasner asymptotics. We finally remark on the necessity to track both $r$ and $v$ weights, contrary to the local studies of Kasner singularities in which only $r$ weights need to be tracked, see for instance \cite{FournodavlosLuk,Warren1,Warreninv,Warreninv2,Ringstrom23}.

	\subsubsection{Precise scalar-field control near the spacelike singularity} \label{proof.section5} The important remaining task is to (im)prove \eqref{bootstrap.intro}, a task for which upper bounds such as \eqref{phi.intro3} are insufficient: we must also \emph{establish scalar field lower bounds}. In \cite{violent} the same issue occurs and it is resolved invoking linear scattering arguments and taking advantage to the proximity of the Reissner--Nordstr\"{o}m background, but no analogous method is available  in the spacetime region covered by Theorem~\ref{thm.I}. The key idea is to exploit an approximate monotonicity\footnote{This monotonicity property was first exploited by Dafermos  in his foundational  paper \cite{MihalisPHD}; however, for the uncharged scalar field case he considered, the monotonicity is exact, without any error terms, contrary to the estimate obtained in \eqref{monotonicity}.}  property satisfied by $r\rd_v \phi$ and $r\rd_u \phi$. More precisely, under \eqref{bootstrap.intro}, \eqref{r.intro} we have \begin{equation}\begin{split}\label{monotonicity}
			&	\rd_u (r\rd_v \phi) \approx -v^{-2s} \rd_u \phi+ O(e^{2K_- v}),\\ &  	\rd_v (r\rd_u \phi) \approx -\rd_v \phi+ O(e^{2K_- v}),
		\end{split}
	\end{equation} which implies, if the initial data $\rd_v \phi(u_0,v) \gtrsim v^{-s}$ is positive that, up to $r=0$, we have, for some $C>0$						
	% The Gr\"{o}nwall argument used in \cite{DejanAn}  for \eqref{1.1}--\eqref{5.1} with   $F\equiv 0$ can be generalized to the $F\not \equiv 0$ case to prove scalar field upper bounds; however, the lower bounds do not generalize  due a lack of monotonicity. Instead, we rely on the conservation of the Hawking mass, which, when combined with \eqref{bootstrap.intro}, provides the energy  estimate
	\begin{equation}\label{lower.intro}r\rd_v \phi(u,v)\geq  C v^{-s}.			 \end{equation}% where $C>0$; note that the sphere $(u,v)$ must be trapped (as obtained in Section~\ref{proof.section2}) for \eqref{lower.intro} to be valid.

	While \eqref{lower.intro} is our key estimate and our only  scalar-field lower bound, it is not sharp, and insufficient  to prove \eqref{bootstrap.intro}! The sharp lower bound (see \cite{DejanAn} in the uncharged case)  relies on the point-wise propagation of $r^2 \rd_v \phi$ in the region $\{r\leq r_0(v)\}$, instead. However, for $r\geq \ep r_0(v) \approx \ep v^{\frac{1}{2}-s}$, \eqref{lower.intro} still gives   a sub-optimal bound \begin{equation}\label{lower.intro2}r^2\rd_v \phi(u,v)\geq  C\cdot \ep \cdot  v^{\frac{1}{2}-2s}.			 \end{equation}
	
	%  more precisely, the use of the Raychaudhuri equation combined with \eqref{lower.intro} would only give \eqref{bootstrap.intro} with $r_0(v)$ replacing $r$, which is too weak an estimate.
	
	The next step in our argument is to show \eqref{lower.intro2} still holds in the region $\{r\leq \ep r_0(v)\}$, which does not follow from \eqref{lower.intro}. To this effect, we prove Asymptotically Velocity Term Dominated (AVTD)  estimates which are familiar in the context of spacelike singularities \cite{BKL1,BKL2}. As an example, we obtain an estimate of the following form: \begin{equation}\label{X.intro}
		|X\phi|(u,v) \lesssim v^{2s-\frac{1}{2}} ( 1+ \log^2( \frac{r_0(v)}{r(u,v)})), \text{ where } X= \frac{1}{-r\rd_v r } \rd_v - \frac{1}{-r\rd_v u } \rd_u.
	\end{equation} Up to the logarithmic terms, \eqref{X.intro} shows that $X\phi$ obeys a much better estimate than $\frac{D_v \phi}{-r \rd_v r}$ when $r\ll \ep r_0(v)$, comparing with \eqref{lower.intro2}. To obtain \eqref{X.intro}, we commute \eqref{1.1}--\eqref{5.1} with $X$, taking advantage of $X(r)=0$.

	The vector field $X$, together with this commutation strategy has been previously used in \cite{Warren1} in the context of non-degenerating Kasner singularities. A key difference, however,  is that in our context, the ingoing and outgoing directions are not interchangeable, and moreover $v$-weights need to be tracked  throughout the estimates.
	
	Finally, we use \eqref{X.intro}, taking advantage of the following estimate, a consequence of the wave equation:  \begin{equation}
		|\rd_u (r^2 \rd_v \phi)|\lesssim |X\phi|+O(e^{2K_- v}),
	\end{equation} which gives \begin{equation}\label{lower.intro3}
		\bigl| r^2 \rd_v \phi-r^2 \rd_v \phi(r=\ep r_0(v)) \bigr|\lesssim \ep^2 \log^2(\ep) \cdot v^{\frac{1}{2}-2s}.
	\end{equation}
	
	To conclude the proof of \eqref{lower.intro2} when $r\leq \ep r_0(v)$, we combine \eqref{lower.intro2} with \eqref{lower.intro3}, using crucially the bound $\ep^2 \log^2(\ep) \ll \ep$,   which improves \eqref{bootstrap.intro} and closes the bootstrap argument. We also show that $X$-derivatives of the metric $g$ also obey better estimates than regular derivatives near $\mathcal{S}$, consistently with the AVTD behavior.

	\subsubsection{Converting into Kasner-type estimates on the metric} \label{proof.section6} 
	To  cast the metric into the Kasner-type form \eqref{kasner.deg.intro}, we resort to a change of variable $(u,v) \rightarrow (\tau,x)$ such that				
	\begin{equation*}
		\tau(u,v)  \approx r^2(u,v) \Omega(u,v) v^{s-1},\  x(u,v) \approx v^{2-2s} \text{ as } r\rightarrow  0\text{ and } v\rightarrow+\infty, 
	\end{equation*}
	see Theorem~\ref{main.thm}, Statement~\ref{main.thm.II} for  details. We then define the third (and smallest) Kasner exponent $p(u,v)$ as $$ [\tau(u,v)]^{2p(u,v)}= r^2(u,v).$$
	
	Metric estimates such as \eqref{bootstrap.intro} and their improvements then show  the degeneracy of Kasner exponents: \begin{equation}
		p(u,v) \approx v^{-1} \text{ as } v\rightarrow+\infty.
	\end{equation}

	\subsubsection{Construction of initial data} \label{proof.section7}
	We first mentioned in  Remark~\ref{rmk.3} that we construct a large class of initial data on $C_{out} \cup \Cin$ such that the conclusion of Theorem~\ref{thm.I} applies to the resulting Cauchy development. The key condition is to arrange for $$ \Cin \text{ to be trapped and } r\rightarrow 0 \text{ towards the endpoint of } \Cin.$$ 
	
	We provide several constructions achieving this, including some with a large scalar field, see Remark~\ref{data.construction.remark}.
	
	\section*{Outline of the paper}
	
	In Section~\ref{geometricframework}, we introduce the necessary geometric preliminaries, together with relevant gauge choices and the Einstein--Maxwell--Klein--Gordon equations \eqref{1.1}--\eqref{5.1} in $(u,v)$-coordinates. In Section~\ref{thm.section}, we provide precise statements of our main results, together with material from previous works we will be using. In Section~\ref{apriori.section}, we provide a priori estimates in the region situated under the Cauchy horizon that are crucial to both our new results and a novel proof of Cauchy horizon breakdown. In Section~\ref{section.CH}, we prove quantitative estimates under the Cauchy horizon. In Section~\ref{quant.spacelike.section}, we obtain quantitative estimates near the spacelike singularity and close our main bootstrap assumptions. In Section~\ref{kasner.section}, we use the results of previous sections  to cast the metric in Kasner form and obtained more refined estimates which conclude the proof of Theorem~\ref{thm.I}. In Section~\ref{section.data}, we provide the construction of a large of initial data on which our quantitative estimates apply. 
	
	\section*{Acknowledgement} The author warmly thanks Amos Ori for inspiring conversations on black hole singularities. We also gratefully acknowledge the support from the NSF Grant DMS-2247376.
	
	\section{Geometric preliminaries} \label{geometricframework}

	The purpose of this section is to provide the precise setup, together with the definition of various geometric quantities, the coordinates and the equations that we will use throughout the paper.

	\subsection{Spherically symmetric solutions} \label{preliminary}
	We consider $(M,g,\phi,F)$, a regular solution of the system \eqref{1.1}, \eqref{2.1}, \eqref{3.1}, \eqref{4.1}, \eqref{5.1}, where $(M,g)$ is a Lorentzian manifold of dimension $3+1$, $\phi$ is a complex-valued function on $M$ and $F$ is a real-valued 2-form on $M$. 	$(M,g,\phi,F)$ is related to a quadruplet of scalar functions $ \{\Omega^2(u,v), r(u,v), \phi(u,v), Q(u,v)\}$, with $(u,v) \in \mathcal{Q}^+ \subset \RR^{1+1}$ by \begin{equation} \label{gdef}
		g= g_{\mathcal{Q}^+}+ r^{2} \cdot  (d\theta^2 + \sin(\theta)^2 d\varphi^2)= -\Omega^2(u,v) du dv+ r^{2}(u,v) \cdot  (d\theta^2 + \sin(\theta)^2 d\varphi^2),
	\end{equation} 
	$$ F(u,v)= \frac{Q(u,v)}{2r^{2}(u,v)}  \Omega^{2}(u,v) du \wedge dv.$$ 
	One can now formulate the Einstein equations \eqref{1.1}, \eqref{2.1}, \eqref{3.1}, \eqref{4.1}, \eqref{5.1} as a system of non-linear PDEs on $\Omega^2$, $r$, $\phi$ and $Q$ expressed in the double null coordinate system $(u,v) \in \mathcal{Q}^+  $: \begin{equation}\label{Omega}
		\partial_{u}\partial_{v} \log(\Omega^2)=-2\Re(D_{u} \phi \overline{D_{v}\phi})+\frac{ \Omega^{2}}{2r^{2}}+\frac{2\partial_{u}r\partial_{v}r}{r^{2}}- \frac{\Omega^{2}}{r^{4}} Q^2,
	\end{equation} \begin{equation}\label{Radius}\partial_{u}\partial_{v}r =\frac{- \Omega^{2}}{4r}-\frac{\partial_{u}r\partial_{v}r}{r}
		+\frac{ \Omega^{2}}{4r^{3}} Q^2 +  \frac{m^{2}r }{4} \Omega^2 |\phi|^{2}, \end{equation} 	\begin{equation}\label{Field}
		D_{u} D_{v} \phi =-\frac{ \partial_{v}r \cdot D_{u}\phi}{r}-\frac{\partial_{u}r \cdot  D_{v}\phi}{r} +\frac{ iq_{0} Q \Omega^{2}}{4r^{2}} \phi
		-\frac{ m^{2}\Omega^{2}}{4}\phi,\end{equation} 	\begin{equation} \label{chargeUEinstein}
		\partial_u Q = -q_0 r^2 \Im( \phi \overline{ D_u \phi}),
	\end{equation}	\begin{equation} \label{ChargeVEinstein}
		\partial_v Q = q_0 r^2 \Im( \phi \overline{D_v \phi}),
	\end{equation}
	\begin{equation} \label{RaychU}\partial_{u}(\frac {\partial_{u}r}{\Omega^{2}})=\frac {-r}{\Omega^{2}}|  D_{u} \color{black}\phi|^{2}, \end{equation} 
	\begin{equation} \label{RaychV}\partial_{v}(\frac {\partial_{v}r}{\Omega^{2}})=\frac {-r}{\Omega^{2}}|D_{v}\color{black}\phi|^{2},\end{equation} where the gauge derivative is defined by $D_{\mu}:= \partial_\mu+iq_0 A_{\mu}$, and the electromagnetic potential $A_{\mu}=A_u du + A_v dv$ satisfies 	\begin{equation}\label{Maxwell} \partial_u A_v - \partial_v A_u = \frac{Q \Omega^2}{2r^2}.\end{equation}	 Note that, under our electromagnetic gauge choice $A_v \equiv 0$ (see \eqref{A.gauge}), \eqref{Field} can also be written as
	\begin{equation}\label{Field4}
		\partial_{u}\partial_{v} \phi =-\frac{\partial_{u}\phi\partial_{v}r}{r}-\frac{\partial_{u}r \partial_{v}\phi}{r} +\frac{ q_{0}i \Omega^{2}}{4r^{2}}Q \phi
		-\frac{ m^{2}\Omega^{2}}{4}\phi- i q_{0} A_{u}\frac{\phi \partial_{v}r}{r}-i q_0 A_{u}\partial_{v}\phi.\end{equation}

	Subsequently, we define the Lorentzian gradient of $r$, and introduce the mass ratio $\mu$ by the formula $$ 1-\mu:=g_{\mathcal{Q}^+}(\nabla r,\nabla r),$$ where we recall that $g_{\mathcal{Q}^+}$ was the spherically symmetric part of $g$ defined in \eqref{gdef}. We can also define the Hawking mass:	$$ \rho := \frac{\mu \cdot r}{2} =\frac{r}{2} \cdot(1- g_{\mathcal{Q}^+} (\nabla r, \nabla r )).$$	
	
	Notice that the $(u,v)$ coordinate system, we have $g_{\mathcal{Q}^+} (\nabla r, \nabla r )= \frac{-4 \partial_u r \cdot \partial_v r}{\Omega^2}$. Now we introduce the modified mass $\varpi$ which involves the charge $Q$:
	
	\begin{equation} \label{electromass}
		\varpi := \rho + \frac{Q^2}{2r}= \frac{\mu r}{2} + \frac{Q^2}{2r} ,
	\end{equation}
	An elementary computation relates the previously quantities : 	\begin{equation} \label{murelation}
		1-\frac{2\rho}{r} = 1-\frac{2\varpi}{r}+\frac{Q^2}{r^2}=\frac{-4 \partial_u r \cdot \partial_v r}{\Omega^2}
	\end{equation}
	On the sub-extremal Reissner--Nordstr\"{o}m black hole \eqref{RN} of mass $\varpi \equiv M>0$, charge $Q \equiv e$ with $0<|e|<M$, we denote $r_+=M+\sqrt{M^2-e^2}$, the radius of the event horizon, its surface gravity $2K_+:= \frac{2}{r^2_+}(M- \frac{e^2}{r_+})>0$, and $r_-=M-\sqrt{M^2-e^2}$, the radius of the Cauchy horizon, its surface gravity $2K_-:= \frac{2}{r^2_-}(M- \frac{e^2}{r_-})<0$.

	Now we can reformulate our former equations to put them in a form that is more convenient to use. For instance, the Klein-Gordon wave equation \eqref{Field} can be expressed in different ways, using the commutation relation $[D_u,D_v]=\frac{ iq_{0} Q \Omega^{2}}{2r^{2}}$ %\begin{equation}\label{Field2}		D_{u}(rD_{v} \phi) =-\partial_{v}r \cdot D_u\phi +\frac{ \Omega^{2} \cdot \phi}{4r} \cdot ( i q_{0} Q-m^2 r^2),		\end{equation}\begin{equation}\label{Field3}		D_{v}(rD_{u} \phi) =-\partial_{u}r \cdot D_{v}\phi 		- \frac{ \Omega^{2} \cdot \phi}{4r} \cdot ( i q_{0} Q+m^2 r^2)		,\end{equation}  which can also be written,
	and  under our electromagnetic gauge choice $A_v \equiv 0$ (see \eqref{A.gauge}): 
	\begin{equation}\label{Field2}
		\rd_u \theta =  -\frac{\rd_v r}{r} \cdot\xi +\frac{ \Omega^{2} \cdot \phi}{4r} \cdot ( i q_{0} Q-m^2 r^2)- i q_0 A_u  r\rd_v \phi -i q_0 \partial_{v}r \cdot A_u \phi,
	\end{equation}  \begin{equation}\label{Field3}
		\rd_v \xi =  -\frac{\partial_{u}r }{r}\cdot \theta +\frac{ \Omega^{2} \cdot \phi}{4r} \cdot ( i q_{0} Q-m^2 r^2)- i q_0 A_u  r\rd_v \phi -i q_0 \partial_{v}r \cdot A_u \phi,
	\end{equation} where we introduced the notations $\theta=r\rd_v \phi$ and $\xi=r\rd_u \phi$.		Next, taking the real and imaginary parts of \eqref{Field2} and \eqref{Field3} we obtain, recalling that $\Re f$ and $\Im f$ denote the real and imaginary part of $f$ respectively:  \begin{equation}\label{Field2'}\begin{split}
			&		\rd_u \rtheta =  -\frac{\rd_v r}{r} \cdot\rxi -\frac{ \Omega^{2}}{4r} \cdot (  q_{0} Q \iphi+m^2 r^2\rphi)+ q_0 A_u \itheta + q_0 \partial_{v}r \cdot A_u \iphi,\\ & 	\rd_u \itheta =  -\frac{\rd_v r}{r} \cdot\ixi +\frac{ \Omega^{2}  }{4r} \cdot (  q_{0} Q \rphi-m^2 r^2 \iphi)-  q_0 A_u  \rtheta - q_0 \partial_{v}r \cdot A_u \rphi, \end{split}
	\end{equation}  \begin{equation}\label{Field3'}\begin{split}
			& 		\rd_v \rxi =  -\frac{\partial_{u}r }{r}\cdot \rtheta -\frac{ \Omega^{2}}{4r} \cdot (  q_{0} Q \iphi+m^2 r^2\rphi)+ q_0 A_u \itheta + q_0 \partial_{v}r \cdot A_u \iphi,\\ & 		\rd_v \ixi =  -\frac{\partial_{u}r }{r}\cdot \itheta +\frac{ \Omega^{2}  }{4r} \cdot (  q_{0} Q \rphi-m^2 r^2 \iphi)-  q_0 A_u  \rtheta - q_0 \partial_{v}r \cdot A_u \rphi.\end{split}\end{equation}

	We can also re-write  %\eqref{Omega} and 
	\eqref{Radius}, introducing the notation $\lambda=\rd_v r$ and $\nu=\rd_u r$: 	%\begin{equation}\label{Omega3}	\partial_{u}\partial_{v} \log(r\Omega^2)=  \frac{ \Omega^{2}}{4r^{2}} \cdot \left(1-   \frac{3Q^2}{r^{2}} +m^{2}r^2 |\phi|^{2} - 8 r^2\Re( \frac{D_{u}\phi}{\Omega^2} \cdot  D_{v}\bar{\phi})  \right),\end{equation}	
	\begin{equation} \label{Radius3}
		-\partial_u  \partial_v (\frac{r^2}{2})	=\partial_u (-r \lambda) =	-\partial_v (r\nu ) = \frac{\Omega^2}{4}\cdot (1- \frac{Q^2}{r^2}-m^2 r^2 |\phi|^2).
	\end{equation}

	\subsection{Double null coordinate choice}

	We will work on a spacetime as in Figure~\ref{fig:local} which is the Cauchy development of bifurcate null hypersurfaces $C_{out}\cup \Cin$, on which we must specify gauge conditions for the $(u,v)$ coordinates. 
	
	On $C_{out}=\{u_0\}\times [v_0,+\infty)$, 	we fix the gauge condition on $v$ to be
	\begin{equation} \label{gauge1}
		\Omega^2(u_0,v)  = \Omega^2_{RN}(u_0,v) \sim e^{2K_-(M,e)(u_0+v)} \text{ as } v\rightarrow+\infty.
	\end{equation}
	We will impose $	r\rd_u r \equiv -1$ on  the future terminal boundary  of the spacetime,	more precisely, for fixed $u$, we denote $v_{\mathcal{B}^+}(u)$ to be the supremum of $v$ such that $(u,v) \in \mathcal{Q}^+$, and we impose 
	
	\begin{equation} \label{gauge2}
		r\rd_u r(u,v_{\mathcal{B}^+}(u)) \equiv -1.
	\end{equation} Note that imposing gauge \eqref{gauge2} requires to first prove that 	$r\rd_u r(u,v_{\mathcal{B}^+}(u)) <0$ in any regular gauge choice  on $\Cin=[u_0,u_F) \times \{v_0\}$, which is easy to show under the following assumption on $C_{out}$  $$ \lim_{v\rightarrow+\infty} \rd_u r(u_0,v) <0$$ that we will always require.
	
	In particular, in the past of the Cauchy horizon $\CH$, we impose $$ \lim_{v \rightarrow+\infty} r\rd_u r(u,v)=-1,$$ and to the future of the Cauchy horizon $\CH$, we impose $$ \lim_{v \rightarrow v_{\mathcal{S}}(u)}r\rd_u r(u,v)=-1,$$ where $\mathcal{S}=\{(u,v_{\mathcal{S}(u)}),\ u\geq \uch\}$, where $\uch$ is the $u$-coordinate of $\CH \cap \mathcal{S}$, the endpoint of the Cauchy horizon $\CH$ (see Section~\ref{thm.section}). We still have a translation gauge freedom $u_0 \rightarrow u_0 +a$, that we use to fix \begin{equation}
		\uch = 0,
	\end{equation} and therefore $u_0<0$.

	\subsection{Electromagnetic gauge choice, and gauge invariant estimates}
	
	The system of equations \eqref{1.1}, \eqref{2.1}, \eqref{3.1}, \eqref{4.1}, \eqref{5.1} is invariant under the gauge transformation : $$ \phi \rightarrow  e^{-i q_0 f } \phi ,$$
	$$ A \rightarrow  A+ d f. $$
	where $f$ is a smooth real-valued function.  As it well-known,  $|\phi|$ and $|D_{\mu}\phi|$ are gauge invariant. Throughout the paper, we make the following electromagnetic gauge choice \begin{equation}\label{A.gauge}
		A_v(u,v)\equiv 0.
	\end{equation}
	
	\eqref{A.gauge} still leaves the gauge freedom to fix $A_u$ on a hypersurface, in view of the fact that \eqref{Maxwell} with \eqref{A.gauge} gives $\rd_v A_u =-\frac{Q\Omega^2}{2r^2} $. We always will fix $A_u=0$ on the future terminal boundary on the spacetime under consideration, namely, we impose (in the same fashion as for \eqref{gauge2}) \begin{equation}\label{A.gauge2}
		\lim_{v \rightarrow v_{\mathcal{B}^+}(u)}	A_u(u,v)\equiv 0.
	\end{equation}

	In particular, in the past of the Cauchy horizon $\CH$, we impose $$ \lim_{v \rightarrow+\infty} A_u(u,v)=0,$$ and to the future of the Cauchy horizon $\CH$, we impose $$ \lim_{v \rightarrow v_{\mathcal{S}(u)}} A_u(u,v)=0,$$ where $\mathcal{S}=\{(u,v_{\mathcal{S}}(u)),\ u\geq \uch\}$.

	\subsection{Trapped region and apparent horizon}

	By our assumptions on the initial data (see Theorem~\ref{main.thm}), we  derive that for $v_0$  large enough, we have on $C_{out}$ \begin{equation}
		\rd_u r(u_0,v)<0 \text{ for all } v\geq v_0.
	\end{equation}
	Therefore, by the Raychaudhuri equation \eqref{RaychU},  there is no anti-trapped surface in the development on $C_{out}\cup \Cin$, namely  $\partial_u r(u,v)<0$ for all $(u,v)$.

	We define the trapped region $\mathcal{T}$, the regular region $\R$ and the apparent horizon $\A$ as 
	\begin{enumerate}
		\item  \label{characttrapped}$(u,v) \in \T$ if and only if $\partial_v r(u,v)<0$ if and only if $1-\frac{2\rho(u,v)}{r(u,v)}<0$,
		\item $(u,v) \in \R$ if and only if $\partial_v r(u,v)>0$ if and only if $1-\frac{2\rho(u,v)}{r(u,v)}>0$,
		\item $(u,v) \in \A$ if and only if $\partial_v r(u,v)=0$ if and only if $1-\frac{2\rho(u,v)}{r(u,v)}=0$.
	\end{enumerate}	
	
	\subsection{Notation}
	
	We will write $A\ls B$ if there exists a constant $C>0$ such that $A \leq B$, and $A\approx B$ if there exists two constants $C_{\pm}>0$ such that $C_- A \leq B \leq C_+ A $.

	\section{Precise statements of the main theorem}\label{thm.section}

	The following theorem is the main result in our paper, and formulated for local initial data on bifurcate hypersurfaces $\Cin \cup C_{out} = [u_0,u_F)\times \{v_0\} \cup \{u_0\}\times [v_0,+\infty)$ as depicted in Figure~\ref{fig:local}.
	
	Our assumptions on $C_{out}$ correspond to the behavior on an outgoing  cone inside a black hole whose event horizon   converges to Reissner--Nordstr\"{o}m at the rate predicted by Price's law and were obtained in \cite{Moi,Moi4}. These assumptions were already discussed in the introduction and will not lead to further discussions in this paragraph. On the other hand, it is very difficult to find \emph{all} possible initial conditions on $\Cin$ leading to a breakdown of the Cauchy horizon. Instead, we thus formulate three statements of our main result. In short: \begin{enumerate}[i.]
		\item The first statement is independent of whether a breakdown of the Cauchy horizon occurs or not.
		
		\item The second statement is conditional and assumes that a breakdown of the Cauchy horizon has taken place within the domain of dependence of $\Cin \cup C_{out}$, and describes it quantitatively (this second statement is our most general result).
		
		\item The third statement is unconditional and provides a large class of initial data on $\Cin$ leading to a breakdown of the Cauchy horizon, which is then also described quantitatively applying the second statement.
	\end{enumerate}
	
	For the benefit of the reader, we now describe these three statements in more detail. \begin{enumerate}[i.]
		\item Statement~\ref{main.thm.I} consist of estimates in the past of the Cauchy horizon $\mathcal{CH}_{i^+}$. It is itself divided into two parts: the first consists of \emph{a priori estimates} that do not require precise control on the scalar field under the weak  assumptions \eqref{hyp1}, which is surprising at first but follows from a new estimate described in Section~\ref{proof.section2}. 		 These a priori estimates are essential to the analysis and also offer a new, simpler proof of the breakdown of the weak null singularity  of \cite{breakdown} obtained in our companion paper \cite{bif2}. The second part of Statement~\ref{main.thm.I} additionally requires the decay of the scalar field on $C_{out}$ (assumption~\ref{hyp2}) and provides sharper scalar-field upper bounds up to the future endpoint of the Cauchy horizon towards which $r$ potentially tends to $0$. Strictly speaking, since it is a local result, there is no guarantee that the breakdown of the Cauchy horizon already occurs in the causal rectangle $[u_0,u_F] \times [v_0,+\infty)$. 
		Finally, we will additionally show as part of the proof as Statement~\ref{main.thm.I}, that, under our assumptions, there cannot exist any ingoing null boundary component $\mathcal{S}_{i^+}$ on which $r$ extends to $0$.
		\item In Statement~\ref{main.thm.II}, we assume, in contrast, that the local initial data has been sufficiently extended so that a breakdown of the Cauchy horizon  occurs in the domain of dependence of the initial data $[u_0,u_F] \times [v_0,+\infty)$. Under the additional quantitatives assumptions \eqref{hyp4}-\eqref{hyp3} on the initial data, we then show the existence of a spacelike singularity $\mathcal{S}=\{r=0\}$ and provide sharp estimates \eqref{K1}--\eqref{K4} on the solution $(g,\phi)$ in its vicinity, proving that it is well-approximated by a Kasner metric with variable exponents $(1-2p(u,v),p(u,v),p(u,v))$ degenerating to $(1,0,0)$ with $p(u,v) \approx v^{-1}$ as $v\rightarrow +\infty$.
		\item  Statement~\ref{main.thm.II} is our most general statement and it is conditional on the occurrence of a Cauchy horizon breakdown. Statement~\ref{main.thm.III}, on the other hand, is an unconditional result in which we construct a large class of initial data on  $\Cin$ so that a breakdown of the Cauchy horizon described in Statement~\ref{main.thm.II} takes place for any $C_{out}$ satisfying the standard assumptions \eqref{hyp1}, \eqref{hyp2}, \eqref{hyp4}-\eqref{hyp3}. We thus obtain the same Kasner asymptotics and conclusion as in Statement~\ref{main.thm.II}. The construction is arranged so that $\Cin$ is trapped and $r\rightarrow 0$ towards the future endpoint of $\Cin$ (see Remark~\ref{data.construction.remark} for further details on the construction).
		
	\end{enumerate}

	\begin{thm}\label{main.thm}
		Let $u_0 < u_F$ and $v_0 \geq 1$.  Let $\uch= \sup\{u \in [u_0,u_F],\ \underset{v\rightarrow +\infty}\lim r(u,v) >0 \} $. We denote $\underset{v\rightarrow +\infty}\lim r(u,v)= r_{CH}(u)$ for all $u\in [u_0,\uch]$. Assume  the following estimates hold on $C_{out}=\{u_0\}\times [v_0,+\infty)$:
		
		\begin{equation}	\label{hyp1}\begin{split}  &r_{CH}(u_0) >0,\ \lim_{v \rightarrow +\infty} \rd_u r(u_0,v) <0, \\
				%	& \Omega^2(u_0,v) \leq D \cdot e^{ 2K_- v + C v^{1-\eta}},\\	 &  (2+\eta)K_- \leq -\rd_v \log(\Omega^2)(u_-,v) \leq (2-\eta)K_-<0\\  
				&  				L_-\ v^{-2s}\leq -r\partial_v r(u_0,v) \leq L_+\ v^{-2s}, \\ & |\phi|^2(u_0,v),\ |Q|(u_0,v) \leq D,
			\end{split} 
		\end{equation}

		where $s>\frac{1}{2}$,  $L_{\pm}>0$. Then, assuming $v_0$ is large enough with respect to  the constants involved in \eqref{hyp1}, \begin{enumerate}[i.]
			\item \label{main.thm.I} $\uch \in (u_0,u_F]$ and $[u_0,\uch] \times [v_0,+\infty) \subset \T$. Moreover, there exists $D_->0$, $D_+>0$, such that for all  $u \in [u_0,\uch]$, $v\geq v_0$ \begin{equation} \label{quant1}\begin{split} &   D_-\ v^{-2s}\leq -r\rd_v r(u,v) \leq D_+\ v^{-2s},\\ &
					r^2_{CH}(u) + \frac{2D_- }{2s-1}\cdot v^{1-2s} \leq	r^2(u,v)\leq   r^2_{CH}(u) +  \frac{2D_+ }{2s-1}\cdot v^{1-2s}.\end{split}
			\end{equation} 
			If, moreover, there exists $s>1$ such that the initial data satisfy \eqref{hyp1} and the additional estimate: \begin{equation}\label{hyp2}
				\begin{split}
					|D_v \phi|(u_0,v) \leq \tilde{D} \cdot v^{-s}
				\end{split}
			\end{equation} for some $\tilde{D}>0$, then the following spacetime estimates are satisfied: for all $u \in [u_0,\uch]$, $v\geq v_0$: 
			\begin{equation}\label{quant2}
				\begin{split}&	
					|\phi|(u,v) \lesssim r^{-\frac{1}{2s-1}}(u,v),\\ 
					&
					r |D_u\phi|(u,v) \lesssim r^{-\frac{s}{s-\frac{1}{2}}}(u,v),
					\\ &
					r| D_v \phi|(u,v)\lesssim v^{-s},\\ & |Q|(u,v) \leq \check{D},
				\end{split}
			\end{equation} for some $\check{D}>0$.
			
			\item \label{main.thm.II}	If  $\uch< u_F$ (breakdown assumption), then for all $ \uch <u \leq u_F$, there exists $v_{\mathcal{S}}(u)<+\infty$ such that  \begin{equation}
				\lim_{v \rightarrow v_{\mathcal{S}}(u)} r(u,v) =0 \text{, } \lim_{v\rightarrow +\infty} r(\uch,v)=0 \text{ and } \lim_{u \rightarrow \uch} r_{CH}(u)=0.\end{equation}
			
			If, moreover, there exists $s>1$ such that the initial data satisfy \eqref{hyp1}, \eqref{hyp2} and the following additional estimate holds: there exists $D_L>0$, $D_C>0$, $\delta>0$ such that for all $v\geq v_0$:   \begin{equation}\label{hyp4}
				\begin{split}& 	|D_{v v}^2 \phi|(u_0,v) \leq D_{C} \cdot v^{-s-1},
			\end{split}	\end{equation} \begin{equation}\label{hyp5}
				|\Im(\bar{\phi} D_v \phi)|(u_0,v) \leq D_{C} \cdot|\phi|(u_0,v) \cdot v^{-s-\delta},
			\end{equation}		 \begin{equation}\label{hyp3}
				\begin{split}&  |D_v \phi|(u_0,v) \geq  D_L \cdot  v^{-s}.
				\end{split}
			\end{equation}

			Then, defining  $\mathcal{S}=\{(u,v_{\mathcal{S}}(u)),\ u\in (\uch,u_F)\}$, there exists $0<\ep< u_F -\uch$  such that \\$\mathcal{S} \cap (\uch,\uch+\ep)$ is spacelike with the following estimate for all $\uch<u \leq \uch +\ep$: \begin{equation}\label{K1}
				v_{\mathcal{S}}(u) \approx \left( u- \uch\right)^{-\frac{1}{2s-1}},\ v'_{\mathcal{S}}(u) \approx -\left( u- \uch\right)^{-\frac{2s}{2s-1}}.
			\end{equation}
			Moreover, the metric takes the following approximate Kasner form:  there exists coordinates $(x,\tau)$ so that $\mathcal{S}=\{\tau=0\}$, $\mathcal{S}\cap \CH=\{\tau=0,\ x=0\}$  and   $x_0\geq 0$ small enough so that for all %$(x,\tau)$ in the past of $\mathcal{S}=\{\tau=0\}$ with 
			$0 \leq x \leq x_0$, $\tau\geq0$: %, and close enough to $\mathcal{S}$, more precisely for  $0\leq\tau^{p(\tau,x)} \leq \eta\ x^{\frac{1-2s}{2(s-1)}}$, where $\eta>0$ is small enough and $0\leq x\leq x_0$ for $x_0\geq 0$ small enough, the following estimates hold:
			\begin{equation}\label{K2}\begin{split}
					&	g= - (1+ \mathcal{E}_T(\tau,x)) d \tau^2 + \tau^{2 (1-2p(\tau,x))}(1+ \mathcal{E}_X(\tau,x)) dx^2 + \tau^{2p(\tau,x)} ( d\theta^2+\sin^2(\theta)d\varphi^2 ) ,\\   &  \bigl| p(\tau,x)-p(0,x)\bigr|\ls \frac{|\log|(x)}{|\log|(\tau)},\ p(0,x) \approx x^{\frac{1}{2(s-1)}},
					%\ |\rd_x p|(\tau,x) \ls x^{-1+\frac{1}{2(s-1)}},
					\\ & \phi(\tau,x) = p_{\phi}(x) \left( \log(\frac{x^{\frac{2s-1}{2(s-1)}}}{\tau})+ \mathcal{E}_\phi(\tau,x)\right)+\varXi_{\mathcal{S}}(x),\\ & |p_{\phi}|(x) \approx x^{-\frac{1}{4(s-1)}},\ |\varXi_{\mathcal{S}}|(x)  \ls x^{-\frac{1}{4(s-1)}},  \\ & |\mathcal{E}_T|(\tau,x),\  |\mathcal{E}_X|(\tau,x),\ |\mathcal{E}_\phi|(\tau,x) \lesssim \frac{\tau^{2p(\tau,x)}}{x^{\frac{2s-1}{s-1}}}\left[1+\log^2(\frac{\tau^{2p(\tau,x)}}{x^{\frac{2s-1}{s-1}}})\right], \end{split}
			\end{equation}  where $p_\phi(x) \in \mathbb{C}$ satisfies the usual Kasner relations \begin{equation}\begin{split}\label{K3}
					&  p_1^2(\tau,x)+ p_2^2(\tau,x)+ p_3^2(\tau,x)+2 |p_\phi|^2(x)=1,\\ & p_1(\tau,x) = 1-2p(\tau,x),\ p_2(\tau,x) = p_3(\tau,x) = p(\tau,x), \text{ i.e., }  p_1(\tau,x)+ p_2(\tau,x)+ p_3(\tau,x)=1. 
				\end{split}	
			\end{equation}
			in the coordinate system $(\tau,x,\theta,\varphi)$, which relates to  $(u,v,\theta,\varphi)$ in the following way, with $(u,v)=(\uch,+\infty)$ corresponding to $(\tau,x)=(0,0)$ and $\mathcal{S}=\{r=0\}=\{\tau=0\}$ and defining $ x_{\mathcal{S}}(v):=  \underset{u \rightarrow u_{\mathcal{S}}(v)}{\lim}x(u,v)$: \begin{equation}\begin{split}\label{K4}
					&\tau(u,v) =[ r(u,v)]^{p^{-1}(u,v)},\\ & \bigl| x(u,v) - x_{\mathcal{S}}(v)\big| \lesssim \frac{r^2(u,v)}{r_0^2(v)}\left[1+\log^2(\frac{r^2(u,v)}{r_0^2(v)})\right],\  x_{\mathcal{S}}(v) \approx v^{2(1-s)}.
				\end{split}
			\end{equation}
			
			In particular, the Kasner exponents and scalar field  obey the following estimates in $(u,v)$ coordinates: \begin{equation}\begin{split}\label{K5}
					&	p(u,v) \approx v^{-1},\ |p_{\phi}|(v) \approx v^{\frac{1}{2}},\ \phi(u,v) = p_{\phi}(v) \log(\frac{r_0(v)}{r(u,v)})+ \tilde{\varXi}_{\mathcal{S}}(v),\  |\tilde{\varXi}_{\mathcal{S}}|(v)\ls v^{\frac{1}{2}},\\ & |\mathcal{E}_T|(u,v),\  |\mathcal{E}_X|(u,v),\ |\mathcal{E}_\phi|(u,v) \lesssim \frac{r^2(u,v)}{r_0^2(v)}\left[1+\log^2(\frac{r^2(u,v)}{r_0^2(v)})\right], \end{split}
			\end{equation} where $r_0(v):= r(\uch,v) \approx v^{\frac{1}{2}-s}$ as a consequence of \eqref{quant1}.

			\item  \label{main.thm.III}If we assume that $[u_0,u_F) \times\{v_0\} \subset \T\cup \A$ and $\underset{ u \rightarrow u_F}{\lim} r(u,v_0)=0$, then $\uch < u_F$, so Statement~\ref{main.thm.II} holds.
			
			Moreover, there exist (a large class of) initial data $\underline{C}_{in} \cup C_{out}=[u_0,u_F) \times [v_0,+\infty)$ such that the assumptions \eqref{hyp1}, \eqref{hyp2}, \eqref{hyp4}-\eqref{hyp3} hold on   $C_{out}$ and $\Cin \subset \T$ with $\underset{ u \rightarrow u_F}{\lim} r(u,v_0)=0$.    So, for such initial data,  the conclusion of Statement~\ref{main.thm.II} holds. These initial data are constructed as such: starting from the gauge choice    \begin{equation}\begin{split}
					& A_u(\cdot,v_0)=0,\\ & -r\partial_u r(\cdot,v_0) \equiv-1\ \& \lim_{u\rightarrow u_F} r(u,v_0)=0, \text{ or equivalently } u_F - u = \frac{r^2(u,v_0)}{2} ,\label{Kasner.data.construction1}\end{split}
			\end{equation}   we then assume Kasner-like scalar field asymptotics, in  that there exists a constant $|\psi_0|>1$ such that 
			\begin{equation} \label{Kasner.data.construction2}\begin{split}
					& |\phi|(u,v_0) \lesssim \log(r^{-1})(u,v_0),\ \frac{ |D_u \phi|}{|\rd_u r|}(u,v_0)  \ls r^{-1}(u,v_0),\\ & \liminf_{u\rightarrow u_F}\frac{  r|D_u \phi|(u,v_0)}{|\rd_u r|(u,v_0)} \geq |\psi_0|.\end{split}
			\end{equation}
			
			Under these assumptions, the following  quantities are well-defined: \begin{equation}\begin{split}\label{norms.def}
					&I(\phi)=  \sup_{u_0 \leq u < u_F } \left( r^2(u,v_0)[m^2 r^2 |\phi|^2(u,v_0)-1] +  [Q(u,v_0)+q_0\int_{u_0}^{u} r^2 \Im(\bar{\phi}D_u\phi)(u',v_0) du' ]^2\right), \\ & N_0(\phi)= \int_{u_0}^{u_F} r^{-2}(u,v_0) % \mathcal{F}^k(u)
					\exp(-\mathcal{F}(u)) |\rd_u r|(u,v_0)  du,%= \int_{u_0}^{u_F} r^{-3}(u,v_0)  \exp(-\int_{u_0}^{u}r^2|\rd_u \phi|^2(u',v_0) du') du.
			\end{split}	\end{equation}  %for any $k\geq 0$, 
			where we have introduced the notation $\mathcal{F}(u)=\int_{u_0}^{u} \frac{r|D_u \phi|^2(u',v_0)}{|\rd_u r|(u',v_0)}  du'$. Then, we make the additional quantitative assumption that % N(\phi)=  \int_{u_1}^{u_2} r^{-3}  \exp(-\int_{u_1}^{u}  r |\frac{D_u \phi}{\rd_u r}|^2 |\rd_u r|(u',v_0)du') du= \int_0^{r(u_1,v_0)} r^{-2} \exp(-\int_{r(u,v_0)}^{r(u_1,v_0)}  r |D_r \phi|^2 dr') dr .\end{split}\end{equation*} 
	
	\begin{equation}\label{construction.main.assumption}
		2 \rho(u_0,v_0) -r(u_0,v_0)  >   I(\phi)N_0(\phi),
	\end{equation}
	which is, in particular,  satisfied if $2\rho(u_0,v_0) -r(u_0,v_0)$ is large with respect to  $|Q|(u_0,v_0)$ and $\phi(\cdot,v_0)$. Moreover, \eqref{construction.main.assumption} is also satisfied for a  class of large scalar field initial data, specifically with Kasner asymptotics of the form \begin{equation}\label{kasner.data.intro}
		\phi(u,v_0) = \Psi_0 \log(r^{-1}(u_0,v)) + \tilde{\phi}(u,v_0),
	\end{equation} where $\tilde{\phi}(u,v_0)$  is bounded and $|\Psi_0|$ is sufficiently large.

\end{enumerate}

\end{thm}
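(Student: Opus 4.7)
The backbone of the analysis is the a priori estimate $\int_{u_0}^{u}\Omega^2(u',v)|\phi|^{p}(u',v)\,du'\lesssim_{p} e^{2K_{-}v}$ of Proposition~\ref{apriori.prop1}, which controls $\Omega^2$ by a Reissner--Nordström-type exponential flux that can absorb arbitrary polynomial powers of $\phi$. The plan is to first insert this bound into the right-hand side of \eqref{Radius3} and integrate in $u$ from $C_{out}$, using the gauge choice \eqref{gauge2} (which, combined with the Raychaudhuri equation \eqref{RaychU}, forces $-r\rd_u r\approx 1$ throughout $[u_{0},\uch]\times[v_0,+\infty)$), to show that $-r\rd_v r$ is nearly conserved along ingoing rays. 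Together with the initial profile in \eqref{hyp1}, this gives $-r\rd_v r\approx v^{-2s}$ and then \eqref{quant1} after integrating in $v$. The inclusion $\uch\in(u_0,u_F]$ is immediate from $r_{CH}(u_0)>0$. For the sharper conclusions \eqref{quant2}, I would propagate $\theta=r\rd_v\phi$ along ingoing rays via \eqref{Field2}: using \eqref{quant1} the potential-type terms $\Omega^2 Q\phi/r$ are integrable in $u$, so $|\theta|\lesssim v^{-s}$ is preserved from $C_{out}$. The pointwise bound $|\phi|\lesssim r^{-1/(2s-1)}$ then follows by integrating $|\rd_v\phi|\lesssim v^{-s}/r$ against the lower bound on $r$ from \eqref{quant1}, while $r|D_u\phi|$ is extracted from \eqref{Field3}. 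The a priori bound \eqref{quant1} also rules out a collapsed ingoing boundary $\mathcal{S}_{i^+}\subset\{r=0\}$ on which $v=+\infty$.

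\textbf{Proof strategy for Statement II.} In the region $\uch<u\le \uch+\ep$ I would set up a bootstrap driven by the key assumption \eqref{bootstrap.intro}, namely $\Omega^2(u,v)\lesssim e^{2K_- v}[r(u,v)]^{\delta v}$, for small $\delta>0$ and $\ep>0$. Under this bootstrap, integrating \eqref{Radius3} and \eqref{RaychV} in $u$ recovers $-r\rd_v r\approx v^{-2s}$ up to $\mathcal{S}=\{r=0\}$ and produces the one-sided limits in \eqref{lim.intro}; similarly, the wave equation \eqref{Field} yields the upper bound $|\phi|(u,v)\lesssim \sqrt{v}[1+\log(r_0(v)/r(u,v))]$, via the Schwarzschild-like logarithmic profile expected from $\Box_{g_{S}}\phi=0$. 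The difficult half is the \emph{sharp lower bound on $r^2\rd_v\phi$} needed to improve \eqref{bootstrap.intro}. The monotonicity identity \eqref{monotonicity} for $r\rd_v\phi$ and $r\rd_u\phi$ propagates the initial positive lower bound $\rd_v\phi(u_0,v)\gtrsim v^{-s}$ from \eqref{hyp3} forward, giving $r\rd_v\phi\gtrsim v^{-s}$ and hence $r^2\rd_v\phi\gtrsim \ep\,v^{1/2-2s}$ only in the outer region $\{r\geq \ep\,r_0(v)\}$. To extend this to $\{r\leq \ep\,r_0(v)\}$ I would commute \eqref{Field} with the vector field $X=(-r\rd_v r)^{-1}\rd_v-(-r\rd_u r)^{-1}\rd_u$, which satisfies $X(r)=0$, to obtain the AVTD estimate \eqref{X.intro}, and then combine it with the identity $|\rd_u(r^2\rd_v\phi)|\lesssim|X\phi|+O(e^{2K_- v})$ to control $|r^2\rd_v\phi-r^2\rd_v\phi|_{r=\ep r_0(v)}|\lesssim \ep^2\log^2(\ep)\,v^{1/2-2s}$. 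Since $\ep^2\log^2(\ep)\ll\ep$, the sharp lower bound survives down to $\mathcal{S}$, which in turn improves the bootstrap \eqref{bootstrap.intro} and closes the argument. Finally, I would introduce the change of coordinates $\tau\approx r^2\Omega v^{s-1}$ and $x\approx v^{2-2s}$, define $p(u,v)$ by $\tau^{2p}=r^2$, and check that the improved bootstrap estimates translate into the Kasner form \eqref{K2}--\eqref{K5} with $p(u,v)\approx v^{-1}$.

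\textbf{Proof strategy for Statement III.} For the unconditional construction I would prescribe data on $\Cin$ satisfying \eqref{Kasner.data.construction1}--\eqref{Kasner.data.construction2}: the first line fixes the gauge and imposes $r\to 0$ at the endpoint of $\Cin$, while the second prescribes Kasner-like scalar field asymptotics with logarithmic growth. The remaining task is to ensure that the entire cone $\Cin$ is trapped, i.e.\ $\rd_v r<0$; this amounts to proving $1-2\rho/r<0$ throughout $\Cin$, which follows by integrating \eqref{Radius3} in $v$, inserting the identity $\rd_v(1-2\rho/r)$ in terms of \eqref{RaychV} and \eqref{ChargeVEinstein}, and using the quantitative balance \eqref{construction.main.assumption} between the initial $2\rho-r$ and the Hardy-type weighted norms $I(\phi)$ and $N_0(\phi)$. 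The condition \eqref{construction.main.assumption} is easily verified either by taking the initial Hawking mass $\rho(u_0,v_0)-r(u_0,v_0)/2$ large (making $I(\phi)N_0(\phi)$ negligible), or, more subtly, by taking Kasner-type large scalar-field data \eqref{kasner.data.intro} with $|\Psi_0|$ large; here $I(\phi)\sim|\Psi_0|^2$ but $N_0(\phi)$ contains the Raychaudhuri damping factor $\exp(-\mathcal{F})$ with $\mathcal{F}\sim|\Psi_0|^2\log(r^{-1})$, so $I(\phi)N_0(\phi)\to 0$ exponentially as $|\Psi_0|\to+\infty$, while $2\rho-r$ grows like $|\Psi_0|^2$ via \eqref{RaychU}. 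Once $\Cin$ is trapped with $r\to 0$ at its endpoint, the inclusion $\uch<u_F$ follows by a trapped-region extension argument from Statement~\ref{main.thm.I} that precludes $\Cin$ from reaching $\CH$, so that Statement~\ref{main.thm.II} applies unconditionally.

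\textbf{Main obstacle.} The genuine difficulty, and the place where the scheme can fail, is the closure of the bootstrap \eqref{bootstrap.intro} for $\Omega^2$ near $\mathcal{S}$. Upper bounds on $\phi$ alone are insufficient; one truly needs a \emph{sharp} lower bound on $r^{2}\rd_v\phi$ all the way to $\mathcal{S}$, and the naive monotonicity-based lower bound \eqref{lower.intro} degenerates in the innermost region $\{r\leq\ep r_{0}(v)\}$. The AVTD commutation with the vector field $X$, the simultaneous tracking of $r$-weights and $v$-weights (absent from previous works such as \cite{FournodavlosLuk,Warren1}, which only manage $\tau$-weights), and the algebraic gain $\ep^{2}|\log\ep|^{2}\ll\ep$ are the three ingredients that allow one to beat the losses and improve \eqref{bootstrap.intro}. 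Everything else in the proof ultimately hinges on this estimate.
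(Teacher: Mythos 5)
Your proposal follows essentially the same route as the paper in all three statements: the a priori absorption of polynomial scalar-field powers into the exponential flux, the bootstrap $\Omega^2\lesssim e^{2K_-v}r^{\delta v}$ closed through the monotonicity lower bound, the $X$-commuted AVTD estimates with the $\ep^{2}\log^{2}\ep\ll\ep$ gain, the change of variables to Kasner form, and the trapped-cone data construction via the norms $I(\phi)$, $N_0(\phi)$. Two concrete points in your sketch, however, do not go through as written.

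First, in Statement~\ref{main.thm.I} you propagate $|r\rd_v\phi|\lesssim v^{-s}$ by observing only that the potential-type terms $\Omega^2 Q\phi/r$ are integrable in $u$; but the dominant term in \eqref{Field2} is the coupling $-\frac{\rd_v r}{r}\,\xi$ with $\xi=rD_u\phi$, which is not small, since $r$ degenerates like $v^{\frac12-s}$ towards the endpoint of $\CH$ and $\xi$ itself blows up there. The system $(\theta,\xi)$ does not decouple: the paper substitutes the $\rd_v\xi$ equation \eqref{Field3} into the $\rd_u\theta$ equation and runs a two-variable Gr\"{o}nwall argument on $\sup v^{p}|\theta|$ (Lemma~\ref{lemma.propagation}), and the resulting double integral is rendered summable only through the calculus Lemma~\ref{v^s/r^2.lem}, which converts $v$-decay into $r$-weights using \eqref{quant1}. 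This step is the actual content of \eqref{quant2} — in particular the exponent $r^{-\frac{s}{s-1/2}}$ for $r|D_u\phi|$ — and the later lower-bound and commuted estimates near $\mathcal{S}$ consume exactly these exponents, so the omission is not cosmetic.

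Second, in Statement~\ref{main.thm.III} the trappedness of $\Cin=\{v=v_0\}$ cannot be verified by ``integrating \eqref{Radius3} in $v$'' with \eqref{RaychV}, \eqref{ChargeVEinstein}: on an initial ingoing cone there is no $v$-interval to integrate over, and those are evolution equations off the cone, not constraints along it. The correct computation (proof of Proposition~\ref{prop.data}, see \eqref{lambda.data.u}) integrates $\rd_u(-r\lambda)$ from \eqref{Radius3} in $u$ along $\Cin$, controlling $\Omega^2$ via the $u$-Raychaudhuri constraint \eqref{RaychU} — which is what produces the damping factor $e^{-\mathcal{F}}$ in $N_0(\phi)$ — and $Q$ via \eqref{chargeUEinstein}; this is how \eqref{construction.main.assumption} yields $\lambda(\cdot,v_0)<0$. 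Your large-$|\Psi_0|$ mechanism is the right one, but note that in the paper $2\rho_0-r_0$ is fixed free data (it is not generated by \eqref{RaychU}), and the smallness for large $|\Psi_0|$ comes entirely from $N_k(\phi)=O(|\Psi_0|^{-2})$ in \eqref{condition2}, not from any growth of the left-hand side.
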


\begin{rmk}\label{data.construction.remark}  The construction of initial data sketched in Statement~\ref{main.thm.III} will be expanded in more details in Proposition~\ref{prop.data} in Section~\ref{section.data}, in which multiple examples satisfying \eqref{construction.main.assumption} will be constructed. There are, in particular, five different constructions we carry out, two of which (\ref{IV.data.intro} and \ref{V.data.intro} below) surprisingly allow for \emph{large scalar field initial data} (see  Section~\ref{section.data} for details). These five possible constructions are outlined below: \begin{enumerate}[A.]
	\item \label{I.data.intro} Large initial Hawking mass $\rho(u_0,v_0)$. %$\rho(u_0,v_0)>0$ is large  compared to $|Q|(u_0,v_0)$ and $\phi(\cdot,v_0)$.
	\item  \label{II.data.intro} Small  coupling parameters $(q_0,m^2)$ and initial charge $|Q|(u_0,v_0)$.
	\item  \label{III.data.intro} Small scalar field $|\phi|(\cdot,v_0)$  and initial charge $|Q|(u_0,v_0)$, assuming the Kasner-asymptotics ansatz \eqref{kasner.data.intro}.
	\item  \label{IV.data.intro} Large Kasner exponent $|\Psi_0|$, assuming the Kasner-asymptotics ansatz \eqref{kasner.data.intro}.
	\item \label{V.data.intro} Perturbation of exact Kasner asymptotics of the form \eqref{kasner.data.intro} with  fixed $\Psi_0$, small $\tilde{\phi}$, small initial charge $|Q|(u_0,v_0)$ and small Klein--Gordon mass $m^2$.%, assuming the Kasner-asymptotics ansatz .% $|Q|(u_0,v_0)$, $m^2$ and $|q_0|$ are small compared to $\rho(u_0,v_0)$ and $\phi(\cdot,v_0)$.

\end{enumerate} 

The key property in the construction of initial data in Statement~\ref{main.thm.III} above is to ensure that $\Cin$ is trapped, which cannot be directly imposed (however, the entire outgoing cone $C_{out}$  is  trapped by assumption).		

Note that the first two situations do not require the Kasner-asymptotics ansatz \eqref{kasner.data.intro}, however in all constructions the weaker  \eqref{Kasner.data.construction2} is essential: it is the condition of  \emph{stable Kasner asymptotics}, see Section~\ref{Kasner.section}.

Construction~\ref{I.data.intro} follows  immediately from \eqref{construction.main.assumption}; however, genuine  small data constructions are more subtle.

\ref{II.data.intro} is a small data construction which can be viewed as a perturbation of the uncharged massive scalar field case of Christodoulou, but assumes small parameters $(q_0,m^2)$, which is quite restrictive.  

The smallness of the charge $|Q|(\cdot,v_0) $ is essential in constructing a trapped ingoing cone $\Cin$, and obtained requiring the smallness of both the scalar field $|\phi|(\cdot,v_0) $ and the initial charge $|Q|(u_0,v_0) $ in Construction~\ref{III.data.intro}. An important difficulty in Construction~\ref{III.data.intro} is  to obtain the smallness of $\phi$ while retaining Kasner asymptotics of the form \eqref{Kasner.data.construction2} which are essential to secure the finiteness of the norms in \eqref{norms.def} (in particular the lower bound on $|D_u \phi|$): to do this, we must assume the more precise Kasner form \eqref{kasner.data.intro}. The norm in which to measure the smallness of $\phi$ is then quite subtle, as it needs to be small for profiles of the form \eqref{kasner.data.intro}, even for non-small $\Psi_0$ (recall $|\Psi_0|>1$). We end up requiring the smallness of the following  scale-critical norm in Construction~\ref{III.data.intro}: $$ \| \frac{r D_u \phi}{\rd_u r} \|_{crit}= \sqrt{ \int_{u_0}^{u_F} r  [1+ \log(\frac{r(u_0,v_0)}{r(u,v_0)})]^{-4} \frac{|D_u \phi|^2(u,v_0)}{|\rd_u r |(u,v_0)} du}.$$

Construction~\ref{IV.data.intro} is a large scalar-field data construction, in view of the fact that $|\Psi_0|$ can be chosen arbitrarily large. It is due to the fact that $N_0(\phi)$ defined \eqref{norms.def} is made arbitrarily small when $|\Psi_0|$ is large and thus \eqref{construction.main.assumption} is satisfied, regardless of the smallness of the initial charge $|Q|(u_0,v_0)$ (which might seem unexpected).

Construction~\ref{V.data.intro} is obtained perturbing Kasner asymptotics of the form \eqref{kasner.data.intro} with a small $\tphi \approx 0$ and also allows for a large scalar-field initial data.  However, it is more restrictive in the sense that the Klein--Gordon mass $m^2$ must be small enough; interestingly, the charge coupling constant $q_0$ can be chosen arbitrarily. 
%Interestingly, the norm in which the smallness of the scalar field is quite subtle because $N_0(\phi)$ does not become small {norms.def}

%The next three examples are formulated assuming the Kasner-asymptotics ansatz \eqref{kasner.data.intro}.	%^	\begin{enumerate}[I.]\setcounter{enumi}{2}			\item (Small scalar field  and initial charge)  $|Q|(u_0,v_0)$ and $\phi(\cdot,v_0)$ are small compared to $\Psi_0$ and $\rho(u_0,v_0)$.			\item (Large Kasner exponents) $|\Psi_0|$ is large compared to  $\rho(u_0,v_0)$, $|Q|(u_0,v_0)$ and $\tphi(\cdot,v_0)$.			\item (Perturbation of exact Kasner asymptotics, small initial charge and small Klein--Gordon mass) $|Q|(u_0,v_0)$, $\tphi(\cdot,v_0)$ and $m^2$ are small compared to $\rho(u_0,v_0)$  and $|\Psi_0|$.		\end{enumerate}

\end{rmk}

%\begin{rmk}The assumption $ L_-\ v^{-2s} \leq -r\rd_v r(u_0,v) \leq L_+\ v^{-2s}$  in \eqref{hyp1} can  be replaced by the assumption $\rd_v r(u_0,v_0)<0$ combined with the following assumptions on the scalar field: for some $\ep>0$, $C_+, C_- >0$: \begin{equation}\label{hyp.alt}\begin{split}		&		C_-\ v^{-s}	\leq |D_v \phi|(u_0,v) \leq C_+\ v^{-s}, \\ &| \rd_v(|D_v \phi|^2)|(u_0,v) \lesssim v^{-2s-\epsilon}.\end{split}\end{equation} We can indeed  integrate \eqref{RaychV} on $\{u_0\} \times [v,+\infty)$ and use $(2+\ep)K_-\leq \rd_v \log(\Omega^2)(u_0,v) \leq (2-\ep)K_-$ to write \begin{equation*}	\begin{split}		|\rd_v r|(u_0,v)& \approx \Omega^2(u_0,v)\int_v^{+\infty} \frac{|D_v \phi|^2(u_0,v')}{\Omega^2(u_0,v')} dv' \approx- \Omega^2(u_0,v)\int_v^{+\infty}|D_v \phi|^2(u_0,v') \rd_v(\Omega^{-2})(u_0,v') dv' \\ & \approx |D_v \phi|^2(u_0,v)+\Omega^2(u_0,v)\int_v^{+\infty} \frac{\rd_v(|D_v \phi|^2)(u_0,v')}{\Omega^2(u_0,v')} dv'  \approx v^{-2s} + O (v^{-2s-\epsilon}).	\end{split}\end{equation*}  Note  that $|\rd_v(|D_v \phi|^2)|(u_0,v) \lesssim v^{-2s-\epsilon}$ is a weaker assumption (even for $\ep=1$) than \eqref{hyp2} and \eqref{hyp4} combined.\end{rmk}

\begin{rmk}
The assumption \eqref{hyp5} can be replaced by another (electromagnetic) gauge-invariant condition: there exists $\alpha_{\infty}(u_0) \in \RR$ such that  \begin{equation}\label{hyp5'}
	|\Im( e^{i q_0 \int_{v_0}^{v}  A_v(u_0,v') dv'} e^{-i\alpha_\infty(u_0)} D_v \phi(u_0,v))| \leq D_{C} \cdot v^{-s-\delta}.
\end{equation} In fact, we prove in Section~\ref{section.faster} that \eqref{hyp5} implies \eqref{hyp5'} and the rest of the proof just makes use of \eqref{hyp5'}. This slight variation of Theorem~\ref{main.thm} will be important in our companion paper \cite{bif2}.
\end{rmk}

\begin{rmk}
It is easy to see that the assumption $ L_-\ v^{-2s} \leq -r\rd_v r(u_0,v) \leq L_+\ v^{-2s}$  in \eqref{hyp1} is superseded by \eqref{hyp4}-\eqref{hyp3}, integrating the Raychaudhuri equation \eqref{RaychV} in $v$ on the initial outgoing cone $\{u=u_0\} \times [v_0,+\infty)$.
\end{rmk}

\begin{rmk}
It is also possible to rephrase the Kasner asymptotics of Statement~\ref{main.thm.II} of Theorem~\ref{main.thm} using the Weingarten formalism, see Remark~\ref{rmk.Weingarten} in Section~\ref{kasner.section}.
\end{rmk}

%	\begin{rmk}	The assumptions \eqref{Kasner.data.construction1}, \eqref{Kasner.data.construction2} are consistent with \emph{stable Kasner asymptotics} of the form $$ \phi(u,v_0)  \approx \psi_0 \log(r^{-1})(u,v_0),$$ where the stability condition is expressed by the fact that $|\psi_0|>1$, see \magenta{references to add}?.	We also note that our proof will show that a more general condition than \eqref{construction.main.assumption} suffices to ensure that Statement~\ref{main.thm.III} holds, but it is more complicated, so we will stick with \eqref{construction.main.assumption} which is already quite general. 	\end{rmk}
\

\section{A priori estimates and qualitative aspects of Theorem~\ref{main.thm}}\label{apriori.section}

%	\begin{equation}\label{lambda.int.data}		\int_v^{+\infty} (-\partial_v r)(-1,v') dv' \geq \frac{D}{2}  \cdot v^{1-2s}	\end{equation}

\subsection{A priori estimates}\label{apriori.est.section}

The next lemma provides  very general  a priori estimates and does not require any assumption.  We denote $\mathcal{F}(u,v):= \int_{u_0}^u  \frac{r|D_u \phi|^2(u',v)}{|\rd_u r|(u',v)} du'$, a gauge-invariant quantity.

\begin{lemma}\label{lemma.hardy} We denote $F_0(v) = F(u_0,v)$ for any function $F(u,v)$. Then, there exists a numerical constant $C>0$ such that
\begin{equation}\label{Hardy.phi}
	|\phi|(u,v) \leq |\phi_0|(v)  + \mathcal{F}^{1/2}(u,v)\log^{1/2}( \frac{r_0(v)}{r(u,v)}),
\end{equation}\begin{equation}\label{Hardy.Q}
	\bigl| Q(u,v)- Q_0(v)\bigr|  \leq |q_0| r^2_0(v) \mathcal{F}^{1/2}(u,v)\left(   |\phi_0|(v)  + \mathcal{F}^{1/2}(u,v)\right),
\end{equation}
\begin{equation}\label{trapped.est}
	\bigl| -r\partial_v r(u,v)   	+(r\partial_v r)_0(v) \bigr| \leq \frac{C r_0(v)  }{r(u,v)}\ \frac{ \Omega^2_0(v)  }{  (-r\rd_u r)_0(v) }  \left( r^2_0(v)+ Q^2_0(v) + q_0^2 r^4_0(v) (1+ |\phi_0|^4(v))+ m^2 r_0^4(v) \big[ 1+  |\phi_0|^2(v)\big] \right),\end{equation}
\begin{equation}\label{Omega.est0}
	\Omega^2(u,v)  \leq   \Omega^2_0(v)  \cdot  \frac{ (-\partial_u r)(u,v)}{(-\partial_u r)_0(v)} ,
\end{equation}

\begin{equation}\label{Omega.est}
	\int_{u_0}^{u} \Omega^2(u',v) du' \leq \frac{r^2_0\Omega^2_0(v)  }{(-r\rd_u r)_0(v)}.
\end{equation}

\end{lemma}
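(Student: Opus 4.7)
The plan is to extract all five estimates from two inputs: the $u$-Raychaudhuri equation \eqref{RaychU}, which yields the monotonicity of $\rd_u r/\Omega^2$ in $u$, and a weighted Cauchy--Schwarz inequality $\int_{u_0}^u |D_u\phi|\,du' \leq \mathcal{F}^{1/2}(u,v)\log^{1/2}(r_0(v)/r(u,v))$ obtained by splitting $|D_u\phi|$ with the conjugate weights $\sqrt{r/|\rd_u r|}$ and $\sqrt{|\rd_u r|/r}$ and using that $r$ is monotone decreasing in $u$ (no-anti-trapped-surface condition), so that $\int_{u_0}^u |\rd_u r|/r\,du' = \log(r_0/r)$.

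First I would establish \eqref{Omega.est0}: by \eqref{RaychU}, $\rd_u(\rd_u r/\Omega^2) = -r|D_u\phi|^2/\Omega^2 \leq 0$, so $\rd_u r/\Omega^2$ is non-increasing in $u$; combined with $\rd_u r<0$, this gives $\Omega^2/|\rd_u r| \leq \Omega_0^2/|\rd_u r|_0$, i.e.\ \eqref{Omega.est0}. Integrating in $u$ and using $\int_{u_0}^u |\rd_u r|\,du' = r_0-r(u,v) \leq r_0$ together with $(-r\rd_u r)_0 = r_0\,|\rd_u r|_0$ yields \eqref{Omega.est}. Next, \eqref{Hardy.phi} follows from the gauge-invariant inequality $|\rd_u|\phi|| \leq |D_u\phi|$ (the real part identity $\rd_u|\phi|^2 = 2\Re(\bar\phi D_u\phi)$ kills the electromagnetic correction), the fundamental theorem of calculus, and the weighted Cauchy--Schwarz above. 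For \eqref{Hardy.Q}, I would start from \eqref{chargeUEinstein}, which gives $|Q-Q_0|(u,v) \leq |q_0|\int r^2|\phi||D_u\phi|\,du'$, then split the integrand as $(r^{3/2}|\phi|\sqrt{|\rd_u r|})\cdot(r^{1/2}|D_u\phi|/\sqrt{|\rd_u r|})$ and apply Cauchy--Schwarz; the second factor integrates to $\mathcal{F}^{1/2}(u,v)$, and the first, after the substitution $s = r(u',v)$, becomes $\int_{r(u,v)}^{r_0(v)} s^3|\phi|^2\,ds$. Inserting \eqref{Hardy.phi} and using the elementary bound $\int_0^{r_0}s^3\log(r_0/s)\,ds \lesssim r_0^4$ (the polynomial weight tames the logarithm) gives $\leq C\, r_0^4\,(|\phi_0|^2+\mathcal{F}(u,v))$, which after square-rooting yields \eqref{Hardy.Q}.

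The hard part will be \eqref{trapped.est}. Starting from \eqref{Radius3}, integrating $\rd_u(-r\rd_v r) = \frac{\Omega^2}{4}(1 - Q^2/r^2 - m^2 r^2|\phi|^2)$ in $u$ from $u_0$ to $u$ yields the left-hand side as a sum of three contributions to which the triangle inequality is applied. The ``$1$''-term is bounded by \eqref{Omega.est}, producing the $r_0^2$ contribution. For the charge term, I would split $Q^2 \leq 2Q_0^2 + 2(Q-Q_0)^2$: the $Q_0^2$-piece produces the explicit $r_0/r$ blow-up factor via $\int \Omega^2/r^2\,du' \leq \frac{\Omega_0^2}{|\rd_u r|_0}(1/r - 1/r_0) \lesssim \frac{r_0\,\Omega_0^2}{r\,(-r\rd_u r)_0}$, and the $(Q-Q_0)^2$-piece is controlled using \eqref{Hardy.Q} together with the monotonicity $\mathcal{F}(u',v)\leq \mathcal{F}(u,v)$. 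The mass term is treated analogously, using \eqref{Hardy.phi} and $r^2 \leq r_0^2$. The main remaining subtlety is to discharge the residual $\mathcal{F}$-dependence introduced via \eqref{Hardy.phi} and \eqref{Hardy.Q}: for this, one rewrites $\mathcal{F}$ as a difference of Raychaudhuri fluxes via \eqref{RaychU} and absorbs it into initial data quantities using \eqref{Omega.est0}, which produces the displayed powers $1+|\phi_0|^4$ (from the quadratic contribution of \eqref{Hardy.Q}) and $1+|\phi_0|^2$ (from \eqref{Hardy.phi} in the mass term).
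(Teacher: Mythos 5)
Your treatment of \eqref{Hardy.phi}, \eqref{Hardy.Q}, \eqref{Omega.est0} and \eqref{Omega.est} coincides with the paper's argument: Cauchy--Schwarz with the conjugate weights $\sqrt{r/|\rd_u r|}$ and $\sqrt{|\rd_u r|/r}$ for \eqref{Hardy.phi}, the charge estimate from \eqref{chargeUEinstein} with the moment bound $\int_0^1 x^3\log(x^{-1})\,dx<\infty$, and the Raychaudhuri monotonicity for the $\Omega^2$ bounds. Your skeleton for \eqref{trapped.est} (integrate \eqref{Radius3}, triangle inequality, split $Q^2\leq 2Q_0^2+2(Q-Q_0)^2$, with the $Q_0^2$ piece producing the $r_0/r$ factor) is also the intended one.

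The gap is in how you discharge the residual powers of $\mathcal{F}$ in \eqref{trapped.est}, which is the one genuinely nontrivial point of the lemma. As written, you first bound $\mathcal{F}(u',v)\leq\mathcal{F}(u,v)$ and pull it out of the $u'$-integral, and then claim the leftover can be ``absorbed into initial data quantities using \eqref{Omega.est0}''. This cannot work: $\mathcal{F}(u,v)$ is not controlled by data (it may be arbitrarily large -- the whole point of the lemma is that \eqref{trapped.est} holds with no quantitative knowledge of the scalar field), and \eqref{Omega.est0} is precisely the Raychaudhuri identity with the exponential factor already discarded, so it has nothing left to absorb a polynomial in $\mathcal{F}$. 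The correct mechanism is to keep the exact integrated form of \eqref{RaychU},
\begin{equation*}
\Omega^2(u',v)=\Omega_0^2(v)\,\frac{|\rd_u r|(u',v)}{|\rd_u r|_0(v)}\,e^{-\mathcal{F}(u',v)},
\end{equation*}
inside the integrand, so that every factor $\mathcal{F}^p(u',v)$ produced by \eqref{Hardy.phi}--\eqref{Hardy.Q} is paired pointwise in $u'$ with $e^{-\mathcal{F}(u',v)}$ and killed by the elementary bound $e^{-x}x^p\lesssim_p 1$; what remains is $\Omega_0^2(v)\,|\rd_u r|(u',v)/|\rd_u r|_0(v)$ times powers of $r$, which integrates to the stated data quantities (for instance $\int_{u_0}^{u}|\rd_u r|\,r^{-2}\,du'\leq r^{-1}(u,v)$ gives the $r_0/r$ prefactor). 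Your phrase ``rewrites $\mathcal{F}$ as a difference of Raychaudhuri fluxes'' is pointing at exactly this identity, but the absorption must be performed before taking any supremum of $\mathcal{F}$ over $u'$, and it needs the exponential that \eqref{Omega.est0} has thrown away; with that correction the rest of your computation closes and reproduces the displayed $1+|\phi_0|^4$ and $1+|\phi_0|^2$ terms.
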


\begin{proof} First, note that \eqref{Hardy.phi} follows by an elementary use of the Cauchy-Schwarz inequality (see \cite{breakdown}[Lemma 4.3] for details). Note that  we have by \eqref{RaychU} \begin{equation}\label{omega.est.proof}
	\Omega^2(u,v) = \frac{|\rd_u r|(u,v)}{|\rd_u r|_0(v)} \Omega^2_0(v) \exp(-\mathcal{F}(u,v)).
\end{equation}  Since $\mathcal{F}\geq 0$, \eqref{Omega.est0} and \eqref{Omega.est} follow immediately from \eqref{omega.est.proof}. Then, by \eqref{chargeUEinstein} we get \eqref{Hardy.Q}, more precisely \begin{equation*}
	\begin{split}
		&| Q(u,v)- Q_0(v)| \leq |q_0| \mathcal{F}^{1/2}(u,v)  (\int_{u_0}^{u} |\rd_u r|(u',v)  r^3|\phi|^2(u',v) du')^{1/2}\leq |q_0| r^2_0(v)\left(  \mathcal{F}^{1/2}(u,v) |\phi_0|(v)  + \mathcal{F}(u,v)\right)\\ & \leq 2 |q_0|\ r^2_0(v) \left( |\phi_0|^2(v) + \mathcal{F}(u,v) \right),
	\end{split} 
\end{equation*}   where we have used the computation $\int_0^{1} x^3 \log(x^{-1}) dx= \frac{1}{16}$ to obtain the one-before-last inequality and we have also used the estimate \eqref{Hardy.phi}. \eqref{trapped.est} follows similarly from \eqref{Radius3}, using crucially \eqref{omega.est.proof} and inequalities of the form $\exp(-\mathcal{F}) \mathcal{F}^p \lesssim_p 1$, we omit the details.

\end{proof}

We now place ourselves in the setting of Theorem~\ref{main.thm}, Statement~\ref{main.thm.I} and consider the Cauchy development of data on $C_{out} = \{u_0\} \times [v_0,+\infty)$ and $\Cin =[u_0,u_F) \times  \{v_0\}$.

\begin{prop}\label{apriori.prop1} We  assume that the estimates \eqref{hyp1} on $C_{out}$ are true. 	 Then $u_0 \leq\uch \leq u_F$,  and there exists $v_0(D)>0$ large enough so that $[u_0, \uch]\times [v_0,+\infty) \subset \T$, and for all $ (u,v) \in [u_0, \uch]\times [v_0,+\infty)$: \begin{align}
	&0.9 L_-\ v^{-2s}\leq -r\rd_v r(u,v) \leq  1.1 L_+\ v^{-2s}	\label{lambda.est.prelim},\\  &  \label{r.est.prelim}  r^2_{CH}(u) + \frac{0.9 L_-}{2s-1}\ v^{1-2s}\leq r^2(u,v) \leq r^2_{CH}(u) + \frac{1.1 L_+}{2s-1}\ v^{1-2s},\\ &  \bigl| r|\rd_u r|(u,v) -r|\rd_u r|_{CH}(u) \bigr| \leq E(D)\ e^{1.9 K_- v} \label{nu.est.prelim}.
\end{align}

\end{prop}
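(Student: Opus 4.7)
The plan is to run a bootstrap argument on $u$ over $[u_0, \uch]$, whose key mechanism is that the right-hand side of the a priori estimate \eqref{trapped.est} from Lemma~\ref{lemma.hardy} is controlled by $\Omega_0^2(v) \sim e^{2K_-(u_0+v)}$ (with $K_-<0$), which is negligible against the polynomial initial rate $v^{-2s}$. The inclusion $\uch \geq u_0$ is immediate from $r_{CH}(u_0)>0$, while $\uch \leq u_F$ holds by definition of $\uch$.

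Define the bootstrap set $\mathcal{U}$ of $u_* \in [u_0,\uch]$ such that \eqref{lambda.est.prelim}--\eqref{nu.est.prelim} hold on $[u_0,u_*]\times[v_0,+\infty)$ with all constants doubled: $\mathcal{U}$ contains $u_0$ and is closed in $[u_0,\uch]$. For openness, fix $u_*\in\mathcal{U}$, $u\in[u_0,u_*]$. The bootstrap version of \eqref{r.est.prelim} gives $r(u,v)\gtrsim v^{(1-2s)/2}$; on $C_{out}$, \eqref{hyp1} and the gauge \eqref{gauge1} give $r_0(v) \leq r(u_0,v_0)$ bounded (since $\rd_v r<0$ there), $|\phi_0|,\ |Q_0|$ bounded, $(-r\rd_u r)_0(v)$ bounded below by the assumption $\lim_{v\rightarrow+\infty}\rd_u r(u_0,v)<0$, and $\Omega_0^2(v)\lesssim e^{2K_-(u_0+v)}$. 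Inserting into \eqref{trapped.est} yields
\begin{equation*}
\bigl| -r\rd_v r(u,v) - (-r\rd_v r)_0(v)\bigr| \leq C(D)\, v^{(2s-1)/2}\, e^{2K_-v} \leq 0.1\, L_-\, v^{-2s},
\end{equation*}
for $v_0=v_0(D)$ chosen large. Combining with the hypothesis $L_- v^{-2s}\leq (-r\rd_v r)_0(v)\leq L_+ v^{-2s}$ (and using $L_-\leq L_+$) improves \eqref{lambda.est.prelim}. Integrating this in $v$ from $v$ to $+\infty$, using $\lim_{v\rightarrow+\infty}r^2(u,v)=r_{CH}^2(u)$ for $u\leq\uch$ and $\int_v^{+\infty}(v')^{-2s}dv'=v^{1-2s}/(2s-1)$, improves \eqref{r.est.prelim}; the positivity of $-r\rd_v r$ and of $r$ shows the region is trapped.

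For \eqref{nu.est.prelim}, I would integrate the symmetric equation $\rd_v(r\nu) = \frac{\Omega^2}{4}\bigl(1 - Q^2/r^2 - m^2 r^2 |\phi|^2\bigr)$ from $v$ to $+\infty$, with the limiting gauge $\lim_{v\rightarrow+\infty}r|\rd_u r|(u,v) = r|\rd_u r|_{CH}(u)$ valid on $[u_0,\uch]$. The parenthesized factor is polynomially bounded via \eqref{r.est.prelim} and the scalar-field estimates \eqref{Hardy.phi}--\eqref{Hardy.Q}, while $\Omega^2(u,v)\lesssim v^{(2s-1)/2} e^{2K_-v}$ by \eqref{Omega.est0} and the bootstrap on $r\rd_u r$. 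The resulting $v$-integration produces the desired $E(D) e^{1.9 K_- v}$ bound, with the slack factor $1.9$ absorbing the polynomial prefactor. The main obstacle---and the novel ingredient---is that no quantitative scalar-field control is available for $u>u_0$, which precludes bounding $\int\Omega^2\,du'$ directly through a scalar-field flux; this is exactly what Lemma~\ref{lemma.hardy} bypasses by exploiting the nonnegativity of the Raychaudhuri flux $\mathcal{F}$ in \eqref{RaychU}, yielding $\int\Omega^2\,du'\lesssim e^{2K_-v}$ without any $\phi$-input.
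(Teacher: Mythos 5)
Your proposal is correct and follows essentially the same route as the paper: a bootstrap argument whose engine is the a priori bound \eqref{trapped.est} of Lemma~\ref{lemma.hardy} (with $\Omega^2_0(v)\sim e^{2K_- v}$ absorbing the polynomial losses in $r^{-1}$ coming from the bootstrapped lower bound on $r$), followed by integration in $v$ up to the Cauchy horizon for \eqref{r.est.prelim}, and the same Raychaudhuri-based control of $\Omega^2$ when integrating $\rd_v(r\rd_u r)$ for \eqref{nu.est.prelim}. The only cosmetic difference is that the paper bootstraps the single crude bound $r^2(u,v)\geq v^{-p}$ for a large $p$ and derives everything else directly, whereas you carry the full set of three estimates (with doubled constants) through the continuity argument.
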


\begin{proof}
We make the following bootstrap assumption \begin{equation}\label{B.r}
	r^2(u,v) \geq \ v^{-p},
\end{equation}
for some large $p>0$ to be determined later. Integrating $(-r\rd_v r)(u_0,v)$ and using \eqref{hyp1}, it is clear that \eqref{B.r} is satisfied for $u=u_0$ if $p>2s-1$ and $v_0(D)$ large enough. Thus, by \eqref{trapped.est}, there exists $E(D)>0$ such that \begin{equation*}
	| -r\rd_v r(u,v) + r\rd_v r(u_0,v) | \leq E(D)\ e^{1.9 K_- v} v^p.	\end{equation*} Thus, in view of the lower bound on $| r\rd_v r(u_0,v) |$ from \eqref{hyp1}, it is clear that for $v_0(D)>0$ large enough, $(u,v) \in \T$  and moreover \eqref{lambda.est.prelim} hold. Note that, by definition, for any $u_0 \leq u \leq \uch$, $\{u\}\times[v_0,+\infty) \subset \mathcal{Q}$ therefore one can integrate \eqref{lambda.est.prelim} and get  \begin{equation*}
	r^2(u,v) \geq r^2_{CH}(u) + \frac{0.9 L_-}{2s-1} v^{-2s} \geq \frac{0.9 L_-}{2s-1} v^{-2s}.
\end{equation*}

In particular, \eqref{B.r} is retrieved for any $u \leq \uch$, and \eqref{r.est.prelim} follow. \eqref{nu.est.prelim} also immediately follows integrating \eqref{Radius} in the form $\rd_v(-r\rd_u r)=...$ in the $v$-direction and using \eqref{Omega.est}, \eqref{r.est.prelim}.

\end{proof}
We note that Proposition~\ref{apriori.prop1} provides a proof of the first part of Statement~\ref{main.thm.I} of Theorem~\ref{main.thm}, namely of the quantitative estimates \eqref{quant1}. The full proof of Statement~\ref{main.thm.I} of Theorem~\ref{main.thm} (namely \eqref{quant2}) will be completed in Section~\ref{section.CH}. The global breakdown of the Cauchy horizon in gravitational collapse first proven in \cite{breakdown} and re-proven in a simpler fashion in \cite{bif2} also follows immediately from Proposition~\ref{apriori.prop1}.

From Proposition~\ref{apriori.prop1}, it is clear that the gauge \eqref{gauge2} can be imposed, which we will do in the rest of the proof. We will also fix the gauge freedom so that $\uch=0$ (see Section~\ref{geometricframework}). Therefore, we have for all $u_0 \leq u \leq 0$: \begin{equation*}\begin{split}
	&(r\rd_u r)_{CH}(u) \equiv -1,\\ &  r_{CH}^2(u) =2|u|. 
\end{split}
\end{equation*}

\subsection{A priori characterization of the spacetime boundary}

In what follows, we prove a result of independent interest, which is that the Cauchy horizon $\CH$ (which obeys estimates given by \eqref{hyp1} that are essentially equivalent to mass inflation) cannot be followed by an ingoing light cone $\mathcal{S}_{i^+}$ on which $r$ extends to $0$. We also show that, if the Cauchy horizon breaks down, then $r$ tends to $0$ towards its endpoint.  

\begin{prop}\label{apriori.prop2}
We  assume that the estimates \eqref{hyp1} on $C_{out}$ are true, and that $\uch<u_F$. Then \begin{equation}\begin{split}
		\lim_{v\rightarrow+\infty} r(\uch,v) = \lim_{u \rightarrow \uch,\ u< \uch} r_{CH}(u)=0.		 	\end{split}
\end{equation} Moreover, $\mathcal{S}_{i^+}=\{ \uch< u \leq u_F, \underset{v\rightarrow+\infty}{\lim} r(u,v) = 0 \}=\emptyset$, and for all $\uch<u \leq u_F$, there exists $v_{\mathcal{S}}(u)<+\infty$ such that \begin{equation*}
	\lim_{ v \rightarrow v_{\mathcal{S}}(u)} r(u,v) =0.
\end{equation*}
\end{prop}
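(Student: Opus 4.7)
The proof proceeds in two stages, combining the a priori estimates of Proposition~\ref{apriori.prop1} with the gauge choice \eqref{gauge2} and a causal-past/global-hyperbolicity argument.

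\textbf{Stage 1: Limits on the Cauchy horizon.} Proposition~\ref{apriori.prop1} applies on $[u_0,\uch]\times[v_0,+\infty)$; combining its estimate \eqref{nu.est.prelim} with the gauge $r|\rd_u r|_{CH}(u)\equiv 1$ (enforced by \eqref{gauge2}, since $v_{\mathcal{B}^+}(u)=+\infty$ on $[u_0,\uch]$) yields $r\rd_u r(u,v)\to -1$ uniformly in $u$ as $v\to+\infty$. Integrating in $u$ gives $r_{CH}^2(u)=r_{CH}^2(u_0)-2(u-u_0)$ on $[u_0,\uch]$. If $r_{CH}(\uch^-):=\lim_{u\to\uch^-}r_{CH}(u)>0$, then all geometric quantities remain regular at $u=\uch$ (in particular $r$ is bounded below by a positive constant, while $\phi$, $Q$ and $\Omega^2$ are controlled via Lemma~\ref{lemma.hardy}), so by local well-posedness for the characteristic IVP the MGHD and the estimates of Proposition~\ref{apriori.prop1} extend to $[u_0,\uch+\delta]\times[v_0,+\infty)$ for some $\delta>0$, yielding $r_{CH}(u)>0$ on $(\uch,\uch+\delta)$ and contradicting the definition of $\uch$ as supremum. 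Hence $r_{CH}(\uch^-)=0$, and substituting $r_{CH}^2(\uch)=0$ in \eqref{r.est.prelim} at $u=\uch$ gives $r^2(\uch,v)\leq\frac{1.1 L_+}{2s-1}v^{1-2s}\to 0$.

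\textbf{Stage 2: No ingoing null singular component.} Assume toward contradiction that some $u^*\in\mathcal{S}_{i^+}$ exists. Since $\{u^*\}\times[v_0,+\infty)\subset\mathcal{Q}$, global hyperbolicity of the MGHD implies that the causal past of this cone intersected with the future of the initial data, namely the full rectangle $[u_0,u^*]\times[v_0,+\infty)$, is contained in $\mathcal{Q}$. Moreover, for every $u'\in(\uch,u^*]$ we have $\{u'\}\times[v_0,+\infty)\subset\mathcal{Q}$ and $\lim_{v\to+\infty}r(u',v)=0$ (the latter by the very definition of $\uch$ as supremum), so $(\uch,u^*]\subset\mathcal{S}_{i^+}$ and $r_{CH}(u)=0$ throughout $(\uch,u^*]$. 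Now rerun the bootstrap of Proposition~\ref{apriori.prop1} on $[u_0,u^*]\times[v_0,+\infty)$: the closing lower bound $r^2\geq\frac{0.9 L_-}{2s-1}v^{1-2s}$ derived from \eqref{lambda.est.prelim} is strictly positive and independent of whether $r_{CH}(u)$ vanishes, so the assumption $r^2\geq v^{-p}$ (for $p>2s-1$) propagates uniformly across $u=\uch$ all the way to $u=u^*$. The estimate \eqref{nu.est.prelim} then gives $r\rd_u r(u,v)\to-1$ uniformly on $[u_0,u^*]$, and integrating in $u$ yields $r_{CH}^2(u)=r_{CH}^2(u_0)-2(u-u_0)=-2(u-\uch)$ for $u\in[u_0,u^*]$, which is strictly negative for $u>\uch$, contradicting $r_{CH}^2\geq 0$. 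Hence $\mathcal{S}_{i^+}=\emptyset$, and for every $\uch<u\leq u_F$ the ingoing cone $\{u\}\times[v_0,+\infty)$ must exit $\mathcal{Q}$ at some finite $v_{\mathcal{S}}(u)<+\infty$; that $r\to 0$ there (rather than some other geometric breakdown of the solution) follows from Kommemi's classification~\cite{Kommemi} of the MGHD boundary for spherically symmetric solutions of \eqref{1.1}--\eqref{5.1}, combined with $\rd_v r<0$ and the a priori control of $\Omega^2$ from \eqref{Omega.est}.

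\textbf{Main obstacle.} The delicate point in both stages is the continuation of Proposition~\ref{apriori.prop1}'s estimates past $\uch$, precisely where the ``natural'' limit $r_{CH}(u)$ degenerates to $0$ (Stage 1) or vanishes identically (Stage 2). This is manageable because the bootstrap of Proposition~\ref{apriori.prop1} retrieves $r^2\gtrsim v^{1-2s}$ with a constant independent of $u$ and of $r_{CH}(u)$, so the bootstrap remains closable even across $\uch$ as long as the corresponding rectangle lies in $\mathcal{Q}$---which is precisely what the causal-past argument in Stage 2 supplies. The other potentially delicate step, namely extracting the gauge identity $\lim_v r\rd_u r=-1$ uniformly in $u$ for $u\in[u_0,u^*]$, is handled by the exponential error $E(D)e^{1.9K_-v}$ in \eqref{nu.est.prelim}, which depends only on the initial data constant $D$ and not on $u$.
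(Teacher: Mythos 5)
Your Stage 2 is essentially the paper's own argument: you rerun the bootstrap of Proposition~\ref{apriori.prop1} on the enlarged rectangle $[u_0,u^*]\times[v_0,+\infty)$ (legitimate there because the $\mathcal{S}_{i^+}$ hypothesis makes the outgoing cones future-complete, so the integration of \eqref{lambda.est.prelim} from $v=+\infty$ goes through), and then the gauge-normalized estimate \eqref{nu.est.prelim} forces $r_{CH}^2$ to decrease linearly in $u$, which is incompatible with $r_{CH}\equiv 0$ on a nondegenerate interval beyond $\uch$. This matches the paper's proof of $\mathcal{S}_{i^+}=\emptyset$ (which integrates from $\uch$ rather than $u_0$, but it is the same mechanism), and your causal-past remark correctly supplies $(\uch,u^*]\subset\mathcal{S}_{i^+}$.

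The genuine gap is in Stage 1. You prove $\lim_{u\to\uch^-}r_{CH}(u)=0$ by an openness/continuation contradiction, and the key step --- ``by local well-posedness the solution and the estimates of Proposition~\ref{apriori.prop1} extend to $[u_0,\uch+\delta]\times[v_0,+\infty)$ for a uniform $\delta>0$ with $r$ bounded below'' --- is asserted, not proven, and cannot be obtained from what you cite. Local well-posedness for the characteristic IVP on the non-compact cone $\{u=\uch\}\times[v_0,+\infty)$ only yields an existence region whose $u$-extent may shrink as $v\to+\infty$; a uniform $\delta$ together with a uniform positive lower bound on $r$ up to $v=+\infty$ is precisely a semi-global persistence statement that requires its own bootstrap (e.g. propagating $r\geq c/2$ from $\{u=\uch\}$ using \eqref{Radius3}, the flux bound \eqref{Omega.est} re-anchored at $u=\uch$, and an extension principle to exclude other breakdown). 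Moreover, invoking Proposition~\ref{apriori.prop1} on the extended strip is circular: its lower bound on $r^2$ is obtained by integrating $-r\partial_v r\gtrsim v^{-2s}$ from $v=+\infty$ along cones already known to be future-complete and to admit the limit $r_{CH}(u)$, which is part of what your continuation step must establish. The paper avoids this machinery altogether: since for $u>\uch$ the cones already terminate with $r\to 0$ (by the definition of $\uch$ and the boundary characterization), one integrates $-\partial_u\partial_v(r^2)$ over a thin slab $[\uch-\ep,\uch+\ep]$, bounds the source by $\int_V^{v_{\mathcal{B}}(u)}\Omega^2\,dv\leq \frac{\Omega^2}{|\partial_v r|}(u,V)\,r(u,V)$ via the monotonicity of \eqref{RaychV} in the trapped region, and lets $\ep\to 0$ to get $r_{CH}(\uch^-)=0$, with $r(\uch,v)\to 0$ then following from \eqref{r.est.prelim}. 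Note also that your Stage 2 contradiction, as written, uses the Stage 1 output ($r_{CH}(\uch)=0$), so the gap propagates; you could decouple it by evaluating your linear identity $r_{CH}^2(u)=r_{CH}^2(u_0)-2(u-u_0)$ at two distinct points of $(\uch,u^*]$ where $r_{CH}$ vanishes, but the first displayed limit of the Proposition still requires Stage 1 or the paper's slab argument.
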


\begin{proof} Note that by Proposition~\ref{apriori.prop1}, $(\uch,v_0) \in \T$. Since $\uch< u_F$, and $\T$ is open, there exists $\epsilon>0$ such that $[\uch,\uch+\ep] \times \{v_0\} \subset \T$ and by the monotonicity of \eqref{RaychV} we deduce that  $[\uch,\uch+\ep] \times [v_0,+\infty)\subset \T$.

Since $\uch< u_F$ by definition, then it means that for every $\uch < u< u_F$, there exists $v_{\mathcal{B}}(u) \in \RR \cup \{+\infty\}$ such that  $$ \lim_{v\rightarrow v_{\mathcal{B}}(u)} r(u,v)=0.$$ 

We then integrate $-\rd_u \rd_v (r^2)$ on $\{ u\in [\uch-\ep,\uch+\ep], V \leq v\leq v_{\mathcal{B}}(u) \} $ (adopting the convention that  $v_{\mathcal{B}}(u)=+\infty$ for $u\leq \uch$) using \eqref{Radius3} and obtain, exploiting the monotonicity of \eqref{RaychV} and the fact that $ \{ u\in [\uch-\ep,\uch+\ep], V \leq v\leq v_{\mathcal{B}}(u) \} \subset \T $ (namely we use  $\int_{v_1}^{v_2} \Omega^2(u,v) dv \leq \frac{\Omega^2}{|\rd_v r|}(u,v_1) r(u,v_1)$):    \begin{equation*}
	r^2(\uch+\ep,V)- r^2(\uch-\ep,V) + r^2_{CH}(\uch-\ep) \leq \int_{\uch-\ep}^{\uch+\ep} \int_{V}^{v_{\mathcal{B}}(u)} \Omega^2(u,v) dv du  \lesssim \ep.\end{equation*}
Taking $\ep \rightarrow 0$ gives (using the continuity of $u \rightarrow r^2(u,V)$)  $$ \lim_{u \rightarrow \uch,\ u< \uch} r_{CH}(u)=0. $$  Then, using \eqref{r.est.prelim} we obtain $$\lim_{v\rightarrow+\infty} r(\uch,v)=0.$$

Let us prove that $\mathcal{S}_{i^+}=\{\uch< u \leq u_F, \underset{v\rightarrow+\infty}{\lim} r(u,v) = 0 \}=(\uch,\usi] = \emptyset$ by contradiction. Assuming that $\mathcal{S}_{i^+} \neq \emptyset$, we revisit the proof of Proposition~\ref{apriori.prop1} to show \eqref{lambda.est.prelim}, \eqref{r.est.prelim}, \eqref{nu.est.prelim} are still  valid on the larger rectangle $ (u,v) \in [u_0, \usi]\times [v_0,+\infty) \subset \T$, and we have $r_{CH}(u)=0$ for all $\uch \leq u \leq \usi$. But we note that integrating \eqref{nu.est.prelim} under the gauge \eqref{gauge2} gives $$ \frac{1}{2} (u-\uch) \leq \frac{r^2(\uch,v)- r^2(u,v)}{2}+ E(D) e^{1.99 K_- v} \lesssim v^{1-2s},$$ which, as $v\rightarrow +\infty$, gives $u-\uch=0$, which is obviously a contradiction.
\end{proof}

\subsection{Sufficient condition for  a local breakdown of the Cauchy horizon} The following proposition is important in the initial data construction relevant to Statement~\ref{main.thm.III} of Theorem~\ref{main.thm}, see Section~\ref{section.data}.
\begin{prop}\label{localbreak.prop} Assume that the estimates \eqref{hyp1} on $C_{out}$ are satisfied. Moreover, assume that $[u_0,u_F) \times\{v_0\} \subset \T$ and $\underset{ u \rightarrow u_F}{\lim} r(u,v_0)=0$. Then $\uch < u_F$ and for all $\uch < u \leq u_F$, there exists $v_{\mathcal{S}}(u)<+\infty$ such that \begin{equation*}
	\lim_{ v \rightarrow v_{\mathcal{S}}(u)} r(u,v) =0.
\end{equation*}
\end{prop}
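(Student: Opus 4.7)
The plan is to reduce the statement to two ingredients already established: (i) the quantitative lower bound on $r^2$ provided by Proposition~\ref{apriori.prop1}, which holds under the hypothesis \eqref{hyp1}; and (ii) the boundary characterization of Proposition~\ref{apriori.prop2}, which takes care of the structure of $\mathcal{S}$ to the future of $\CH$ once we know $\uch<u_F$. Thus the core task is to show $\uch<u_F$; the second conclusion then follows directly by invoking Proposition~\ref{apriori.prop2}.

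First, I would record the direct consequence of the extra hypothesis $[u_0,u_F)\times\{v_0\}\subset \T$: the Raychaudhuri equation \eqref{RaychV} implies that $\rd_v r/\Omega^2$ is non-increasing in $v$, so $\rd_v r<0$ on the entire rectangle $[u_0,u_F)\times[v_0,+\infty)$. In particular $v\mapsto r(u,v)$ is decreasing for every fixed $u\in[u_0,u_F)$, so the limit $r_{CH}(u)=\lim_{v\to+\infty}r(u,v)\in[0,r(u,v_0)]$ is well-defined, and the full rectangle lies in $\T$.

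The heart of the argument is a contradiction argument to establish $\uch<u_F$. Suppose instead that $\uch=u_F$. Then for every $u<u_F$ one has $\{u\}\times[v_0,+\infty)\subset\mathcal{Q}$, so the estimate \eqref{r.est.prelim} from Proposition~\ref{apriori.prop1} is available and, evaluated at $v=v_0$, yields
\begin{equation*}
r^2(u,v_0)\ \geq\ r^2_{CH}(u)+\frac{0.9\,L_-}{2s-1}\,v_0^{\,1-2s}\ \geq\ \frac{0.9\,L_-}{2s-1}\,v_0^{\,1-2s}\ >\ 0,
\end{equation*}
the strict positivity being guaranteed by $L_->0$ (from \eqref{hyp1}) and $s>\tfrac12$. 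Letting $u\to u_F^{-}$, the left-hand side tends to $0$ by hypothesis, while the right-hand side stays bounded below by a fixed positive constant: contradiction. Therefore $\uch<u_F$.

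Finally, with $\uch<u_F$ now secured, Proposition~\ref{apriori.prop2} applies verbatim and delivers exactly the second conclusion: for every $\uch<u\leq u_F$ there exists $v_{\mathcal{S}}(u)<+\infty$ with $\lim_{v\to v_{\mathcal{S}}(u)}r(u,v)=0$. I don't anticipate any genuine obstacle here; the only minor point demanding care is that \eqref{r.est.prelim} is stated on the closed interval $[u_0,\uch]$, so one must verify that when $\uch=u_F$ the estimate remains available on the half-open interval $[u_0,u_F)$ (which it does, because the bootstrap yielding \eqref{r.est.prelim} in Proposition~\ref{apriori.prop1} only uses that $\{u\}\times[v_0,+\infty)\subset\mathcal{Q}$, valid for every $u<\uch=u_F$). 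Once this is checked, the contradiction step above is purely arithmetic.
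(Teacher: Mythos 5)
Your proposal is correct and follows essentially the same route as the paper: the paper likewise derives $\uch<u_F$ by observing that the a priori lower bound \eqref{r.est.prelim} of Proposition~\ref{apriori.prop1} is incompatible (via the monotonicity of \eqref{RaychV}, i.e.\ $r(u,v)<r(u,v_0)$) with $r(u,v_0)\to 0$ as $u\to u_F$, and then the second conclusion is exactly Proposition~\ref{apriori.prop2}. Your evaluation of \eqref{r.est.prelim} directly at $v=v_0$, together with the remark that the estimate is available on $[u_0,u_F)$ when $\uch=u_F$, is just a streamlined version of the same contradiction.
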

\begin{proof} 	$\uch < u_F$ follows immediately from the monotonicity of \eqref{RaychV}: indeed for all $u_0 \leq u < u_F$: $$ r(u,v) < r(u,v_0),$$ which is clearly violating \eqref{r.est.prelim} as $u \rightarrow u_F$ if it was the case that $\uch = u_F$.	\end{proof}

\subsection{A calculus lemma}
The following calculus lemma will be crucial in the estimates under the Cauchy horizon proven in Section~\ref{section.CH}.

\begin{lemma}\label{v^s/r^2.lem}Let $1<s\leq p < 2s$. Suppose that for all $u_0 \leq u \leq 0$, $v\geq v_0$: $$  2|u| +\frac{2D_-}{2s-1}\ v^{1-2s}\leq r^2(u,v)   \leq 2|u| +\frac{2D_+}{2s-1}\ v^{1-2s}.$$   
Then \begin{equation}\label{int1}
	\int_{v_0}^{v} \frac{(v')^{-p}}{r^2(u,v')} dv' \lesssim r^{\frac{p-2s}{s-\frac{1}{2}}}(u,v).
\end{equation} If moreover, $s\leq p < s+\frac{1}{2}$, then
\begin{equation}\label{int2}
	\int_{v_0}^{v} \frac{(v')^{-p}}{r(u,v')} dv' \lesssim r^{-\frac{1-2p+2s}{2s-1}}(u,v).
\end{equation}
\end{lemma}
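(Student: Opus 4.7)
The plan is to exploit the two-regime structure of the lower bound $r^2(u,v) \gtrsim |u| + v^{1-2s}$: the first term dominates for ``long'' $v$ and the second for ``short'' $v$, with crossover at the scale $v_\ast(u) := |u|^{-1/(2s-1)}$ at which the two contributions balance. Concretely, for $v \leq v_\ast(u)$ one has $r^2 \gtrsim v^{1-2s}$ and hence $r \approx v^{(1-2s)/2}$, while for $v \geq v_\ast(u)$ one has $r^2 \gtrsim |u| = v_\ast^{1-2s}$ and hence $r \approx |u|^{1/2}$. I would first verify (by direct substitution using $|u|=v_\ast^{1-2s}$) that the RHS of \eqref{int1}, namely $r^{(p-2s)/(s-1/2)}$, evaluates up to constants to $v^{2s-p}$ in the short regime and to $v_\ast^{2s-p}$ in the long regime; similarly that the RHS of \eqref{int2}, namely $r^{-(1-2p+2s)/(2s-1)}$, evaluates to $v^{s+1/2-p}$ and $v_\ast^{s+1/2-p}$ respectively.

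For \eqref{int1}, I would then split on whether $v \leq v_\ast$ or $v > v_\ast$. In the first case, use $r^{-2} \lesssim (v')^{2s-1}$ to bound the integrand by $(v')^{2s-1-p}$, which integrates to $\lesssim v^{2s-p}$ using $p < 2s$. In the second case, split the integral at $v_\ast$: the $[v_0,v_\ast]$ piece is already controlled by $v_\ast^{2s-p}$ by the previous step; for the $[v_\ast,v]$ piece, apply $r^{-2} \lesssim |u|^{-1}$ to obtain
\[ |u|^{-1} \int_{v_\ast}^v (v')^{-p}\, dv' \lesssim |u|^{-1}\, v_\ast^{1-p} = v_\ast^{2s-p}, \]
where I invoke $p > 1$ (free from $p \geq s > 1$) and the identity $|u| = v_\ast^{1-2s}$.

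For \eqref{int2}, apply the same template with $r^{-1}$ in place of $r^{-2}$. In the short regime, integrating $(v')^{(2s-1)/2-p}$ yields $\lesssim v^{s+1/2-p}$, where the positivity of $s+1/2-p$ (equivalently the hypothesis $p < s+1/2$) is what keeps the estimate aligned with the claimed RHS. In the long regime, the $[v_\ast,v]$ piece produces $|u|^{-1/2}\,v_\ast^{1-p}$, which equals $v_\ast^{s+1/2-p}$ using $|u|^{1/2} = v_\ast^{(1-2s)/2}$.

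I do not expect any real obstacle beyond the power-counting bookkeeping. The one point to double-check is that the bounds from both regimes agree up to constants at the matching scale $v = v_\ast$, and that $p > 1$ justifies the long-regime integration of $(v')^{-p}$; both are guaranteed by the hypotheses $1 < s \leq p < 2s$ (and additionally $p < s+1/2$ for \eqref{int2}).
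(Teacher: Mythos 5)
Your argument is correct and is essentially the paper's own proof: the same crude bound $r^{-\alpha}\lesssim v^{\alpha(2s-1)/2}$ in the regime $|u|\leq v^{1-2s}$, the same splitting of the integral at the crossover scale $v_\ast(u)=|u|^{-1/(2s-1)}$ (the paper's $v_\gamma(u)$), and the same power-counting identification of $|u|^{-\alpha/2}v_\ast^{1-p}$ with the claimed powers of $r(u,v)$. No gaps; the checks you flag ($p>1$ from $p\geq s>1$, and $p<s+\tfrac12$ for \eqref{int2}) are exactly the ones the paper uses.
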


\begin{proof}
First, by bounding crudely  $r^2(u,v) \gtrsim  v^{1-2s}$ we obtain \begin{align} \label{int1.t}
	& \int_{v_0}^{v} \frac{(v')^{-p}}{r^2(u,v')} dv \lesssim v^{2s-p}, \\ &  \int_{v_0}^{v} \frac{(v')^{-p}}{r(u,v')} dv \lesssim v^{\frac{1}{2} -p+s}. \label{int2.t}
\end{align}

Now note that in the region $|u| \leq v^{1-2s}$, we have $r^2(u,v) \sim v^{1-2s}$, so \eqref{int1.t}, \eqref{int2.t} give \eqref{int1} and \eqref{int2} already. In the rest of the proof we focus on the   region $|u| \geq  v^{1-2s}$. In  this region we write $r^2(u,v) \sim |u|  $. We will split the integrals as such, for $\alpha \in \{1,2\}$: denote $v_{\gamma}(u)= |u|^{-\frac{1}{2s-1}}$ \begin{equation*}
	\int_{v_0}^{v} \frac{(v')^{-p}}{r^{\alpha}(u,v')} dv' =  \underbrace{\int_{v_0}^{v_{\gamma}(u)} \frac{(v')^{-p}}{r^{\alpha}(u,v')} dv'}_{\lesssim |u|^{-\frac{\alpha}{2}+\frac{p-1}{2s-1}}}+\int_{v_{\gamma}(u)}^{v} \frac{(v')^{-p}}{r^{\alpha}(u,v')} dv',
\end{equation*} where we have obtained the inequality in the first integral thanks to \eqref{int1.t}, \eqref{int2.t} applied to $v=v_{\gamma}(u)$. Note that in our region $|u| \geq v^{1-2s}$: \begin{align} & |u|^{-\frac{\alpha}{2}+\frac{p-1}{2s-1}} \approx r^{\frac{p-2s}{2s-1}}(u,v) \text{ if } \alpha =2,\label{u.eq1} \\ &  |u|^{-\frac{\alpha}{2}+\frac{p-1}{2s-1}} \approx   r^{-\frac{1-2p+2s}{2s-1}}(u,v)  \text{ if } \alpha =1/\label{u.eq2}
\end{align}
Note lastly that the second integral can be controlled as such, taking advantage of the fact that for $v_{\gamma}(u) \leq v' \leq v$, we have $r^{-\alpha}(u,v') \approx |u|^{-\frac{\alpha}{2}}$: \begin{equation}
	\int_{v_{\gamma}(u)}^{v} \frac{(v')^{-p}}{r^{\alpha}(u,v')} dv' \ls  |u|^{-\frac{\alpha}{2}} [v_{\gamma}(u)]^{1-p}
	\approx  |u|^{-\frac{\alpha}{2}+\frac{p-1}{2s-1}},
\end{equation} and then we  use \eqref{u.eq1}, \eqref{u.eq2} to conclude.
\end{proof}

\subsection{Linear propagation estimates on dynamical metrics}\label{propagation.section}
In this section, we provide linear estimates on the wave equation \eqref{5.1} for a dynamical metric $g$ obeying the preliminary estimates  we derived in Section~\ref{apriori.est.section} and up to a region where $r\lesssim \epsilon v^{1/2-s}$. We start with our main propagation lemma that relies on a Gr\"{o}nwall-type argument. For fixed $v\geq v_0$, $\ep \in (0,1]$, let us define $u_{\ep}(v) > 0$ such that $r(u_{\ep}(v),v) = \ep = \ep v^{1/2-s}$.

\begin{lemma}\label{lemma.propagation}  Assume that  for some $1<s \leq p <2s$,  \begin{equation}\label{ttheta.data}
	|\ttheta|(u_0,v) \lesssim v^{-p}.
\end{equation}
Let the functions $(\ttheta(u,v),\txi(u,v))$ satisfying the following system of equations
\begin{equation}\begin{split} \label{eq.lemma}
		& \rd_u  \ttheta= -\frac{\lambda}{r} \txi + F, \\ &  \rd_v  \txi= -\frac{\nu}{r} \ttheta+ G,
	\end{split}
\end{equation} %with $(\tlambda(u,v),\tnu(u,v))$ satisfying the following estimates \begin{equation}\begin{split}&	r|\tlambda|(u,v) \lesssim v^{-2s},\\ & r|\tnu|(u,v) \lesssim 1.\end{split}\end{equation}
with $(F(u,v),G(u,v))$ satisfying the following estimates, for some $0<\eta< 2s-p$: for all $v\geq v_0$, $u_0\leq u \leq u_{\ep}(v)$:
\begin{equation}\label{FG.eq}\begin{split}&	|F|(u,v) \lesssim r^{-2 } v^{-p- \eta},\\ & |G|(u,v) \lesssim v^{2s-1-p-\eta},\end{split}\end{equation}  and moreover $\lambda(u,v) = \rd_v r(u,v)$,  $\nu(u,v) = \rd_u r(u,v)$ satisfy the following estimates \begin{equation}\label{lambdanu.eq}\begin{split}
		&v^{-2s}\lesssim-r\lambda(u,v) \lesssim v^{-2s},\\ & 1\lesssim-r\nu(u,v) \lesssim 1.
	\end{split}
\end{equation}
Then,  for all $v\geq v_0$, $u_0\leq u \leq u_{\ep}(v)$, we have  \begin{equation}\label{ttheta.est}
	|\ttheta|(u,v) \lesssim_{\ep} v^{-p},
\end{equation}
\begin{equation}\label{txi.est}
	|\txi|(u,v) \lesssim_{\ep} r^{-\frac{2s-p}{s-\frac{1}{2}}}.
\end{equation}
\end{lemma}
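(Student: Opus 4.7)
I would argue via a coupled bootstrap with $|\ttheta|$ as the driving quantity. Fix $C>0$ to be chosen later and postulate that $|\ttheta|(u,v) \leq C v^{-p}$ throughout $\{v\geq v_0,\ u_0 \leq u \leq u_\ep(v)\}$. The initial hypothesis \eqref{ttheta.data} secures this bound at $u=u_0$ with a fixed constant, so a standard continuity argument in $u$ reduces the lemma to improving $C$. The two ingredients are the equations \eqref{eq.lemma}, integrated alternately in the $v$- and $u$-directions, together with the integral estimates of Lemma~\ref{v^s/r^2.lem} applied with exponent $p\in[s,2s)$.

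\textbf{Deriving the bound on $\txi$.} Integrate the second equation of \eqref{eq.lemma} in the $v$-direction at fixed $u$; using \eqref{lambdanu.eq}, \eqref{FG.eq} and the bootstrap,
\[
|\txi|(u,v) \ls |\txi|(u,v_0) + C \int_{v_0}^v \frac{(v')^{-p}}{r^2(u,v')}\, dv' + \int_{v_0}^v (v')^{2s-1-p-\eta}\, dv'.
\]
Lemma~\ref{v^s/r^2.lem} bounds the middle integral by $C\, r^{-(2s-p)/(s-1/2)}(u,v)$, which is exactly the target. The $G$-integral produces a factor $v^{2s-p-\eta}$; using $r\geq \ep v^{1/2-s}$ in the region together with the strict inequality $\eta<2s-p$ supplies a factor $v^{-\eta}$, which absorbs this term into an $\ep$-dependent multiple of $r^{-(2s-p)/(s-1/2)}(u,v)$. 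The datum $|\txi|(u,v_0)$ is controlled on the ingoing cone, and one obtains \eqref{txi.est}.

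\textbf{Closing the bootstrap for $\ttheta$.} Integrating the first equation of \eqref{eq.lemma} in $u$ from $u_0$, using \eqref{ttheta.data}, \eqref{lambdanu.eq}, \eqref{FG.eq} and the $\txi$-bound just obtained,
\[
|\ttheta|(u,v) \ls v^{-p} + v^{-2s} \int_{u_0}^u \frac{r^{-(2s-p)/(s-1/2)}(u',v)}{r^2(u',v)}\, du' + v^{-p-\eta} \int_{u_0}^u \frac{du'}{r^2(u',v)}.
\]
Since $-r\nu\approx 1$ by \eqref{lambdanu.eq}, the change of variable $w=r^2(u',v)$ reduces both $u'$-integrals to one-dimensional quadratures: the first evaluates to $\ls r^{-(2s-p)/(s-1/2)}(u,v)$, and the second to a logarithm of $r(u_0,v)/r(u,v)$. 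The bound $r\geq \ep v^{1/2-s}$ converts the first contribution into $\ls_\ep v^{-p}$, while the logarithm is absorbed by $v^{-\eta}$ (since $\log v \ls v^\eta$ for $v$ large). Summing, $|\ttheta|(u,v)\leq C_\ep v^{-p}$ with $C_\ep$ independent of the bootstrap constant $C$, so choosing $C$ large enough improves the bootstrap and concludes the proof.

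\textbf{Main obstacle.} The estimates saturate at the inner boundary $r=\ep v^{1/2-s}$: both the $G$-contribution to $\txi$ and the $\txi$-contribution to $\ttheta$ are borderline there, and each relies on the strict inequality $\eta<2s-p$ to supply a small power of $v^{-\eta}$ for absorption. The constants unavoidably degenerate like $\ep^{-(2s-p)/(s-1/2)}$ as $\ep\to 0$, which accounts for the $\ls_\ep$ notation in \eqref{ttheta.est}--\eqref{txi.est} and is also the reason why the argument cannot be extended to the full region $r\leq r_0(v)$ without further input such as the commutator $X$-estimates of Section~\ref{proof.section5}.
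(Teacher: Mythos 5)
Your skeleton (decouple the system by integrating the $\txi$-equation in $v$, feed the result back into the $u$-integration of the $\ttheta$-equation, with Lemma~\ref{v^s/r^2.lem} handling the $r$-weighted integrals) is the same as the paper's, but the closure of your bootstrap has a genuine gap. The $\txi$-bound you produce in the first step is \emph{not} independent of the bootstrap constant: integrating $\rd_v\txi=-\frac{\nu}{r}\ttheta+G$ under the hypothesis $|\ttheta|\leq C v^{-p}$ gives $|\txi|(u,v)\ls |\txi|(u,v_0)+C\,r^{-\frac{2s-p}{s-\frac{1}{2}}}(u,v)+v^{2s-p-\eta}$, with the factor $C$ sitting on the main term. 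When you then integrate the $\ttheta$-equation in $u$, the middle term of your estimate is $\ls C\,v^{-2s}r^{-\frac{2s-p}{s-\frac{1}{2}}}(u,v)\ls C\,\ep^{-\frac{2s-p}{s-\frac{1}{2}}}v^{-p}$, so the output constant is of the form $C_0+C_1\ep^{-\frac{2s-p}{s-\frac{1}{2}}}\,C$: affine in $C$ with a slope that is at least of order one and blows up as $\ep\to 0$ (and in the intended application $\ep$ is small). A fixed-constant bootstrap only closes when the feedback coefficient is strictly less than one; here the total mass of the feedback kernel over $\{r\geq \ep v^{\frac{1}{2}-s}\}$ is of size $\ep^{-\frac{2s-p}{s-\frac{1}{2}}}\gtrsim 1$, so ``choosing $C$ large enough'' cannot improve the assumption — the assertion that $C_\ep$ is independent of $C$ is precisely where the argument fails.

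The paper resolves exactly this point by running a Gr\"onwall argument instead of a constant bootstrap: it sets $\Pi(u,v)=\sup_{[u_0,u]\times[v_0,v]}[v']^{p}|\ttheta|(u',v')$, evaluates the integral inequality \eqref{Gronwall.arg1} at a maximizing point $(u^{*},v^{*})$, reduces via Lemma~\ref{v^s/r^2.lem} to $\Pi(u,v)\ls 1+[v^{*}]^{p-2s}\int_{u_0}^{u^{*}}r^{-2-\frac{2s-p}{s-\frac{1}{2}}}(u',v^{*})\,\Pi(u',v)\,du'$, and applies Gr\"onwall in $u$; the order-$\ep^{-\frac{2s-p}{s-\frac{1}{2}}}$ kernel mass then enters only through an exponential factor, which is exactly the $\ep$-dependence permitted by $\lesssim_\ep$ in \eqref{ttheta.est}--\eqref{txi.est}. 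Your argument can be repaired along these lines (equivalently, by bootstrapping a quantity carrying an $r$-dependent exponential weight that reproduces the Gr\"onwall factor), but not in the form written. Once \eqref{ttheta.est} is available, your derivation of \eqref{txi.est} by a $v$-integration and Lemma~\ref{v^s/r^2.lem} is correct and coincides with the paper's.
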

\begin{proof}
We integrate \eqref{eq.lemma} in  $v$, taking advantage of \eqref{FG.eq} to write 
\begin{equation}\begin{split}
		&	\rd_u	 \ttheta(u,v) = \frac{|\lambda|}{r}(u,v) \txi(u,v_0)+ \frac{|\lambda|}{r}(u,v)  \int_{v_0}^{v} \frac{|\nu|}{r}(u,v')\ttheta(u,v') dv' +  F(u,v) + \frac{|\lambda|}{r}(u,v)  \int_{v_0}^{v}  G(u,v') dv'\\ &  =\frac{|\lambda|}{r}(u,v)  \int_{v_0}^{v} \frac{|\nu|}{r}(u,v')\ttheta(u,v') dv'+ O(r^{-2} v^{-p-\eta}).\end{split}
\end{equation} Then, we can integrate in $u$ and use \eqref{ttheta.data} to obtain \begin{equation}\label{Gronwall.arg1}
	v^{p}|\ttheta|(u,v) \lesssim 1 +  v^{p-2s}\int_{u_0}^{u} r^{-2}(u',v)  \int_{v_0}^{v}  r^{-2}(u',v')   |\ttheta|(u',v') dv' du'+ O( \log(v) v^{-\eta})
\end{equation} Now fix $v\geq v_0$, $u\geq u_0$, and define $ \Pi(u,v)\geq 0$ and $(u^{*}(u,v),v^{*}(u,v))\in [u_0,u]\times[v_0,v]$ as such $$ \Pi(u,v)= \sup_{ v_0 \leq v'\leq v, u_0 \leq  u' \leq u} [v']^{p}|\ttheta|(u',v') = [v^{*}(u,v)]^{p}|\ttheta|(u^{*}(u,v),v^{*}(u,v)) . $$

Now, evaluate \eqref{Gronwall.arg1} at $(u^{*},v^{*})$ and note that, using the definition of $\Pi$, \eqref{lambdanu.eq} and Lemma~\ref{v^s/r^2.lem},
\begin{equation}\label{Gronwall.arg2}\begin{split}
		&\Pi(u,v) \lesssim 1 +  [v^{*}]^{p-2s}\int_{u_0}^{u^{*}} r^{-2}(u',v^{*}) \Pi(u',v)  \int_{v_0}^{v^{*}}  r^{-2}(u',v') [v']^{-p} dv' du'\lesssim  1 +  [v^{*}]^{p-2s}\int_{u_0}^{u^{*}} r^{-2-\frac{2s-p}{s-\frac{1}{2}}}(u',v^{*}) \Pi(u',v)  du'. \end{split}
\end{equation} where we have used the fact that $(u',v') \in [u_0,u'] \times [v_0,v]$, hence $\Pi(u',v') \leq \Pi(u',v)$. Note also that \begin{equation}
	[v^{*}]^{p-2s}\int_{u_0}^{u^{*}} r^{-2-\frac{2s-p}{s-\frac{1}{2}}}(u',v^{*}) du' \lesssim [v^{*}]^{p-2s} r^{-\frac{2s-p}{s-\frac{1}{2}}}(u^{*},v^{*}) \lesssim  \ep^{-\frac{2s-p}{s-\frac{1}{2}}}(u^{*},v^{*}) .
\end{equation} Therefore, by the Gr\"{o}nwall lemma and \eqref{Gronwall.arg2}, we get \begin{equation}
	v^{p} |\ttheta|(u,v) \leq \Pi(u,v) \lesssim_{\ep} 1,
\end{equation} thus \eqref{ttheta.est} holds and \eqref{txi.est} follows integrating \eqref{eq.lemma} in $v$ and using Lemma~\ref{v^s/r^2.lem}, which concludes the proof.
\end{proof}

Then, we investigate the propagation of improved decay for the first derivative of the initial data.
\begin{lemma}\label{lemma.propagation.derivative}
Assume that $\ttheta(u_0,v)$ satisfies \eqref{ttheta.data} for some $1<s \leq p< 2s-1$ and, moreover, that  \begin{equation}\label{dvttheta.data}
	\bigl| \rd_v (\ttheta(u_0,v))\bigr|\lesssim v^{-p-1}.
\end{equation}  Assume the functions $(\ttheta(u,v),\txi(u,v))$ satisfy \eqref{eq.lemma}, $(\lambda,\nu)$ satisfy \eqref{lambdanu.eq}, $(F,G)$ satisfy \eqref{FG.eq} for some $1<\eta<2s-p$, and moreover for all $v\geq v_0$, $u_0\leq u \leq u_{\ep}(v)$ \begin{equation}\label{lambda.lemma.eq2}
	|\rd_v(r\lambda)|(u,v) \lesssim v^{-2s-1},
\end{equation} \begin{equation}\label{FG.eq2}
	|\rd_v F|(u,v) \lesssim r^{-2}(u,v) v^{-p -1-\eta}.
\end{equation} Then, for all $v\geq v_0$, $u_0\leq u \leq u_{\ep}(v)$ \begin{equation}\label{dttheta.est}
	\bigl| \rd_v (\ttheta(u_0,v))\bigr|(u,v) \lesssim_{\ep} v^{-p-1}.
\end{equation}
\end{lemma}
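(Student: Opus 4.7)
The plan is to commute the first equation of \eqref{eq.lemma} with $\rd_v$ and then use the second equation to eliminate $\rd_v \txi$. Differentiating $\rd_u \ttheta = -\frac{\lambda}{r}\txi + F$ in $v$ and substituting $\rd_v \txi = -\frac{\nu}{r}\ttheta + G$ will yield the transport equation
\begin{equation}\label{dv.eq.proposal}
  \rd_u(\rd_v \ttheta) = \frac{\lambda \nu}{r^2}\ttheta - \frac{\lambda}{r} G - \rd_v\!\left(\frac{\lambda}{r}\right) \txi + \rd_v F,
\end{equation}
whose right-hand side involves only quantities that are either already controlled by \eqref{ttheta.est}, \eqref{txi.est} or bounded by hypothesis. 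Since $\rd_v \ttheta$ does not appear on the right, I expect that it suffices to integrate \eqref{dv.eq.proposal} in $u$ from $u_0$ (where \eqref{dvttheta.data} gives the initial estimate) without any Gr\"onwall step.

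The key auxiliary computation will be the identity $\rd_v(\lambda/r) = \rd_v(r\lambda)/r^2 - 2\lambda^2/r^2$, which, combined with \eqref{lambda.lemma.eq2} and $\lambda^2 \lesssim v^{-4s}/r^2$ from \eqref{lambdanu.eq}, produces $|\rd_v(\lambda/r)| \lesssim v^{-2s-1}/r^2 + v^{-4s}/r^4$. To control the four source terms of \eqref{dv.eq.proposal} after integration in $u$, I would rely on two elementary ingredients: first, the calculus estimate $\int_{u_0}^u r^{-\alpha}(u',v)\, du' \lesssim r^{2-\alpha}(u,v)$ for $\alpha > 2$, a consequence of $r\rd_u r \approx -1$ in the same spirit as Lemma~\ref{v^s/r^2.lem}; second, the bootstrap lower bound $r(u,v) \geq \ep v^{1/2-s}$, which converts residual $r$-weights into $v$-weights via $r^{-2} \lesssim \ep^{-2} v^{2s-1}$.

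Applying these principles to the four source terms in turn should give $|\int_{u_0}^u \frac{\lambda\nu}{r^2}\ttheta \, du'| \lesssim_\ep v^{-p-1}$ (using $r^{-4}$ integrated produces $r^{-2}(u,v)$, then converted by the $\ep$-bound); $|\int_{u_0}^u \frac{\lambda}{r} G \, du'| \lesssim v^{-1-p-\eta} \log v \lesssim v^{-p-1}$; $|\int_{u_0}^u \rd_v(\lambda/r) \txi \, du'| \lesssim_{\ep} v^{-p-1}$ (the two pieces yielding respectively $v^{-2s-1} r^{-\frac{2s-p}{s-1/2}}(u,v)$ and $v^{-4s} r^{-2-\frac{2s-p}{s-1/2}}(u,v)$, both of which collapse to $v^{-p-1}$ after using $r \geq \ep v^{1/2-s}$); and $|\int_{u_0}^u \rd_v F \, du'| \lesssim v^{-p-1-\eta} \log v \lesssim v^{-p-1}$. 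Summing with the initial-data bound from \eqref{dvttheta.data} then delivers \eqref{dttheta.est}.

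The main obstacle I anticipate is the $\rd_v(\lambda/r)\txi$ contribution: a careless application of the quotient rule to $\lambda/r$ would suggest an $r^{-5}$ weight that does not close. The essential observation that will save the argument is that the factor $\rd_v(r^2) = 2r\lambda$ in the quotient rule cancels one power of $r$, leaving only $r^{-4}$; coupled with the negative $r$-exponent in \eqref{txi.est} (available precisely because $p < 2s$), this is exactly what is needed to absorb the $v^{-4s}$ prefactor into $v^{-p-1}$ after invoking the bootstrap lower bound on $r$. Every other term should be routine bookkeeping once this balance of weights is secured.
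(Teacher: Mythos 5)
Your proposal is correct and follows essentially the same route as the paper: differentiate the first equation of \eqref{eq.lemma} in $v$, substitute the second to obtain exactly the transport equation you write, bound the four source terms pointwise using \eqref{ttheta.est}, \eqref{txi.est}, \eqref{lambdanu.eq}, \eqref{lambda.lemma.eq2}, \eqref{FG.eq}, \eqref{FG.eq2}, and integrate in $u$ from the data \eqref{dvttheta.data} with no Gr\"onwall step. Your handling of $\rd_v(\lambda/r)$ via $\rd_v(r\lambda)/r^2-2\lambda^2/r^2$ and the conversion $r\geq \ep v^{1/2-s}$ (which is just the definition of the region $u\leq u_\ep(v)$, not a bootstrap) reproduce precisely the weight bookkeeping in the paper's proof.
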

\begin{proof} Note that the assumptions of  Lemma~\ref{lemma.propagation} are satisfied, therefore \eqref{ttheta.est}, \eqref{txi.est} hold.
We differentiate \eqref{eq.lemma} in $v$ and obtain \begin{equation}
	\rd_u \rd_v \ttheta= -\rd_v(\frac{\lambda}{r}) \txi -\frac{\lambda}{r} \left(-\frac{\nu}{r} \ttheta+ G\right) + \rd_v F.
\end{equation}

Now, by \eqref{FG.eq}, \eqref{FG.eq2}, \eqref{ttheta.est}, \eqref{txi.est}, \eqref{lambdanu.eq} and \eqref{lambda.lemma.eq2} we have $$ |\rd_u \rd_v \ttheta| \lesssim  r^{-\frac{2s-p}{s-\frac{1}{2}}}(u,v)[ r^{-2}(u,v) v^{-2s-1} + r^{-4}(u,v)v^{-4s}]+ r^{-4}(u,v)v^{-p-2s}+ r^{-2}(u,v) v^{-p-1-\eta },$$ and then, integrating in $u$ using \eqref{lambdanu.eq} gives \eqref{dttheta.est}.

\end{proof}

Finally, we turn to the propagation of point-wise lower bounds using a monotonicity-type argument.
\begin{lemma}\label{lowerbound.nonosc.lemma} 
Let $\ttheta(u_0,v)$ satisfying \eqref{ttheta.data}. Assume moreover, that the following lower bound holds, for some $D_L>0$
\begin{equation}\label{real.lower.data}
	\Re \ttheta(u_0,v)  \geq D_L \cdot v^{-p}.
\end{equation} Then, assuming that  the functions $(\ttheta(u,v),\txi(u,v))$ satisfy \eqref{eq.lemma} with $F(u,v)$, $G(u,v)$ satisfying \eqref{FG.eq}, and \eqref{lambdanu.eq} is satisfied, then   for all $v\geq v_0$, $u_0\leq u \leq u_{\ep}(v)$\begin{equation}\label{real.lower}
	\Re \ttheta(u,v)  \geq \frac{D_L}{2}\cdot v^{-p}.
\end{equation}
\end{lemma}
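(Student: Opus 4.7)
The goal is to preserve the positive real-part lower bound on $\ttheta$ as we propagate from $u_0$ into the interior. The strategy is an approximate monotonicity argument exploiting the sign structure of \eqref{eq.lemma}: in the trapped regime \eqref{lambdanu.eq} both $-\lambda/r$ and $-\nu/r$ are positive, so if $\Re\ttheta \geq 0$ then $\rd_v \Re\txi$ is driven positive, and if $\Re\txi \geq 0$ then $\rd_u \Re\ttheta$ is driven positive. The proof will harvest this sign to ensure that the only erosion of the initial lower bound comes from genuinely subleading error terms.

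\textbf{Step 1 (setup).} Since all hypotheses of Lemma~\ref{lemma.propagation} are satisfied, I have $|\ttheta|(u,v) \lesssim_\ep v^{-p}$ and $|\txi|(u,v) \lesssim_\ep r^{-(2s-p)/(s-1/2)}(u,v)$ available throughout $u_0 \leq u \leq u_\ep(v)$. Integrating $\rd_u\Re\ttheta = -\frac{\lambda}{r}\Re\txi + \Re F$ in $u$ from $u_0$, and substituting the expression
$$\Re\txi(u',v) = \Re\txi(u',v_0) + \int_{v_0}^v \left(-\frac{\nu}{r}\Re\ttheta + \Re G\right)(u',v'') \, dv''$$
obtained by integrating the second equation in $v$ from $v_0$, the key identity becomes
$$\Re\ttheta(u,v) = \Re\ttheta(u_0,v) + (\mathrm{I}) + (\mathrm{II}) + (\mathrm{III}) + (\mathrm{IV}),$$
where $(\mathrm{I}) = \int_{u_0}^u -\frac{\lambda}{r}(u',v)\Re\txi(u',v_0)\,du'$, $(\mathrm{II}) = \int_{u_0}^u\int_{v_0}^v \frac{\lambda(u',v)\nu(u',v'')}{r(u',v)r(u',v'')}\Re\ttheta(u',v'')\,dv''du'$, $(\mathrm{III})$ is the analogous double integral with $\Re G$ in place of $-\frac{\nu}{r}\Re\ttheta$, and $(\mathrm{IV}) = \int_{u_0}^u \Re F(u',v)\,du'$.

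\textbf{Step 2 (two-dimensional bootstrap on positivity).} Choose $V_0 \geq v_0$ large and define $\mathcal{R}^\star$ as the maximal connected subset of $\{(u',v'') : u_0 \leq u' \leq u_\ep(v''),\ v'' \geq V_0\}$ containing $\{u_0\}\times[V_0,+\infty)$ on which $\Re\ttheta \geq 0$. Inside $\mathcal{R}^\star$ the term $(\mathrm{II})$ is \emph{non-negative}, because $\lambda\nu > 0$ by \eqref{lambdanu.eq} and the integrand $\Re\ttheta(u',v'')$ has a favorable sign. Thus $(\mathrm{II})$ can only help, never hurt, the lower bound; this is the monotonicity at the heart of the argument, replacing a quantitatively ruinous $L^\infty$ bound $|\Re\ttheta| \leq |\ttheta| \lesssim_\ep v^{-p}$ by a mere sign condition.

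\textbf{Step 3 (error terms).} The remaining pieces $(\mathrm{I}), (\mathrm{III}), (\mathrm{IV})$ must each be shown to be $o(v^{-p})$. Using $|\Re\txi(u',v_0)|$ bounded (implicit in the data on $\Cin$), $\int_{u_0}^{u_\ep(v)} r^{-2}(u',v)\,du' \lesssim_\ep \log v$, and \eqref{lambdanu.eq}, I get $|(\mathrm{I})| \lesssim_\ep v^{-2s}\log v$, which is $o(v^{-p})$ since $2s > p$. For $(\mathrm{III})$, the hypothesis \eqref{FG.eq} gives $\int_{v_0}^v |G|dv'' \lesssim v^{2s-p-\eta}$, whence $|(\mathrm{III})| \lesssim_\ep v^{-p-\eta}\log v$. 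For $(\mathrm{IV})$, \eqref{FG.eq} directly yields $|(\mathrm{IV})| \lesssim v^{-p-\eta}\log v$. Combining:
$$\Re\ttheta(u,v) \geq D_L v^{-p} + (\mathrm{II}) - C_\ep v^{-p-\min(\eta,2s-p)}\log v \geq D_L v^{-p} - \tfrac{D_L}{2}v^{-p}$$
for $V_0$ chosen large enough in terms of $\ep, D_L, D$. This strict improvement of $\Re\ttheta \geq 0$ to $\Re\ttheta \geq \tfrac{D_L}{2}v^{-p}$ forces $\mathcal{R}^\star$ to exhaust the full domain by an openness/connectedness argument. Redefining $v_0$ as $V_0$ (or handling the bounded strip $v \in [v_0,V_0]$ by a direct Grönwall estimate on the compact $\Cin \cap \{v = v_0\}$) yields \eqref{real.lower}.

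\textbf{Main obstacle.} The crux is piece $(\mathrm{II})$. A brute estimate via $|\Re\ttheta| \leq |\ttheta| \lesssim_\ep v^{-p}$ combined with Lemma~\ref{v^s/r^2.lem} yields only $|(\mathrm{II})| \lesssim \ep^{-(2s-p)/(s-1/2)}v^{-p}$, which matches the $v^{-p}$ scaling of the target lower bound with a potentially ruinous $\ep$-dependent constant, and therefore cannot close the argument. The resolution is that one does \emph{not} need to bound $|(\mathrm{II})|$ at all: the sign of $\lambda\nu$ makes this term a friend rather than a foe, provided one is willing to perform a genuine two-dimensional bootstrap on the weak positivity $\Re\ttheta \geq 0$ (rather than a one-dimensional one in $u$ alone, which would require controlling $\Re\ttheta(u',v_0)$ on $\Cin$ where no sign is available).
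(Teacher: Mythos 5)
Your argument is essentially the paper's own proof: both integrate the $\txi$-equation in $v$, substitute into the $\ttheta$-equation, exploit that the resulting double-integral term is nonnegative once $\Re\ttheta\geq 0$ (since $\lambda\nu>0$ by \eqref{lambdanu.eq}), and absorb the data, $F$ and $G$ contributions as $O(\log(v)\, v^{-p-\eta})$ errors, closing via a positivity bootstrap. The only cosmetic difference is that the paper bootstraps the quantitative bound $\Re\ttheta\geq \frac{D_L}{10}\cdot v^{-p}$ directly from $v_0$ (with $v_0$ large a standing assumption of the whole analysis), which removes the need for your auxiliary threshold $V_0$ and the extra handling of the strip $v\in[v_0,V_0]$.
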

\begin{proof} Taking the real part of \eqref{eq.lemma}, we obtain the following integral representation of 	$\Re \ttheta$ \begin{equation}\label{real.eq1}\begin{split}
		&	\rd_u	\Re \ttheta(u,v) = \frac{|\lambda|}{r}(u,v) \Re\txi(u,v_0)+ \frac{|\lambda|}{r}(u,v)  \int_{v_0}^{v} \frac{|\nu|}{r}(u,v')\Re\ttheta(u,v') dv' + \Re F(u,v) + \frac{|\lambda|}{r}(u,v)  \int_{v_0}^{v} \Re G(u,v') dv'\\ &  = \frac{|\lambda|}{r}(u,v)  \int_{v_0}^{v} \frac{|\nu|}{r}(u,v')\Re\ttheta(u,v') dv'+ O(r^{-2} v^{-p-\eta}),\end{split}
\end{equation} where in the second equality we used \eqref{FG.eq}. The rest of the proof is generalizing the monotonicity argument first introduced in \cite{MihalisPHD}, albeit now with additional (faster-decaying) error terms.	We introduce the bootstrap assumption  \begin{equation}\label{bootstrap.lower.theta}
	\Re \ttheta(u,v) \geq \frac{D_L}{10}\cdot v^{-p},
\end{equation}  which is satisfied if $u=u_0$ by \eqref{real.lower.data}. Now, integrating \eqref{real.eq1} in $u$, combining with \eqref{real.lower.data},  gives \begin{equation}\label{real.eq2}
	\Re \ttheta(u,v) \geq  \Re \ttheta(u_0,v) +O( \log(v) v^{-p-\eta})  \geq D_L \cdot v^{-p} +O( \log(v) v^{-p-\eta}) \geq  \frac{D_L}{2}\cdot v^{-p},
\end{equation} which improves the bootstrap assumption \eqref{bootstrap.lower.theta} and proves \eqref{real.lower}, thus concluding the proof of the lemma.
\end{proof}

\section{Quantitative estimates in Theorem~\ref{main.thm} under the  Cauchy horizon}\label{section.CH}
Under the Cauchy horizon, we recall that \eqref{gauge2} translates into the following $u$-gauge for all $u_0 \leq u \leq 0$:
\begin{equation}\label{gauge.UCH}
\lim_{v \rightarrow +\infty}	-r \partial_u r(u,v)= 1,
\end{equation} 
\begin{equation}\label{gauge.AUCH}\lim_{v \rightarrow +\infty}	A_u(u,v)=0,
\end{equation}

\subsection{The main upper bound estimates}

The goal of the following Proposition~\ref{main.prop.CH} is to prove \eqref{quant2} and thus complete the proof of Statement~\ref{main.thm.I} of Theorem~\ref{main.thm}. Recall that $\uch=0$ by our gauge choice (see Section~\ref{geometricframework}).

\begin{prop}\label{main.prop.CH}
Assuming \eqref{hyp1}, \eqref{hyp2}, the following estimates hold for all $u_0 \leq u\leq 0$: \begin{equation}\label{r.est}
	2|u| +\frac{2D_-}{2s-1}\ v^{1-2s}\leq r^2(u,v)   \leq 2|u| +\frac{2D_+}{2s-1}\ v^{1-2s},
\end{equation}
\begin{equation} \label{nu.est}
	\bigl| r|\nu|(u,v)-1 \bigr|  \lesssim  e^{1.99 K_- v},
\end{equation}
\begin{equation}\label{lambda.est2}
	D_-\cdot v^{-2s}\leq (-r\lambda)(u,v) \leq D_+\cdot v^{-2s},
\end{equation} \begin{equation} \label{Omega.est2}
	\Omega^2(u,v) \lesssim  v^{s-\frac{1}{2}} \cdot \Omega^2(u_0,v) \lesssim e^{1.99 K_- v},
\end{equation}
\begin{equation} \label{Au.est}
	|A_u|(u,v) \lesssim  v^{2s-1} \cdot \Omega^2(u_0,v) \lesssim e^{1.99 K_- v},
\end{equation}
\begin{equation}\label{phi.est}
	|\phi|(u,v) \lesssim r^{-\frac{1}{2s-1}},
\end{equation}
\begin{equation}\label{duphi.est}
	r |D_u\phi|(u,v) \lesssim r^{-\frac{s}{s-\frac{1}{2}}},
\end{equation}
\begin{equation}\label{Dvphi.est}
	r| D_v \phi|(u,v)\lesssim v^{-s},
\end{equation}
\begin{equation}\label{Q.est}
	|Q|(u,v) \lesssim 1,
\end{equation}
\begin{equation}\label{dv_Omega.est}
	|\rd_v \log(\Omega^2)- 2K_-| \lesssim   v^{s} r^{-\frac{s}{s-1/2}}(u,v).
\end{equation}
\end{prop}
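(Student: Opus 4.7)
The estimates \eqref{r.est}, \eqref{nu.est} and \eqref{lambda.est2} are restatements of Proposition~\ref{apriori.prop1}, with the specific constants $0.9 L_-, 1.1 L_+$ absorbed into $D_\pm$ upon enlarging $v_0$ if needed. For \eqref{Omega.est2}, I combine \eqref{Omega.est0} from Lemma~\ref{lemma.hardy} with the observations that $|\rd_u r|(u,v) \ls r^{-1}(u,v) \ls v^{s-\frac{1}{2}}$ by \eqref{nu.est} and \eqref{r.est}, that $|\rd_u r|(u_0,v)$ is bounded below by a constant depending on $u_0$, and that $\Omega^2(u_0,v) \ls e^{2K_-(u_0+v)}$ by the Reissner--Nordstr\"{o}m gauge \eqref{gauge1}. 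Estimate \eqref{Au.est} then follows by integrating the Maxwell equation $\rd_v A_u = -Q \Omega^2/(2r^2)$ (a consequence of \eqref{Maxwell} in our gauge) from $v$ to $+\infty$ using the future gauge condition \eqref{gauge.AUCH}, closed via a preliminary bootstrap on $|Q|$ which is independently improved below.

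\textbf{Scalar field estimates.} The central step is the application of Lemma~\ref{lemma.propagation} to \eqref{Field2}--\eqref{Field3} with $\ttheta := \theta = r D_v \phi$, $\txi := \xi = r D_u \phi$, and $p = s$. The data bound $|\theta|(u_0, v) \ls v^{-s}$ follows from \eqref{hyp2}, matching \eqref{ttheta.data}; hypothesis \eqref{lambdanu.eq} on $(\lambda, \nu)$ is provided by the first step; and the rectangle $u_0 \leq u \leq 0$ lies within $u \leq u_\ep(v)$ for $\ep$ small enough since $r(u,v) \geq r_{CH}(u) \gtrsim v^{\frac{1}{2}-s}$. The source terms $F, G$ decompose into the MKG mass contributions $\frac{\Omega^2 \phi}{r}(q_0 Q - m^2 r^2)$, exponentially small in $v$ by \eqref{Omega.est2} and the bootstraps on $|Q|, |\phi|$, and the gauge-derivative contributions $iq_0 A_u \theta$, $iq_0 \rd_v r \cdot A_u \phi$, exponentially small by \eqref{Au.est}; the self-referential $A_u \theta$ term is absorbed by a preliminary bootstrap on $|\theta|$. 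The hypotheses \eqref{FG.eq} are thus verified for any $\eta \in (0, s)$, and Lemma~\ref{lemma.propagation} yields \eqref{Dvphi.est} and \eqref{duphi.est}.

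\textbf{Pointwise bound, charge, and $\rd_v\log\Omega^2$.} For \eqref{phi.est}, write $\phi(u,v) = \phi(u_0, v) + \int_{u_0}^u (\rd_u \phi)(u',v) du'$ with $\rd_u\phi = D_u\phi - iq_0 A_u \phi$; the change of variables $du' \approx -r \, dr$ from \eqref{nu.est} combined with $|D_u \phi|\ls r^{-(4s-1)/(2s-1)}$ yields
\begin{equation*}
\int_{u_0}^{u} |D_u\phi|(u',v) du' \ls \int_{r(u,v)}^{r(u_0, v)} r^{-\frac{4s-1}{2s-1}} \cdot r \, dr \ls r^{-\frac{1}{2s-1}}(u,v),
\end{equation*}
while the $A_u \phi$-correction is exponentially small. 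Estimate \eqref{Q.est} follows from \eqref{Hardy.Q} of Lemma~\ref{lemma.hardy}: the flux $\mathcal{F}(u,v)$ is computed from \eqref{duphi.est} via another change of variables, and the prefactor $r_0^2(v) \approx v^{1-2s}$ renders $|Q(u,v) - Q(u_0, v)|$ uniformly small for large $v_0$, closing the bootstrap started in the first step. Finally, \eqref{dv_Omega.est} is obtained by integrating \eqref{Omega} in $u$ from $u_0$ to $u$: the initial value $\rd_v\log\Omega^2(u_0, v) \to 2K_-$ follows from \eqref{gauge1}, and the right-hand side is dominated by the term $-2\Re(D_u\phi \overline{D_v\phi})$ whose $u$-integral is controlled using \eqref{Dvphi.est}, \eqref{duphi.est} and the change of variables $du' \approx -r \, dr$.

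\textbf{Main obstacle.} The delicate step is the scalar-field propagation: one must simultaneously bootstrap $|\theta|, |\xi|, |\phi|, |Q|$ and $|A_u|$ to verify \eqref{FG.eq}, handling the self-referential nature of the $A_u \theta$ contribution, and verify that the exponent $s > 1$ is large enough for Lemma~\ref{v^s/r^2.lem} to convert $v$-decay into the claimed $r$-weighted bound on $\xi$ without logarithmic loss.
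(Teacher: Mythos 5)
Your overall strategy coincides with the paper's: metric bounds from Proposition~\ref{apriori.prop1}, \eqref{Omega.est2} from \eqref{Omega.est0}, $A_u$ from \eqref{Maxwell} under a preliminary bootstrap on $Q$ (and $\phi$), and the scalar-field bounds \eqref{Dvphi.est}, \eqref{duphi.est}, \eqref{phi.est} via Lemma~\ref{lemma.propagation} with $\ttheta=\theta$, $\txi=\xi$, $p=s$, exactly as in the paper. However, your argument for the charge bound \eqref{Q.est} has a genuine gap. In Lemma~\ref{lemma.hardy} the subscript $0$ refers to the cone $\{u=u_0\}$, so $r_0(v)=r(u_0,v)\rightarrow r_{CH}(u_0)>0$ by \eqref{hyp1}; it is \emph{not} $\approx v^{\frac12-s}$ (that identification, $r_0(v)=r(\uch,v)$, is only made later, in Section~\ref{quant.spacelike.section}). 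So the prefactor $r_0^2(v)$ in \eqref{Hardy.Q} is of order one and provides no smallness. Worse, the flux itself is not uniformly bounded on the rectangle: with \eqref{duphi.est}, \eqref{nu.est} and $du\approx r\,dr$ one gets $\mathcal{F}(u,v)\approx\int_{r(u,v)}^{r(u_0,v)}\rho^{-\frac{s+1/2}{s-1/2}}d\rho\approx r^{-\frac{1}{s-1/2}}(u,v)\ls v$, which blows up towards the endpoint of the Cauchy horizon. Hence \eqref{Hardy.Q} only yields $|Q(u,v)-Q(u_0,v)|\ls v$, which neither gives \eqref{Q.est} nor the uniform smallness you invoke to close the bootstrap.

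The fix is the paper's pointwise argument: integrate \eqref{chargeUEinstein} directly using the already-established \eqref{phi.est} and \eqref{duphi.est}, which give
\begin{equation}
|\rd_u Q|\ls r^2|\phi||D_u\phi|\ls r^{-\frac{1}{s-\frac12}}\approx r^{1-\frac{1}{s-\frac12}}\,|\rd_u r|,
\end{equation}
and the last expression is integrable in $u$ (equivalently in $r$ down to $r=0$) precisely because $s>1$, so $|Q(u,v)-Q(u_0,v)|\ls 1$ and \eqref{Q.est} follows from \eqref{hyp1}. Note that the rest of your scheme is unaffected: since $Q$ only enters the source terms and the $A_u$ estimate multiplied by $\Omega^2\ls e^{1.99K_-v}$, even a crude polynomial bootstrap on $|Q|$ (the paper uses $|Q|\leq\log^2 v$) suffices to run Lemma~\ref{lemma.propagation}, and the sharp bound \eqref{Q.est} is recovered a posteriori as above. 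A minor further slip: to place the rectangle inside $\{u\leq u_\ep(v)\}$ you argue via $r_{CH}(u)\gtrsim v^{\frac12-s}$, which is not justified since $r_{CH}(u)$ may tend to $0$ as $u\rightarrow\uch$; the correct justification is the uniform lower bound $r^2(u,v)\geq\frac{2D_-}{2s-1}v^{1-2s}$ from \eqref{r.est} (or simply that $u_\ep(v)>0\geq u$ by definition of $u_\ep$). Your treatments of \eqref{phi.est} and \eqref{dv_Omega.est} are correct and match the paper's.
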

\begin{proof}
\eqref{r.est}, \eqref{nu.est}, \eqref{lambda.est2} hold as an application  of Proposition~\ref{apriori.prop1}.
By \eqref{Omega.est0} and \eqref{r.est} this implies $$
\Omega^2(u,v) \lesssim \frac{\Omega^2(u_0,v)}{r(u,v)} \lesssim v^{s-\frac{1}{2}} \cdot \Omega^2(u_0,v) \lesssim e^{1.99 K_- v},$$ hence \eqref{Omega.est2} holds.	We then make the following  bootstrap assumptions:	
\begin{equation}\label{B3}
	|Q|(u,v) \leq \log^2(v),
\end{equation}	
\begin{equation}\label{B4}
	|\phi|^2(u,v) \leq  v \log^2(v).
\end{equation}
\eqref{Au.est} follows from \eqref{B3} and \eqref{Omega.est2}, \eqref{r.est} integrating  \eqref{Maxwell} in $v$, using \eqref{gauge.AUCH}.  Now we improve bootstrap \eqref{B4} by appealing to Lemma~\ref{lemma.propagation}. Note that defining $\ttheta=\theta$ and $\txi=\xi$, we have \eqref{eq.lemma} is satisfied by \eqref{Field4} defining $F= G=\frac{ \Omega^{2} \cdot \phi}{4r} \cdot ( i q_{0} Q-m^2 r^2)- i q_0 A_u  r\rd_v \phi -i q_0 \partial_{v}r \cdot A_u \phi$. By the above estimates, \eqref{FG.eq}, \eqref{lambdanu.eq} are satisfied and moreover \eqref{ttheta.data} is satisfied for $p=s$. As a consequence of Lemma~\ref{lemma.propagation}, we obtain \eqref{duphi.est}, \eqref{Dvphi.est}.  Integrating in $u$ gives \eqref{phi.est} and improves \eqref{B4}.   Now from \eqref{duphi.est}, and \eqref{B4} we get \begin{equation}
	|\partial_u Q | \lesssim  r^{-\frac{1}{s-1/2}} \lesssim r^{1-\frac{1}{s-1/2}}  |\rd_u r|,
\end{equation} which is integrable in $u$, since $s>1$. Thus $$ |Q(u,v) - Q(u,v_0) | \lesssim C. $$ Thus, \eqref{B3} is improved   choosing $v_0>1$ sufficiently large. Finally \eqref{dv_Omega.est} follows from the integration of \eqref{Omega}.
\end{proof}

\subsection{Higher order decay estimates}

In this section, we provide quantitative estimates on commuted quantities (both for the scalar field and the metric) which will  be crucial in establishing Kasner asymptotics in the next section, in order to prove Statement~\ref{main.thm.II} of Theorem~\ref{main.thm}.

\begin{prop}\label{prop.highorder.decay}
Assume that \eqref{hyp1}, \eqref{hyp2}, \eqref{hyp4}-\eqref{hyp3} are satisfied.		Then for all $ u_0 \leq u \leq 0$, $v\geq v_0$:	
\begin{equation}\label{Dvdvphi.est}
	r |\rd_{v v}^2 \phi|(u,v),\  |\rd_v ( r\rd_v\phi)|(u,v) \lesssim v^{-s-1},
\end{equation}
\begin{equation}\label{dvv.r.est}
	|	\rd_v(r \rd_v r)|(u,v) \lesssim v^{-2s-1}.
\end{equation}
\end{prop}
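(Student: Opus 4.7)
I will establish the two estimates sequentially: first the metric bound \eqref{dvv.r.est} via a Raychaudhuri-based analysis on $C_{out}$ propagated in the $u$-direction through \eqref{Radius3}, and then the scalar field bound \eqref{Dvdvphi.est} by invoking Lemma~\ref{lemma.propagation.derivative}, where \eqref{dvv.r.est} plays exactly the role of its hypothesis \eqref{lambda.lemma.eq2}. Throughout I work in the electromagnetic gauge \eqref{A.gauge} (so that $D_v \phi = \partial_v \phi$), under which the initial data assumptions \eqref{hyp2}, \eqref{hyp4} read $|\partial_v \phi|(u_0,v) \lesssim v^{-s}$ and $|\partial_{vv}^2 \phi|(u_0,v) \lesssim v^{-s-1}$.

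\textbf{Proof of \eqref{dvv.r.est}.} On $C_{out}$, the gauge condition \eqref{gauge1} forces $\partial_v \log \Omega^2(u_0,v) = 2K_- + E(v)$, with $E(v)$ exponentially small in $v$. Raychaudhuri \eqref{RaychV} thus reduces to the linear first-order ODE $\partial_v \lambda - 2K_- \lambda = -r|\partial_v \phi|^2 + \lambda E(v)$, and using $\partial_v(r\lambda) = \lambda^2 + r \partial_v \lambda$ one obtains
\begin{equation*}
\partial_v(r\lambda)(u_0,v) = \lambda^2 + \bigl(2K_- \, r\lambda - r^2 |\partial_v \phi|^2\bigr) + \mathrm{(exp.\ small)}.
\end{equation*}
The term $\lambda^2 \lesssim v^{-4s}$ is smaller than $v^{-2s-1}$ since $s>1$. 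The decisive step is the cancellation $|2K_- r\lambda - r^2|\partial_v\phi|^2|(u_0,v) \lesssim v^{-2s-1}$: solving the ODE via the integrating factor $e^{-2K_- v}$,
\begin{equation*}
\lambda(u_0,v) = -\int_0^{v-v_0} e^{2K_- w}\, r|\partial_v \phi|^2(u_0, v-w)\, dw + \mathrm{(exp.\ small)},
\end{equation*}
and integrating by parts: the boundary term at $w=0$ produces $r|\partial_v\phi|^2(u_0,v)/(2K_-)$, which cancels $r^2|\partial_v\phi|^2$ against $2K_- r\lambda$, while the remaining convolution against $e^{2K_- w}$ of $\partial_{v'}[r|\partial_v \phi|^2] = \lambda |\partial_v \phi|^2 + 2r\Re(\overline{\partial_v \phi}\,\partial_{vv}^2 \phi)$ is $O(v^{-2s-1})$ by \eqref{hyp2}, \eqref{hyp4}, \eqref{hyp1}, and boundedness of $r$ on $C_{out}$. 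To propagate to $u_0 \leq u \leq 0$, I differentiate \eqref{Radius3} in $v$ and integrate in $u$: the $v$-derivative of $\Omega^2(1 - Q^2/r^2 - m^2 r^2|\phi|^2)$ is controlled via \eqref{Omega.est2}, \eqref{dv_Omega.est}, \eqref{phi.est}, \eqref{Q.est}, \eqref{Dvphi.est} by $e^{1.99 K_- v}$ times a polynomial in $v$, which is negligible compared to $v^{-2s-1}$.

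\textbf{Proof of \eqref{Dvdvphi.est}.} I apply Lemma~\ref{lemma.propagation.derivative} with $\ttheta = \theta = r\partial_v \phi$, $\txi = \xi = r \partial_u \phi$, $p = s$ (permitted since $s<2s-1$ when $s>1$), and $F = G$ equal to the bulk forcing of \eqref{Field2}--\eqref{Field3}. Hypothesis \eqref{ttheta.data} follows directly from \eqref{hyp2}; \eqref{dvttheta.data} from the decomposition $\partial_v \theta = \lambda \partial_v \phi + r \partial_{vv}^2 \phi$, estimated by $v^{-3s} + v^{-s-1} \lesssim v^{-s-1}$ on $C_{out}$ (using \eqref{hyp1}, \eqref{hyp4} and $r$ bounded). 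Hypothesis \eqref{lambdanu.eq} is provided by Proposition~\ref{main.prop.CH}, and \eqref{lambda.lemma.eq2} is precisely \eqref{dvv.r.est} just established. The source-term bounds \eqref{FG.eq}--\eqref{FG.eq2} hold for any $\eta \in (1,s)$ (non-empty since $s>1$) because every term of $F$ contains one of $\Omega^2$, $A_u$, $\partial_v \Omega^2$, $\partial_v A_u$---all exponentially small by \eqref{Omega.est2}, \eqref{Au.est}, \eqref{dv_Omega.est}---while polynomial factors in $\phi, Q, r^{\pm 1}$ are handled by \eqref{phi.est}, \eqref{Q.est}, \eqref{r.est}. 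Since $r(0,v) \gtrsim v^{1/2-s}$ by \eqref{r.est}, choosing $\epsilon>0$ suitably small ensures $u_\epsilon(v) \geq 0$ for all large $v$, so Lemma~\ref{lemma.propagation.derivative} applies on $[u_0,0] \times [v_0, +\infty)$ and yields $|\partial_v \theta|(u,v) \lesssim v^{-s-1}$. Finally, from $r\partial_{vv}^2 \phi = \partial_v \theta - \lambda \partial_v \phi$ and $|\lambda \partial_v \phi| = |\lambda|\,|\theta|/r \lesssim v^{-3s}/r^2 \lesssim v^{-3s}/v^{1-2s} = v^{-s-1}$ (using \eqref{r.est} and $|\theta| \lesssim v^{-s}$ from Proposition~\ref{main.prop.CH}), one concludes $r|\partial_{vv}^2\phi| \lesssim v^{-s-1}$.

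\textbf{Main obstacle.} The central difficulty is extracting the sharp cancellation $|2K_- r\lambda - r^2|\partial_v \phi|^2|(u_0,v) \lesssim v^{-2s-1}$ on $C_{out}$, where individually each term is only of size $v^{-2s}$. This improvement by one power of $v$ is invisible from the two-sided bounds in \eqref{hyp1}, \eqref{hyp2}, \eqref{hyp3} alone and crucially relies on the sharper derivative estimate \eqref{hyp4} for $\partial_{vv}^2 \phi$, introduced precisely via the integration-by-parts step on the Raychaudhuri ODE. Without access to \eqref{hyp4}, only $|\partial_v(r\lambda)| \lesssim v^{-2s}$ would be available, falling short of \eqref{lambda.lemma.eq2} and preventing the application of Lemma~\ref{lemma.propagation.derivative}.
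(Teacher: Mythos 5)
Your proof is correct and follows essentially the same route as the paper: \eqref{dvv.r.est} is obtained by applying $\partial_v$ to \eqref{Radius3}, integrating in $u$ using \eqref{Omega.est2}, and controlling the boundary term on $C_{out}$ from the data assumptions, and \eqref{Dvdvphi.est} then follows from Lemma~\ref{lemma.propagation.derivative} applied with $\ttheta=\theta$, $\txi=\xi$, $p=s$, exactly as in the paper. Your explicit Raychaudhuri integrating-factor and integration-by-parts cancellation on $C_{out}$ (which is where \eqref{hyp4} enters to gain the extra power $v^{-1}$) is precisely the computation the paper compresses into ``easily follows'', so it is a welcome elaboration rather than a genuinely different method.
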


\begin{proof} First, \eqref{dvv.r.est} easily follows from \eqref{hyp3}, applying $\rd_v $ to \eqref{Radius3} and integrating in $u$, making use of \eqref{Omega.est2}. 
We have already seen in the proof of Proposition~\ref{main.prop.CH} that    $(\ttheta=\theta,\txi=\xi)$ satisfy \eqref{eq.lemma} with $F= G=\frac{ \Omega^{2} \cdot \phi}{4r} \cdot ( i q_{0} Q-m^2 r^2)- i q_0 A_u  r\rd_v \phi -i q_0 \partial_{v}r \cdot A_u \phi$ satisfying \eqref{FG.eq}. 
Note that \eqref{dvttheta.data} is satisfied by \eqref{hyp4} and by Proposition~\ref{main.prop.CH}, $F$ also satisfies \eqref{FG.eq2}, while \eqref{dvv.r.est} shows \eqref{lambda.lemma.eq2} is satisfied. Therefore, by Lemma~\ref{lemma.propagation.derivative}, \eqref{Dvdvphi.est} is satisfied.

\end{proof}

\section{Quantitative estimates in Theorem~\ref{main.thm}: the spacelike singularity}\label{quant.spacelike.section}
By the a priori estimates of Proposition~\ref{apriori.prop1}, we know that there exists $\ep(v_0)>0$ small enough so that $\mathcal{S}=\{ (u,v),\ r(u,v)=0,\ 0 < u \leq \ep \}$ is well-defined. We recall that \eqref{gauge2} translates into the following $u$-gauge:
\begin{equation}\label{gauge.C}
-r \partial_u r(u,v_{\mathcal{S}}(u))= 1,
\end{equation} 
\begin{equation}\label{gauge.A}	A_u(u,v_{\mathcal{S}}(u))=0,
\end{equation}  where we parametrize  $\mathcal{S}=\{ (u,v_{\mathcal{S}}(u)),\  0 < u \leq \ep\}$. We recall that we chose our gauge so that $\uch=0$ and will denote $C_0 =\{0\} \times [v_0,+\infty)$ the outgoing cone terminating at the Cauchy horizon endpoint. We also introduce the notation $F_0(v)=  F(0,v)$ for any function $F$.
\subsection{Statement of quantitative estimates to the future of $C_0$}	
Under the gauge choices provided by \eqref{gauge.C}, \eqref{gauge.A}, we now state the main result of this section, which consists of quantitative estimates up to the  singularity $\mathcal{S}$. We will also prove that $\mathcal{S}$ is $C^1$ spacelike, and can be also parametrized in $v$ as such  $\mathcal{S}=\{ (u_{\mathcal{S}}(v),v),\   v\geq v_0\}$. 
\begin{prop}\label{main.prop.spacelike} Assume \eqref{hyp1}, \eqref{hyp2}, \eqref{hyp4}-\eqref{hyp3} hold. Then, there exists $D>0$, $C>0$ such that the following quantitative estimates hold for any $0 \leq u \leq U_0$, $v_0 \leq  v \leq v_{\mathcal{S}}(u)$.
\begin{equation}\label{Omega.S}
	e^{2.1 K_- v} [\frac{r(u,v)}{r_0(v)}]^{2Dv}  \lesssim \Omega^2(u,v) \lesssim e^{1.9 K_- v} [\frac{r(u,v)}{r_0(v)}]^{Dv},
\end{equation}
\begin{equation}\label{Q.est.S}
	|Q|(u,v) \leq C,
\end{equation}		\begin{equation}\label{phi.est.S}
	|\phi|(u,v) \lesssim  \sqrt{v}  \cdot[1+\log( \frac{r_0(v)}{r(u,v)})],
\end{equation}		\begin{equation}\label{Dvphi}| D_v \phi|(u,v) \ls v^{\frac{1}{2}-2s} r^{-2}(u,v),
\end{equation}
\begin{equation}\label{Duphi}| D_u \phi|(u,v) \ls  v^{\frac{1}{2}}r^{-2}(u,v),
\end{equation}
\begin{equation}\label{DuDuphi}| D_u \left(rD_u \phi\right)|(u,v),\ | r D_{u u}^2 \phi|(u,v)  \ls  v^{\frac{1}{2}}r^{-3}(u,v),
\end{equation}
\begin{equation}\label{DvDvphi}|D_v(rD_v \phi)|(u,v),\ r| D_{vv}^2 \phi|(u,v) \ls 	v^{\frac{1}{2}-4s} r^{-3}(u,v),
\end{equation}
\begin{equation}\label{duA}
	\bigl| \rd_u A_u(u,v) \bigr| \lesssim \exd r^{95}(u,v),
\end{equation}
\begin{equation}\label{DuOmega}
	|\rd_u \log(\Omega^2)|(u,v) \ls v\cdot r^{-2}(u,v),
\end{equation}
\begin{equation}\label{DvOmega}
	|\rd_v\log(\Omega^2)|(u,v) \ls v^{1-2s}  r^{-2}(u,v),
\end{equation}
\begin{equation}\label{lambda.eq}
	\bigl| r|\lambda|(u,v) - r_0(v) |\lambda_0|(v) \bigr| \lesssim \exd,
\end{equation}\begin{equation}\label{nu.S.eq3}
	\bigl| r|\nu|(u,v) -1 \bigr| \lesssim e^{\frac{K_-}{4} v}r^{90},
\end{equation}\begin{equation}\label{A.S}
	\bigl| A_u(u,v) \bigr| \lesssim   e^{\frac{K_-}{4} v}r^{90},
\end{equation}
\begin{equation}\label{du.nu}
	|\rd_u(r\nu)|(u,v) \ls \exd r^{95}(u,v),
\end{equation} 
\begin{equation}\label{dv.lamba}
	|\rd_v(r\lambda)(u,v)- \rd_v(r_0\lambda_0)(v)| \ls \exd, \text{ where } |\rd_v(r_0\lambda_0)|(v)\ls v^{-2s-1},
\end{equation}\begin{equation}\label{logOmega}
	\bigl|\log(\Omega^2 r)\bigr| \ls v \cdot(1+ \log(\frac{r_0(v)}{r(u,v)})).
\end{equation}
Moreover, there exists a  positive function denoted $r|\lambda|_{\mathcal{S}}(v)$ such that 
\begin{equation}\label{rlambdaS.prop}
	\lim_{u \rightarrow u_{\mathcal{S}}(v)} r|\lambda|(u,v)  =r|\lambda|_{\mathcal{S}}(v),\  r|\lambda|_{\mathcal{S}}(v) \approx v^{-2s},\ \bigl| r|\lambda|(u,v)  -r|\lambda|_{\mathcal{S}}(v) \bigr| \ls r^{D v}.
\end{equation}

Finally, there exists $\ep>0$ sufficiently small such that in the region $\{ (u,v),\ u\geq 0,\  r(u,v) \leq \ep r_0(v),\ v\geq v_0\}$ \begin{equation}\label{dulogOmega}
	C_{-} \cdot \ep^2 \cdot v	 \leq -r^2  \rd_u \log(\Omega^2)(u,v) \leq C_{+} v,
\end{equation}
\begin{equation}\label{dvlogOmega}
	C_{-} \cdot \ep^2 \cdot  v^{1-2s}	 \leq -r^2  \rd_v \log(\Omega^2)(u,v) \leq C_{+}v^{1-2s},
\end{equation} where $C_{\pm}>0$ are $\ep$-independent constants.
\end{prop}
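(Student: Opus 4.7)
The plan is to run a large bootstrap argument in the causal region $\{0\leq u\leq \ep,\ v_0\leq v \leq v_{\mathcal{S}}(u)\}$, anchored to the precise data on $C_0$ already supplied by Proposition~\ref{main.prop.CH} and Proposition~\ref{prop.highorder.decay}. The central bootstrap assumption will be the upper half of \eqref{Omega.S}, namely $\Omega^2(u,v)\lesssim e^{1.9 K_-v}[r(u,v)/r_0(v)]^{Dv}$ for a small constant $D>0$ to be chosen; this effectively quenches $\Omega^2$ throughout $\{r\leq \ep r_0(v)\}$, so that every $u$-integral of an $\Omega^2$-weighted quantity on $[0,u]$ contributes only a negligible error of order $\exd r^{90}$. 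A smallness parameter $\ep>0$, measuring how close to the singularity $\mathcal{S}=\{r=0\}$ we track, will serve to close the nonlinear couplings.

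Granting the bootstrap, the metric-side estimates \eqref{Q.est.S}, \eqref{nu.S.eq3}, \eqref{A.S}, \eqref{du.nu}, \eqref{dv.lamba}, \eqref{lambda.eq} all follow by integrating the structure equations \eqref{Radius3}, \eqref{chargeUEinstein}, \eqref{Maxwell} in $u$ from $C_0$. The limit $r|\lambda|_{\mathcal{S}}(v)$ in \eqref{rlambdaS.prop} is then extracted by a Cauchy-criterion argument applied to $\int_0^u \rd_u(r\lambda)\, du'$, whose integrand is bootstrap-suppressed. The scalar-field upper bounds \eqref{phi.est.S}--\eqref{DvDvphi} are obtained by propagating $\theta=rD_v\phi$ and $\xi=rD_u\phi$ along $u$ via \eqref{Field2'}--\eqref{Field3'}, in the same spirit as Proposition~\ref{main.prop.CH}, but now with $r\to 0$ generating logarithmic factors and with $v^{1/2}$-weights on $\xi$ reflecting $\phi(0,v)\approx\sqrt{v}$ inferred from \eqref{phi.est} applied at the Cauchy horizon endpoint.

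The crux is the scalar-field \emph{lower bound} needed to close the bootstrap through the Raychaudhuri equation \eqref{RaychU}. A monotonicity argument extending Lemma~\ref{lowerbound.nonosc.lemma} to this region, driven by the assumption \eqref{hyp3}, yields $\Re(rD_v\phi)(u,v)\geq \frac{D_L}{2}v^{-s}$, hence $r^2\rd_v\phi\gtrsim \ep\cdot v^{1/2-2s}$ on the layer $\{r\approx \ep r_0(v)\}$. To propagate this lower bound into the deep interior $\{r\leq \ep r_0(v)\}$, where the prefactor $r$ can no longer be pulled out for free, we commute \eqref{5.1} with the AVTD vector field $X=(-r\lambda)^{-1}\rd_v - (-r\nu)^{-1}\rd_u$, which satisfies $X(r)=0$. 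This commutation produces an AVTD bound $|X\phi|(u,v)\lesssim v^{2s-1/2}(1+\log^2(r_0(v)/r(u,v)))$, and the wave equation then yields $|\rd_u(r^2\rd_v\phi)|\lesssim |X\phi|+O(\exd)$. Integrating in $u$ from the sphere $\{r=\ep r_0(v)\}$ gives a variation of $r^2\rd_v\phi$ of size $\ep^2\log^2\ep\ll \ep$, which preserves the lower bound of magnitude $\ep$ throughout the inner region.

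Armed with $r|D_u\phi|^2\gtrsim v\cdot r^{-2}$ in the inner region, the Raychaudhuri equation in the form $\rd_u\log(\Omega^2/|\nu|)=-r|D_u\phi|^2/|\nu|$ yields the two-sided bound \eqref{dulogOmega}; integrating in $u$ from $C_0$ then improves \eqref{Omega.S} and closes the bootstrap. The analogous \eqref{dvlogOmega} follows from \eqref{Omega} integrated in $u$ against the lower bound for $r|D_v\phi|^2$, and \eqref{DuOmega}, \eqref{DvOmega}, \eqref{logOmega} are straightforward corollaries. \textbf{The principal obstacle} is the AVTD propagation step: the naive integration error in the inner region $\{r\leq \ep r_0(v)\}$ is of the \emph{same order} $\ep$ as the lower bound itself, and only the refined $\ep^2\log^2\ep$ gain afforded by the $X$-commutation makes the argument quantitative, while $v$-weights must simultaneously be tracked in lockstep (ultimately encoding the degeneration $p(u,v)\approx v^{-1}$ of the Kasner exponents). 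Unlike the non-degenerate Kasner analyses of \cite{Warren1,FournodavlosLuk,FournodavlosRodnianskiSpeck}, the ingoing and outgoing directions play asymmetric roles here, and the commutation must be performed without any loss of derivatives as $p\to 0$.
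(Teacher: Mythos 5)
Your proposal is correct and follows essentially the same route as the paper: a bootstrap that quenches $\Omega^2$ (the paper uses $e^{\frac{K_-}{2}v}r^{100}$ rather than the $r^{Dv}$ form, but this is cosmetic), integration of the structure equations for the metric quantities, propagation of $\theta,\xi$ for the scalar-field upper bounds in the regions $r\gtrsim r_0(v)$ and $r\ll r_0(v)$, the monotonicity lower bound from \eqref{hyp3} (with \eqref{hyp5} handling the phase), the $X$-commuted AVTD estimate giving the crucial $\ep^2\log^2\ep\ll\ep$ gain in $\{r\leq\ep r_0(v)\}$, and closing the bootstrap through the Raychaudhuri equations. The only minor deviation is that you close via the $u$-Raychaudhuri integrated from $C_0$, whereas the paper mainly integrates the $v$-Raychaudhuri from the curve $\{r=\ep r_0(v)\}$ (using monotonicity of \eqref{RaychU} in the outer layer); both yield the same $[r/r_0(v)]^{Dv}$-type bound.
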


\subsection{The bootstrap assumptions and preliminary estimates}

We will make the following bootstrap assumptions through the region $\{ 0 \leq u,\ v\geq v_0\}$.
\begin{equation}\label{Omega.boot}
\Omega^2(u,v),\ 	|\rd_u\Omega^2 |(u,v),\ %|\rd_{u}^2\Omega^2 |(u,v),\ 
|\rd_v\Omega^2 |(u,v),\ |A_u|(u,v),\  |\rd_u A_u|(u,v)%,\ |\rd_u^2 A_u|(u,v)
\leq     e^{\frac{K_-}{2 } v}[r(u,v)]^{ 100},
\end{equation}\begin{equation}\label{nu.boot}
\frac{1}{10} \leq	r|\nu|\leq 10,
\end{equation}
\begin{equation}\label{Q.boot}
|Q| \leq C,
\end{equation}
\begin{equation}\label{phi.boot}
|\phi| \leq \frac{v^{10}}{r},
\end{equation}
where $C(M,e,q_0,m^2)>0$ is a constant to be fixed later.

\begin{prop}\label{S.curve.prop}
Assuming $v_0$ large enough: for every $v\geq v_0$, there exists $\uS>0$ such that $$ \lim_{ u \rightarrow \uS} r(u,v)=0.$$
Moreover, the curve $\mathcal{S}:=\{(\uS,v),\ v\geq v_0\}$ is $C^1$ and spacelike in the $(u,v)$-plane with \begin{equation}\label{uS.eq}
	\bigl| \uS - \frac{r_0^2(v)}{2} \bigr| \lesssim \exd,
\end{equation} \begin{equation}\label{duS.eq}
	\frac{d\uS}{dv}(v) = - [r|\lambda|]_{\mathcal{S}}(v),
\end{equation} and moreover \eqref{lambda.eq}, \eqref{nu.S.eq3}, \eqref{A.S}, \eqref{rlambdaS.prop} and the following estimate hold trues:   \begin{equation}\label{lambda.S.eq2}
	\bigl| r|\lambda|(u,v) -r|\lambda|_{|\mathcal{S}}(v) \bigr| \lesssim \exd r^{95},
\end{equation} 
\end{prop}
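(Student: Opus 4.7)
The plan is to first establish existence of $u_{\mathcal{S}}(v)$ via the bootstrap \eqref{nu.boot}, then propagate the Einstein and Maxwell equations from the outgoing cone $C_0 = \{0\} \times [v_0, +\infty)$ (whose data are controlled by Proposition \ref{main.prop.CH}) to obtain the sharp estimates \eqref{lambda.eq}, \eqref{nu.S.eq3}, \eqref{A.S}, \eqref{rlambdaS.prop}, \eqref{lambda.S.eq2}, and finally apply the implicit function theorem to deduce the $C^1$ spacelike character of $\mathcal{S}$ and the formula \eqref{duS.eq}. Existence is immediate: for fixed $v \geq v_0$, since $-\partial_u(r^2/2) = r|\nu| \in [1/10, 10]$ throughout the bootstrap region, the function $u \mapsto r^2(u,v)$ is strictly decreasing from $r_0^2(v)$ and must vanish at a unique $u_{\mathcal{S}}(v) \in [r_0^2(v)/20,\, 5\, r_0^2(v)]$, which is of order $v^{1-2s}$.

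The heart of the argument consists of propagating \eqref{Radius3} in the form $-\partial_u(r\lambda) = -\partial_v(r\nu) = (\Omega^2/4)(1 - Q^2/r^2 - m^2 r^2 |\phi|^2)$, together with \eqref{Maxwell}, while exploiting the very high power of $r$ in the bootstrap \eqref{Omega.boot}. The key observation is that the change of variables $du' = (r\nu)^{-1} d(r^2/2)$, legitimized by \eqref{nu.boot}, yields for any fixed $0 \leq k \leq 100$
\[
\int_0^{u_{\mathcal{S}}(v)} \Omega^2(u',v) \, r^{-k}(u',v) \, du' \lesssim e^{\frac{K_-}{2}v} \int_0^{r_0(v)} r^{101-k} \, dr \lesssim e^{\frac{K_-}{4}v},
\]
the polynomial factor $r_0^{102-k}(v)$ being absorbed by the exponential decay. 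Controlling the $Q^2/r^2$ and $m^2 r^2 |\phi|^2$ contributions via \eqref{Q.boot}--\eqref{phi.boot} and integrating $-\partial_u(r\lambda) = \ldots$ from $u = 0$, with initial data $r\lambda(0,v) \approx -v^{-2s}$ supplied by Proposition \ref{main.prop.CH}, yields \eqref{lambda.eq}, the existence and asymptotics of $[r|\lambda|]_{\mathcal{S}}(v) := \lim_{u \to u_{\mathcal{S}}(v)} r|\lambda|(u,v) \approx v^{-2s}$ in \eqref{rlambdaS.prop}, and the refined \eqref{lambda.S.eq2}. Symmetrically, integrating $-\partial_v(r\nu) = \ldots$ in $v$ backwards from $\mathcal{S}$ using the gauge condition $[r\nu]_{\mathcal{S}} = -1$ from \eqref{gauge.C} produces \eqref{nu.S.eq3}, and integrating \eqref{Maxwell} back from $A_u|_{\mathcal{S}} = 0$ (by \eqref{gauge.A}) with right-hand side bounded by $|Q\Omega^2/r^2| \lesssim e^{K_- v/2} r^{98}$ gives \eqref{A.S}.

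With these estimates in hand, introduce the auxiliary $C^1$ function $R(u,v) := r^2(u,v)/2$, whose $u$-derivative $\partial_u R = r\nu = -1 + O(e^{K_-v/4} r^{90})$ by \eqref{nu.S.eq3} is bounded away from zero in a neighborhood of $\mathcal{S}$. The implicit function theorem then exhibits $\mathcal{S} = \{R = 0\}$ as a $C^1$ graph $u = u_{\mathcal{S}}(v)$ with
\[
\frac{du_{\mathcal{S}}}{dv} = -\frac{\partial_v R}{\partial_u R}\bigg|_{\mathcal{S}} = -\frac{-[r|\lambda|]_{\mathcal{S}}(v)}{-1} = -[r|\lambda|]_{\mathcal{S}}(v) < 0,
\]
which is \eqref{duS.eq} and shows $\mathcal{S}$ is spacelike. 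Finally \eqref{uS.eq} follows from integrating $\partial_u(r^2/2) = r\nu$ from $u = 0$ to $u_{\mathcal{S}}(v)$:
\[
-\frac{r_0^2(v)}{2} = \int_0^{u_{\mathcal{S}}(v)} r\nu(u',v) \, du' = -u_{\mathcal{S}}(v) + O(e^{\frac{K_-}{4}v}),
\]
where the error uses \eqref{nu.S.eq3} and absorbs the polynomial $\int_0^{u_{\mathcal{S}}(v)} r^{90} du' \lesssim r_0^{92}(v)$ into the exponential. The main technical obstacle is the circular-looking interplay between the bootstrap assumptions and the gauge conditions imposed \emph{at} $\mathcal{S}$: one must first secure the existence of $\mathcal{S}$ (from \eqref{nu.boot}) before invoking \eqref{gauge.C} and \eqref{gauge.A} as boundary conditions for the finer $r\nu$ and $A_u$ estimates, and care is needed to ensure the $r$-powers are tuned so that every integration closes with exponential rather than merely polynomial decay.
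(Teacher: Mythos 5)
Your proposal is correct and follows essentially the same route as the paper's proof: integrate \eqref{Radius3} and \eqref{Maxwell} under the bootstrap bounds \eqref{Omega.boot}--\eqref{phi.boot} (the large $r$-power absorbing every polynomial factor into the exponential), use the boundary gauge conditions \eqref{gauge.C}, \eqref{gauge.A} to turn the limits at $\mathcal{S}$ into \eqref{nu.S.eq3}, \eqref{A.S}, and integrate \eqref{nu.S.eq3} in $u$ for \eqref{uS.eq}. The only (cosmetic) difference is the final step: you obtain \eqref{duS.eq} and the $C^1$ spacelike character via an implicit-function-theorem argument on $r^2/2$ with boundary values $[r\nu]_{\mathcal{S}}=-1$ and $[r\lambda]_{\mathcal{S}}=-[r|\lambda|]_{\mathcal{S}}(v)$, whereas the paper differentiates the exact identity $\int_0^{\uS} r|\nu|\,du' = r_0^2(v)/2$ in $v$ and uses $\partial_v(r|\nu|)=\partial_u(r|\lambda|)$ from \eqref{Radius3} — the same computation in slightly different clothing.
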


\begin{proof}
%	Let us start with a $u$-gauge determined by \eqref{gauge.C} with $C_0(u) \equiv 1$ (we will adjust it later).

%	Note also that by monotonicity of $r$, and since $r_0(v) \sim v^{\frac{1}{2}-s}$, we have by \eqref{Omega.boot} $$ \Omega^2 \lesssim   r^{  \frac{\delta}{2} \cdot v}\cdot(v^{\frac{1}{ 2}-s})^{\frac{\delta v}{2}} \lesssim  (v^{\frac{1}{2}-s})^{\frac{\delta v}{2}}\lesssim e^{-v}.$$

\eqref{lambda.eq} follows immediately from integrating \eqref{Radius3} in $u$, using the bootstrap assumptions \eqref{Omega.boot}--\eqref{phi.boot}.  Note also that it shows that the following limit exists \begin{equation}
	\lim_{ u \rightarrow  \uS}  r|\lambda|(u,v):=  r|\lambda|_{|\mathcal{S}}(v),
\end{equation} and \eqref{lambda.S.eq2} holds true. Similarly, one shows that the following limit exists, and the following estimate holds: \begin{equation}\label{nu.S.eq2} \lim_{ v \rightarrow  \vS}  r|\nu|(u,v):=  r|\nu|_{|\mathcal{S}}(u),\ 
	\bigl| r|\nu|(u,v) -r|\nu|_{|\mathcal{S}}(u) \bigr| \lesssim \exd r^{95}.
\end{equation} and \eqref{nu.S.eq3}. Recalling that we work in the gauge given by \eqref{gauge.C}, \eqref{nu.S.eq2}   immediately implies \eqref{nu.S.eq3}. 

Integrating \eqref{nu.S.eq3} on $[0,\uS]$ for fixed $v$ gives \eqref{uS.eq}.  Moreover, write the identity $$ \partial_v \left[  \int_0^{\uS} r|\nu|(u',v) du\right] = \frac{d\uS}{dv} +  \int_0^{\uS} \partial_v(r|\nu|)(u',v) du'= -r_0(v) |\lambda_0(v)|.$$ Note that the fact that the integrals exist and are continuous show that  $v \rightarrow \uS$ is $C^1$. Then by \eqref{Radius3} and the bootstrap assumptions we obtain \eqref{duS.eq}, which shows in view of \eqref{lambda.est2} that $\frac{d\uS}{dv}<0$ for $v$ large enough, in particular $\mathcal{S}$ is $C^1$ spacelike.

\end{proof}

\subsection{The red-shift region}\label{section.redshift} The following region $\mathcal{R} =\{ 0 \leq u \leq \uR(v),\ v \geq v_0 \}$  is named in analogy with the corresponding region near the event horizon of a black hole, see \cite{Mihalis1,MihalisPHD,Moi}. Here, we define $r(\uR(v),v) = (1-\delta) r_0(v)$ for some small $\delta>0$ to be determined.

\begin{prop}\label{prop.red} Assume that \eqref{hyp1}, \eqref{hyp2} are satisfied. Then for all $(u,v) \in \mathcal{R}$:
\begin{equation}\label{Dvphi.S.est}| D_v \phi|(u,v) \ls \frac{v^{\frac{1}{2}-2s}}{r^2(u,v)},
\end{equation}						\begin{equation}\label{Duphi.S.est}| D_u \phi|(u,v) \ls  \frac{v^{\frac{1}{2}}}{r^2(u,v)},
\end{equation}			\begin{equation}\label{DuDuphi.S.est} |  D_{u u}^2 \phi|(u,v)  \ls  \frac{v^{\frac{1}{2}}}{r^4(u,v)},
\end{equation}
%\begin{equation}\label{duA.S}	\bigl|  A_u(u,v) \bigr|,\	\bigl| \rd_u A_u(u,v) \bigr| \lesssim \exd r^{95}(u,v),\end{equation}		
\begin{equation}\label{DuOmega.S.est}
	|\rd_u \log(\Omega^2)|(u,v) \ls \frac{v}{ r^{2}(u,v)},
\end{equation}
\begin{equation}\label{DvOmega.S.est}
	|\rd_v\log(\Omega^2)|(u,v) \ls \frac{ v^{1-2s}}{r^{2}(u,v)},
\end{equation}
%\begin{equation}\label{DuDuOmega.S.est}	|\rd_u^2 \log(\Omega^2)|(u,v) \ls \frac{v}{ r^{4}(u,v)},\end{equation}
\begin{equation}\label{du.nu.S}
	|\rd_u(r\nu)|(u,v) \ls \exd r^{95}(u,v).
\end{equation} 

Now, assume that \eqref{hyp4} additionally holds. Then:		
\begin{equation}\label{DvDvphi.S.est}| \rd_v(r\rd_v \phi)|(u,v),\ r| \rd_{v v}^2 \phi|(u,v) \ls 	v^{\frac{1}{2}-4s} r^{-3}(u,v),
\end{equation}
\begin{equation}\label{dv.lamba.S} |\rd_v(r\lambda)|(u,v) \lesssim v^{-2s-1}.
\end{equation}
\end{prop}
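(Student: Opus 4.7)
The red-shift region $\R$ is characterized by $r(u,v) \geq (1-\delta) r_0(v)$, so $r$ is comparable to $r_0(v) \approx v^{\frac{1}{2}-s}$ throughout, with $\delta$ small to be chosen. The plan is to propagate the Cauchy horizon estimates of Proposition~\ref{main.prop.CH} and Proposition~\ref{prop.highorder.decay} on $C_0 = \{0\}\times[v_0,+\infty)$ into $\R$, treating the equations as transport equations along a dynamical metric whose coefficients $(\lambda,\nu)$ satisfy \eqref{lambdanu.eq} throughout $\R$ thanks to Proposition~\ref{main.prop.CH}, and whose genuinely nonlinear corrections (those involving $\Omega^2$, $A_u$, and $Q$) are exponentially small by virtue of the bootstrap assumptions \eqref{Omega.boot}--\eqref{Q.boot}.

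I would first handle the scalar-field first derivatives. Writing $\theta = r D_v \phi$ and $\xi = r D_u \phi$, the system \eqref{Field2'}, \eqref{Field3'} fits the framework of Lemma~\ref{lemma.propagation} with $p=s$: the source terms $F,G$ involve the combinations $\frac{\Omega^2 \phi}{r}(q_0 Q - m^2 r^2)$, $q_0 A_u \theta$ and $q_0 \nu A_u \phi$, each controlled by $e^{\frac{K_-}{4}v}$ times harmless $r$-powers under \eqref{Omega.boot}--\eqref{phi.boot}, and therefore obeys \eqref{FG.eq} with exponential gain. On $C_0$, Proposition~\ref{main.prop.CH} supplies $|\theta_0|(v) \lesssim v^{-s}$, which is the data \eqref{ttheta.data}. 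The Gr\"onwall argument of Lemma~\ref{lemma.propagation}, now applied on rectangles $[0,\uR(v)] \times [v_0,+\infty)$ (with the roles of initial cones played by $C_0$ and $\Cin$), yields $|\theta|(u,v) \lesssim v^{-s}$ and $|\xi|(u,v) \lesssim r^{-\frac{s}{s-1/2}}(u,v)$; dividing by $r\approx r_0(v)$ this produces exactly \eqref{Dvphi.S.est} and \eqref{Duphi.S.est}.

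Next, the metric estimates: equation \eqref{Omega}, viewed as a transport equation for $\log(\Omega^2)$ in $u$ (respectively $v$), directly yields \eqref{DuOmega.S.est} and \eqref{DvOmega.S.est} once the right-hand side is estimated. Its leading term $2\Re(D_u\phi\overline{D_v\phi})$ has size $\frac{|\xi||\theta|}{r^2}$, which matches the claimed bounds after substituting the preceding step and using $r\approx r_0(v)$; the curvature term $\frac{Q^2 \Omega^2}{r^4}$ and gravitational coupling $\frac{\rd_u r\rd_v r}{r^2}$ are absorbed by \eqref{nu.boot}, \eqref{Q.boot} and \eqref{Omega.est2}. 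Estimate \eqref{du.nu.S} follows from integrating \eqref{Radius3} in $v$ using the exponential bound $\Omega^2 \lesssim e^{1.9K_- v}$ from \eqref{Omega.est2}. For the second-order scalar-field bounds \eqref{DuDuphi.S.est} and \eqref{DvDvphi.S.est}, one applies $\rd_u$ and $\rd_v$ respectively to \eqref{Field2'}--\eqref{Field3'} to obtain a commuted system of the same form; invoking Lemma~\ref{lemma.propagation.derivative} with data $|\rd_v\theta_0|(v) \lesssim v^{-s-1}$ supplied by Proposition~\ref{prop.highorder.decay} (which requires \eqref{hyp4}) closes these bounds. Finally, \eqref{dv.lamba.S} follows by applying $\rd_v$ to \eqref{Radius3} and integrating in $u$, the source being controlled by Proposition~\ref{prop.highorder.decay} on $C_0$ together with exponential $\Omega^2$-decay in the interior.

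The principal difficulty is the coupling between $\theta$ and $\xi$ through the transport terms $-\frac{\lambda}{r}\xi$ and $-\frac{\nu}{r}\theta$: although the $r$-weights in $\R$ are mild (since $r \approx r_0(v)$), the combined operator loses a factor of $r^{-1}$ per iteration, and the Gr\"onwall estimate only closes thanks to the decisive identity $\int r^{-2-\frac{2s-p}{s-1/2}}|\nu|\,du \lesssim r^{-\frac{2s-p}{s-1/2}}$ (Lemma~\ref{v^s/r^2.lem}) exploited in Lemma~\ref{lemma.propagation}. The restriction to the red-shift region is precisely the regime in which such $r$-weighting is controllable by $v$-weights via $r \approx r_0(v)$; for $r \ll r_0(v)$ one must instead track the AVTD structure outlined in Section~\ref{proof.section5}, which forms the content of the subsequent analysis rather than Proposition~\ref{prop.red} itself.
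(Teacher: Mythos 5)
Your overall strategy (propagate the $C_0$ data of Propositions~\ref{main.prop.CH} and~\ref{prop.highorder.decay} into $\mathcal{R}$, with the genuinely nonlinear sources exponentially small under the bootstrap) is the right one, but two of your steps do not deliver the stated estimates. First, \eqref{DuDuphi.S.est} is claimed under \eqref{hyp1}, \eqref{hyp2} alone, yet you derive it by invoking Lemma~\ref{lemma.propagation.derivative} with data from Proposition~\ref{prop.highorder.decay}, both of which require \eqref{hyp4}; moreover Lemma~\ref{lemma.propagation.derivative} only controls $\rd_v\ttheta$, i.e.\ $v$-derivatives of the outgoing quantity, and says nothing about $D^2_{uu}\phi$. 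The correct route is to commute \eqref{Field3} with $\rd_u$ and integrate the resulting transport equation in $v$, using only the already-established first-order bounds (this is why no extra hypothesis is needed for \eqref{DuDuphi.S.est}, while \eqref{DvDvphi.S.est} does use the $u=0$ data $|\rd_v(r\rd_v\phi)|(0,v)\lesssim v^{-s-1}$ and hence \eqref{hyp4}). Second, for \eqref{du.nu.S}, integrating \eqref{Radius3} in $v$ bounds $r\nu$, not $\rd_u(r\nu)$: you must commute with $\rd_u$, which requires the $|\rd_u\Omega^2|$ part of the bootstrap \eqref{Omega.boot} (the bound \eqref{Omega.est2} you cite lives at $u\le 0$), and, crucially, you must anchor the $v$-integration at the future boundary, where the gauge \eqref{gauge.C} forces $\rd_u(r\nu)$ to be exponentially small; anchoring at $v=v_0$ on $\Cin$ only yields a constant in $v$, which cannot produce the required decay $e^{\frac{K_-}{4}v}r^{95}$.

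For the first-order bounds \eqref{Dvphi.S.est}, \eqref{Duphi.S.est}, your reuse of Lemma~\ref{lemma.propagation} with data on $C_0$ is a plausible alternative, but note two imprecisions: the hypothesis \eqref{lambdanu.eq} in $\mathcal{R}$ is \emph{not} supplied by Proposition~\ref{main.prop.CH} (which covers $u\le 0$) but by Proposition~\ref{S.curve.prop} (\eqref{lambda.eq}, \eqref{nu.S.eq3}), itself proven under the bootstrap \eqref{Omega.boot}--\eqref{phi.boot}; and the sources $F,G$ contain $q_0A_u\,r\rd_v\phi$, i.e.\ the unknown, so \eqref{FG.eq} cannot be verified a priori without an auxiliary bootstrap on $\theta$. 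The paper proceeds differently here: it works with the renormalized quantities $\Theta=\frac{r\rd_v\phi}{-r\lambda}$ and $\xi=\frac{rD_u\phi}{-r\nu}$, makes the local bootstrap \eqref{BR.1}, \eqref{BR.2}, integrates the $\xi$-equation in $v$ and then the $\Theta$-equation in $u$, and closes not by Gr\"onwall but by the red-shift mechanism: the region is thin in the sense that $\int_0^{u}\frac{|\nu|}{r}\,du'\approx\log\frac{r_0(v)}{r(u,v)}\lesssim\delta$, so the coupling term carries a small factor $\delta$ that beats the large bootstrap constant $\Delta$. Your Gr\"onwall route would give constants of size $\exp(C(1-\delta)^{-\frac{s}{s-1/2}})$ rather than exploiting this smallness, which is acceptable for this proposition, but the two gaps above need to be repaired before the argument is complete.
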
	
\begin{proof}
We start defining \begin{align*}
	& \Theta(u,v):= \frac{r\rd_v \phi(u,v)}{-r \lambda(u,v)}, \\ & \xi(u,v):= \frac{rD_u\phi(u,v)}{-r\nu(u,v)}, \\ & F(u,v):= \frac{\Omega^2(u,v)}{4r(u,v)} (iq_0 Q(u,v)-m^2 r^2(u,v)),\\  & \bar{F}(u,v):= \frac{\Omega^2(u,v)}{4r(u,v)} (iq_0 Q(u,v)+m^2 r^2(u,v)),\\  &  \tilde{F}(u,v):= \frac{\Omega^2(u,v)}{4} \left(1-\frac{Q^2(u,v)}{r^2(u,v)} -m^2 r^2 |\phi|^2(u,v) \right).
\end{align*}

We rewrite \eqref{Field2}, \eqref{Field3} as the following system of null transport equations. \begin{equation}\begin{split}\label{null.transport}
		& D_u \Theta(u,v) =   \frac{|\nu|(u,v)}{r(u,v)} \cdot  \xi(u,v)+ \frac{F(u,v)}{-r \lambda(u,v)} \cdot \phi(u,v) - \frac{\tilde{F}(u,v)}{-r\lambda(u,v)} \cdot \Theta(u,v),\\ & \rd_v \xi(u,v)=  \frac{|\lambda|(u,v)}{r(u,v)} \cdot  \Theta(u,v)-\frac{\bar{F}(u,v)}{-r\nu(u,v)} \cdot \phi(u,v) - \frac{\tilde{F}(u,v)}{-r\nu(u,v)} \cdot \xi(u,v).\end{split}
\end{equation}

We introduce the following additional bootstrap assumptions, for some $\Delta>1$ to be fixed later. \begin{equation}\label{BR.1}
	|\Theta|(u,v) \leq \Delta \cdot v^{s} \cdot \frac{r_0(v)}{r(u,v)},
\end{equation}
\begin{equation}\label{BR.2}
	|\xi|(u,v) \leq  v^{10s} \cdot \frac{r_0(v)}{r(u,v)}.
\end{equation}
We already remark on the fact that \eqref{BR.2} is a worse estimate than \eqref{Duphi.S.est} that we will ultimately prove. Now, integrating the equation for  $\rd_v \xi$ in $v$ gives (recalling \eqref{A.gauge}) using \eqref{Omega.boot}, \eqref{phi.boot} and \eqref{nu.S.eq3}, \eqref{rlambdaS.prop}: \begin{equation*}
	|\xi|(u,v) \leq |\xi|(u,v_0) +   \Delta v^{s} \frac{r_0(v)}{r(u,v)}+ e^{1.9 K_- v_0},
\end{equation*}  which already improves \eqref{BR.2} for $v_0$ large enough, and proves \eqref{Duphi.S.est}. Using the above estimate and \eqref{Omega.boot}, \eqref{phi.boot} and \eqref{nu.S.eq3}, \eqref{rlambdaS.prop} in the equation for $D_u \varTheta$ and integrating in $u$ gives, for some $C>0$ independent of $\Delta$.
\begin{equation*}\begin{split}
		&	|\varTheta|(u,v) \leq 	|\varTheta|(0,v)+  C \Delta\left( v^{s}\left[ \underbrace{\frac{r_0(v)}{r(u,v)}-1}_{\leq 2\delta}\right]+ e^{1.9 K_- v}\right).\end{split}
\end{equation*}
Thus, for $\Delta$ large enough and $\delta>0$ small enough, \eqref{BR.1}	is improved and  \eqref{Dvphi.S.est} are proved.

Now integrating \eqref{Omega} in $v$ using \eqref{Duphi.S.est}, \eqref{Dvphi.S.est}, \eqref{nu.S.eq3}, \eqref{rlambdaS.prop}  and the bootstrap assumptions \eqref{phi.boot}, \eqref{Q.boot} and \eqref{Omega.boot} give \eqref{DuOmega.S.est}.  \eqref{DvOmega.S.est} is obtained similarly,  using \eqref{dv_Omega.est}.  %\eqref{duA.S} then follows applying $\rd_u$ to \eqref{Maxwell} and using \eqref{DuOmega.S.est}. 
\eqref{du.nu.S} follows applying $\rd_u$ to \eqref{Radius3} and integrating using \eqref{gauge.C} and \eqref{Omega.boot}. 

For \eqref{DuDuphi.S.est},  we can apply $\rd_u$ to  \eqref{Field3} we get, using \eqref{DuOmega.S.est},  \eqref{Omega.boot} \begin{equation*}
	\rd_v \left(  \rd_u(r\rd_u \phi)\right) =  \frac{-\rd_u(r\nu)}{r}  \rd_v \phi + 2 \frac{ |r\nu|^2}{r^3 }\rd_v \phi- \frac{\lambda \nu}{r} \rd_u \phi+ O (\exd r^{95}),
\end{equation*} from which \eqref{lambda.S.eq2}, \eqref{nu.S.eq2}, \eqref{dv.lamba.S}, \eqref{Duphi.S.est}, \eqref{Dvphi.S.est} give us \begin{equation*}
	|	\rd_v \left(  \rd_u(r\rd_u \phi)\right)| \ls r^{-5} v^{-2s+1/2} + \exd r^{94}\ls  |\lambda| r^{-4} v^{1/2} +  \exd r^{94},
\end{equation*} which gives  \eqref{DuDuphi.S.est} upon integrating.

In the rest of the proof,  we will assume the assumption \eqref{hyp4} holds. Then, \eqref{dv.lamba.S} follows, also using \eqref{dvv.r.est}. Finally we turn to the higher-order estimates: applying $\rd_v$ to  \eqref{Field2} we get, using \eqref{DvOmega.S.est}, \eqref{A.S} \begin{equation*}
	\rd_u \left(  \rd_v(r\rd_v \phi)\right) =  \frac{-\rd_v(r\lambda)}{r}  \rd_u \phi + 2 \frac{ |r\lambda|^2}{r^3 }\rd_u \phi- \frac{\lambda \nu}{r} \rd_v \phi+ O (\exd r^{95}),
\end{equation*} from which \eqref{lambda.S.eq2}, \eqref{nu.S.eq2}, \eqref{dv.lamba.S}, \eqref{Duphi.S.est}, \eqref{Dvphi.S.est} give us \begin{equation*}
	|	\rd_u \left(  \rd_v(r\rd_v \phi)\right)| \ls v^{-2s-1/2}  r^{-3} + v^{-4s+ 1/2} r^{-5}+\exd r^{94},
\end{equation*} which integrating gives for $v_0$ large enough and using that $r(u,v) \geq r_0(v) \approx v^{\frac{1}{2}-s}$ by \eqref{r.est}: \begin{equation*}
	|	\rd_v(r\rd_v \phi)|(u,v) \ls  |\rd_v(r\rd_v \phi)|(0,v) + v^{-4s+1/2} r^{-3}(u,v) + \exd \ls  v^{-s-1}  +v^{-4s+1/2} r^{-3}(u,v)  \ls  v^{-4s+1/2} r^{-3}(u,v)
\end{equation*} which gives \eqref{DvDvphi.S.est}, where we also used Proposition~\ref{prop.highorder.decay}.

\end{proof}

\subsection{The crushing region}\label{section.crushing}
We define the crushing region $	\mathcal{C}=\{ \uR(v) \leq u \leq \uS,\ v\geq v_0\}$. By Proposition~\ref{S.curve.prop},  it holds true in 	$\mathcal{C}$ that $$ u \approx v^{1-2s},\  \vR(u) \approx v.$$  Note that  also,  for all $(u,v) \in \mathcal{C}$\begin{equation} \inf_{J^{-}(u,v)\cap 	\mathcal{C}} v' = v_R(u) \approx v,\ 
\int_{\vR(u)}^{v} \frac{dv'}{v'} \lesssim 1,\   \min_{p\in J^{-}(u,v)\cap \mathcal{C}} r(p)=r(u,\vR(u)),
\end{equation} where $J^{-}(u,v)$ denotes the causal past of $(u,v)$. These estimates will be used repeatitively without a proof in this section. We will also assume that $\delta(D) \in(0,1)$ has been fixed in the previous section. The following proposition relies on a Gr\"{o}nwall argument that was previously developed in \cite{DejanAn} for \eqref{1.1}--\eqref{5.1} with $F\equiv 0$.

\begin{prop}\label{prop.crushing} Assume that \eqref{hyp1}, \eqref{hyp2} are satisfied. Then for all $(u,v) \in \mathcal{C}$:
\begin{equation}\label{Dvphi.S.est2}| D_v \phi|(u,v) \ls \frac{v^{\frac{1}{2}-2s}}{r^2(u,v)},
\end{equation}

\begin{equation}\label{Duphi.S.est2}| D_u \phi|(u,v) \ls  \frac{v^{\frac{1}{2}}}{r^2(u,v)},
\end{equation}
\begin{equation}\label{DuDuphi.S.est2}|  D_{u u}^2 \phi|(u,v)  \ls  \frac{v^{\frac{1}{2}}}{r^4(u,v)},
\end{equation}
%\begin{equation}\label{DuDuDuphi.S.est2}|  D_{u u u}^3 \phi|(u,v)  \ls  \frac{v^{\frac{1}{2}}}{r^6(u,v)},\end{equation}
%\begin{equation}\label{duA.S2}	\bigl|  A_u(u,v) \bigr|,\ 	\bigl| \rd_u A_u(u,v) \bigr| \lesssim \exd r^{95}(u,v),\end{equation}\begin{equation}\label{duduA.S2}	\bigl| \rd_{uu}^2 A_u(u,v) \bigr| \lesssim \exd r^{95}(u,v),\end{equation}
\begin{equation}\label{DuOmega.S.est2}
	|\rd_u \log(\Omega^2)|(u,v) \ls \frac{v}{ r^{2}(u,v)},
\end{equation}
\begin{equation}\label{DvOmega.S.est2}
	|\rd_v\log(\Omega^2)|(u,v) \ls \frac{ v^{1-2s}}{r^{2}(u,v)},
\end{equation}
\begin{equation}\label{du.nu.S2}
	|\rd_u(r\nu)|(u,v) \ls\exd r^{95}(u,v),
\end{equation} 	\begin{equation}\label{dv.lamba.S2}
	|\rd_v(r\lambda)- \rd_v(r_0\lambda_0)|(u,v) \ls \exd.
\end{equation} 

Moreover, recalling the definition of  $r|\lambda|_{\mathcal{S}}(v)$ from the proof of Proposition~\ref{S.curve.prop}, we have for all $(u,v) \in \mathcal{C}$: \begin{equation}\label{rlambdaS}
	r|\lambda|(u,v) \approx r|\lambda|_{\mathcal{S}}(v) \approx v^{-2s},\ \bigl| r|\lambda|(u,v)  -r|\lambda|_{\mathcal{S}}(v) \bigr| \ls r^{90}.
\end{equation}

Moreover, assume that \eqref{hyp4} additionally holds. Then, for all $(u,v) \in \mathcal{C}$:		\begin{equation}\label{DvDvphi.S.est2}| D_v(rD_v \phi)|(u,v),\ r| D_{v v}^2 \phi|(u,v) \ls 	v^{\frac{1}{2}-4s} r^{-3}(u,v),
\end{equation}\begin{equation}\label{rlambdaS2}
	|\rd_v(r\lambda)|(u,v) \ls v^{-2s-1}.
\end{equation}
\end{prop}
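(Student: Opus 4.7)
The plan is to close a bootstrap argument inside $\mathcal{C}$, using the estimates from Proposition~\ref{prop.red} as ``initial data'' on the curve $\{u=\uR(v)\}$ where $r\approx (1-\delta) r_0(v)$, and taking advantage of the narrow geometry of $\mathcal{C}$, in which the $v$-width is comparable to $v$ itself (so $\int_{\vR(u)}^{v} dv'/v'\ls 1$), while the $u$-width shrinks rapidly, with $|u|\approx v^{1-2s}$. The bootstraps on the full region $\{u\geq 0,\ v\geq v_0\}$ are the ones already in force, namely \eqref{Omega.boot}--\eqref{phi.boot}; the aim is to show that the sharper pointwise bounds \eqref{Dvphi.S.est2}--\eqref{DvDvphi.S.est2} propagate from $\{u=\uR(v)\}$ across $\mathcal{C}$ up to $\mathcal{S}$.

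Stage 1 (metric): first upgrade $r|\lambda|$ and $r|\nu|$. Integrate \eqref{Radius3} in the form $\rd_u(-r\lambda)=\frac{\Omega^2}{4}(1-Q^2/r^2-m^2 r^2|\phi|^2)$ in the $u$-direction from $u=\uR(v)$, using  \eqref{Omega.boot}, \eqref{Q.boot}, \eqref{phi.boot} and the $\frac{1}{r^{100}}$-absorption afforded by the extremely small $\Omega^2$-weight $e^{\frac{K_-}{2}v}r^{100}$: each contribution is $O(\exd)$, yielding \eqref{rlambdaS}, the existence of the limit $r|\lambda|_{\mathcal{S}}(v)$, and the comparison $r|\lambda|_{\mathcal{S}}(v)\approx v^{-2s}$ by matching to the analogous estimate \eqref{rlambdaS.prop} proven under $\CH$. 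Analogously integrating \eqref{Radius3} in the form $\rd_v(-r\nu)=\ldots$ in $v$ from $v=v_0$ (with the red-shift input \eqref{nu.est.prelim}) furnishes \eqref{du.nu.S2}, while $\rd_v$-differentiating \eqref{Radius3} and using Proposition~\ref{prop.highorder.decay} together with Proposition~\ref{prop.red} yields \eqref{dv.lamba.S2} and (under \eqref{hyp4}) \eqref{rlambdaS2}.

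Stage 2 (scalar field):  bootstrap $|D_v\phi|\leq 2K_1 v^{\frac{1}{2}-2s}/r^2$ and $|D_u\phi|\leq 2K_2 v^{\frac{1}{2}}/r^2$ for large enough $K_1,K_2>0$, using as seed the bounds \eqref{Dvphi.S.est}, \eqref{Duphi.S.est} at $u=\uR(v)$.  Derive, as in the proof of Proposition~\ref{prop.red}, the coupled system
\begin{equation}
\rd_u(r^2 D_v\phi)= -r\lambda\, D_u\phi + \mathcal{E}_1,\qquad \rd_v(r^2 D_u\phi)=-r\nu\, D_v\phi+\mathcal{E}_2,
\end{equation}
where the error terms $\mathcal{E}_i$ contain $\Omega^2\phi$-type and $A_u$-type contributions controlled by \eqref{Omega.boot} and \eqref{Q.boot}, and are negligible (their integrals are $O(e^{\frac{K_-}{4}v})$). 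Integrating the first equation in $u$ and using $r|\lambda|\approx v^{-2s}$ together with $|\rd_u r|\approx 1/r$ yields a contribution of size $v^{\frac12-2s}\,(1+\log(r_0(v)/r(u,v)))$; symmetrically, integrating the second equation in $v$ over the narrow slab $[v_R(u),v]$ produces only an $O(1)$ multiplicative factor because $v_R(u)\approx v$. Iterating these two inequalities gives a coupled Grönwall inequality in which the logarithmic growth in one direction is damped by the small $v$-range in the other; the argument of \cite{DejanAn} then closes the bootstrap with the sharp bounds \eqref{Dvphi.S.est2}, \eqref{Duphi.S.est2}. From these, \eqref{DuOmega.S.est2}, \eqref{DvOmega.S.est2} follow by integrating \eqref{Omega}, and \eqref{DuDuphi.S.est2} is obtained by $\rd_u$-differentiating \eqref{Field3} and integrating (as at the end of the proof of Proposition~\ref{prop.red}).

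Stage 3 (higher order, assuming \eqref{hyp4}): apply $\rd_v$ to \eqref{Field2} to get a transport equation for $\rd_v(r D_v\phi)$ whose source is controlled by Stage 2 and by \eqref{DvOmega.S.est2}, \eqref{dv.lamba.S2}; integrate in $u$ from $u=\uR(v)$, using the red-shift bound \eqref{DvDvphi.S.est}, to obtain \eqref{DvDvphi.S.est2}.  The main obstacle is Stage 2:  the integral $\int_{\uR(v)}^{\uS(v)} du'/r^2(u',v)$ diverges logarithmically at $\mathcal{S}$, and naively this would force the bootstrap constants to blow up; the crux of the argument is to exploit the asymmetry between the $u$- and $v$-integrations in $\mathcal{C}$ — handled via the coupled Grönwall estimate adapted from \cite{DejanAn}, with the additional charged terms absorbed thanks to \eqref{Omega.boot} and \eqref{Q.boot} — so that the losses do not accumulate and the sharp bounds of Proposition~\ref{prop.crushing} are recovered.
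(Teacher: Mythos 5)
Your overall skeleton (seed the estimates on the red-shift curve $\{r=(1-\delta)r_0(v)\}$ from Proposition~\ref{prop.red}, run a transport/Gr\"{o}nwall argument in the style of \cite{DejanAn} for the scalar field, then get \eqref{DuOmega.S.est2}--\eqref{rlambdaS2} by integrating \eqref{Radius3}, \eqref{Omega}, \eqref{Maxwell} and the commuted wave equation) is the paper's skeleton. The gap is exactly at the step you defer to ``the coupled Gr\"{o}nwall estimate adapted from \cite{DejanAn}''. First, the transport system you write is not usable as stated: by \eqref{FieldX}, $\rd_u(r^2\rd_v\phi)=r\nu\,\rd_v\phi-r\lambda\,\rd_u\phi+(\text{terms carrying }\Omega^2\text{ or }A_u)$, and the diagonal term $r\nu\,\rd_v\phi$ is a \emph{main} term: upon $u$-integration it contributes $\int\frac{|\nu|}{r}\,|r^2\rd_v\phi|\,du'\sim\log(\tfrac{r_0(v)}{r})$ times the unknown itself, not an error with integral $O(e^{\frac{K_-}{4}v})$. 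The paper removes it by renormalizing, i.e.\ working with $\Theta=\frac{r\rd_v\phi}{r|\lambda|}$ and $\xi=\frac{rD_u\phi}{r|\nu|}$, whose system \eqref{null.transport} contains only the off-diagonal couplings $\frac{|\nu|}{r}\xi$, $\frac{|\lambda|}{r}\Theta$ plus exponentially small sources.

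Second, and more importantly, the mechanism you invoke to beat the logarithm is wrong: the $v$-integration over $[\vR(u),v]$ is \emph{not} an $O(1)$ factor. The relevant kernel there is $\frac{|\lambda|}{r}\,dv'$, which, using $r\lambda\approx r_0\lambda_0$ from \eqref{lambda.S.eq2}, equals $\frac{dR}{R\,[1-R^2+O(e^{\frac{K_-}{4}v})]}$ in the variable $R=r/r_0(v)$ --- logarithmically divergent as $r\to0$, exactly like the $u$-kernel $\frac{|\nu|}{r}\,du'$. There is no damping asymmetry between the two directions; both lose a $\log$. What actually closes the argument (in the paper and in \cite{DejanAn}) is: take $\Phi(R)=\sup_{\Sigma_R\cap J^-(U,V)}\max\{|\Theta|,|\xi|\}$ over the level sets of $R$, so that both null integrations collapse to $\int_R^{1-\delta}\frac{\Phi(R')\,dR'}{R'[1-(1+O(e^{\frac{K_-}{4}V}))(R')^2]}$; since the kernel coefficient is exactly $1$ (this is where the renormalization and the near-constancy of $r\lambda$ in $u$ are indispensable), Gr\"{o}nwall with the primitive $\log\frac{R}{\sqrt{1-aR^2}}$ gives $\Phi(R)\lesssim_{\delta} V^{s}/R$, i.e.\ a loss of exactly one power of $r_0/r$, which is precisely the loss permitted by \eqref{Dvphi.S.est2}--\eqref{Duphi.S.est2}. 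A fixed-constant bootstrap of the form $|D_v\phi|\leq 2K_1 v^{\frac12-2s}/r^2$ cannot be closed by direct iteration (each pass picks up an unbounded $\log(\tfrac{r_0}{r})$), and if the kernel coefficient were $1+\epsilon_0$ the exponentiated loss would be $R^{-1-\epsilon_0}$ and the stated bounds would fail; so the renormalized variables, the level-set sup in $R$, and the exact kernel constant are the substance of the proof, not details that can be cited away.
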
	

\begin{proof} As in the proof of Proposition~\ref{prop.red}, we make bootstrap assumptions of the following form: \begin{equation}\label{BC.1}
	|\Theta|(u,v),\ 	|\xi|(u,v)  \leq  v^{10s} \cdot \frac{r_0(v)}{r(u,v)}.
\end{equation}

Fix $(U,V) \in \mathcal{C}$. For any $0<R_0 \leq 1-\delta$, define $\Sigma_{R_0} =\{ (u,v) \in J^{-}(U,V),\ \frac{r(u,v)}{r_0(v)}= R_0\}$, and with the variable $R(u,v)=\frac{r(u,v)}{r_0(v)}$ $$ \Phi(R)= \sup_{(u,v) \in \Sigma_{R}} \max \{\ |\xi|(u,v),\ |\Theta|(u,v)\ \}$$

We will  prove $$ R \Phi(R) \leq (1-\delta) \Phi(1-\delta) \lesssim V^{s}$$ and evaluating at $(U,V) \in \Sigma_{R(U,V)}$ gives $$ |\xi|(U,V), |\Theta|(U,V) \lesssim \frac{V^{1/2}}{r(U,V)}.$$ We use \eqref{null.transport} which we integrate to get \begin{align*}
	&   |\Theta|(u,v) \leq C \cdot V^{s}+   \int_{ u_{\mathcal{R}}(v)}^{u} \frac{|\nu|(u',v)}{r(u',v)} \cdot  |\xi|(u',v) du' \leq  C \cdot V^{s}+   \int^{ 1-\delta}_{R(u,v)} \frac{  \Phi(R)}{R} dR ,\\ &  |\xi|(u,v) \leq C \cdot V^{s}+   \int_{ v_{\mathcal{R}}(u)}^{v}  \frac{|\lambda|(u,v')}{r(u,v')} \cdot  |\Theta|(u,v') dv' \leq  C \cdot V^{s}+  \int^{ 1-\delta}_{R(u,v)} \frac{  \Phi(R)}{R (1 - R^2 \frac{r_0 \lambda_0}{r\lambda})} dR,
\end{align*} where we used the following identity at constant $u$: $$ \frac{\lambda}{r} dv = \frac{dR}{R (1-R^2  - R^2[ \frac{r_0 \lambda_0}{r\lambda}-1])] }.$$ From the above and also \eqref{lambda.S.eq2} in Proposition~\ref{S.curve.prop}, we get for all $1-\delta \leq R \leq R(U,V)$: $$ \Phi(R) \leq C \cdot V^{s} +\int^{ 1-\delta}_{R} \frac{  \Phi(R')}{R'[1-(1+e^{\frac{K_-}{4} V})(R')^2]} dR	',$$	 thus, by Gr\"{o}nwall inequality we obtain, assuming $V\geq v_0$ large enough and \eqref{lambda.S.eq2} \begin{equation*}
	R\Phi(R)	 \leq C \cdot  V^{s} \left[\frac{1- (1+e^{\frac{K_-}{4} V})R}{1-(1+e^{\frac{K_-}{4} V})(1-\delta)} \right]^{\frac{1}{2}}\leq \frac{2C}{\sqrt{\delta}}\ V^{s},
\end{equation*}  where we have used the integral $\int \frac{dR}{R[1-a R^2]} = \log(\frac{R}{\sqrt{1-aR^2}}) $ for $a=1+e^{\frac{K_-}{4} V}$,
which concludes the proof of \eqref{Dvphi.S.est2}, \eqref{Duphi.S.est2} and closes the bootstrap assumptions \eqref{BC.1}. The remaining estimates follow easily from the integration of \eqref{Radius}, \eqref{Omega} and \eqref{Maxwell} as in the proof of Proposition~\ref{prop.red} (together with relevant commutations of the equation with $\rd_u$ and $\rd_v$).

\end{proof}

\subsection{Propagation of faster phase decay and pointwise lower bounds}\label{section.faster}

The goal of this section is to prove   point-wise lower bounds, as summarized in the  following proposition. \begin{prop}\label{lower.bound.prop1}
Assume \eqref{hyp1}, \eqref{hyp2}, \eqref{hyp4}--\eqref{hyp3} hold. Then, there exists $D_L>0$, $\alpha_{\infty} \in \RR$ such that for all $v\geq v_0$, $u_0\leq u\leq u_{\ep}(v)$, where we recall $u_{\ep}(v)$ is defined such that $r^2(u_{\ep}(v),v) = \ep v^{1-2s}$ and $\theta:= r\rd_v \phi$: \begin{equation}\label{Re.bound}
	\Re(\theta e^{-i\alpha_{\infty}} )(u,v)\geq \frac{D_L}{8} v^{-s},
\end{equation} and  \begin{equation}\label{Im.bound}
	|\Im(\theta e^{-i\alpha_{\infty}})|(u,v)\lesssim_{\ep} v^{-s-\delta}.
\end{equation}
In the rest of the paper, we take $\alpha_{\infty}=0$ with no loss of generality.
\end{prop}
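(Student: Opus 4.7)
The plan is to first derive the gauge-invariant refinement \eqref{hyp5'} mentioned in the remark preceding the proposition, with $\alpha_\infty$ defined as the limiting phase of $D_v\phi(u_0,\cdot)$ at $v = +\infty$, and then to propagate both the lower bound on $\Re\theta$ and the upper bound on $\Im\theta$ inward by applying Lemma~\ref{lowerbound.nonosc.lemma} and Lemma~\ref{lemma.propagation} to the phase-rotated pair $\ttheta := e^{-i\alpha_\infty}\theta$, $\txi := e^{-i\alpha_\infty}\xi$, which satisfies the same system \eqref{eq.lemma} as $(\theta,\xi)$ because $\alpha_\infty$ is a constant.

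For the first step, I work in the electromagnetic gauge $A_v \equiv 0$ and write $\phi(u_0,v) = R(v) e^{i\gamma(v)}$, $D_v\phi(u_0,v) = |D_v\phi|(u_0,v) e^{i\beta(v)}$. A direct computation gives $R\gamma'(v) = \Im(\bar\phi D_v\phi)(u_0,v)/|\phi|(u_0,v)$, so \eqref{hyp5} yields $|R\gamma'|(v) \le D_C v^{-s-\delta}$, and combined with \eqref{hyp3} this gives $|\sin(\beta-\gamma)|(v) \le (D_C/D_L) v^{-\delta}$. Moreover, $|R'|^2 = |D_v\phi|^2 - (R\gamma')^2 \geq \tfrac{1}{2}D_L^2 v^{-2s}$ for $v$ large, so $R'$ has a fixed sign at infinity. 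Since $D_v\phi$ is integrable on $C_{out}$ (as $s>1$), $\phi(u_0,v)$ converges to some $\phi_\infty$: in the generic case $\phi_\infty \neq 0$, $|\phi|$ is bounded below for large $v$ and $|\gamma'| \ls v^{-s-\delta}$ is integrable; in the degenerate case $\phi_\infty = 0$, one has instead $R \gtrsim v^{1-s}$, so $|\gamma'| \ls v^{-1-\delta}$ which is still integrable. Either way, $\gamma$ and hence $\beta = \gamma + O(v^{-\delta})$ converge, and defining $\alpha_\infty := \lim_{v\to+\infty}\beta(v)$ yields
\begin{equation*}
|\Im(e^{-i\alpha_\infty}D_v\phi)|(u_0,v) \ls v^{-s-\delta}, \qquad \Re(e^{-i\alpha_\infty}D_v\phi)(u_0,v) \ge \tfrac{D_L}{2} v^{-s}.
\end{equation*}

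For the propagation step, the pair $(\ttheta,\txi)$ satisfies \eqref{eq.lemma} with source terms $\tilde F = e^{-i\alpha_\infty}F$, $\tilde G = e^{-i\alpha_\infty}G$ identical in structure to those in the proof of Proposition~\ref{main.prop.CH}, and of size $O(e^{1.99K_- v})$ by that proposition, hence verifying \eqref{FG.eq} and \eqref{FG.eq2} for any admissible exponents. The estimates \eqref{lambdanu.eq} and \eqref{lambda.lemma.eq2} follow from Propositions~\ref{apriori.prop1} and~\ref{prop.highorder.decay}. Applying Lemma~\ref{lowerbound.nonosc.lemma} with $p = s$ to the real-part system $(\Re\ttheta,\Re\txi)$, whose data satisfy $\Re\ttheta(u_0,v) \ge r_{CH}(u_0)\cdot \tfrac{D_L}{4} v^{-s}$ by the above, yields \eqref{Re.bound} after renaming the constant. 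Applying Lemma~\ref{lemma.propagation} with $p = s+\delta$ and $\eta \in (0,s-\delta)$ (admissible for $\delta < s$) to the real-valued imaginary-part system $(\Im\ttheta,\Im\txi)$, whose data satisfy $|\Im\ttheta(u_0,v)| \ls v^{-s-\delta}$, yields \eqref{Im.bound}. A final redefinition $\phi \mapsto e^{-i\alpha_\infty}\phi$ then reduces to $\alpha_\infty = 0$.

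The main obstacle is the first step: reconciling the possibly distinct asymptotic phases of $\phi$ and of $D_v\phi$, and handling the degenerate subcase $\phi_\infty = 0$. All three hypotheses \eqref{hyp3}, \eqref{hyp4}, \eqref{hyp5} and the condition $s > 1$ must be used in concert to ensure the existence and rate of convergence $|\beta(v) - \alpha_\infty| \ls v^{-\delta}$. Once \eqref{hyp5'} is secured, the inward propagation is a direct application of the Gr\"onwall-type arguments already developed in Section~\ref{propagation.section}.
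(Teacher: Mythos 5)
Your proposal follows essentially the same route as the paper's proof: a modulus--phase analysis of the data on $C_{out}$ using \eqref{hyp2}, \eqref{hyp5}, \eqref{hyp3}, with the degenerate case $\phi_\infty=0$ handled exactly as in the paper via the fixed sign and the integrated lower bound of the modulus derivative (giving $R\gtrsim v^{1-s}$ and hence $|\gamma'|\lesssim v^{-1-\delta}$), producing the rotated-data bounds $\Re(e^{-i\alpha_\infty}D_v\phi)(u_0,v)\gtrsim v^{-s}$ and $|\Im(e^{-i\alpha_\infty}D_v\phi)|(u_0,v)\lesssim v^{-s-\delta}$; the inward propagation then proceeds, as in the paper, by Lemma~\ref{lowerbound.nonosc.lemma} for the real part and Lemma~\ref{lemma.propagation} with $p=s+\delta$ for the imaginary part.

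The one point to repair is the verification of the hypotheses of those lemmas on the whole range $u_0\le u\le u_{\ep}(v)$. Since $r(u_{\ep}(v),v)<r_0(v)$, one has $u_{\ep}(v)>\uch=0$, so this region extends strictly beyond the Cauchy horizon, where Propositions~\ref{apriori.prop1}, \ref{main.prop.CH} and \ref{prop.highorder.decay} (stated only for $u_0\le u\le \uch$) do not give \eqref{lambdanu.eq}, \eqref{FG.eq} or the exponential smallness of the sources. The paper fills this with the red-shift and crushing region estimates of Propositions~\ref{prop.red} and \ref{prop.crushing} (together with the bootstrap bounds \eqref{Omega.boot}--\eqref{phi.boot} and Proposition~\ref{S.curve.prop}), which provide $-r\lambda\approx v^{-2s}$, $-r\nu\approx 1$, the exponential smallness of $\Omega^2$ and $A_u$, and the bounds on $Q$, $\phi$ down to $r\le \ep\, r_0(v)$. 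As written, your citation of the sub-Cauchy-horizon propositions only justifies the Gr\"onwall argument up to $u=\uch$; with the Section~\ref{quant.spacelike.section} inputs substituted, the argument coincides with the paper's.
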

\begin{proof}

First, we have to make use of \eqref{hyp3} and \eqref{hyp5} on the outgoing initial data cone $\{u=u_0\}$. We introduce the following modulus/phase decomposition of $\phi(u,v)$, where $P(u,v)\geq 0$ and $\alpha(u,v) \in \mathbb{R}$: \begin{equation}\label{modulus.phase}\phi(u,v) = P(u,v) e^{i \alpha(u,v)}.			\end{equation} In these notations, we have \begin{align} &\label{formula1} \rd_v \phi(u,v) = \left(\rd_v P(u,v) +  i P(u,v)\rd_v \alpha(u,v)\right) e^{i \alpha(u,v)},\\
	&\Im(\bar{\phi} \rd_v \phi)(u,v) = P^2(u,v) \rd_v \alpha(u,v), \\
	& |\rd_v \phi|(u,v) = \sqrt{(\rd_v P)^2(u,v)+ (P \rd_v \alpha)^2(u,v)}\label{modd_vphi.formula}. 
\end{align} Then, by \eqref{hyp2}, we have \begin{equation}\label{dvrhodecay}
	|\rd_v P|(u_0,v) \leq D_C v^{-s},
\end{equation} and since $s>1$, both $P(u_0,v)$ and  $\phi(u_0,v)$ admit a limit as $v\rightarrow +\infty$ denoted $P_{\infty}\geq0$ and $\phi_{\infty}\in \mathbb{C}$. Moreover, by \eqref{hyp5}, we have $$ P^2(u_0,v) |\rd_v \alpha|(u_0,v) \lesssim P(u_0,v) v^{-s-\delta}.$$
If $P(u_0,v)=0$, then since $|\rd_v \alpha|(u_0,v)$ is finite, $P^2(u_0,v) |\rd_v \alpha|(u_0,v)=0$. If   $P(u_0,v)\neq 0$, then this shows  $P(u_0,v) |\rd_v \alpha|(u_0,v) \lesssim  v^{-s-\delta}$. Either way, we have established that under \eqref{hyp2}, \eqref{hyp5} \begin{equation}\label{improved.phase.data}
	|\Im( \rd_v\phi \cdot e^{-i\alpha})|(u_0,v) =P(u_0,v) |\rd_v \alpha|(u_0,v)\lesssim  v^{-s-\delta}.
\end{equation}

Therefore, by \eqref{modd_vphi.formula}, \eqref{hyp3} implies that for $v_0$ large enough we have \begin{equation}\label{lower.dv.data}
	|\rd_v P|(u_0,v) \geq \frac{D_L}{2} v^{-s}.
\end{equation} This means that $\rd_v P(u_0,v)$ cannot have any zeroes (since $P$ is real-valued), and thus has a fixed signed. Therefore, we obtain integrating \eqref{lower.dv.data} on $[v,+\infty)$ that  
\begin{equation}\label{lower.dv.data2}
	\frac{D_L}{2(s-1)} v^{1-s}\leq  \bigl|P_{\infty} - P(u_0,v)\bigr|= \bigl||\phi_{\infty}| - |\phi|(u_0,v)\bigr|\leq  \frac{D_C}{s-1} v^{1-s}.
\end{equation}  If $P_{\infty}=0$, we have by \eqref{improved.phase.data} and \eqref{lower.dv.data2}: \begin{equation}\label{improved.phase.data2}
	|\rd_v \alpha|(u_0,v) \lesssim v^{-1-\delta}.
\end{equation} Note also that by \eqref{improved.phase.data}, \eqref{lower.dv.data2}, \eqref{improved.phase.data2} is still true if $P_{\infty}>0$ (since $s>1$). Either way,  \eqref{improved.phase.data2} holds true unconditionally. Hence,   $\alpha(u_0,v)$ admit a limit as $v\rightarrow +\infty$ denoted $\alpha_{\infty}\in \mathbb{R}$, $\phi_\infty=  P_{\infty} e^{i \alpha_{\infty}}$ and \begin{equation}\label{improved.phase.data3}|\alpha_{\infty} -\alpha(u_0,v)|\lesssim v^{-\delta}.
\end{equation} Therefore, by  \eqref{formula1}, \eqref{improved.phase.data}, \eqref{dvrhodecay}, we have \begin{equation}
	\bigl| \rd_v \phi(u_0,v) -  e^{i \alpha_{\infty}}  \rd_v P \bigr| \lesssim v^{-s-\delta}.
\end{equation}

With no loss of generality, we can choose $\alpha_{\infty}$ to assume any specific value (if not, consider $\tilde{\phi} = \phi e^{-i \alpha_{\infty}}$ and the same arguments go through). Recall that $\rd_v P(u_0,v)$ does not change sign: if   $\rd_v P(u_0,v)> 0$, we choose $\alpha_{\infty}=0$, but if  $\rd_v P(u_0,v)<  0$, we choose $\alpha_{\infty}=\pi$, so that either way we have,  for $v_0$ large and also using   \eqref{improved.phase.data}: \begin{align}
	\label{data.lower.lem} & \Re(\rd_v \phi)(u_0,v) \geq \frac{D_L}{4} v^{-s},  \\ & \label{data.im} |\Im(\rd_v \phi)|(u_0,v) \lesssim v^{-s-\delta}.
\end{align}
Using  \eqref{Field2'}, \eqref{Field3'} and combining Proposition~\ref{main.prop.CH}, Proposition~\ref{prop.red} and Proposition~\ref{prop.crushing} shows that \eqref{lambdanu.eq} is satisfied, and $\ttheta=\theta$, $\txi=\xi$ satisfy \eqref{eq.lemma} with $F=G$ satisfying \eqref{FG.eq} in the region $\{u_0 \leq u \leq u_{\ep}(v),\ v\geq v_0\}$. \eqref{data.lower.lem} shows that \eqref{real.lower.data} holds, therefore \eqref{Re.bound} follows from Lemma~\ref{lowerbound.nonosc.lemma}. Similarly, using  \eqref{Field2'}, \eqref{Field3'}, Proposition~\ref{main.prop.CH}, Proposition~\ref{prop.red} and Proposition~\ref{prop.crushing} allows to apply  Lemma~\ref{lemma.propagation} whose assumptions are satisfied for $\ttheta=\Im \theta$ and $\txi=\Im\xi$ and $p=s+\delta$, and deduce \eqref{Im.bound}.

\end{proof}

\subsection{Scalar field  commuted estimates and lower bounds}\label{section.commuted}

We recall that we have already proved some estimates of Proposition~\ref{main.prop.spacelike}, in particular \eqref{Q.est.S}, \eqref{phi.est.S}, \eqref{lambda.eq}, \eqref{nu.S.eq3}, \eqref{A.S}, \eqref{dv.lamba}, \eqref{Duphi}, \eqref{DuDuphi}, \eqref{Dvphi}, \eqref{DvDvphi} which closes the bootstrap assumptions  \eqref{nu.boot}, \eqref{Q.boot}, \eqref{phi.boot}. We, however, still need to close the (most important) bootstrap assumption \eqref{Omega.boot} which requires more refined commuted scalar field estimates. They involve the spacelike vector field $X$ such that $X(r)=0$ defined as %(notice also $Y$, a rescaled version of $X$):
\begin{equation*}\begin{split}
	&X=  \frac{1}{r|\rd_v r|(u,v)}\ \rd_v- \frac{1}{r|\rd_u r|(u,v)}\ \rd_u.% \\ & Y= r_0|\lambda_0|(v)\ X =   \frac{r_0|\lambda_0|(v)}{r|\rd_v r|(u,v)}\ \rd_v- \frac{r_0|\lambda_0|(v)}{r|\rd_u r|(u,v)}\ \rd_u.
\end{split}
\end{equation*}
Re-writing \eqref{Field4} using the definition of $X$ gives the following system of equations:
\begin{equation}\begin{split}\label{FieldX}
	&\rd_u(r^2 \rd_v \phi) = \rd_v(r^2 \rd_u \phi)=- [ r|\nu|][r|\lambda|]X\phi  +\frac{ q_{0}i \Omega^{2}}{4}Q \phi
	-\frac{ m^{2}r^2\Omega^{2}}{4}\phi- i q_{0} A_{u}\phi [r\lambda]-i q_0 A_{u}r^2\partial_{v}\phi.
	\end{split}\end{equation}
	
	The logic of the proof in this section is inspired by the treatment of spacelike singularities with non-degenerating Kasner exponents provided in \cite{DejanAn,Warren1}, especially the work \cite{Warren1} which introduced the vector field $X$ and related commutator estimates. In the present problem, however, the degeneration of Kasner exponents additionally requires tracking temporal weights in $v$, and thus, $u$ and $v$ are not interchangeable. In this section, we will assume \eqref{hyp1}, \eqref{hyp2} and \eqref{hyp4}-\eqref{hyp3} are satisfied. The main results  can be stated as follows:
	
	\begin{prop} Assume \eqref{hyp1}, \eqref{hyp2} and \eqref{hyp4}-\eqref{hyp3} hold. Then
\begin{equation}
	|	\rd_v (r\lambda)|(u,v) \ls v^{-2s-1}.
\end{equation}
Moreover, for all $\ep>0$ small enough, and $(u,v)\in \{(u,v)\ r(u,v) \leq \ep r_0(v)\}$:
\begin{equation}
	r^2 |\rd_v \phi|(u,v) \geq D \cdot \ep\ v^{\frac{1}{2} -2s}.
\end{equation}

\end{prop}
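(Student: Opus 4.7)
The first estimate $|\rd_v(r\lambda)| \ls v^{-2s-1}$ has already been established piecewise: in the red-shift region Proposition~\ref{prop.red} gives \eqref{dv.lamba.S}, while Proposition~\ref{prop.crushing} gives \eqref{rlambdaS2}, and together they cover the whole domain to the future of $C_0$ (the seam $\{u=0\}$ being handled by Proposition~\ref{prop.highorder.decay}). The substance of the proposition is thus the scalar field lower bound, which propagates the bound of Proposition~\ref{lower.bound.prop1} from $\{r \geq \ep\, r_0(v)\}$ into $\{r \leq \ep\, r_0(v)\}$. At $u = u_\ep(v)$, where $r = \ep\, r_0(v) \approx \ep\, v^{\frac{1}{2}-s}$, Proposition~\ref{lower.bound.prop1} already yields $r\,\rd_v \phi \gtrsim v^{-s}$ and hence $r^2 \rd_v \phi \gtrsim \ep\, v^{\frac{1}{2}-2s}$; the goal is to show that the variation of $r^2 \rd_v \phi$ between $u_\ep(v)$ and any $u$ with $r(u,v) \leq \ep\, r_0(v)$ is of lower order, bounded by $\ep^2 \log^2\ep \cdot v^{\frac{1}{2}-2s}$.

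\textbf{AVTD estimate on $X\phi$.} The crucial intermediate step is to prove an AVTD-type bound
\begin{equation*}
|X\phi|(u,v) \ls v^{2s - \frac{1}{2}} \bigl(1 + \log^2(r_0(v)/r(u,v))\bigr),
\end{equation*}
where $X = (r|\lambda|)^{-1} \rd_v - (r|\nu|)^{-1} \rd_u$ satisfies $X(r) = 0$. I would obtain this by commuting the wave equation \eqref{FieldX} with $X$, exploiting $X(r) = 0$ to cancel the leading metric-coefficient contributions and reduce the remaining commutator/source terms to quantities controlled by the estimates of Sections~\ref{section.crushing}--\ref{section.faster} together with the metric bounds \eqref{Omega.S}, \eqref{A.S} and the upper bound \eqref{phi.est.S} on $\phi$; the double logarithm in $r_0/r$ reflects the logarithmic growth of $\phi$ in \eqref{phi.est.S}. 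Given this bound, \eqref{FieldX}, $r|\nu| \approx 1$, $r|\lambda| \approx v^{-2s}$, and the smallness of $A_u$ and $\Omega^2$ from \eqref{A.S}, \eqref{Omega.S} imply
\begin{equation*}
\bigl|\rd_u(r^2 \rd_v \phi)\bigr|(u,v) \ls v^{-\frac{1}{2}} \bigl(1 + \log^2(r_0/r)\bigr) + O\bigl(e^{K_- v/4}\bigr).
\end{equation*}

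\textbf{Conclusion and main obstacle.} Integrating this from $u_\ep(v)$ to $u$, and using $r^2 \approx 2(\uS - u)$ --- a consequence of \eqref{nu.S.eq3} under the gauge \eqref{gauge.C} --- to change variables to $y = r/r_0(v)$, one obtains
\begin{equation*}
\int_{u_\ep(v)}^{u} \bigl(1 + \log^2(r_0/r(u',v))\bigr)\, du' \ls r_0^2(v) \int_0^{\ep} y \bigl(1 + \log^2(y^{-1})\bigr)\, dy \ls \ep^2 \log^2\ep \cdot v^{1-2s},
\end{equation*}
hence $|r^2 \rd_v \phi(u,v) - r^2 \rd_v \phi(u_\ep(v), v)| \ls \ep^2 \log^2\ep \cdot v^{\frac{1}{2}-2s}$. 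Since $\ep^2 \log^2\ep \ll \ep$ for $\ep$ small enough, the lower bound at $u_\ep(v)$ persists (with half the constant) throughout $\{r \leq \ep\, r_0(v)\}$, yielding the proposition. The main obstacle is the AVTD estimate itself: it must carry simultaneously the sharp $v$-weight $v^{2s-\frac{1}{2}}$ and only a polylogarithmic loss in $r_0/r$, requiring a careful bootstrap on the $X$-commuted wave equation that tracks both $u$- and $v$-weights --- in contrast to the analogous argument in \cite{Warren1}, where only $r$-weights are tracked and the ingoing/outgoing directions are interchangeable.
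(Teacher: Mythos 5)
Your proposal is correct and follows the paper's own route: the first estimate is indeed just the union of \eqref{dv.lamba.S}, \eqref{rlambdaS2} and \eqref{dvv.r.est}, and the lower bound is obtained exactly as in Proposition~\ref{lower.prop} — lower bound at $u_\ep(v)$ from Proposition~\ref{lower.bound.prop1}, the $X$-commuted AVTD bound $|X\phi| \ls v^{2s-\frac{1}{2}}(1+\log^2(r_0/r))$ (the paper's \eqref{X.highr}, proved by commuting with $X$ and a Gr\"{o}nwall argument in $R = r/r_0(v)$ across Propositions~\ref{prop.commuted2}--\ref{prop.commuted3}), and integration of \eqref{FieldX} in $u$ giving a variation $\ls \ep^2\log^2\ep \cdot v^{\frac{1}{2}-2s} \ll \ep\, v^{\frac{1}{2}-2s}$. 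Your identification of the AVTD estimate, with its simultaneous $v$-weight and polylogarithmic $r$-loss, as the main obstacle matches the structure of the paper's argument.
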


We start with a lemma providing commutator estimates (see the analogous computation in \cite{Warren1}).

\begin{lemma}\label{lemma.com} Assume \eqref{hyp1}, \eqref{hyp2} and \eqref{hyp4} hold. Then the following estimates are satisfied: $[X,f(r)]=0$ and:

\begin{equation}\label{[du,X]}|	[\rd_u,X] F|(u,v) \ls  \exd r^{95}(u,v) \cdot \left[|\rd_u F|(u,v)+ |\rd_v F|(u,v) \right],\end{equation}
\begin{equation}\label{[dv,X]}		|	[\rd_v,X] F|(u,v) \ls  v^{2s-1} | \rd_v F|(u,v) +\exd r^{95}(u,v) \cdot  | \rd_u F|(u,v),	\end{equation}	\begin{equation}\label{[dvrdu,X]}		|	[\rd_v(r \rd_u),X] F|(u,v) \ls   v^{2s-1}|\rd_v( r \rd_u F )|+ \exd r^{95}(u,v)  [ |\rd_u( r \rd_v F )|+|\rd_{u u}^2 F|+ |\rd_{v v}^2 F|+|\rd_{u} F|+ |\rd_{v} F| ],	\end{equation} 
\begin{equation}\label{[durdv,X]}	
	|	[\rd_u(r \rd_v),X] F|(u,v) \ls   v^{2s-1}|\rd_u( r \rd_v F )|+ \exd r^{95}(u,v)  [ |\rd_v( r \rd_u F )|+|\rd_{u u}^2 F|+ |\rd_{v v}^2 F|+|\rd_{u} F|+ |\rd_{v} F| ].	\end{equation}  

%\begin{equation}\label{[du,Y]}|	[\rd_u,Y] F|(u,v) \ls  r^{\frac{\delta\ v}{2}}(u,v) \cdot \left[|\rd_u F|(u,v)+ |\rd_v F|(u,v) \right],\end{equation}
%\begin{equation}\label{[dv,Y]}		|	[\rd_v,Y] F|(u,v) \ls v^{2s-1} | \rd_v F|(u,v) + e^{-v}  | \rd_u F|(u,v),	\end{equation}	\begin{equation}\label{[dvrdu,Y]}		|	[\rd_v(r \rd_u),Y] F|(u,v) \ls.	\end{equation}
\end{lemma}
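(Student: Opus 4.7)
The identity $X(r)=0$ follows directly from the definition of $X$ in the trapped region where $\lambda=\rd_v r<0$, $\nu=\rd_u r<0$, so $Xr=\frac{\lambda}{r|\lambda|}-\frac{\nu}{r|\nu|}=-\tfrac{1}{r}+\tfrac{1}{r}=0$; then $[X,f(r)]F=f'(r)X(r)F=0$. For the first-order commutators, I would simply expand using Leibniz: the mixed second derivatives $\rd_u \rd_v F$ cancel, leaving
\begin{equation*}
[\rd_u,X]F = \rd_u\!\left(\tfrac{1}{r|\lambda|}\right)\rd_v F - \rd_u\!\left(\tfrac{1}{r|\nu|}\right)\rd_u F,\qquad [\rd_v,X]F = \rd_v\!\left(\tfrac{1}{r|\lambda|}\right)\rd_v F - \rd_v\!\left(\tfrac{1}{r|\nu|}\right)\rd_u F.
\end{equation*}
Using $r|\lambda|\approx v^{-2s}$ and $r|\nu|\approx 1$ from \eqref{rlambdaS} and \eqref{nu.S.eq3}, I need to estimate the $u$- and $v$-derivatives of $r|\lambda|$ and $r|\nu|$.

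The $u$-derivatives come from the Raychaudhuri-type equation \eqref{Radius3}: $\rd_u(r\lambda)=-\frac{\Omega^2}{4}(1-Q^2/r^2-m^2r^2|\phi|^2)$, which together with the bootstrap \eqref{Omega.boot} and \eqref{phi.boot} yields $|\rd_u(r\lambda)|\lesssim \exd\,r^{95}$; likewise $|\rd_u(r\nu)|\lesssim \exd\,r^{95}$ by \eqref{du.nu}. Dividing by $(r|\lambda|)^2\approx v^{-4s}$ only costs a polynomial $v^{4s}$, which is absorbed into $\exd$, yielding \eqref{[du,X]}. For $\rd_v$, however, $\rd_v(r\lambda)$ does \emph{not} decay exponentially: \eqref{dv.lamba} gives only $|\rd_v(r\lambda)|\lesssim v^{-2s-1}$, and division by $(r|\lambda|)^2\approx v^{-4s}$ produces the polynomial factor $v^{2s-1}$ in front of $|\rd_v F|$; on the other hand, $\rd_v(r\nu)$ comes from \eqref{Radius3} again and is exponentially small of order $\exd\,r^{95}$. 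Combining these gives \eqref{[dv,X]}.

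For the second-order commutators, I would first use the identity $\rd_v(r\rd_u) = r\rd_v\rd_u + \lambda\rd_u$ (and symmetrically $\rd_u(r\rd_v) = r\rd_u\rd_v + \nu\rd_v$), so that
\begin{equation*}
[\rd_v(r\rd_u),X]F = r\,[\rd_v\rd_u,X]F + \lambda\,[\rd_u,X]F + [\lambda,X]\rd_u F,
\end{equation*}
since $[r,X]=0$. Then expand $[\rd_v\rd_u,X]=\rd_v[\rd_u,X]+[\rd_v,X]\rd_u$. The first-order commutator bounds take care of the terms $\lambda[\rd_u,X]F$ and $r\rd_v([\rd_u,X]F)$ (using also the higher-order bounds \eqref{DvDvphi}, \eqref{du.nu}, and differentiated Raychaudhuri estimates), while the \emph{dominant} contribution is
\begin{equation*}
r\,[\rd_v,X]\rd_u F = r\,\rd_v\!\left(\tfrac{1}{r|\lambda|}\right)\rd_v\rd_u F - r\,\rd_v\!\left(\tfrac{1}{r|\nu|}\right)\rd_u^2 F,
\end{equation*}
whose first term, after rewriting $r\rd_v\rd_u F = \rd_v(r\rd_u F)-\lambda \rd_u F$, produces precisely $v^{2s-1}|\rd_v(r\rd_u F)|$ up to lower order terms that get absorbed into $\exd\,r^{95}[\,|\rd_u(r\rd_v F)|+|\rd_{uu}^2 F|+|\rd_{vv}^2F|+|\rd_u F|+|\rd_v F|\,]$. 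The bound $[\lambda,X]\rd_u F = -X(\lambda)\rd_u F$ is handled similarly by expanding $X(\lambda)=\frac{\rd_v\lambda}{r|\lambda|}-\frac{\rd_u\lambda}{r|\nu|}$ and applying the metric estimates from Proposition~\ref{main.prop.spacelike}. The estimate \eqref{[durdv,X]} follows by the symmetric computation.

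\textbf{Main obstacle.} The genuine work lies in bookkeeping for the second-order commutators: one must carefully track which error terms decay exponentially via $\exd\,r^{95}$ (absorbing polynomial weights in $v$ freely, since $K_-<0$) and which contribute only the polynomial weight $v^{2s-1}$. The hierarchy is dictated by whether a derivative lands on $r|\lambda|$ (polynomial) versus $r|\nu|$ or $\Omega^2$ (exponential); as long as this distinction is respected and the derived bounds from Proposition~\ref{main.prop.spacelike} (especially \eqref{dv.lamba}, \eqref{du.nu}, \eqref{DuOmega}, \eqref{DvOmega}) are applied in the correct places, the cascade closes cleanly.
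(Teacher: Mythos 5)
Your treatment of $X(r)=0$ and of the first-order commutators is correct and is exactly the paper's route: $[\rd_u,X]$ is controlled through \eqref{Radius3} and \eqref{du.nu} (the $u$-derivatives of $r\lambda$, $r\nu$ are exponentially small), while $[\rd_v,X]$ picks up the polynomial weight $v^{2s-1}$ only through $\rd_v(r\lambda)$ via \eqref{dv.lamba}, \eqref{lambda.eq}.

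For the second-order commutators, however, your proposed bookkeeping has a genuine gap. Writing $\rd_v(r\rd_u)=r\rd_v\rd_u+\lambda\rd_u$ forces you to split the good quantity $\rd_v(r\rd_u F)$ into $r\rd^2_{uv}F+\lambda\rd_u F$, and the two left-over terms you dismiss are \emph{not} exponentially small. Concretely, after the rewriting $r\rd^2_{uv}F=\rd_v(r\rd_u F)-\lambda\rd_u F$, the discarded piece is $\rd_v\bigl(\tfrac{1}{r|\lambda|}\bigr)\lambda\,\rd_u F$, of size $v^{2s-1}\cdot\tfrac{v^{-2s}}{r}|\rd_u F|\approx v^{-1}r^{-1}|\rd_u F|$; and the term $-X(\lambda)\rd_u F$ is of size $v^{-1}r^{-2}|\rd_u F|$, since $X(\lambda)=\tfrac{\rd_v(r\lambda)}{r^2|\lambda|}-\tfrac{\rd_u(r\lambda)}{r^2|\nu|}$ (note that even obtaining this expression requires an internal cancellation of the $|\lambda|/r^2$ pieces, which are of size $v^{-2s}r^{-3}$; your phrase ``applying the metric estimates'' glosses over this). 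Neither term can be absorbed into $\exd r^{95}[\cdots]$, and neither is dominated by $v^{2s-1}|\rd_v(r\rd_u F)|$ for a general $F$, so the claimed estimate does not follow as you argue. Your route can be repaired, but only by observing that these two dangerous terms cancel \emph{identically}: $\rd_v\bigl(\tfrac{1}{r|\lambda|}\bigr)(-\lambda)\rd_u F=\tfrac{\rd_v(r\lambda)}{r^2|\lambda|}\rd_u F$, which is exactly the polynomial part of $+X(\lambda)\rd_u F$ with opposite sign. The paper avoids this issue altogether by using the identity
\begin{equation*}
[\rd_v(r\rd_u),X]F=\rd_v\bigl(r\,[\rd_u,X]F\bigr)+[\rd_v,X]\bigl(r\rd_u F\bigr),
\end{equation*}
which keeps $\rd_v(r\rd_u F)$ intact (so the only polynomially weighted term is $v^{2s-1}|\rd_v(r\rd_u F)|$ from the first-order bound \eqref{[dv,X]}), with all remaining contributions carrying an exponentially small coefficient; \eqref{[durdv,X]} then follows symmetrically. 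Either supply the cancellation explicitly or switch to this identity.
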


\begin{proof}
\eqref{[du,X]} follows from \eqref{du.nu} and \eqref{Radius3}, and \eqref{[dv,X]} from \eqref{dv.lamba}, \eqref{lambda.eq}. %, after noticing that \begin{equation}\label{comm.intermediate}[\rd_v,X]= \rd_v(\frac{1}{r|\lambda|}) \rd_v+O(e^{\frac{K_-}{4}} r^{95}) \rd_u.		 \end{equation} 	
For \eqref{[dvrdu,X]}, we first note the identity $$ 	[\rd_v(r \rd_u),X] F = \rd_v \left( r [\rd_u,X] F \right) + [\rd_v,X] \left( r \rd_u F \right).
$$   \eqref{[durdv,X]}  follows from similar considerations using \eqref{[du,X]}, \eqref{[dv,X]} and the other above estimates (most importantly those from Proposition~\ref{prop.crushing} and the bootstrap assumption \eqref{Omega.boot}).

\end{proof}
Then we obtain the $X$-commuted wave equation, up to errors controlled in the following corollary.
\begin{cor} Assume \eqref{hyp1}, \eqref{hyp2} and \eqref{hyp4} hold. For all $0\leq u \leq u_{\mathcal{S}}(v)$, $v\geq v_0$ we have
\begin{equation}\label{proof.X.1}
	|	\rd_u ( r \rd_v X \phi ) + \lambda \rd_u (X \phi)| \ls  r^{-3}(u,v) v^{-\frac{1}{2}},
\end{equation}
\begin{equation}\label{proof.X.2}
	|	\rd_v ( r \rd_u X \phi ) + \nu \rd_v (X \phi)|  \ls r^{-3}(u,v) v^{-\frac{1}{2}}.% \ls r^{-2} |\lambda|(u,v)\ v^{2s -\frac{1}{2}}.
\end{equation} or equivalently, introducing the notations consistent with sections~\ref{section.redshift} and \ref{section.crushing} $\varTheta[f]= \frac{r \rd_v f}{r|\lambda|}$ and  $\xi[f]  = \frac{r D_u f}{r|\nu|}$: \begin{equation}\label{proof.X.1'} \bigl|\rd_u( \varTheta[X\phi]) +\frac{\nu}{r} \xi[X\phi] \bigr| \lesssim r^{-3}(u,v) v^{-\frac{1}{2}+2s} \approx |\nu|(u,v) r^{-2}(u,v) v^{-\frac{1}{2}+2s} ,
\end{equation}
\begin{equation}\label{proof.X.2'}\bigl|\rd_v( \xi[X\phi] )+\frac{\lambda}{r} \varTheta[X\phi]\bigr| \ls r^{-3}(u,v) v^{-\frac{1}{2}} \approx  |\lambda|(u,v) r^{-2}(u,v) v^{-\frac{1}{2}+2s}.
\end{equation} 
\end{cor}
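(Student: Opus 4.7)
The plan is to derive a commuted transport equation for $X\phi$ by applying the vector field $X$ to the Klein--Gordon equation \eqref{Field4} and controlling the resulting commutator errors via Lemma~\ref{lemma.com} together with the metric and scalar-field estimates already established in Propositions~\ref{prop.red} and \ref{prop.crushing}. Multiplying \eqref{Field4} by $r$ yields the compact form
\begin{equation*}
\rd_u(r\rd_v \phi) + \lambda\, \rd_u \phi = E_0, \qquad E_0 := \frac{iq_0 Q \Omega^2}{4r}\phi - \frac{m^2 r\Omega^2}{4}\phi - iq_0 A_u \lambda \phi - iq_0 A_u\, r\rd_v\phi,
\end{equation*}
together with its symmetric counterpart $\rd_v(r\rd_u \phi) + \nu\, \rd_v\phi = E_0$. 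Each summand of $E_0$ contains a factor of $\Omega^2$ or $A_u$, so by the bootstrap \eqref{Omega.boot} and the propagated estimate \eqref{A.S} every such term is $\lesssim \exd r^{90}$, i.e.\ exponentially damped.

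Next I would apply $X$ to both sides and invoke the Leibniz identity $X(\lambda \rd_u\phi) = X(\lambda)\rd_u\phi + \lambda X(\rd_u\phi)$ together with the two commutator rearrangements $X\rd_u F = \rd_u(XF) - [\rd_u,X]F$ and $X\bigl(\rd_u(r\rd_v F)\bigr) = \rd_u(r\rd_v XF) - [\rd_u(r\rd_v), X]F$. This produces
\begin{equation*}
\rd_u(r\rd_v X\phi) + \lambda\, \rd_u(X\phi) = X(E_0) + [\rd_u(r\rd_v), X]\phi - X(\lambda)\rd_u\phi + \lambda[\rd_u, X]\phi,
\end{equation*}
and it suffices to bound each right-hand side term by $r^{-3}(u,v) v^{-1/2}$.

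The term $X(E_0)$ inherits the exponential damping of $E_0$, since $X$-derivatives of $\Omega^2$, $A_u$, $\phi$, $Q$ are only polynomially worse by Propositions~\ref{prop.red}--\ref{prop.crushing}. For $[\rd_u(r\rd_v), X]\phi$ I would invoke \eqref{[durdv,X]}: the dominant contribution $v^{2s-1}|\rd_u(r\rd_v\phi)|$ is evaluated via the wave equation itself as $v^{2s-1}|\lambda||\rd_u\phi|$, and using $|\lambda|\approx v^{-2s}/r$ together with \eqref{Duphi.S.est}--\eqref{Duphi.S.est2} this is exactly of size $v^{-1/2}/r^3$; the remaining contributions in \eqref{[durdv,X]} are exponentially damped. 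For $X(\lambda)\rd_u\phi$ I would expand $X\lambda = \rd_v\lambda/(r|\lambda|) - \rd_u\lambda/(r|\nu|)$ and use the identities $r\rd_v\lambda = \rd_v(r\lambda) - \lambda^2$ and $r\rd_u\lambda = \rd_u(r\lambda) - \lambda\nu$ combined with \eqref{dv.lamba}, \eqref{du.nu}, \eqref{lambda.eq}, \eqref{nu.S.eq3} to produce the target bound. Finally $\lambda[\rd_u, X]\phi$ is exponentially small by \eqref{[du,X]}. This establishes \eqref{proof.X.1}; the symmetric argument starting from $\rd_v(r\rd_u \phi) + \nu \rd_v\phi = E_0$ and invoking \eqref{[dvrdu,X]} yields \eqref{proof.X.2}. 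Dividing \eqref{proof.X.1} by $r|\lambda|$ and \eqref{proof.X.2} by $r|\nu|$, and using $r|\lambda|\approx v^{-2s}$, $r|\nu|\approx 1$ from \eqref{rlambdaS} and \eqref{nu.S.eq3}, then delivers the equivalent formulations \eqref{proof.X.1'}--\eqref{proof.X.2'}.

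The main obstacle I anticipate is the bookkeeping for $X(\lambda)\rd_u\phi$: the naive bound $|X\lambda|\lesssim v^{-1}/r + v^{-2s}/r^3$ multiplied by $|\rd_u\phi|\lesssim v^{1/2}/r^2$ produces a term of size $v^{1/2-2s}/r^5$ which, in the regime $r \ll r_0(v)\approx v^{1/2-s}$ close to $\mathcal{S}$, is a priori worse than the target $v^{-1/2}/r^3$. Reconciling this requires exploiting the structural cancellation that makes $X$-derivatives of the metric quantities obey AVTD-type estimates — precisely the feature that motivates the introduction of $X$. Tracking both $r$- and $v$-weights simultaneously, in contrast to earlier treatments of non-degenerate Kasner singularities such as \cite{Warren1} where only $r$-weights matter, is what renders this step the most delicate.
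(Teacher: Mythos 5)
Your overall route is the paper's: apply $X$ to \eqref{Field2}--\eqref{Field3} (equivalently to \eqref{Field4} multiplied by $r$), control the commutators with Lemma~\ref{lemma.com}, absorb every term carrying a factor of $\Omega^2$ or $A_u$ into $\exd r^{95}$ via the bootstrap \eqref{Omega.boot} and \eqref{A.S}, and reduce matters to the intermediate inequality $\bigl(v^{2s-1}|\lambda|+|X\lambda|\bigr)\,|\rd_u\phi| \ls r^{-3}v^{-\frac{1}{2}}$, which is exactly the paper's reduction; the passage to \eqref{proof.X.1'}--\eqref{proof.X.2'} by dividing by $r|\lambda|\approx v^{-2s}$ and $r|\nu|\approx 1$ (modulo exponentially damped contributions from differentiating these factors, handled in the paper via \eqref{Radius}, \eqref{Omega.boot}, \eqref{lambda.eq}, \eqref{nu.S.eq3}) is also the same. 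However, the step you explicitly leave open --- the term $X(\lambda)\rd_u\phi$ --- is the crux, and as written your proposal does not close it: you first assert the target bound follows from \eqref{dv.lamba}, \eqref{du.nu}, \eqref{lambda.eq}, \eqref{nu.S.eq3}, and then concede that the naive bound $|X\lambda|\ls v^{-1}/r + v^{-2s}/r^3$ yields $v^{\frac{1}{2}-2s}/r^5$, which is genuinely worse than $v^{-\frac{1}{2}}/r^3$ once $r\ll r_0(v)$. Left in that state, the estimate does not follow, so this is a genuine gap.

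The missing observation is elementary and is in fact already contained in the expansion you propose, once the signs are kept: since $X(r)=0$, one has $X\lambda = r^{-1}X(r\lambda)$, i.e. $X\lambda = \frac{\rd_v(r\lambda)}{r^2|\lambda|}-\frac{\rd_u(r\lambda)}{r^2|\nu|}$. In your decomposition the two dangerous pieces are $-\lambda^2/(r^2|\lambda|)$ and $+\lambda\nu/(r^2|\nu|)$; in the trapped region ($\lambda<0$, $\nu<0$) both equal $|\lambda|/r^2$ up to sign and cancel identically, so the $v^{-2s}/r^3$ contribution never occurs. What remains is $|\rd_v(r\lambda)|\ls v^{-2s-1}$ (this is \eqref{dv.lamba}, and is where \eqref{hyp4} enters) divided by $r^2|\lambda|\approx r\,v^{-2s}$, plus $|\rd_u(r\lambda)|\ls \exd r^{95}$ from \eqref{Radius3} under \eqref{Omega.boot}, giving $|X\lambda|\ls v^{-1}/r$ and hence $|X\lambda|\,|\rd_u\phi|\ls v^{-\frac{1}{2}}/r^3$ by \eqref{Duphi}. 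This AVTD-type gain for $X\lambda$ is precisely what the paper's terse citation of \eqref{Duphi}, \eqref{lambda.eq}, \eqref{dv.lamba} encodes at this point, so your gap is one line away from being filled; the remainder of your argument (exponential damping of $X(E_0)$ and of the lower-order commutators, and the use of \eqref{[durdv,X]} together with the wave equation to turn $v^{2s-1}|\rd_u(r\rd_v\phi)|$ into $v^{2s-1}|\lambda||\rd_u\phi|$) matches the paper's proof.
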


\begin{proof} Applying  $X$ to \eqref{Field2} and using Lemma~\ref{lemma.com} % \eqref{[du,X]}, \eqref{[dv,X]}, \eqref{[dvrdu,X]}
combined with %\eqref{Duphi}, \eqref{DuDuphi}, \eqref{Dvphi}, \eqref{DvDvphi}
Proposition~\ref{prop.crushing}, we get 	$$|	\rd_u ( r \rd_v X \phi ) + \lambda \rd_u (X \phi)| \ls \left( v^{2s-1} |\lambda|+ |X( \lambda)|\right) |\rd_u \phi|+  \exd r^{95}(u,v),$$ and \eqref{proof.X.1} then follows from \eqref{Duphi} and \eqref{lambda.eq}, \eqref{dv.lamba}. \eqref{proof.X.2} is obtained similarly, applying $X$ to \eqref{Field3} and using  \eqref{proof.X.1'},  \eqref{proof.X.2'} follow from \eqref{Radius}, \eqref{Omega.boot}, \eqref{lambda.eq}, \eqref{nu.S.eq3}.
\end{proof}

Finally, we turn to $X$-commuted scalar field estimates. We start from the red-shift region $\mathcal{R}$.

\begin{prop}\label{prop.commuted2} Assume \eqref{hyp1}, \eqref{hyp2} and \eqref{hyp4} hold. For all $(u,v) \in \mathcal{R}$, we have		\begin{equation}\label{X.highr.R}			|X\phi|(u,v),\ r^2|\rd_u X\phi|(u,v),\ \frac{r^2|\rd_v X\phi|(u,v)}{r|\lambda|(u,v)}   \ls v^{2s-\frac{1}{2}} .		\end{equation}	\end{prop}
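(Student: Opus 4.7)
The strategy is to apply the same null transport argument as in Proposition~\ref{prop.red}, but now to the pair $(\varTheta[X\phi], \xi[X\phi])$ solving the $X$-commuted system \eqref{proof.X.1'}, \eqref{proof.X.2'}. The right-hand sides are of the form $O(|\nu|r^{-2}v^{2s-\frac{1}{2}})$ and $O(|\lambda|r^{-2}v^{2s-\frac{1}{2}})$, and when integrated against $du$ or $dv$ in $\mathcal{R}$ these contribute exactly $v^{2s-\frac{1}{2}}/r$, matching the target bound after noting that $r^2|\partial_u X\phi|/v^{2s-\frac{1}{2}} \approx r\cdot|\xi[X\phi]|/v^{2s-\frac{1}{2}}$ and analogously for the $v$-derivative (using $r|\nu|\approx 1$ in $\mathcal{R}$).

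First I would establish the initial data bounds on $C_0 = \{u=0\}$. Using $A_u(0,v) = 0$ from the gauge choice \eqref{gauge.AUCH}, one has
\[X\phi(0,v) = -\frac{D_v\phi}{r\lambda}(0,v) + \frac{D_u\phi}{r\nu}(0,v).\]
Combining the Cauchy horizon estimates $r|\lambda|(0,v) \approx v^{-2s}$, $r|\nu|(0,v)\approx 1$, $r|D_v\phi|(0,v) \lesssim v^{-s}$ and $r|D_u\phi|(0,v) \lesssim r_0(v)^{-s/(s-\frac{1}{2})} \approx v^s$ from Proposition~\ref{main.prop.CH} yields $|X\phi|(0,v) \lesssim v^{2s-\frac{1}{2}}$. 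Differentiating this identity and using in addition Proposition~\ref{prop.highorder.decay} (for $r|\partial_{vv}^2\phi|$ and $|\partial_v(r\lambda)|$), the propagation equations \eqref{Radius3} for $\partial_u(r\lambda)$ and $\partial_v(r\nu)$, and the wave equation \eqref{Field4} to trade $\partial_u \partial_v \phi$ for first-order quantities, one obtains the bounds $|\varTheta[X\phi]|(0,v),\ |\xi[X\phi]|(0,v) \lesssim v^{2s-\frac{1}{2}}/r_0(v) \approx v^{3s-1}$.

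Next I would set up the bootstrap $|\varTheta[X\phi]|,\ |\xi[X\phi]| \leq \Delta\cdot v^{2s-\frac{1}{2}}/r$ throughout $\mathcal{R}$ for some $\Delta > 1$, and integrate \eqref{proof.X.1'} from $u=0$ and \eqref{proof.X.2'} using a supremum argument in $R := r(u,v)/r_0(v) \in [1-\delta,1]$, as in the proofs of Proposition~\ref{prop.red} and Proposition~\ref{prop.crushing}. The error contributions satisfy
\[\int_0^u |\nu| r^{-2}(u',v)\,du' = \frac{1}{r(u,v)} - \frac{1}{r_0(v)} \lesssim \frac{1}{r(u,v)},\]
so $v^{2s-\frac{1}{2}}$ times this is $\lesssim v^{2s-\frac{1}{2}}/r$, precisely matching the target; the main transport terms $\frac{|\nu|}{r}\xi[X\phi]$ and $\frac{|\lambda|}{r}\varTheta[X\phi]$ contribute $\Delta\log\frac{1}{1-\delta}\cdot v^{2s-\frac{1}{2}}/r$ after the change of variables $du = -dr/|\nu|$ and the use of $r \approx r_0$ in $\mathcal{R}$. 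Choosing $\delta$ small enough so that the logarithmic factor is strictly less than $\frac{1}{2}$ and $\Delta$ large enough that the initial data and error contributions fit into $\Delta/2$, the bootstrap closes and the desired estimate \eqref{X.highr.R} follows.

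The main obstacle, in my view, is Step 1: the careful expansion of $\partial_v X\phi$ and $\partial_u X\phi$ into second-order derivatives of $\phi$ and first derivatives of $r\lambda$, $r\nu$ evaluated on $C_0$, then bounding each constituent by the Cauchy horizon estimates to produce the $v^{3s-1}$ initial data bound. All the individual ingredients are available from Section~\ref{section.CH}, but the bookkeeping is delicate because one must track the $v$-weights sharply (with no room for logarithmic losses) in order for Step 2 to produce precisely $v^{2s-\frac{1}{2}}/r$ and not a worse growth.
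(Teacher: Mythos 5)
Your route is the paper's: a bootstrap on the $X$-commuted transport system \eqref{proof.X.1}--\eqref{proof.X.2} (equivalently \eqref{proof.X.1'}--\eqref{proof.X.2'}) in the red-shift region, with the data on $C_0$ supplied by the higher-order Cauchy-horizon estimates of Proposition~\ref{prop.highorder.decay} (this is where \eqref{hyp4} enters) and the smallness supplied by the thinness parameter $\delta$. Your $C_0$ computation for $X\phi$ and for $\varTheta[X\phi]$ matches the paper's verification of its bootstrap \eqref{B.X} at $u=0$.

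One step fails as written, though: you claim that \emph{both} transport terms acquire the factor $\log\frac{1}{1-\delta}$. That is true only for the $u$-integration, where $\int_0^u\frac{|\nu|}{r}\,du'=\log\frac{r_0(v)}{r(u,v)}\leq\log\frac{1}{1-\delta}$. In the $v$-direction the integration runs over the unbounded range $[v_0,v]$, and with the bootstrap bound $|\varTheta[X\phi]|\leq\Delta\,(v')^{2s-\frac{1}{2}}/r$ one gets $\int_{v_0}^{v}\frac{|\lambda|}{r}\,|\varTheta[X\phi]|\,dv'\lesssim\Delta\, v^{3s-1}\approx\Delta\, v^{2s-\frac{1}{2}}/r$ with an $O(1)$ constant, not an $O(\delta)$ one; moreover the change of variables $dv\mapsto dR$ with $R=r/r_0(v)$, as used in Proposition~\ref{prop.crushing}, degenerates here since $1-R^2$ vanishes as $R\to1$ (indeed $R\equiv1$ on $C_0$). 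Consequently a symmetric bootstrap on $\varTheta[X\phi]$ and $\xi[X\phi]$ with the same constant $\Delta$ does not close: the $v$-integration returns $\xi[X\phi]\lesssim(1+\Delta)\,v^{2s-\frac{1}{2}}/r$, which does not improve the $\xi$ half. The paper closes asymmetrically, and so should you: bootstrap only $r^2|\rd_v X\phi|\leq\Delta v^{-\frac{1}{2}}$ (i.e.\ \eqref{B.X}), integrate \eqref{proof.X.2} in $v$ to obtain $|r\rd_u X\phi|\leq C(1+\Delta)v^{2s-\frac{1}{2}}r^{-1}$ unconditionally with $C$ independent of $\Delta$, then feed this into \eqref{proof.X.1} and integrate in $u$, where the $\delta$-smallness appears, to recover the single bootstrap; the $\xi$ and $X\phi$ bounds then follow a posteriori. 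A minor further point: your claimed $C_0$ bound on $\xi[X\phi](0,v)$ would need $\rd^2_{uu}\phi$ at $u=0$, which is not provided in Section~\ref{section.CH} (it appears only as \eqref{DuDuphi.S.est} inside Proposition~\ref{prop.red}); but you never actually need it --- only $\varTheta[X\phi]$ (equivalently $r^2\rd_v X\phi$) on $C_0$, via \eqref{Dvdvphi.est}, \eqref{dvv.r.est} and the wave equation, together with bounded data on $\{v=v_0\}$, enter the argument.
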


\begin{proof}		Introduce the following bootstrap assumption, for a large constant $\Delta>0$ to be determined later. \begin{equation}\label{B.X}			r^2|\rd_v X\phi|(u,v)  \leq \Delta v^{-\frac{1}{2}}.		\end{equation} Note that by \eqref{Dvdvphi.est}, \eqref{lambda.est2} and \eqref{Field2}, \eqref{B.X} is satisfied at $u=0$. Then, using \eqref{B.X} in \eqref{proof.X.2} and integrating in $v$, we get, for some constant $C>0$ independent of $\Delta$: $$ |r\rd_u X \phi|(u,v) \leq  |r\rd_u X \phi|(u,v_0)+  C (1+\Delta) v^{2s-\frac{1}{2}} r^{-1}(u,v) \leq 2C (1+\Delta) v^{2s-\frac{1}{2}} r^{-1}(u,v).$$		Now using the above in \eqref{proof.X.1}, we obtain (also using \eqref{nu.S.eq3}), for some constant $C'>0$ independent of $\Delta$ $$| \rd_u ( r \rd_v X \phi ) | \leq C' (1+\Delta) r^{-2} |\nu|(u,v) v^{-\frac{1}{2}}.$$ which upon integrating in $u$ gives$$ | r \rd_v X \phi|(u,v) \leq  |r \rd_v X \phi|(0,v)+  2C' (1+\Delta)v^{-1/2} \delta\ r_0^{-1}(v),$$ hence  for some constant $C''>0$ independent of $\Delta$ $$ | r^2 \rd_v X \phi|(u,v) \leq   r_0(v)|r \rd_v X \phi|(0,v)+  C'' (1+\Delta)v^{-1/2} \delta ,$$ which retrieves bootstrap \eqref{B.X} for $\delta>0$ small enough and also shows the part of \eqref{X.highr.R} concerning $r^2|\rd_u X\phi|(u,v)$ and $r^2|\rd_v X\phi|(u,v)$ (also using \eqref{lambda.eq}). Then, integrating in $u$ the estimate for  $r^2|\rd_u X\phi|(u,v)$ (also using \eqref{nu.S.eq3}) gives the  part of \eqref{X.highr.R} concerning $X\phi$, also using \eqref{Dvphi.est}, \eqref{duphi.est} at $u=0$.	\end{proof}

Now, we continue with the crushing region $\mathcal{C}$ on which $ 0 \leq r\leq(1-\delta) r_0(v) $.
\begin{prop}\label{prop.commuted3}  \eqref{hyp1}, \eqref{hyp2} and \eqref{hyp4}. For any $ u_{\mathcal{C}}(v) \leq u \leq u_{\mathcal{S}}(v)$, $v\geq v_0$, we have:
\begin{equation}\label{X.highr}
	|X\phi|(u,v)   \ls v^{2s-\frac{1}{2}} (1+\log^2(\frac{r_0(v)}{r(u,v)})),
\end{equation}		\begin{equation}\label{X.highr2}
	r^2|\rd_u X\phi|(u,v),\ \frac{r^2|\rd_v X\phi|(u,v)}{r|\lambda|(u,v)}   \ls v^{2s-\frac{1}{2}} (1+\log(\frac{r_0(v)}{r(u,v)})),
\end{equation}
\begin{equation}\label{X.higherorder}
	|\rd_u^2 X\phi|(u,v)  \ls r^{-2 }v^{2s-\frac{1}{2}} (1+\log(\frac{r_0(v)}{r(u,v)})).
\end{equation}
\end{prop}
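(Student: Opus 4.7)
The plan is to adapt the Gr\"{o}nwall-type argument of Proposition~\ref{prop.crushing} to the $X$-commuted wave system \eqref{proof.X.1'}--\eqref{proof.X.2'}, using the boundary data provided by Proposition~\ref{prop.commuted2} at $R = r/r_0 = 1-\delta$. From the estimates $r^2|\rd_u X\phi|, r^2|\rd_v X\phi|/(r|\lambda|) \lesssim v^{2s-1/2}$ on $\Sigma_{1-\delta}$, one reads off $|\varTheta[X\phi]|, |\xi[X\phi]| \lesssim v^{2s-1/2}/r_0(v)$ at the boundary; the forcing of order $O(|\nu|r^{-2} v^{2s-1/2})$ and $O(|\lambda|r^{-2} v^{2s-1/2})$ arises from the commutator estimates of Lemma~\ref{lemma.com} combined with the sharp scalar-field bounds \eqref{Dvphi.S.est2}--\eqref{Duphi.S.est2} of Proposition~\ref{prop.crushing}.

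I would integrate \eqref{proof.X.1'} in $u$ at fixed $v$, using the change of variables $(|\nu|/r)\, du \approx -dR/R$ from $R=1-\delta$ (where $u=u_{\mathcal{R}}(v)$) to $R(u,v)$, and symmetrically integrate \eqref{proof.X.2'} in $v$ at fixed $u$, using $(|\lambda|/r)\, dv \approx -dR/(R(1-R^2))$ as in the proof of Proposition~\ref{prop.crushing}. The inhomogeneous terms integrate to contributions of order $v^{2s-1/2}/r(u,v)$, while the coupling produces logarithmic-in-$R$ factors. Setting $\tilde\Phi(R) = \sup_{\Sigma_R} \max\{r|\varTheta[X\phi]|, r|\xi[X\phi]|\}/v^{2s-1/2}$, one obtains an integral inequality of Gr\"{o}nwall type in $R$, and the critical computation $R\int_R^{1-\delta} \log(1/R')/R'^2\, dR' \approx \log(1/R)$ (via integration by parts) shows that the ansatz $\tilde\Phi(R) \lesssim 1 + \log(1/R)$ is self-consistent. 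This yields \eqref{X.highr2}.

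The bound \eqref{X.highr} for $X\phi$ itself then follows by integrating $\rd_u X\phi$ in $u$ from $u_{\mathcal{R}}(v)$, starting from the initial value $|X\phi|(u_{\mathcal{R}}(v),v) \lesssim v^{2s-1/2}$ given by Proposition~\ref{prop.commuted2}. With the change of variable $du \approx -r\, dr$, the integrand becomes $r^{-1}(1+\log(r_0/r))\, dr$, whose antiderivative picks up an additional power of the logarithm, producing the $\log^2(r_0/r)$ factor. Finally, \eqref{X.higherorder} is obtained by commuting \eqref{proof.X.2} once more with $\rd_u$, deriving a transport equation for $\rd_u^2 X\phi$ whose forcing is controlled by \eqref{X.highr2}, \eqref{du.nu.S2}, \eqref{DuOmega.S.est2}, and Proposition~\ref{prop.crushing}.

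The main obstacle will be the near-criticality of the Gr\"{o}nwall argument: the forcing generated by commuting $X$ into \eqref{Field2}--\eqref{Field3} is of order $v^{2s-1/2}/r$ in a normalized sense, which barely permits logarithmic rather than polynomial growth of $\tilde\Phi$ in $1/R$. Closing the argument at the logarithmic level requires tracking the exact multiplicative constants in the coupling terms and exploiting both the smallness of $(1-R^2)^{-1}$ for $R\leq 1-\delta$ and the asymmetry $|\lambda|/|\nu|\sim v^{-2s}$, which distinguishes the $u$ and $v$ directions (in contrast to the symmetric treatment in \cite{Warren1} for non-degenerate Kasner). Simultaneously propagating the $v$-weights and the $R$-dependent logarithmic weights is the delicate balance of the proof.
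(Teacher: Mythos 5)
Your proposal follows essentially the same route as the paper: the paper also integrates the $X$-commuted system \eqref{proof.X.1'}--\eqref{proof.X.2'} with boundary data on the red-shift boundary supplied by Proposition~\ref{prop.commuted2}, runs the Gr\"{o}nwall-in-$R$ argument of Proposition~\ref{prop.crushing} with the forcing $\lesssim R^{-1}v^{3s-1}$ to get the extra logarithm in \eqref{X.highr2}, and then integrates in $u$ to obtain the $\log^2$ bound \eqref{X.highr}. Your renormalization of the sup quantity and the ``self-consistent ansatz'' phrasing are cosmetic variants of the paper's direct Gr\"{o}nwall estimate, and your brief sketch of \eqref{X.higherorder} is at the same level of detail as the paper, which likewise does not spell that part out within this proposition.
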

\begin{proof} We integrate \eqref{proof.X.1'} in $u$ making use of \eqref{X.highr.R} to control the boundary terms on the $\{(\uR(v),v)\}$ curve.
\begin{equation}\label{I.1}
	\bigl| \varTheta[X\phi](u,v)- \underbrace{\varTheta[X\phi](\uR(v),v)}_{\lesssim v^{-1+3s}}\bigr| \lesssim r^{-1}(u,v)v^{-\frac{1}{2}+2s}+ \int_{\uR(v)}^{u} |\nu|r^{-1} |\xi(X\phi)|(u',v) du'.
\end{equation} 
Similarly, we integrate \eqref{proof.X.2'} in $v$ and get \begin{equation}\label{I.2}
	\bigl| \xi[X\phi](u,v)- \underbrace{ \xi[X\phi](u,v_{\mathcal{R}}(u))}_{\lesssim \delta^{\frac{s-1}{2s-1}} u^{- \frac{s-1}{2s-1}} \lesssim v^{3s-1}}\bigr| \lesssim  r^{-1}(u,v)v^{-\frac{1}{2}+2s}+ \int_{v_{\mathcal{R}}(u)}^{v}|\lambda| r^{-1} |\varTheta[X\phi]|(u,v') dv'
\end{equation}

Now we proceed as in the proof of Proposition~\ref{prop.crushing}: let $(U,V)$ such that  $ \uR(V) \leq U \leq u_{\mathcal{S}}(V)$ and recall the definition of the coordinate $R(u,v) = \frac{r(u,v)}{r_0(v)}$, and $\Sigma_{R_0} =\{ (u,v) \in J^{-}(U,V),\ \frac{r(u,v)}{r_0(v)}= R_0\}$.   We introduce the following notation  $$\Phi[f](R)= \sup_{(u,v) \in \Sigma_R}\{ |\xi[f]|(u,v),\ |\varTheta[f]|(u,v)\}$$ as a generalization of the ones introduced in the proof of Proposition~\ref{prop.crushing}. Analogously, we obtain \begin{equation*}
	\Phi[X\phi](R) \leq C R^{-1} V^{-1+3s}+ \int^{(1-\delta)}_{R} \frac{\Phi[X\phi](R')}{ R' [1-(1+e^{\frac{K_-}{4} V}) (R')^2]} dR',
\end{equation*} where we have used the fact that $v^{-1+3s} \lesssim r^{-1}(u,v)v^{-\frac{1}{2}+2s}$ in this region. Then by Gr\"{o}nwall's inequality, we obtain, proceeding as in the proof of Proposition~\ref{prop.crushing}:

\begin{equation*}
	R\Phi[X\phi](R) \leq  C V^{-1+3s} \left[1+ \log(\frac{1-\delta}{R}) \right] (1-(1+e^{\frac{K_-}{4} V}) R^2)^{\frac{1}{2}}\ls \log( R^{-1}) V^{-1+3s},
\end{equation*} from which we get \eqref{X.highr2}. Then \eqref{X.highr} follows from integrating  \eqref{X.highr2} in $u$.
\end{proof}

Now we turn to the actual point-wise lower bound, which is obtained as a combination of  Proposition~\ref{lower.bound.prop1} and the commuted estimates of Proposition~\ref{prop.commuted3} in the region $\{(u,v),\ r(u,v) \leq \ep r_0(v)\}$.
\begin{prop}\label{lower.prop}
We assume \eqref{hyp1}, \eqref{hyp2}, \eqref{hyp4}-\eqref{hyp3} hold. Then, there exists $\ep>0$ sufficiently small such that for all $ u_{\ep}(v)\leq u\leq \uS $, $v\geq v_0$ the following point-wise lower bounds hold:
\begin{equation}\label{final}
	\frac{ r^2  \bigl|\rd_v \phi\bigr|}{r|\lambda|}(u,v) \gtrsim \ep \cdot v^{\frac{1}{2}},
\end{equation}
\begin{equation}\label{finalU}
	\frac{ r^2  \bigl|\rd_u \phi\bigr|}{r|\nu|}(u,v) \gtrsim \ep \cdot v^{\frac{1}{2}}.
\end{equation}
\end{prop}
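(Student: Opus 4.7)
The plan is to first establish \eqref{final} by propagating the seed lower bound on $\Re\theta$ from Proposition~\ref{lower.bound.prop1} forward from the curve $\{u = u_\ep(v)\}$ into the region $\{r \leq \ep r_0(v)\}$, using the wave equation \eqref{FieldX} together with the $X$-commuted estimate of Proposition~\ref{prop.commuted3}; then \eqref{finalU} will follow algebraically from \eqref{final} via the definition of $X$. At the seed curve $u = u_\ep(v)$, Proposition~\ref{lower.bound.prop1} gives $|\theta|(u_\ep(v),v) \geq |\Re\theta|(u_\ep(v),v) \geq (D_L/8) v^{-s}$. Combined with $r(u_\ep(v),v) = \ep r_0(v) \approx \ep v^{1/2-s}$ and $r|\lambda| \approx v^{-2s}$ in $\mathcal{C}$ (from Proposition~\ref{prop.crushing}, see \eqref{rlambdaS}), this translates to $\frac{r^2|\rd_v\phi|}{r|\lambda|}(u_\ep(v),v) \gtrsim \ep v^{1/2}$, establishing \eqref{final} on the seed curve.

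To propagate \eqref{final} across $\{r \leq \ep r_0(v)\}$, I apply \eqref{FieldX}, combined with the bound $|X\phi| \lesssim v^{2s-1/2}(1+\log^2(r_0/r))$ from Proposition~\ref{prop.commuted3}, the relations $[r|\nu|][r|\lambda|] \approx v^{-2s}$, and the exponential smallness of $\Omega^2, A_u$ from \eqref{Omega.boot}, yielding $|\rd_u(r^2\rd_v\phi)|(u,v) \lesssim v^{-1/2}(1+\log^2(r_0/r(u,v))) + O(\exd)$. Integrating in $u$ from $u_\ep(v)$ using $du' \approx -r' dr'$ (from $r|\nu| \approx 1$) and the change of variable $R = r/r_0(v)$ gives
\begin{equation*}
\bigl| r^2\rd_v\phi(u,v) - r^2\rd_v\phi(u_\ep(v),v) \bigr| \lesssim v^{-1/2} r_0^2(v) \int_0^\ep R(1+\log^2(R^{-1}))\, dR \lesssim \ep^2 \log^2(\ep^{-1})\, v^{1/2-2s}.
\end{equation*}
For $\ep$ small enough that $\ep \log^2(\ep^{-1}) \ll 1$, combining with the seed lower bound yields $r^2|\rd_v\phi|(u,v) \gtrsim \ep\, v^{1/2-2s}$; dividing by $r|\lambda| \approx v^{-2s}$ gives \eqref{final}.

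For \eqref{finalU}, I use the defining identity $X\phi = \rd_v\phi/(r|\lambda|) - \rd_u\phi/(r|\nu|)$ and the reverse triangle inequality to get
\begin{equation*}
\frac{r^2|\rd_u\phi|}{r|\nu|}(u,v) \geq \frac{r^2|\rd_v\phi|}{r|\lambda|}(u,v) - r^2|X\phi|(u,v) \gtrsim \ep v^{1/2} - C\, R^2(1+\log^2(R^{-1}))\, v^{1/2},
\end{equation*}
where $R = r/r_0(v) \leq \ep$. An elementary computation shows that $R \mapsto R^2(1+\log^2(R^{-1}))$ is strictly increasing on $(0,1)$ (its derivative equals $2R[1+\log^2(R^{-1})-\log(R^{-1})]$, which is positive since the quadratic in $\log(R^{-1})$ has no real roots), so on $R \in (0,\ep]$ the error term is bounded by $C \ep^2(1+\log^2(\ep^{-1}))\, v^{1/2} \ll \ep v^{1/2}$ for $\ep$ small, proving \eqref{finalU}.

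The principal obstacle is the tight $\ep$-arithmetic linking the propagation gain to the $X\phi$ logarithmic losses: the lower bound lives at scale $\ep$ on the seed curve and must survive integration across a region where $X\phi$ grows logarithmically as $r \to 0$. The saving geometric feature, inherited from the commutation scheme behind Proposition~\ref{prop.commuted3}, is that $X$ is tangent to the level sets of $r$, so the $X\phi$ bound only carries logarithmic weights in $r$ rather than negative powers; this produces the gain $\ep^2 \log^2(\ep^{-1})$ (rather than merely $\ep$) upon integration in $u$, just small enough to be absorbed by the $\ep$-size seed.
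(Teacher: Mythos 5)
Your proof is correct and follows essentially the same route as the paper's: seed the lower bound at $u=u_{\ep}(v)$ from Proposition~\ref{lower.bound.prop1}, propagate $r^2\rd_v\phi$ via \eqref{FieldX} using the $X\phi$ bound \eqref{X.highr} to gain the factor $\ep^2\log^2(\ep)$, absorb it into the $\ep$-size seed, and deduce \eqref{finalU} from \eqref{final} via the identity $r^2X\phi=\frac{r^2\rd_v\phi}{r|\lambda|}-\frac{r^2\rd_u\phi}{r|\nu|}$ with the same smallness. Your explicit change of variables $R=r/r_0(v)$ and the monotonicity check of $R^2(1+\log^2(R^{-1}))$ merely spell out steps the paper asserts directly.
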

\begin{proof}
By Proposition~\ref{lower.bound.prop1}, we have for $D= \frac{D_L}{8}$, $v\geq v_0$:
\begin{equation}\label{dyadic}
	|r\rd_v \phi|(u_{\ep}(v),v) \geq  D v^{-s}.
\end{equation}

%\begin{equation} r^2|LX\phi|(u,v) \lesssim v^{-1/2}\log^2(\ep) + v^{-1/2} |\log|(\frac{\ep r_0(v)}{r})\end{equation}

Note that by \eqref{FieldX} and the estimates \eqref{Omega.boot}, \eqref{Q.est.S}, \eqref{phi.est.S}, \eqref{Dvphi}, \eqref{Duphi}, \eqref{lambda.eq}, \eqref{nu.S.eq3}, \eqref{A.S} we obtain: \begin{equation}\begin{split}\label{X.eq}
		&\bigl|\rd_u(r^2 \rd_v \phi) +[ r|\nu|][r|\lambda|]X\phi \bigr|\ls \exd r^{90}(u,v),\\ & \bigl|\rd_v(r^2 \rd_u \phi) +[ r|\nu|][r|\lambda|]X\phi \bigr| \ls \exd r^{90}(u,v).
	\end{split}
\end{equation}
Integrating \eqref{X.eq} in $u$, using also \eqref{X.highr} gives for $u\geq u_{\ep}(v)$ \begin{equation}\label{diff}
	\bigl|r^2 \rd_v \phi(u,v)- r^2 \rd_v \phi(u_{\ep}(v),v) \bigr| \lesssim \ep^2 \log^2(\ep)\cdot v^{\frac{1}{2}-2s}.
\end{equation}

Therefore, by \eqref{dyadic} and choosing $\ep>0$ small enough, we have \begin{equation}\label{lower.dvphi}
	\bigl|r^2 \rd_v \phi|(u,v) \geq \frac{D}{2} \ep v^{\frac{1}{2}-2s},
\end{equation} thus \eqref{final} holds. Now, note that   \eqref{X.highr} directly multiplied by $r^2$ and for any $u\geq u_{\ep}(v)$ gives \begin{equation}\label{dudv}
	\bigl|\frac{ r^2 \rd_v \phi}{r|\lambda|}(u,v) - r^2 \rd_u \phi(u,v) \bigr| \lesssim \ep^2 \log^2(\ep)\cdot v^{\frac{1}{2}},
\end{equation}from which we deduce \eqref{finalU}.
\end{proof}

\emph{Closing the remaining bootstrap assumption \eqref{omega.est.proof} and concluding the proof of Proposition~\ref{main.prop.spacelike}}. We first note that for $0 \leq u \leq u_{\frac{\ep}{10}}(v)$, we can simply use the monotonicity from \eqref{RaychU} and obtain, using \eqref{nu.S.eq3} \begin{equation}\label{omega.trivial}
\Omega^2(u,v) \leq |\nu|(u,v)  \frac{\Omega^2(u_0,v)}{|\nu|(u_0,v)} \lesssim r^{-1}(u,v) e^{1.9 K_- v}\ls e^{1.8 K_- v} r^{1000}(u,v),
\end{equation}  for $v_0$ large enough, which improves \eqref{omega.est.proof}. The delicate region is that where 	$ u \geq u_{\frac{\ep}{10}}(v)$. We will look at the larger region  $u \geq u_{\ep}(v)$ for now: we turn to \eqref{RaychV} which we write as \begin{equation*}
\rd_v ( \log[\frac{|\lambda|}{\Omega^2}])= \frac{r |\rd_v \phi|^2}{|\lambda|}
\end{equation*} and apply \eqref{final} together with \eqref{lambda.eq} to get that for some $D>0$ $$ 	\rd_v ( \log[\frac{|\lambda|}{\Omega^2}]) \gtrsim r^{-2} \ep^2 v^{1-2s} \geq D \ep^2 r^{-1} |\lambda| v. $$ Since $\ep>0$ can be fixed, we will ignore the $\ep$-dependence in what follows and conclude from integrating the above in $v$ that there exists $D'>0$ such that for all $u\geq u_{\ep}(v)$, $v\geq v_0$: \begin{equation}\label{omega.last}
\Omega^2(u,v) \leq \frac{\Omega^2(u,v_\ep(u))}{|\lambda|(u,v_{\ep}(u))} |\lambda|(u,v) [\frac{r(u,v)}{\ep r_0(v)}]^{ D' v} \ls v^{-s - \frac{1}{2}} [\frac{r(u,v)}{\ep r_0(v)}]^{ D' v-1} \ls [\frac{r(u,v)}{\ep r_0(v)}]^{ \frac{D' v}{2}},
\end{equation}  where we have used \eqref{omega.trivial} to control $\frac{\Omega^2(u,v_\ep(u))}{|\lambda|(u,v_{\ep}(u))}$. By monotonicity, we still have $\Omega^2(u,v) \ls r^{-1}(u,v) e^{1.9 K_- v}$ , so interpolating between that and \eqref{omega.last} gives $$ \Omega^2(u,v ) \ls e^{\frac{4K_-}{5} v} [\frac{r(u,v)}{\ep r_0(v)}]^{ \frac{D' v}{4} -\frac{1}{2}}\ls e^{\frac{3K_-}{4}v} r^{1000}(u,v),  $$ as long as  $r\leq \frac{\ep}{10} r_0(v)$, which  improves on  \eqref{omega.est.proof} for $u\geq u_{\frac{\ep}{10}}$. Therefore the part of \eqref{Omega.boot}  regarding $\Omega^2$ is retrieved. Integrating \eqref{Maxwell} in $v$ to estimate $A_u$ completes the improvement of \eqref{Omega.boot}. Also note that, treating the estimate a bit more carefully and taking $\ep>0$ small also gives \eqref{Omega.S} and \eqref{dvlogOmega}. \eqref{dulogOmega} follows analogously invoking \eqref{RaychU} and the estimates \eqref{du.nu} and \eqref{finalU}. Thus, the proof of Proposition~\ref{main.prop.spacelike} is concluded. 

\section{Applications to refined Kasner asymptotics}\label{kasner.section}

In this section, we provide more refined estimates near the Kasner singularity.
These estimates follow essentially immediately from Proposition~\ref{main.prop.spacelike}, but offer an interesting characterization of $\mathcal{S}$.

\subsection{Higher order geometric estimates and other refined estimates}

We start proving higher order commuted estimates that will be useful in the next Section~\ref{kasner.section}.
\begin{prop}\label{prop.commuted} The following estimates hold true for all $0\leq u \leq \uS$, $v\geq v_0$.
%\begin{equation}\label{dvdvlambda}	\bigl|\rd^2_{v} (r\lambda)(u,v)- \rd^2_{v v} (r\lambda)(u_0,v)\bigr| \lesssim e^{1.9 K_- v},\end{equation}
\begin{equation}\label{dudunu}
	\bigl|\rd^2_{u} (r\nu)\bigr|(u,v) \lesssim e^{1.9 K_- v} r^{D v}(u,v),
\end{equation}\begin{equation}\label{dudulogomega}
	\bigl|\rd^2_{u} \log(\Omega^2)\bigr|(u,v) \lesssim  r^{-4}(u,v) \cdot v,
\end{equation}
%\begin{equation}\label{dvdvlogomega}	\bigl|\rd^2_{v} \log(\Omega^2)\bigr|(u,v) \lesssim  r^{-4}(u,v) v^{1-4s},\end{equation}
\begin{equation}\label{duduA.S}	\bigl| \rd_{u}^2 A_u(u,v) \bigr| \lesssim \exd r^{95}(u,v) ,\end{equation}
\begin{equation}\label{DuDuDuphi.S.est} |  D_{u}^3 \phi|(u,v)  \ls  \frac{v^{\frac{1}{2}}}{r^6(u,v)},
\end{equation}\begin{equation}\label{DvDvDvphi.S.est} |  \rd_{v}^3 \phi|(u,v)  \ls  \frac{v^{\frac{1}{2}-6s}}{r^6(u,v)},
\end{equation}
\begin{equation}\label{DvDuDvphi.S.est} |  D_{u}^2 \rd_v \phi|(u,v),\  \frac{|  D_{u } \rd_v^{2} \phi|(u,v)}{r|\lambda|(u,v)}   \ls  \frac{v^{\frac{1}{2}-2s}}{r^6(u,v)}.
\end{equation}
\end{prop}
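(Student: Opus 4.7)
The plan is to derive each of the six estimates by differentiating one more time the appropriate equation from Section~\ref{preliminary} and substituting the first-order and second-order bounds already established in Proposition~\ref{main.prop.spacelike} and, for the scalar-field pieces, the $X$-commuted bounds of Proposition~\ref{prop.commuted3}. Since all needed ingredients are already in hand --- the sharp $\Omega^2 \lesssim e^{1.9K_-v}r^{Dv}$, the logarithmic derivatives $|\rd_u\log\Omega^2| \lesssim v r^{-2}$ and $|\rd_v\log\Omega^2|\lesssim v^{1-2s}r^{-2}$, and the matching scalar-field second-derivative bounds --- the proof should reduce to careful algebra and book-keeping of $r$- and $v$-weights, without any new Gr\"onwall, monotonicity, or bootstrap argument.

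For the metric bounds I would proceed term by term. Estimate \eqref{dudunu} follows from applying $\rd_u$ to \eqref{Radius3}: the leading contribution is $\tfrac14 \rd_u\Omega^2\cdot(1-Q^2/r^2-m^2r^2|\phi|^2)$, and using \eqref{Omega.S} together with $|\rd_u\log\Omega^2|\lesssim vr^{-2}$ produces a bound $\lesssim e^{1.9K_- v}r^{Dv-2}\cdot v$, which for $v_0$ large is absorbed into $e^{1.9K_- v}r^{D' v}$ (slightly shrinking $D$). For \eqref{dudulogomega} the most efficient route is to start from \eqref{RaychU} in the form $\rd_u\log\Omega^2 = \rd_u\log|\nu| + r|D_u\phi|^2/|\nu|$, then differentiate once more in $u$ and invoke \eqref{du.nu}, \eqref{Duphi} and \eqref{DuDuphi}, which pin the right-hand side at $\lesssim vr^{-4}$. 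For \eqref{duduA.S} I would use the gauge $A_u(u,v_{\mathcal{S}}(u))=0$ to represent $A_u(u,v)=\int_v^{v_{\mathcal{S}}(u)}\tfrac{Q\Omega^2}{2r^2}(u,v')dv'$, then differentiate twice in $u$; the boundary terms at $v=v_{\mathcal{S}}(u)$ generate contributions involving $v_{\mathcal{S}}'$ and $v_{\mathcal{S}}''$ controlled via \eqref{duS.eq}, and they are absorbed by the exponential decay of $\Omega^2/r^2$ as $r\to0$ coming from \eqref{Omega.S}.

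For the scalar-field third-order bounds I would commute the wave equation \eqref{Field2}--\eqref{Field3} once more with $\rd_u$ and/or $\rd_v$ to derive null transport equations for the corresponding commuted quantities, exactly in the spirit of how \eqref{DuDuphi.S.est} was obtained from \eqref{Field3}. Specifically, for \eqref{DuDuDuphi.S.est} I commute \eqref{Field3} with $\rd_u$ to get a $\rd_v$-equation for $\rd_u^2(r D_u\phi)$ whose right-hand side is controlled by $|\lambda|v^{1/2}r^{-4}$ using \eqref{DuDuphi.S.est2} and \eqref{lambda.eq}; integrating in $v$ from the initial cone $\{u=u_0\}$ (where the required data bound follows from differentiating \eqref{hyp2} once more, or alternatively from the estimates under the Cauchy horizon in Section~\ref{section.CH}) yields the stated bound. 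The estimate \eqref{DvDvDvphi.S.est} is obtained symmetrically by commuting \eqref{Field2} with $\rd_v$ and applying \eqref{DvDvphi.S.est2}, and \eqref{DvDuDvphi.S.est} results from a mixed commutation, using the identity $[\rd_u,\rd_v]=0$ up to the gauge curvature term $iq_0 Q\Omega^2/(2r^2)$ which is negligibly small by \eqref{Omega.S} and \eqref{Q.est.S}.

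The main technical obstacle is not any single analytic estimate but rather the consistent tracking of $v$-weights across the commuted wave equations. The asymmetry $|D_u\phi|\lesssim v^{1/2}r^{-2}$ versus $|D_v\phi|\lesssim v^{1/2-2s}r^{-2}$ must be preserved through each differentiation, and in the mixed third-derivative \eqref{DvDuDvphi.S.est} both scalings have to be simultaneously respected; a careless order of integration can spuriously worsen the $v$-power. I therefore expect the most delicate step to be the matching at $u=\uR(v)$ between the red-shift region and the crushing region for the commuted scalar-field bounds, which likely requires first upgrading Proposition~\ref{prop.commuted2} to the corresponding third-derivative estimates by the same method (a $v$-integration of the commuted wave equation from $\{u=0\}$), but no new conceptual ingredient beyond those already used in Sections~\ref{section.redshift}--\ref{section.commuted} should be needed.
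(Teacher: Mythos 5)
There is a genuine gap in the metric part of your plan, and it sits exactly where the paper inserts the bootstrap you explicitly renounce. To bound $\rd_u^2(r\nu)$ you must hit \eqref{Radius3} with \emph{two} $u$-derivatives and integrate in $v$; applying a single $\rd_u$, as you describe (leading term $\tfrac14\rd_u\Omega^2\cdot(1-Q^2/r^2-m^2r^2|\phi|^2)$), only reproduces the already-known first-order bound \eqref{du.nu}. After the second differentiation the leading term is $\tfrac14\rd_u^2\Omega^2\cdot(\ldots)$ with $\rd_u^2\Omega^2=\Omega^2\bigl[(\rd_u\log\Omega^2)^2+\rd_u^2\log\Omega^2\bigr]$, so you need a bound on $\rd_u^2\log\Omega^2$ --- precisely \eqref{dudulogomega}, which in your ordering is proved \emph{afterwards}. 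Conversely, your route to \eqref{dudulogomega} via the $\rd_u$-commuted \eqref{RaychU} requires $\rd_u^2\log|\nu|$, which is controlled only through $\rd_u^2(r\nu)$, i.e.\ through \eqref{dudunu}; the first-derivative bound \eqref{du.nu} you cite there is not sufficient. Thus \eqref{dudunu} and \eqref{dudulogomega} are circularly coupled and your term-by-term scheme does not close. The paper breaks the circle with the bootstrap assumption $|\rd_u^2\log\Omega^2|\ls r^{-5}v^{10}$: this crude bound is enough to prove \eqref{dudunu} because the loss $r^{-5}v^{10}$ is absorbed by $\Omega^2\ls e^{1.9K_-v}r^{Dv}$ (after adjusting $D$), and then the commuted Raychaudhuri equation returns the sharp $v\,r^{-4}$ bound, improving the bootstrap. (An alternative non-circular route would be to obtain \eqref{dudulogomega} first by integrating the $\rd_u$-commuted \eqref{Omega} in $v$, whose right-hand side involves only first derivatives of the metric coefficients and second derivatives of $\phi$; but that is not the argument you wrote, and some such decoupling --- or the paper's bootstrap --- is indispensable.)

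The remaining estimates are treated in your proposal essentially as in the paper (commuting \eqref{Field4}, respectively \eqref{Maxwell}, with $\rd_{uu}^2$ and integrating the resulting transport equations), so that part is fine in outline, modulo two slips: a $\rd_v$-transport equation is integrated from the ingoing cone $\{v=v_0\}$ (or handled as in Propositions~\ref{prop.red}--\ref{prop.crushing}), not ``from the initial cone $\{u=u_0\}$''; and for \eqref{duduA.S} your representation $A_u(u,v)=\int_v^{v_{\mathcal{S}}(u)}\tfrac{Q\Omega^2}{2r^2}\,dv'$ differentiated twice in $u$ produces boundary terms involving $v_{\mathcal{S}}''$, whose existence is never established (only $C^1$ regularity of $\mathcal{S}$ is known, via \eqref{duS.eq}); you would have to argue separately that these boundary contributions vanish because the integrand and its first derivatives tend to $0$ at $r=0$, or simply commute \eqref{Maxwell} with $\rd_u^2$ as the paper does.
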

\begin{proof}
We may formulate a bootstrap assumption of the form \begin{align*}
	& \bigl|\rd^2_{u} \log(\Omega^2)\bigr|(u,v) \lesssim  r^{-5}(u,v) v^{10},
\end{align*} and then commute \eqref{Radius3} with $\rd^2_{u}$ to obtain \eqref{dudunu}, using the estimates of Proposition~\ref{main.prop.spacelike} for $0 \leq u\leq \uS$ and the estimates of Proposition~\ref{main.prop.CH} for $u_0 \leq u\leq 0$. Then, we come back to \eqref{RaychU} which we commute with $\rd_u$ once and obtain \begin{equation*}
	\rd_u^2 (\log |\nu|)- \rd^2_{u} \log(\Omega^2)= \frac{\rd^2_{u }r}{|\rd_u r|^2} r|D_u \phi|^2- |D_u \phi|^2 + \frac{r}{|\rd_u r|} \rd_u ( |D_u \phi|^2).
\end{equation*} Then \eqref{dudulogomega} follows from \eqref{dudunu} and the other estimates of Proposition~\ref{main.prop.spacelike} and Proposition~\ref{main.prop.CH}, respectively. %\eqref{dvdvlogomega} follows \blue{from commuting \eqref{Omega} with $\rd_v$ and integrating}.
For \eqref{duduA.S} and \eqref{DuDuDuphi.S.est}, we  respectively commute \eqref{Maxwell} and \eqref{Field4} with $\rd_{uu}^2$ and use the above estimates. 
\eqref{DvDvDvphi.S.est} and \eqref{DvDuDvphi.S.est}  are obtained similarly.   Note that along the way, we have also improved on the bootstrap assumption.
\end{proof}

We then proceed to obtain refined estimates on the difference between $\frac{\rd_u\log(\Omega)}{r|\nu|}$ and $\frac{\rd_v\log(\Omega)}{r|\lambda|}$, i.e., $X \log(\Omega)$, which exploits the better estimate obtained on $X\phi$ in Section~\ref{section.commuted}. We formulate this as a corollary to Proposition~\ref{prop.commuted2} and Proposition~\ref{prop.commuted3}. These estimates correspond to the AVTD behavior mentioned in Section~\ref{proof.section5}.

\begin{cor} For any $0\leq u \leq \uS $, $v\geq v_0$:
\begin{equation}\label{domega.diff}
	r^2 v^{-1}|X\log(\Omega^2)|= \bigl| r^2 \frac{v^{-1}\rd_v \log(\Omega^2)}{r|\lambda|}- r^2 \frac{v^{-1}\rd_u \log(\Omega^2)}{r|\nu|} \bigr| \lesssim 
	\frac{r^2}{r_0^2(v)} (1+\log^2(\frac{r_0(v)}{r(u,v)})).
\end{equation}
%Moreover, the following estimates hold:\begin{align}	&\label{cancel1} \bigl|\frac{|\rd_v \log(\Omega^2)|}{r|\lambda|}- r^2 |\frac{\rd_v \phi}{r|\lambda|}|^2 \bigr|,\ \bigl|\frac{|\rd_v \log(\Omega^2)|}{r|\lambda|}- r^{-2}\Psi^2_{\mathcal{S}}(v) \bigr| \ls r^{-2},\\ &  \bigl|\frac{|\rd_u\log(\Omega^2)|}{r|\nu|}- r^2 |\frac{\rd_u \phi}{r|\nu|}|^2\bigr|,\ \bigl|\frac{|\rd_u \log(\Omega^2)|}{r|\nu|}- r^{-2}\Psi^2_{\mathcal{S}}(v) \bigr|\ls r^{-2}. \label{cancel2} \end{align} 
\end{cor}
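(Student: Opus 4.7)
The strategy is to decompose $X\log(\Omega^2)$ so that the only contribution of the claimed size is a difference of squared gauge-invariant scalar-field derivatives, which will factor and be controlled by the improved commuted bound on $X\phi$ from Proposition~\ref{prop.commuted3}. The two Raychaudhuri equations \eqref{RaychU}--\eqref{RaychV} rewrite as $\rd_u\log\Omega^2 = \rd_u\log|\nu| - r|D_u\phi|^2/|\nu|$ and $\rd_v\log\Omega^2=\rd_v\log|\lambda|-r|D_v\phi|^2/|\lambda|$, and combine after dividing by $r|\nu|$ and $r|\lambda|$ respectively and subtracting. Using $\rd_v\log|\lambda|=-\rd_v(r\lambda)/(r|\lambda|)+|\lambda|/r$ together with its $u$-analogue, the $\pm 1/r^2$ contributions coming from the $\log r$ differentiations cancel, and I obtain the key identity
\[
X\log(\Omega^2) = \frac{\rd_u(r\nu)}{(r|\nu|)^2} - \frac{\rd_v(r\lambda)}{(r|\lambda|)^2} + \frac{|D_u\phi|^2}{|\nu|^2} - \frac{|D_v\phi|^2}{|\lambda|^2}.
\]

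For the two metric terms on the right I use \eqref{nu.S.eq3} and \eqref{du.nu} to control the first by $\exd r^{95}$, and \eqref{rlambdaS.prop} (which gives $(r|\lambda|)^2\approx v^{-4s}$) together with \eqref{dv.lamba} to control the second by $v^{2s-1}$; multiplied by $r^2/v$, this yields a bound of size $r^2 v^{2s-2}$, which is a factor $v^{-1}$ smaller than the target $r^2/r_0^2(v)\approx r^2 v^{2s-1}$. For the scalar field piece, setting $\Theta[\phi]=D_v\phi/|\lambda|$ and $\xi[\phi]=D_u\phi/|\nu|$ and factoring
\[
\frac{|D_u\phi|^2}{|\nu|^2} - \frac{|D_v\phi|^2}{|\lambda|^2} = \bigl(|\xi|+|\Theta|\bigr)\bigl(|\xi|-|\Theta|\bigr),
\]
the sum $|\Theta|+|\xi|\lesssim v^{1/2}/r$ follows directly from \eqref{Dvphi}--\eqref{Duphi} together with \eqref{rlambdaS.prop} and $r|\nu|\approx 1$.

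The decisive observation is that
\[
\Theta[\phi] - \xi[\phi] = rX\phi - \frac{iq_0 A_u \phi}{|\nu|},
\]
since in our gauge $A_v\equiv 0$ gives $D_v\phi=\rd_v\phi$ while $D_u\phi=\rd_u\phi+iq_0 A_u\phi$, and the exponentially small bound \eqref{A.S} combined with \eqref{phi.est.S} makes the correction $O(\exd)$. The elementary triangle inequality $\big||\Theta|-|\xi|\big|\leq|\Theta-\xi|$ then yields $\big||\Theta|-|\xi|\big|\leq r|X\phi|+O(\exd)$, into which I plug \eqref{X.highr} from Proposition~\ref{prop.commuted3} (and \eqref{X.highr.R} from Proposition~\ref{prop.commuted2} in the red-shift region), producing
\[
\left|\frac{|D_u\phi|^2}{|\nu|^2} - \frac{|D_v\phi|^2}{|\lambda|^2}\right|\lesssim r\cdot v^{2s-\frac12}\bigl(1+\log^2(r_0(v)/r(u,v))\bigr)\cdot \frac{v^{1/2}}{r} = v^{2s}\bigl(1+\log^2(r_0(v)/r(u,v))\bigr).
\]
Multiplying by $r^2/v$ and using $r_0^2(v)\approx v^{1-2s}$ delivers exactly \eqref{domega.diff}. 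The only delicate point is the use of the triangle inequality $\big||\Theta|-|\xi|\big|\leq|\Theta-\xi|$: without this reduction one would lose the crucial gain of a factor $r/r_0(v)$ built into $rX\phi$, because the individual sizes $|\Theta|$ and $|\xi|$ are both of order $v^{1/2}/r$ and it is only their \emph{difference}, rather than each term separately, that enjoys the improved AVTD estimate on $X\phi$.
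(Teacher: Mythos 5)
Your argument is correct and follows essentially the same route as the paper: both rewrite the two Raychaudhuri equations so that, in the $X$-combination, the dangerous $|\nu||\lambda|$ (equivalently $1/r^2$) terms cancel, control the remaining metric terms via the $\rd_u(r\nu)$ and $\rd_v(r\lambda)$ estimates \eqref{du.nu}, \eqref{dv.lamba}, and factor the difference of squared field gradients so that the improved AVTD bound \eqref{X.highr} on $X\phi$ multiplies the $v^{1/2}/r$ bound on the sum. Your identity is the paper's relation divided by $r|\lambda|\cdot r|\nu|$, and your explicit handling of the $A_u$ correction via \eqref{A.S} matches what the paper does implicitly, so no further changes are needed.
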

\begin{proof}
Using \eqref{RaychU} with \eqref{du.nu} from Proposition~\ref{main.prop.spacelike} we obtain  \begin{equation}\label{new.ray.U}
	\frac{|\nu|}{r} - \rd_u \log(\Omega^2) - r^3|\nu| [\frac{D_u \phi}{r|\nu|}]^2=O(e^{1.9 K_- v} r^{Dv}).
\end{equation} Similarly, from   \eqref{RaychV} with \eqref{dv.lamba} from Proposition~\ref{main.prop.spacelike}, we obtain 
\begin{equation}\label{new.ray.V}
	\frac{|\lambda|}{r} - \rd_v \log(\Omega^2) - r^3|\lambda| [\frac{\rd_v \phi}{r|\lambda|}]^2=O(v^{-1}).
\end{equation}
Then, we multiply \eqref{new.ray.U} by $r|\lambda|$ and \eqref{new.ray.V} by $r|\nu|$  and take the difference to obtain the new estimate: $$  - r|\lambda|\rd_u \log(\Omega^2) + r|\nu| \rd_v \log(\Omega^2) +r^4|\lambda||\nu| \left([\frac{\rd_v \phi}{r|\lambda|}]^2-[\frac{\rd_u \phi}{r|\nu|}]^2  \right)=O(v^{-1}). $$  Then, we notice that $$\bigl| - r|\lambda|\rd_u \log(\Omega^2) + r|\nu| \rd_v \log(\Omega^2) \bigr| \ls r^4|\lambda||\nu|  |X\phi| |\frac{\rd_v \phi}{r|\lambda|}+ \frac{\rd_u \phi}{r|\nu|} \bigr|+ O(v^{-1}) \lesssim (1+\log^2(\frac{r_0(v)}{r(u,v)})) +v^{-1},$$  where in the last line we have used \eqref{X.highr} from Proposition~\ref{prop.commuted3}. \eqref{domega.diff} then follows immediately.

%Finally, \eqref{cancel1} and \eqref{cancel2} are obtained similarly, dividing  \eqref{RaychU} by $r|\nu|$, and \eqref{RaychV} by $r|\lambda|$, noting that (using the fact that $\rd_u \log(\Omega)<0$,  $\rd_v\log(\Omega)<0$) \begin{align*}	& \frac{|\rd_v \log(\Omega^2)|}{r|\lambda|}- r^2 |\frac{\rd_v \phi}{r|\lambda|}|^2= \frac{-\rd_v \log(|\lambda|)}{r|\lambda|},\\ &  \frac{|\rd_v \log(\Omega^2)|}{r|\lambda|}- r^2 |\frac{D_u \phi}{r|\nu|}|^2= \frac{-\rd_u \log(|\nu|)}{r|\nu|}.  \end{align*} 
\end{proof}

%\begin{rmk} Recall that if $u \geq u_{\ep}(v)$,  \eqref{dvlogOmega}, \eqref{dulogOmega} show that $$ \ep^2 \lesssim \bigl| r^2 \frac{v^{-1}\rd_v \log(\Omega^2)}{r|\lambda|} \bigr|  \approx \bigl| r^2 \frac{v^{-1}\rd_u \log(\Omega^2)}{r|\nu|} \bigr|\lesssim 1,$$therefore, in the smaller region  $u \geq u_{\ep^2}(v)$ \eqref{domega.diff} gives that \begin{equation}\label{domega.diff2}	\bigl| r^2 \frac{v^{-1}\rd_v \log(\Omega^2)}{r|\lambda|}- r^2 \frac{v^{-1}\rd_u \log(\Omega^2)}{r|\nu|} \bigr| \ls \ep^2 \log^2(\ep^{-1})  \cdot  r^2 \frac{v^{-1} \bigl| \rd_v \log(\Omega^2)\bigr|}{r|\lambda|} .\end{equation}\end{rmk}
Further, we prove a higher-order commutator lemma, which will be useful below to commute the wave equation with the second order differential operator $f\rightarrow \rd_u [Xf]$, also denoted $\rd_u X$.
\begin{lemma}\label{lemma.com2}
We	introduce the notation $|\rd F|= |\rd_u F|+|\rd_v F|$ and $|\rd^2 F|= |\rd^2_{uu} F|+|\rd^2_{uv} F|+|\rd^2_{vv} F|$. 
\begin{equation}\label{[du,duX]}
	|	[\rd_u,\rd_u X] F|(u,v)\ls  \exd r^{95}(u,v) \cdot \left[|\rd_{uu}^2 F|(u,v)+ |\rd_{u} \rd_v F|(u,v)+ |\rd F|(u,v) \right],
\end{equation}
%\begin{equation}\label{[dv,dvX]}	|	[\rd_v,\rd_v X] F|(u,v)\ls v^{2s-1} |\rd_{v}^2 F|+ \blue{v^{4s}|\rd_v^2(r\lambda)|} |\rd_v F| +\exd r^{95}(u,v) \cdot \left[ |\rd_{u} \rd_v F|(u,v)+ |\rd_u F|(u,v) \right],\end{equation}
\begin{equation}\label{[du,dvX]}
	|	[\rd_u,\rd_v X] F|(u,v)\ls\exd r^{95}(u,v) \cdot \left[{|\rd_{v}^2 F|(u,v)+|\rd_u\rd_{v} F|(u,v)}+ |\rd F|(u,v) \right],
\end{equation}
\begin{equation}\label{[dv,duX]}
	|	[\rd_v,\rd_u X] F|(u,v)\ls {v^{2s-1} |\rd_u \rd_v F|(u,v)}+ \exd r^{95}(u,v) \cdot \left[|\rd_{u}^2 F|(u,v)+ |\rd F|(u,v) \right],
\end{equation}
\begin{equation}\label{[dvrdu,duX]}	\begin{split}&	|	[\rd_v(r \rd_u),\rd_uX] F|(u,v) \ls  v^{2s-1}|\rd_{u} \rd_v (r\rd_u F)|+|\rd_v( |\nu| \rd_u X F )|\\ &+ \exd r^{95}(u,v)  \left[|\rd_{u}^3 F|+ |\rd_{u} \rd_v^{2} F|+|\rd_{u}^2  \rd_v F|+ |\rd^2 F|+ |\rd F| \right],\end{split}	\end{equation} 
\begin{equation}\label{[durdv,duX]}\begin{split}	&	|	[\rd_u(r \rd_v),\rd_u X]			F|(u,v) \ls   v^{2s-1}|\rd^2_{u }(r\rd_v F)|+ |\rd_u( |\nu| \rd_v X F )|\\ &+ \exd r^{95}(u,v)  \left[|\rd_{u}^3 F|+ |\rd_{v}^2 \rd_u F|+|\rd_{u}^2  \rd_v F|+ |\rd^2 F|+ |\rd F| \right] .\end{split}	\end{equation}  
%\begin{equation}\label{[dvrdu,dvX]}	\begin{split}&	|	[\rd_v(r \rd_u),\rd_vX] F|(u,v) \ls  \blue{v^{4s}|\rd_v^2(r\lambda)|}|\rd_{v}(r\rd_u F)|+ \blue{r^{-1} v^{-1}[r^{-2} v^{-2s}+v^{2s+1}|\rd_v^2(r\lambda)|]|\rd_u F|(u,v)}+|\rd_v( |\lambda| \rd_u X F )|\\ &+ \exd r^{95}(u,v)  \left[|\rd_{v v v}^3 F|+ |\rd_{u v v}^3 F|+|\rd_{u}^2 \rd_v F|+ |\rd^2 F|+ |\rd F| \right],\end{split}	\end{equation} 
%\begin{equation}\label{[durdv,dvX]}\begin{split}	&	|	[\rd_u(r \rd_v),\rd_v X]			F|(u,v) \ls  \blue{v^{4s}|\rd_v^2(r\lambda)|}|\rd_{u}(r\rd_v F)|+ v^{2s-1}|\rd_u \blue{(r\rd_v^2 F)}|+ |\rd_u( |\lambda| \blue{X \rd_v} F )|\\ &+ \exd r^{95}(u,v)  \left[|\rd_{v}^3 F|+ |\rd_{v}^2 \rd_u F|+|\rd_{u}^2 \rd_v F|+ |\rd^2 F|+ |\rd F| \right] .\end{split}	\end{equation} 
\end{lemma}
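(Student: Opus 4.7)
The plan is to reduce all five estimates to the first-order commutator bounds already established in Lemma~\ref{lemma.com}, combined with the higher-order geometric estimates from Proposition~\ref{prop.commuted} and Proposition~\ref{main.prop.spacelike}. The key observation is that the coefficients of the differential operator $X$ are rational functions of $r|\lambda|$ and $r|\nu|$, so applying additional $\rd_u$ or $\rd_v$ derivatives to the first-order commutator identities only brings in controlled quantities such as $\rd_{uu}^2(r\nu)$, $\rd_v(r\lambda)$, $\rd_u(r\lambda)= O(\Omega^2)$, and their mixed derivatives, all of which are estimated in Section~\ref{section.CH} and Section~\ref{quant.spacelike.section}.

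For the three ``pure'' commutators \eqref{[du,duX]}, \eqref{[du,dvX]}, \eqref{[dv,duX]}, I would first use the identities
\begin{equation*}
[\rd_u, \rd_u X] F = \rd_u\bigl([\rd_u, X] F\bigr),\quad [\rd_u, \rd_v X] F = \rd_v\bigl([\rd_u, X] F\bigr),\quad [\rd_v, \rd_u X] F = \rd_u\bigl([\rd_v, X] F\bigr),
\end{equation*}
which follow from $[\rd_u,\rd_v]=0$. By Lemma~\ref{lemma.com}, each of $[\rd_u,X]F$ and $[\rd_v,X]F$ is a first-order differential operator in $F$ with explicit coefficients ($\rd_u(r|\lambda|)^{-1}$, $\rd_u(r|\nu|)^{-1}$ and their $\rd_v$-analogues). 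Differentiating once more produces Leibniz-type terms: $F$-second derivatives multiplied by the original coefficients, plus $F$-first derivatives multiplied by one more $\rd_u$ or $\rd_v$ of those coefficients. For the coefficient derivatives, $|\rd_u(r|\lambda|)|\lesssim \Omega^2\lesssim\exd r^{100}$ by \eqref{Radius3} and \eqref{Omega.S}, while $|\rd_v(r|\lambda|)|\lesssim v^{-2s-1}$ by \eqref{dv.lamba} (giving the $v^{2s-1}$ weight after dividing by $(r|\lambda|)^2\approx v^{-4s}$), and both second $\rd_u$-derivatives are handled by \eqref{dudunu} and \eqref{dudulogomega}. Carefully tracking which factor carries a $v^{2s-1}$ weight (only the $\rd_v$-direction coefficient of $X$) versus an $\exd r^{95}$ weight yields the claimed bounds exactly.

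For the two ``higher'' commutators \eqref{[dvrdu,duX]} and \eqref{[durdv,duX]}, I would use the product rule for commutators $[AB,C]=A[B,C]+[A,C]B$ in the form
\begin{equation*}
[\rd_v(r\rd_u),\rd_u X] = \rd_v\bigl[r\rd_u, \rd_u X\bigr] + [\rd_v, \rd_u X]\, r\rd_u ,
\end{equation*}
and further expand $[r\rd_u, \rd_u X] = r[\rd_u, \rd_u X] + [r,\rd_u X]\rd_u$. Using $X(r)=0$ one computes $[r,\rd_u X]F = -\nu\, XF$, so $[r\rd_u,\rd_u X]F = r[\rd_u,\rd_u X]F -\nu X\rd_u F$, and after taking $\rd_v$ this accounts for the $|\rd_v(|\nu|\rd_u XF)|$ term appearing on the right-hand side of \eqref{[dvrdu,duX]}. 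The remaining pieces are handled by the already-established \eqref{[du,duX]}, \eqref{[dv,duX]}, together with the identity $[\rd_v,\rd_u X](r\rd_u F)$ which introduces the clean mixed derivative $\rd_u\rd_v(r\rd_u F)$ carrying the $v^{2s-1}$ weight. The estimate \eqref{[durdv,duX]} is obtained by an entirely analogous decomposition with the roles of $u,v$ swapped where appropriate.

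The main technical obstacle, as in Lemma~\ref{lemma.com}, lies in the asymmetry between the $u$- and $v$-directions near $\mathcal{S}$: the $\rd_v$-derivative of a coefficient of $X$ produces a factor $v^{2s-1}$ (which is \emph{large}), whereas the $\rd_u$-derivative produces a factor $\exd r^{95}$ (which is \emph{small}). One must therefore ensure that each occurrence of $v^{2s-1}$ is paired only with a $\rd_v$-type derivative of $F$ in the final bound, since pairing it with a top-order $u$-derivative would be fatal when later inserted into the $X$-commuted wave equation. This bookkeeping drives the exact structure of the right-hand sides of \eqref{[dvrdu,duX]}--\eqref{[durdv,duX]}, where only the natural mixed derivatives $\rd_u\rd_v(r\rd_u F)$ and $\rd_u^2(r\rd_v F)$ receive the $v^{2s-1}$ prefactor and all other error terms are exponentially suppressed.
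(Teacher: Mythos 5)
Your proposal is correct and follows essentially the same route as the paper: the first three bounds are obtained by applying one extra $\rd_u$ or $\rd_v$ to the first-order commutators of Lemma~\ref{lemma.com} (using $[\rd_u,\rd_v]=0$), and the last two by the Leibniz rule for commutators together with the estimates of Propositions~\ref{main.prop.spacelike} and~\ref{prop.commuted}. The only cosmetic difference is in the higher commutators: the paper peels off the inner factor, writing $[\rd_v(r\rd_u),\rd_u X]=\rd_v(|\nu|\rd_u X)+\rd_u[\rd_v(r\rd_u),X]$ so the term $\rd_v(|\nu|\rd_u XF)$ appears at once, whereas you peel off the outer $\rd_v$ and recover that same term via $[r,\rd_u X]=-\nu X$ (using $X(r)=0$) up to lower-order commutator errors already covered by \eqref{[du,X]}--\eqref{[du,dvX]}; both decompositions yield the stated right-hand sides.
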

\begin{proof} First, \eqref{[du,duX]} follows from the identity $$[\rd_u,\rd_u X]= -\rd_u ([X,\rd_u]),$$ from which we can repeat the proof of \eqref{[du,X]}, after an extra $\rd_u$ commutation; \eqref{[du,dvX]}, \eqref{[dv,duX]} %,  \eqref{[dv,dvX]} 
are derived similarly. For \eqref{[dvrdu,duX]}, we start noticing the standard commutator identity $$ [\rd_v(r\rd_u), \rd_u X] = \underbrace{[\rd_v(r\rd_u), \rd_u ]}_{=\rd_v(|\nu| \rd_u)}X + \rd_u[\rd_v(r\rd_u),  X]= \rd_v(|\nu| \rd_u X) +  \rd_u[\rd_v(r\rd_u),  X] . $$ Then, \eqref{[dvrdu,duX]}  follows from a combination of the estimates of Proposition~\ref{main.prop.spacelike} and Proposition~\ref{prop.commuted}, repeating the proof of Lemma~\ref{lemma.com}. \eqref{[durdv,duX]} %, \eqref{[dvrdu,dvX]}, \eqref{[durdv,dvX]} are 
is then obtained similarly.
\end{proof}
%\begin{rmk}	Of course, we could also have equivalently obtained estimates on $[\rd_v(r \rd_u),\rd_v X] $ and $[\rd_u(r \rd_v),\rd_v X] $, which would be necessary to commute the wave equation with the operator $\rd_v X$. Nonetheless, such estimates will  not be necessary, as we will derive the required outgoing estimates by other means, see Section~\ref{kasner.section}.\end{rmk}

\begin{cor}\label{cor.ddX} For all $0 \leq u \leq \uS$, $v\geq v_0$:
\begin{equation}\label{dduXphi}
	|\rd^2_{u} X \phi|(u,v),\ \frac{|\rd_u \rd_v X \phi|(u,v) }{r|\lambda|(u,v)}%,\ \frac{|\rd^2_{v} X \phi|(u,v) }{(r|\lambda|)^2(u,v)}
	\ls r^{-4} v^{2s-\frac{1}{2}} (1+\log(\frac{r_0(v)}{r(u,v)})).
\end{equation}
\end{cor}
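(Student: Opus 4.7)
The strategy is to commute the $X$-commuted wave equations \eqref{proof.X.1'}--\eqref{proof.X.2'} with an additional $\rd_u$ derivative, producing a coupled transport system for $\xi[\rd_u X\phi]$ and $\varTheta[\rd_u X\phi]$, and then to rerun, one derivative higher, the red-shift/crushing-region Gr\"{o}nwall argument used in Propositions~\ref{prop.commuted2} and \ref{prop.commuted3}.

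First, applying $\rd_u$ to \eqref{proof.X.1'}--\eqref{proof.X.2'}, equivalently $\rd_u X$ to \eqref{FieldX}, and absorbing the commutators via \eqref{[dvrdu,duX]}--\eqref{[durdv,duX]} from Lemma~\ref{lemma.com2}, one obtains a system schematically of the form
\[
\rd_u\bigl(\varTheta[\rd_u X\phi]\bigr) + \tfrac{\nu}{r}\,\xi[\rd_u X\phi] = F_1, \qquad \rd_v\bigl(\xi[\rd_u X\phi]\bigr) + \tfrac{\lambda}{r}\,\varTheta[\rd_u X\phi] = F_2,
\]
whose sources $F_1,F_2$ are built from: (i) $\rd_u$ applied to the inhomogeneous right-hand sides of \eqref{proof.X.1'}--\eqref{proof.X.2'} and \eqref{FieldX}; (ii) the borderline terms $\rd_u(\nu/r)\,\xi[X\phi]$ and $\rd_u(\lambda/r)\,\varTheta[X\phi]$, which, combining \eqref{nu.S.eq3}, \eqref{du.nu}, \eqref{lambda.eq}, \eqref{dv.lamba} with the $X\phi$-bounds \eqref{X.highr2} from Proposition~\ref{prop.commuted3}, are controlled by $|\nu|\cdot r^{-4}v^{2s-\frac12}(1+\log(r_0(v)/r))$ and $|\lambda|\cdot r^{-4}v^{2s-\frac12}(1+\log(r_0(v)/r))$ respectively; and (iii) commutator contributions from Lemma~\ref{lemma.com2}, each weighted by $\exd r^{95}$ and hence negligible.

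Next, in the red-shift region $\mathcal R$, the bound \eqref{dduXphi} on the initial cone $\{u=0\}$ is a direct consequence of \eqref{DvDvphi.S.est} and \eqref{Dvdvphi.est} together with the definition of $X$; integrating the transport equation for $\xi[\rd_u X\phi]$ in $v$ then closes a direct bootstrap exactly as in Proposition~\ref{prop.commuted2}, giving \eqref{dduXphi} throughout $\mathcal R$. In the crushing region $\mathcal C$, we repeat the Gr\"{o}nwall argument of Proposition~\ref{prop.commuted3}: introducing $R(u,v)=r(u,v)/r_0(v)$ and
\[
\Phi'(R) \;=\; \sup_{(u,v)\in\Sigma_R}\ r(u,v)^{3}\cdot\max\bigl\{|\varTheta[\rd_u X\phi]|(u,v),\ |\xi[\rd_u X\phi]|(u,v)\bigr\},
\]
integration along the null segments with boundary data on $\{R=1-\delta\}$ supplied by the previous step yields, through the same Gr\"{o}nwall inequality as in Proposition~\ref{prop.commuted3}, the bound $R\,\Phi'(R)\lesssim v^{2s-\frac12}(1+\log R^{-1})$. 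Evaluating this at the point of interest and using \eqref{nu.S.eq3}, \eqref{lambda.eq} to trade $\xi[\cdot]$ for $\rd_u(\cdot)$ and $\varTheta[\cdot]$ for $\rd_v(\cdot)/(r|\lambda|)$ gives precisely \eqref{dduXphi}.

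The main obstacle is purely combinatorial: one must verify that every term produced by commuting $\rd_u$ past $X$ once more fits within the $r^{-4}v^{2s-1/2}(1+\log(r_0(v)/r))$ budget. The only tight contributions are those of type (ii) above, where the loss comes from the fact that $\rd_u(\nu/r)$ is controlled only at the level of $r^{-2}|\nu|$; all other commutator terms arising from Lemma~\ref{lemma.com2} carry the exponentially small factor $\exd r^{95}$ and are completely negligible. The sharp control of $\rd_u(r\nu)$ and $\rd_v(r\lambda)$ afforded by Proposition~\ref{main.prop.spacelike} is thus exactly what makes the argument close.
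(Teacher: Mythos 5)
Your overall strategy coincides with the paper's: commute the wave equation with $\rd_u X$, absorb the commutators via Lemma~\ref{lemma.com2} together with the already-established bounds on $X\phi$ and on $\rd_u(r\nu)$, $\rd_v(r\lambda)$, obtain a null transport system for $\varTheta[\rd_u X\phi]$, $\xi[\rd_u X\phi]$, and rerun the red-shift/crushing-region Gr\"{o}nwall scheme of Propositions~\ref{prop.commuted2} and \ref{prop.commuted3}. That is exactly the paper's route, and your accounting of the borderline source terms is in the right ballpark.

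However, the final Gr\"{o}nwall step as you state it does not yield \eqref{dduXphi}. With your weighted quantity $\Phi'(R)=\sup_{\Sigma_R} r^3\max\{|\varTheta[\rd_u X\phi]|,|\xi[\rd_u X\phi]|\}$, the conclusion you claim, $R\,\Phi'(R)\lesssim v^{2s-\frac12}(1+\log R^{-1})$, converts (using $r|\nu|\approx 1$, so $\xi[\rd_u X\phi]\approx r\,\rd^2_{u}X\phi$) into $|\rd^2_{u}X\phi|\lesssim \frac{r_0(v)}{r(u,v)}\, r^{-4}v^{2s-\frac12}\bigl(1+\log\frac{r_0(v)}{r(u,v)}\bigr)$, which is weaker than the target by the unbounded factor $r_0(v)/r(u,v)$; since \eqref{dduXphi} is equivalent to $|\xi[\rd_u X\phi]|,|\varTheta[\rd_u X\phi]|\lesssim r^{-3}v^{2s-\frac12}(1+\log\frac{r_0(v)}{r})$, what you need is $\Phi'(R)\lesssim v^{2s-\frac12}(1+\log R^{-1})$ with no factor $R^{-1}$. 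The loss is structural, not a typo you can wave away: the kernel $\int \frac{dR'}{R'[1-(1+o(1))(R')^2]}$ of Proposition~\ref{prop.commuted3} always produces homogeneous growth $\sim R^{-1}$, so once you renormalize the unknown to be bounded (your $r^3$ weight), ``the same Gr\"{o}nwall inequality as in Proposition~\ref{prop.commuted3}'' necessarily overshoots the flat target by one power of $R$ -- unless you retain and exploit the good-signed damping terms $3r^2\nu|\varTheta|$, $3r^2\lambda|\xi|$ generated by differentiating the weight, which your scheme silently drops. The paper avoids the issue by running the Gr\"{o}nwall on the \emph{unweighted} $\varTheta[\rd_u X\phi]$, $\xi[\rd_u X\phi]$: there the boundary data on $\Sigma_{1-\delta}$ (supplied by the red-shift region) is of size $r_0^{-3}(v)\,v^{2s-\frac12}$, the integrated forcing is at most of size $r^{-3}$ times admissible weights, and the $R^{-1}$ homogeneous growth then lands at $R^{-1}r_0^{-3}=\frac{1}{r\,r_0^2}\le r^{-3}$, i.e.\ safely inside the target. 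So either drop the $r^3$ weight and argue verbatim as in Proposition~\ref{prop.commuted3}, or keep the weight but use the damping; as written, your last sentence is off by $r_0/r$. (A smaller gloss: the boundary values of the twice-commuted quantities on $\{u=0\}$ are not a ``direct consequence'' of \eqref{DvDvphi.S.est} and \eqref{Dvdvphi.est} alone, since they involve $u$-derivatives; one must also invoke the wave equation and the $u\le 0$ estimates of Section~\ref{section.CH} to express them in terms of data, as the paper's setup implicitly does.)
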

\begin{proof} Commuting \eqref{Field2}, \eqref{Field3} with $\rd_u X$ gives rise to the following:
\begin{equation*}\begin{split}
		&	\bigl| \rd_u\varTheta[\rd_u X] +\frac{\lambda}{r} \zeta[\rd_u X]\bigr|\ls |\rd_u X \lambda| |\rd_u \phi| +  \bigl|\rd_u X\left( -iq_0 A_u(r\rd_v \phi-\lambda \phi)+\frac{ \Omega^{2} \cdot \phi}{4r} \cdot ( i q_{0} Q-m^2 r^2)\right)\bigr|\\ &+ |	[\rd_u(r \rd_v),\rd_uX] \phi|+ |\lambda||[\rd_u,\rd_u X] \phi|+ \exd r^{100}(u,v), \\ &  	\bigl| \rd_v \zeta[\rd_u X] +\frac{\nu}{r} \varTheta[\rd_u X]\bigr|\ls |\rd_u X \nu| |\rd_v \phi|+ \bigl|\rd_u X\left( -iq_0 A_u(r\rd_v \phi-\lambda \phi)+\frac{ \Omega^{2} \cdot \phi}{4r} \cdot ( -i q_{0} Q-m^2 r^2)\right)\bigr|\\ & + |	[\rd_v(r \rd_u),\rd_uX] \phi|+ |\nu||[\rd_v,\rd_u X] \phi|+ \exd r^{100}(u,v). \end{split}
\end{equation*} 
From the above commuted estimates and \eqref{[du,duX]}, \eqref{[dv,duX]} it is easy to see that $$ \bigl|\rd_u X\left( -iq_0 A_u(r\rd_v \phi-\lambda \phi)+\frac{ \Omega^{2} \cdot \phi}{4r} \cdot (\pm i q_{0} Q-m^2 r^2)\right)\bigr|,\ |\nu||[\rd_v,\rd_u X] \phi|,\ |\lambda||[\rd_u,\rd_u X] \phi| \ls  e^{\frac{K_-}{4} v} r^{95}(u,v).$$   Now, by Lemma~\ref{lemma.com}, Lemma~\ref{lemma.com2}, we have, by \eqref{Field4} and the estimates of Proposition~\ref{main.prop.spacelike} and Proposition~\ref{prop.commuted}: \begin{align*}
	&|	[\rd_v(r \rd_u),\rd_uX] \phi|(u,v) \ls  v^{2s-1}|\rd^2_{uv}(r\rd_u \phi)|+|\rd_v( |\nu| \rd_u X \phi )|+ {\exd r^{95}(u,v)}  \left[|\rd_{u u u}^3 \phi|+ |\rd_{u v v}^3 \phi|+|\rd_{u u v}^3 \phi|+ |\rd^2 \phi|+ |\rd \phi| \right]\\ &  \ls v^{2s-1} \bigl|\rd_u \left(-\nu \rd_v\phi-iq_0 A_u(r\rd_v \phi-\lambda \phi)+\frac{ \Omega^{2} \cdot \phi}{4r} \cdot (\pm i q_{0} Q-m^2 r^2) \right)\bigr| + |\rd_v\nu \cdot\rd_u X\phi|+ |\nu\cdot \rd_{uv}^2 X \phi|+{\exd r^{80}(u,v)}   \\& \ls v^{2s-1} r^{-1}[|\rd_u \rd_v \phi|+ r^{-2} |\rd_v \phi|]+r^{-3} v^{-2s}|\rd_u X\phi|+ r^{-2} |\rd_v(r\rd_u X\phi) - \lambda \rd_u X \phi|+  {\exd r^{80}(u,v)}\\ & \ls    r^{-5}{v^{-\frac{1}{2}} (1+\log(\frac{r_0(v)}{r(u,v)}))}+ r^{-2} |X\left( \rd_v(r\rd_u \phi)\right)|+ r^{-2}   |[X, \rd_v(r\rd_u )]\phi|+  {\exd r^{80}(u,v)} \\ & \ls     r^{-5}{v^{-\frac{1}{2}} (1+\log(\frac{r_0(v)}{r(u,v)}))}+ r^{-2} |X(\nu \rd_v \phi)|+ r^{-2}   v^{-12s}|\rd_v(r\rd_u \phi)|+  {\exd r^{80}(u,v)} \\ & \ls r^{-5}{v^{-\frac{1}{2}} (1+\log(\frac{r_0(v)}{r(u,v)}))}+ r^{-2} |X(\nu)| |\rd_v \phi|+ r^{-3}|X \rd_v \phi|+ r^{-2}   v^{2s-1}|\nu \rd_v \phi|+  {\exd r^{80}(u,v)}\\ & \ls r^{-5}{v^{-\frac{1}{2}} (1+\log(\frac{r_0(v)}{r(u,v)}))}+ r^{-4} v^{-2s+\frac{1}{2}} |[X,\rd_u] r| + r^{-3}|[X ,\rd_v ]\phi|+ r^{-3} |\rd_v X\phi|+  {\exd r^{80}(u,v)}\\ & \ls r^{-5}{v^{-\frac{1}{2}} (1+\log(\frac{r_0(v)}{r(u,v)}))}+ r^{-3}v^{2s-1} |\rd_v \phi|+{\exd r^{80}(u,v)}\ls  r^{-5}{v^{-\frac{1}{2}} (1+\log(\frac{r_0(v)}{r(u,v)}))}.
\end{align*} Similarly, $$ |	[\rd_u(r \rd_v),\rd_uX] \phi|(u,v)  \ls  r^{-5}{v^{-\frac{1}{2}} (1+\log(\frac{r_0(v)}{r(u,v)}))}.$$ The conclusion is 	\begin{equation*}\begin{split}
		&	\bigl| \rd_u\varTheta[\rd_u X] +\frac{\lambda}{r} \zeta[\rd_u X]\bigr|\ls r^{-5}{v^{-\frac{1}{2}} (1+\log(\frac{r_0(v)}{r(u,v)}))}, \\ &  	\bigl| \rd_v \zeta[\rd_u X] +\frac{\nu}{r} \varTheta[\rd_u X]\bigr|\ls r^{-5}{v^{-\frac{1}{2}} (1+\log(\frac{r_0(v)}{r(u,v)}))}. \end{split}
\end{equation*}  Then, we can repeat the proof of Proposition~\ref{prop.commuted3}, which gives  \eqref{dduXphi} %and estimate	 	 $\frac{|\rd^2_{vv} X \phi|(u,v) }{(r|\lambda|)^2(u,v)}$, we repeat the above argument  commuting \eqref{Field2}, \eqref{Field3} with $\rd_v X$ instead. Similarly to the above we get the analogous expression: \begin{equation*}\begin{split}		&	\bigl| \rd_u\varTheta[\rd_v X] +\frac{\lambda}{r} \zeta[\rd_v X]\bigr|\ls \bigl|\rd_v X\left( -iq_0 A_u(r\rd_v \phi-\lambda \phi)+\frac{ \Omega^{2} \cdot \phi}{4r} \cdot ( i q_{0} Q-m^2 r^2)\right)\bigr|+ |	[\rd_u(r \rd_v),\rd_v X] \phi|+ |\lambda||[\rd_u,\rd_v X] \phi|, \\ &  	\bigl| \rd_v \zeta[\rd_v X] +\frac{\nu}{r} \varTheta[\rd_v X]\bigr|\ls \bigl|\rd_v X\left( -iq_0 A_u(r\rd_v \phi-\lambda \phi)+\frac{ \Omega^{2} \cdot \phi}{4r} \cdot ( -i q_{0} Q-m^2 r^2)\right)\bigr|+ |	[\rd_v(r \rd_u),\rd_v X] \phi|+ |\nu||[{\rd_v},\rd_v X] \phi|. \end{split}\end{equation*}   From   \eqref{[du,dvX]} and \eqref{[dv,dvX]}, we already get the following expression (whose derivation slightly differs from that in the $\rd_u X$ commutation)\begin{align*}	&  \bigl|\rd_v X\left( -iq_0 A_u(r\rd_v \phi-\lambda \phi)+\frac{ \Omega^{2} \cdot \phi}{4r} \cdot (\pm i q_{0} Q-m^2 r^2)\right)\bigr|\ls e^{\frac{K_-}{4} v} r^{95}(u,v), \\  &  |\nu||[{\rd_v},\rd_v X] \phi|,\ |\lambda||[\rd_u,\rd_v X] \phi| \ls r^{-1} v^{2s-1}( |\rd^2_{v} \phi|+ v^{-1}|\rd_v \phi|+v^{-2s}|\rd^2_{uv} \phi|) +e^{\frac{K_-}{4} v} r^{95}(u,v)\ls r^{-5} v^{-2s-\frac{1}{2}}.\end{align*}  Then, for the other commutators, one can repeat the (tedious) argument used for the $\rd_u X$ commutator and we end up with \red{CHECK}\begin{align*}	& 	\bigl| [\rd_v(r \rd_u),\rd_v X] \phi \bigr|\ls   r^{-5}{v^{-2s-\frac{1}{2}} (1+\log(\frac{r_0(v)}{r(u,v)}))},\\ &  \bigl| [\rd_u(r \rd_v),\rd_v X] \phi \bigr|\ls   r^{-5}{v^{-2s-\frac{1}{2}} (1+\log(\frac{r_0(v)}{r(u,v)}))},\end{align*} and therefore  \begin{equation*}\begin{split}		&	\bigl| \rd_u\varTheta[\rd_v X] +\frac{\lambda}{r} \zeta[\rd_v X]\bigr|\ls r^{-5}{v^{-\frac{1}{2}} (1+\log(\frac{r_0(v)}{r(u,v)}))}, \\ &  	\bigl| \rd_v \zeta[\rd_v X] +\frac{\nu}{r} \varTheta[\rd_v X]\bigr|\ls r^{-5}{v^{-2s-\frac{1}{2}} (1+\log(\frac{r_0(v)}{r(u,v)}))}. \end{split}\end{equation*}From this, the estimate on $\frac{|\rd^2_{vv} X \phi|(u,v) }{(r|\lambda|)^2(u,v)}$ is proved, following the same logic  used above.

\end{proof}

\subsection{Scalar field refined estimates and geometric applications} In this section, we provide more refined estimates for the scalar field near $r=0$, in the style of \cite{Warren1} from which the presentation and formalism are inspired.  Adopting the notations of \cite{Warren1}, we start defining $L= \frac{\rd_v}{-r\rd_v r}$ and $\barL= \frac{\rd_u}{-r\rd_u r}$; note that $X=L-\barL$. Then, we define the key quantity \begin{equation}\label{Psi.def}
\Psi(u,v) = r^2(u,v) \frac{L\phi(u,v)+\barL\phi(u,v)}{2}.
\end{equation} We start with a lemma, in which we show $\Psi$ has a limit as $r \rightarrow 0$ together with various estimates. \begin{lemma} \label{lemma.Psi}  The following estimates hold for all $0 \leq u \leq \uS$, $v\geq v_0$:
\begin{equation}\label{Psi.main}
	|\Psi|(u,v) \ls v^{\frac{1}{2}},
\end{equation}\begin{equation}\label{XPsi}
	|X\Psi|(u,v) \ls v^{2s-\frac{1}{2}},
\end{equation}\begin{equation}\label{dPsi.main}
	|\rd_u \Psi|(u,v),\ \frac{|\rd_v \Psi|(u,v)}{r|\lambda|(u,v)} \ls v^{2s-\frac{1}{2}}(1+\log(\frac{r_0(v)}{r(u,v)})),
\end{equation} \begin{equation}\label{LvsPsi}
	\frac{\bigl| r^2 L\phi(u,v)  - \Psi(u,v) \bigr|}{v^{\frac{1}{2}}},\ \frac{\bigl| r^2 \barL\phi(u,v)  - \Psi(u,v) \bigr|}{v^{\frac{1}{2}}}\ls  \frac{r^2(u,v)}{r_0^2(v)} (1+\log
	^2(\frac{r_0(v)}{r(u,v)})).
\end{equation}moreover, there exists $\Psi_{\mathcal{S}}(v) $ such that  $$  \Psi_{\mathcal{S}}(v) := \lim_{u \rightarrow u_{\mathcal{S}}(v)} \Psi(u,v) \text{ exists},$$ and \begin{equation}\label{Psi.diff.est}
	|\Psi_{\mathcal{S}}|(v) \ls v^{\frac{1}{2}},\   \frac{|\Psi(u,v)-\Psi_{\mathcal{S}}(v)|}{v^{\frac{1}{2}}}\ls \frac{r^2(u,v)}{r_0^2(v)} (1+\log(\frac{r_0(v)}{r(u,v)})),\ \bigl|\frac{d}{dv} \Psi_{\mathcal{S}}(v)\bigr| \ls v^{-\frac{1}{2}}.
\end{equation} Furthermore, for all $ u_{\ep}(v)\leq u \leq \uS$ \begin{equation}\label{Psi.lower}
	|\Psi|(u,v) \approx v^{\frac{1}{2}},
\end{equation} where the implicit constants in $\approx$ depend on $\ep$. Finally, we have the following higher order estimate \begin{equation}\label{duduPsi}
	|\rd_{uu}^2 \Psi|(u,v),\frac{|\rd_{uv}^2 \Psi|(u,v)}{r|\lambda|(u,v)} %\red{\frac{|\rd_{vv}^2 \Psi|(u,v)}{(r|\lambda|)^2(u,v)}}
	\ls r^{-2} v^{2s-\frac{1}{2}}(1+\log(\frac{r_0(v)}{r(u,v)})).
\end{equation}
\end{lemma}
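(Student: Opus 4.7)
The plan is to exploit the key identity $P-\bar{P}=r^{2}X\phi$, where $P:=r^{2}L\phi=\frac{r\rd_{v}\phi}{|\lambda|}$ and $\bar{P}:=r^{2}\barL\phi=\frac{r\rd_{u}\phi}{|\nu|}$, so that $\Psi=(P+\bar{P})/2$ and
\begin{equation*}
r^{2}L\phi-\Psi=\tfrac{1}{2}r^{2}X\phi,\qquad r^{2}\barL\phi-\Psi=-\tfrac{1}{2}r^{2}X\phi.
\end{equation*}
First I would prove \eqref{Psi.main} directly from \eqref{Dvphi}--\eqref{Duphi} combined with $r|\lambda|\approx v^{-2s}$ (from \eqref{lambda.eq}) and $r|\nu|\approx 1$ (from \eqref{nu.S.eq3}): e.g.\ $|P|\ls (r\cdot v^{1/2-2s}/r^{2})/(v^{-2s}/r)=v^{1/2}$. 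Then \eqref{LvsPsi} is an immediate consequence of the identity above together with Proposition~\ref{prop.commuted3}, using the crucial algebraic match $v^{2s-1/2}=v^{1/2}/r_{0}^{2}(v)$ coming from $r_{0}(v)\approx v^{1/2-s}$, which converts $|X\phi|\ls v^{2s-1/2}(1+\log^{2}(r_{0}/r))$ into exactly the desired form.

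For \eqref{XPsi}, since $X(r^{2})=0$ and $X=L-\barL$, I would expand
\begin{equation*}
X\Psi=\tfrac{r^{2}}{2}\bigl[(L+\barL)X\phi+2[L,\barL]\phi\bigr].
\end{equation*}
The main term is controlled by $r^{2}(|\rd_{v}X\phi|/(r|\lambda|)+|\rd_{u}X\phi|/(r|\nu|))$, hence by $v^{2s-1/2}(1+\log(r_{0}/r))$ via \eqref{X.highr2}. A direct computation gives $[L,\barL]\phi=\frac{\rd_{v}\phi\cdot\rd_{u}(r|\lambda|)}{r|\nu|(r|\lambda|)^{2}}-\frac{\rd_{u}\phi\cdot\rd_{v}(r|\nu|)}{r|\lambda|(r|\nu|)^{2}}$; the identity $\rd_{v}(r\nu)=\rd_{u}(r\lambda)$ together with \eqref{Radius3} and the pointwise bound $\Omega^{2}\ls\exd r^{D v/10}$ from \eqref{Omega.S} then show $[L,\barL]\phi$ is exponentially small in $v$. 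The main obstacle will be \eqref{dPsi.main}: naively differentiating $P$ or $\bar{P}$ in $u$ produces $\rd_{u}^{2}\phi$ terms that blow up too fast in $r$. The essential trick is to invoke the wave equation \eqref{FieldX} in the form $\rd_{u}(r^{2}\rd_{v}\phi)=-r|\nu|r|\lambda|X\phi+E_{1}$, from which dividing by $r|\lambda|$ and absorbing the exponentially small error from $\rd_{u}(r|\lambda|)$ yields
\begin{equation*}
\rd_{u}P=-r|\nu|X\phi+O(\exd r^{80})\ls v^{2s-1/2}(1+\log^{2}(r_{0}/r));
\end{equation*}
the symmetric computation with $\rd_{v}(r^{2}\rd_{u}\phi)=-r|\nu|r|\lambda|X\phi+E_{2}$ gives $|\rd_{v}\bar{P}|/(r|\lambda|)\ls v^{2s-1/2}(1+\log^{2})$. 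The remaining derivatives $\rd_{u}\bar{P}$ and $\rd_{v}P$ are obtained algebraically through $\bar{P}=P-r^{2}X\phi$, controlling $\rd_{u}(r^{2}X\phi)=2r\nu X\phi+r^{2}\rd_{u}X\phi$ by \eqref{X.highr}--\eqref{X.highr2} (and symmetrically for $\rd_{v}P$).

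To produce \eqref{Psi.diff.est}, I would integrate \eqref{dPsi.main} from $u$ to $\uS$; since $r^{2}\approx 2(\uS-u)$ under the gauge \eqref{gauge.C} and \eqref{nu.S.eq3}, the change of variable $du'=-r\,dr$ reduces the integral to $\int_{0}^{r}r'(1+\log(r_{0}/r'))\,dr'\ls r^{2}(1+\log(r_{0}/r))$, producing both the existence of the limit $\Psi_{\mathcal{S}}(v)=\lim_{u\to\uS}\Psi(u,v)$ (Cauchy criterion) and the quantitative difference bound $\ls v^{1/2}(r^{2}/r_{0}^{2})(1+\log(r_{0}/r))$. The upper bound $|\Psi_{\mathcal{S}}|\ls v^{1/2}$ is inherited from \eqref{Psi.main}. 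For $|\frac{d}{dv}\Psi_{\mathcal{S}}|\ls v^{-1/2}$, since $\uS\in C^{1}$ with $\frac{d\uS}{dv}=-r|\lambda|_{\mathcal{S}}(v)\approx v^{-2s}$ (Proposition~\ref{S.curve.prop}), I would analyze the tangential combination $\rd_{v}\Psi+\frac{d\uS}{dv}\rd_{u}\Psi$ along $\mathcal{S}$, where the factor $v^{-2s}$ from the slope precisely compensates the $v^{2s-1/2}$ growth in $\rd_{u}\Psi$, yielding the claimed $v^{-1/2}$ rate after checking that the logarithmic divergence at $r\to 0$ is absorbed by this compensation (this is the most delicate point, and mirrors the tangential-regularity phenomenon for $r|\lambda|_{\mathcal{S}}(v)$ already established in \eqref{rlambdaS.prop}).

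The lower bound \eqref{Psi.lower} combines Proposition~\ref{lower.prop}, which gives $|P|=r^{2}|\rd_{v}\phi|/(r|\lambda|)\gtrsim \ep v^{1/2}$ in $\{r\leq\ep r_{0}(v)\}$, with the closeness estimate \eqref{LvsPsi}: since the function $t\mapsto t^{2}(1+\log^{2}t^{-1})$ satisfies $t^{2}(1+\log^{2}t^{-1})\ls \ep^{2}\log^{2}\ep$ for $0<t\leq\ep$, we obtain $|\Psi-P|\ls v^{1/2}\ep^{2}\log^{2}\ep\ll\ep v^{1/2}$ for $\ep$ sufficiently small, thereby yielding $|\Psi|\approx v^{1/2}$. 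Finally, the higher-order bound \eqref{duduPsi} is proved by the same scheme as \eqref{dPsi.main}: commuting once more with $\rd_{u}$ or $\rd_{v}$ in the wave-equation expressions for $\rd_{u}P$ and $\rd_{v}\bar{P}$, and controlling the resulting $\rd_{u}X\phi$, $\rd_{uu}^{2}X\phi$, $\rd_{uv}^{2}X\phi$ terms via Corollary~\ref{cor.ddX}, together with the second-order metric and field estimates of Proposition~\ref{prop.commuted} to handle the auxiliary terms in $\rd_{u}(r|\lambda|)$, $\Omega^{2}$, and $A_{u}$.
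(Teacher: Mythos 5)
Your route is essentially the paper's: \eqref{Psi.main} from Proposition~\ref{main.prop.spacelike}, \eqref{LvsPsi} from the identity $r^2L\phi=\Psi+\tfrac12 r^2X\phi$ together with Proposition~\ref{prop.commuted3}, the derivative bounds from the wave equation in the form \eqref{FieldX} plus the $X$-commuted estimates, the limit $\Psi_{\mathcal{S}}$ by integrating in $u$, the lower bound \eqref{Psi.lower} by combining Proposition~\ref{lower.prop} with the closeness estimate and the smallness of $x^2(1+\log^2 x^{-1})$ for $x\leq\ep$, and \eqref{duduPsi} via Corollary~\ref{cor.ddX}. The one point where your argument, as written, falls short of the stated estimate is the log power in \eqref{dPsi.main} (and consequently in \eqref{Psi.diff.est}, \eqref{XPsi}): if you bound $\rd_u P$ and $\rd_u\bar P$ separately, each contains the undifferentiated term $r|\nu|X\phi$, which only enjoys the $\log^2$ bound \eqref{X.highr}, so you obtain $v^{2s-\frac12}(1+\log^2(\tfrac{r_0}{r}))$ rather than the claimed single log. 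The cancellation is already present in your own formulas: since $\Psi=P-\tfrac12 r^2X\phi$, adding before taking absolute values gives
\begin{equation*}
\rd_u\Psi=\rd_u P-\tfrac12\rd_u(r^2X\phi)=-r|\nu|X\phi-\bigl(r\nu X\phi+\tfrac12 r^2\rd_u X\phi\bigr)+O(\exd r^{80})=-\tfrac12 r^2\rd_u X\phi+O(\exd r^{80}),
\end{equation*}
which is exactly the paper's identity \eqref{duPsi.eq}; the surviving term $r^2\rd_uX\phi$ carries the single-log bound \eqref{X.highr2} (and symmetrically for $\rd_v\Psi$, where the $\pm r|\lambda|X\phi$ contributions cancel). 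You should therefore estimate only after this cancellation, both at first order and in your $X\Psi$ expansion; with that modification your argument delivers the lemma as stated. Concerning $|\frac{d}{dv}\Psi_{\mathcal{S}}|\ls v^{-\frac12}$, the paper is equally terse, and your tangential-derivative sketch is reasonable, but note that the logarithmic divergence you flag is not removed by the slope factor $v^{-2s}$ alone; it is again the cancellation-based representation of $\rd_u\Psi,\rd_v\Psi$ (plus the mixed bound \eqref{duduPsi}) that one must feed into that computation.
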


\begin{proof}
First, \eqref{Psi.main} follows immediately from  the estimates of Proposition~\ref{main.prop.spacelike}. Then, from \eqref{Field}, we obtain \begin{equation}\label{duPsi.eq}
	\rd_u  \Psi  = -\frac{1}{2} r^2 \rd_u X\phi + \frac{\Omega^2}{r^2|\nu||\lambda|} \left(\frac{Q^2}{r^2}-1+m^2 r^2 |\phi|^2\right)r^2 \rd_u \phi,
\end{equation} and the estimates of Proposition~\ref{main.prop.spacelike} together with those of Proposition~\ref{prop.commuted2} and Proposition~\ref{prop.commuted3} give \begin{equation}\label{duPsi}
	|\rd_u  \Psi |(u,v) \ls v^{2s-\frac{1}{2}} (1+\log(\frac{r_0(v)}{r(u,v)})).
\end{equation} The analogous estimate for $\rd_v \Psi$  is \begin{equation}\label{dvPsi.eq}
	\rd_v  \Psi  = \frac{1}{2} r^2 \rd_v X\phi + \frac{\Omega^2}{r^2|\nu||\lambda|} \left(\frac{Q^2}{r^2}-1+m^2 r^2 |\phi|^2\right)r^2 \rd_v \phi
\end{equation} and then repeating the argument, we establish \eqref{dPsi.main}. \eqref{LvsPsi} follows from the identity $\Psi + \frac{r^2 X\phi}{2} = r^2 L\phi$ (and the analogue for $\barL\phi$) together with the estimates of Proposition~\ref{prop.commuted2} and Proposition~\ref{prop.commuted3}.
Integrating \eqref{duPsi} in $u$ gives the existence of $\Psi_{\mathcal{S}}(v)$ and \eqref{Psi.diff.est}.  Moreover, for $u \geq u_{\ep}(v)$ (as in Section~\ref{section.commuted}), we obtain, noticing that $2\Psi(u,v) + r^2 X \phi = 2 r^2 \frac{\rd_v \phi}{r\lambda}$ and using the estimates of Proposition~\ref{prop.commuted2} and Proposition~\ref{prop.commuted3} together with Proposition~\ref{lower.prop}, we find a $D>0$ and $C>0$ (independent of $\ep$) such that: $$ |\Psi|(u,v) \geq D \ep v^{\frac{1}{2}} - C r^2 v^{2s-\frac{1}{2}} [1+ \log^2(\frac{r_0(v)}{r(u,v)})] \geq \frac{D}{2} \ep v^{\frac{1}{2}}, $$ for $\ep>0$ small enough (we used the fact that $x^2 [1+\log^2(x)]$ is decreasing near $x=0$), which gives \eqref{Psi.lower}. Finally, \eqref{duduPsi} follows from a commutation with $\rd_u$ or $\rd_v$ of \eqref{duPsi.eq} and combining it with all the previously proved estimates, most importantly those of Corollary~\ref{cor.ddX}.

\end{proof}

\begin{cor}\label{cor.phi.refined}
There exists  $\varXi_S(v)$ such that for all $0 \leq u \leq \uS$, $v\geq v_0$  \begin{align}				&\label{phi.S}  |\phi(u,v) -\Psi_S(v) \log(\frac{r_0(v)}{r(u,v)})- \varXi_S(v)| \ls v^{\frac{1}{2}}\cdot \frac{r^2(u,v)}{r_0^2(v)} (1+\log^2(\frac{r_0(v)}{r(u,v)})),\\  \label{varXi.S} &|\varXi_S|(v)\lesssim  v^{\frac{1}{2}}.		
\end{align} 

%Assuming that $\phi$ is real-valued, then for all  $(u,v) \in \{(u,v)\ r(u,v) \leq \ep r_0(v)\}$: \begin{equation}			C_-\  \left(\log(\frac{ \ep r_0(v)}{r(u,v)})-  \log(\ep^{-1})\right)  v^{\frac{1}{2}} \leq	  |\phi|(u,v) \leq  C_+\ \left( \log(\frac{ \ep r_0(v)}{r(u,v)}) + \log(\ep^{-1})\right) v^{\frac{1}{2}}.		\end{equation}
\end{cor}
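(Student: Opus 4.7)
The plan is to identify $\varXi_S(v)$ as the limiting value at $u = u_{\mathcal{S}}(v)$ of the function $F(u,v) := \phi(u,v) - \Psi_S(v)\log(r_0(v)/r(u,v))$, and to obtain the quantitative rate \eqref{phi.S} by integrating $\rd_u F$ in the ingoing direction. The first step is to derive a clean formula for $\rd_u F$. Starting from $\rd_u\phi = -r\nu \cdot \barL\phi$ together with the elementary identity $\barL\phi = \Psi(u,v)/r^2 - X\phi/2$ (which follows from the definition \eqref{Psi.def} combined with $X = L - \barL$), and using $\rd_u\log(r_0(v)/r(u,v)) = -\nu/r$, a direct computation yields
\begin{equation*}
\rd_u F = -\frac{\nu}{r}\bigl[\Psi(u,v)-\Psi_S(v)\bigr] + \frac{r\nu}{2} X\phi.
\end{equation*}

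Next, I would bound each term on the right-hand side using the tools already in place. The estimate \eqref{Psi.diff.est} of Lemma~\ref{lemma.Psi} controls $|\Psi - \Psi_S|$, while \eqref{X.highr} of Proposition~\ref{prop.commuted3} controls $|X\phi|$. Combined with $r|\nu|\approx 1$ from Proposition~\ref{main.prop.spacelike} and the scaling relation $r_0^2(v) \approx v^{1-2s}$ (which converts $v^{2s-\frac{1}{2}}$ to $v^{\frac{1}{2}}/r_0^2(v)$), the two contributions balance out to give
\begin{equation*}
|\rd_u F|(u,v) \lesssim \frac{r|\nu|(u,v)\, v^{\frac{1}{2}}}{r_0^2(v)}\bigl(1+\log^2(r_0(v)/r(u,v))\bigr).
\end{equation*}
The decisive feature is that the right-hand side carries the factor $r|\nu|$, which is what makes the subsequent integration convergent up to $r = 0$.

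The third step is to integrate in $u'$ at fixed $v$ between two points $u_1 \leq u_2 < u_{\mathcal{S}}(v)$. Using $\rd_u r = \nu$ to change variables via $r|\nu|\,du' = -r\,dr'$, the integral reduces to
\begin{equation*}
\int_0^{r(u_1,v)} r'\bigl(1+\log^2(r_0(v)/r')\bigr)\,dr' \lesssim r^2(u_1,v)\bigl(1+\log^2(r_0(v)/r(u_1,v))\bigr),
\end{equation*}
which is an elementary one-variable estimate (for instance via the substitution $r' = r(u_1,v) s$). This produces the Cauchy estimate $|F(u_2,v) - F(u_1,v)| \lesssim v^{\frac{1}{2}}\cdot r^2(u_1,v)\cdot r_0^{-2}(v)\cdot(1+\log^2(r_0/r(u_1,v)))$, guaranteeing the existence of the limit $\varXi_S(v) := \lim_{u'\to u_{\mathcal{S}}(v)} F(u',v)$ and simultaneously delivering \eqref{phi.S}. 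The bound \eqref{varXi.S} is then obtained by evaluating \eqref{phi.S} at $u=0$ (where $r(0,v)=r_0(v)$ makes the logarithmic term vanish), which yields $|\phi(0,v) - \varXi_S(v)| \lesssim v^{\frac{1}{2}}$, and combining this with the upper bound $|\phi|(0,v)\lesssim v^{\frac{1}{2}}$ extracted from \eqref{phi.est.S}.

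The main obstacle lies in the precise cancellation in $\rd_u F$: without subtracting $\Psi_S(v)\log(r_0/r)$ from $\phi$, the individual terms $\rd_u\phi$ and $\Psi/r$ both blow up like $r^{-1}$ near $\mathcal{S}$, and only their difference, weighted by the approximate constancy of $\Psi(u,v)\approx\Psi_S(v)$ up to the sharp error \eqref{Psi.diff.est}, produces an integrable quantity. The AVTD-type bound \eqref{X.highr} on $X\phi$ is equally essential: it provides the second term in $\rd_u F$ at the same quantitative scale $v^{\frac{1}{2}}/r_0^2(v)$, and the matching of these two error scales, together with the quadratic logarithmic loss (rather than any higher power), is exactly what allows the final sharp bound in \eqref{phi.S} to close.
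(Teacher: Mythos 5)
Your proposal is correct and takes essentially the same route as the paper: the paper defines $\varXi(u,v)=\phi(u,v)-\Psi(u,v)\log(\frac{r_0(v)}{r(u,v)})$ with the \emph{local} $\Psi$, bounds $\rd_u\varXi=-\frac{r|\nu|}{2}X\phi-[\rd_u\Psi]\log(\frac{r_0(v)}{r(u,v)})$ using Lemma~\ref{lemma.Psi} and the $X\phi$ estimates, integrates in $u$ via $r|\nu|\approx1$, and only converts $\Psi(u,v)$ to $\Psi_{\mathcal{S}}(v)$ at the very end through \eqref{Psi.diff.est}, whereas you subtract $\Psi_{\mathcal{S}}(v)\log(\frac{r_0(v)}{r(u,v)})$ from the start and invoke \eqref{Psi.diff.est} directly in $\rd_u F$ — a purely cosmetic reshuffling of where the error $\Psi-\Psi_{\mathcal{S}}$ is absorbed. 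All the essential ingredients (the cancellation of the $r^{-1}$ singularities, the AVTD bound on $X\phi$ from Propositions~\ref{prop.commuted2}--\ref{prop.commuted3}, the change of variables $r|\nu|\,du\approx -r\,dr$, and the evaluation at $u=0$ combined with \eqref{phi.est.S} to get \eqref{varXi.S}) coincide with the paper's argument.
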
\begin{proof}
We start defining $$ \varXi(u,v):= \phi(u,v) - \Psi(u,v) \log(\frac{r_0(v)}{r(u,v)}),$$ we get $$ \rd_u \varXi= -\frac{r|\nu|}{2} X\phi - [\rd_u \Psi] \log(\frac{r_0(v)}{r(u,v)}),$$ which we estimate as such, using Proposition~\ref{main.prop.spacelike}: $$|\rd_u \varXi| \ls v^{2s-\frac{1}{2}}[1+ \log^2(\frac{r_0(v)}{r(u,v)})],$$ and integrating in $u$ shows that  $$ \varXi_{\mathcal{S}}(v) := \lim_{u \rightarrow u_{\mathcal{S}}(v)} \varXi(u,v) \text{ exists and }  |\varXi_{\mathcal{S}}|(v) \ls v^{\frac{1}{2}},\   \frac{|\varXi(u,v)-\varXi_{\mathcal{S}}(v)|}{v^{\frac{1}{2}}}\ls \frac{r^2(u,v)}{r_0^2(v)} (1+\log^2(\frac{r_0(v)}{r(u,v)})),$$ which then gives \eqref{varXi.S}, also using taking advantage of the estimates of Lemma~\ref{lemma.Psi}. 

\end{proof}

\begin{cor}\label{cor.Gamma}
The following estimates hold true: for all $0 \leq u \leq \uS$, $v\geq v_0$:
\begin{equation}\label{dudu.omega.detailed}\bigl|\rd_u(r^2\rd_u\log	\Omega)  \bigr|(u,v) \ls v^{2s} \cdot [1+\log(\frac{r_0(v)}{r(u,v)})],\end{equation}%\begin{equation}\label{dvdv.omega.detailed}\bigl|\rd_v(r^2\rd_v\log	\Omega)  \bigr|(u,v) \ls v^{-2s}  [1+\log(\frac{r_0(v)}{r(u,v)})],\end{equation}
\begin{equation}\label{dudv.omega.detailed}\bigl|\rd_u(r^2\rd_v\log	\Omega)  \bigr|(u,v),\ \bigl|\rd_v(r^2\rd_u\log	\Omega)  \bigr|(u,v) \ls  1+\log^2(\frac{r_0(v)}{r(u,v)}),\end{equation}
\begin{equation}\label{du.omega.detailed}\bigl|-r^2\rd_u\log	\Omega(u,v)  - \frac{|\Psi|^2(u,v)-1}{2} r|\nu|\bigr| \ls v \cdot \frac{r^2(u,v)}{r_0^2(v)}[1+\log^2(\frac{r_0(v)}{r(u,v)})],\end{equation}\begin{equation}\label{dv.omega.detailed}\bigl|-r^2\rd_v\log	\Omega(u,v)  - \frac{|\Psi|^2(u,v)-1}{2} r|\lambda|\bigr| \ls v^{1-2s} \cdot \frac{r^2(u,v)}{r_0^2(v)}[1+\log^2(\frac{r_0(v)}{r(u,v)})].\end{equation}
%\begin{equation}\label{du.omega.detailed}\rd_u\log\left(	[	\Omega^2(u,v) r(u,v)]^{\frac{1}{\Psi^2(u,v)}}\right)= \frac{|\nu|}{r} [1+ O(\frac{r^2(u,v)}{r_0^2(v)}(1+\log^2(\frac{r_0(v)}{r(u,v)})))],\end{equation}\begin{equation}\label{dv.omega.detailed}\rd_v\log\left(	[	\Omega^2(u,v) r(u,v)]^{\frac{1}{\Psi^2(u,v)}}\right)= \frac{|\lambda|}{r} [1+ O(\frac{r^2(u,v)}{r_0^2(v)}(1+\log^2(\frac{r_0(v)}{r(u,v)})))].\end{equation}
Moreover, defining $\Gamma(u,v)$ in the following fashion \begin{equation}\label{Gamma.def}
	\Gamma(u,v):=  \frac{r_0(v)}{r(u,v)}[	\Omega^2(u,v) r(u,v)]^{\frac{1}{|\Psi|^2(u,v)}},
\end{equation}
we find that there exists $\Gamma_{\mathcal{S}}(v)$ such that we have the following estimates \begin{equation}\label{omega.detailed}\begin{split} &  \log(\Gamma(u,v))=O(1),\ |\rd_u \Gamma|(u,v),\ \frac{|\rd_v \Gamma|(u,v)}{r|\lambda|(u,v)} \ls v^{2s-1} [ 1+\log^2(\frac{r_0(v)}{r(u,v)})],\\ & \Gamma_{\mathcal{S}}(v):= \lim_{u \rightarrow u_{\mathcal{S}}(v)}\Gamma(u,v), \\
		&\bigl| \Gamma(u,v)- \Gamma_{\mathcal{S}}(v)\bigr|\ls \frac{r^2(u,v)}{r_0^2(v)}(1+\log^2(\frac{r_0(v)}{r(u,v)})).\\ & \end{split}
\end{equation}
\end{cor}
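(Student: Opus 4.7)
The plan is to derive all estimates from the already-established Raychaudhuri-type identities \eqref{new.ray.U}, \eqref{new.ray.V} and the refined $\Psi$-control in Lemma~\ref{lemma.Psi} and Corollary~\ref{cor.phi.refined}. First, multiplying \eqref{new.ray.U} by $-r^2$ and rearranging yields
\begin{equation*}
    -r^2\rd_u\log\Omega^2 = r|\nu|\left( \frac{r^2|D_u\phi|^2}{|\nu|^2} - 1 \right) + O(e^{1.9K_- v} r^{Dv}).
\end{equation*}
Since $r^2\barL\phi= rD_u\phi/|\nu|$ (modulo a term $i q_0 A_u r/|\nu|\cdot\phi$ which is negligible by \eqref{A.S} and \eqref{phi.est.S}), one has $r^2|D_u\phi|^2/|\nu|^2 = |r^2\barL\phi|^2 + O(\exd r^{90})$, and then \eqref{LvsPsi} upgrades this to $|\Psi|^2$ up to an error $\ls v^{1/2}\cdot(r^2/r_0^2)(1+\log^2)\cdot |\Psi|$, giving exactly \eqref{du.omega.detailed} after halving (since $\log\Omega=\frac12\log\Omega^2$). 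The statement \eqref{dv.omega.detailed} is derived identically from \eqref{new.ray.V}, using $r^2 L\phi\approx \Psi$ instead and tracking the extra $v^{-2s}$ weight coming from $r|\lambda|$.

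For the higher-order estimates \eqref{dudu.omega.detailed} and \eqref{dudv.omega.detailed}, I would differentiate the identities just obtained. Applying $\rd_u$ produces terms of the form $\rd_u(r|\nu|)\cdot(|\Psi|^2-1)$, which is negligible by \eqref{du.nu}; $r|\nu|\cdot\rd_u|\Psi|^2$, which is bounded by $|\nu|\cdot|\Psi|\cdot|\rd_u\Psi|\ls v^{2s}(1+\log)$ by \eqref{Psi.main}, \eqref{dPsi.main} and \eqref{nu.S.eq3}; and the $\rd_u$-derivative of the remainder, for which one needs \eqref{duduPsi}. An identical commutation argument with $\rd_v$ applied to \eqref{du.omega.detailed} (and $\rd_u$ applied to \eqref{dv.omega.detailed}) yields \eqref{dudv.omega.detailed}, where the crucial cancellation is the better weight $r|\lambda|\ls v^{-2s}$ balancing the $v^{2s}$ factor.

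For the $\Gamma$-analysis, I would compute directly
\begin{equation*}
    \rd_u\log\Gamma = -\frac{\nu}{r}+\frac{1}{|\Psi|^2}\left(\rd_u\log\Omega^2+\frac{\nu}{r}\right) - \frac{2\Re(\bar\Psi\,\rd_u\Psi)}{|\Psi|^4}\log(\Omega^2 r).
\end{equation*}
The first two terms combine into $\frac{\nu}{r|\Psi|^2}\bigl(1 - |\Psi|^2 + \tfrac{r}{\nu}\rd_u\log\Omega^2\bigr)$, and by \eqref{du.omega.detailed} rewritten as $r\rd_u\log\Omega^2 = \nu(|\Psi|^2-1) + O(v\cdot(r^2/r_0^2)(1+\log^2))$ the bracket is reduced to an admissible error. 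The last term is controlled using $|\Psi|\approx v^{1/2}$ from \eqref{Psi.lower}, $|\rd_u\Psi|\ls v^{2s-1/2}(1+\log)$ from \eqref{dPsi.main}, and the global bound $|\log(\Omega^2 r)|\ls v(1+\log)$ from \eqref{logOmega}, yielding $|\rd_u\log\Gamma|\ls v^{2s-1}(1+\log^2(r_0/r))$. The analogous $\rd_v$-computation replaces the $v^{2s-1}$ weight by $r|\lambda|\cdot v^{2s-1}$, which is the bound claimed in \eqref{omega.detailed} after scaling. Integrating $\rd_u\log\Gamma$ from $u$ up to $u_\mathcal{S}(v)$ and changing variables $u'= v^{1-2s}s$ converts the $\log^2$-factor into a convergent integral $\int_0^1\log^2(1/s)\,ds$, producing at once the boundedness $\log\Gamma=O(1)$, the existence of $\Gamma_\mathcal{S}(v)$, and the quantitative remainder $|\Gamma-\Gamma_\mathcal{S}(v)|\ls (r^2/r_0^2)(1+\log^2(r_0/r))$.

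The main obstacle, as usual in this type of analysis, is the precise cancellation at the $|\Psi|^{-2}$ exponent in the definition of $\Gamma$: the apparently diverging leading contributions from $-\nu/r$ and from $\rd_u\log\Omega^2$ must cancel to the order $v^{2s-1}$, and this is only possible because the exponent is exactly $1/|\Psi|^2$ and because \eqref{du.omega.detailed} captures the $|\Psi|^2-1$ factor \emph{sharply}, i.e.\ with error $O(v\cdot(r^2/r_0^2)(1+\log^2))$ rather than just $O(v)$. Keeping track of the dichotomy between the red-shift regime $r\sim r_0$ and the crushing regime $r\ll r_0$ when integrating the logarithmic factors is the only remaining technical complication, but it is already handled by the same change-of-variable trick used in Proposition~\ref{prop.crushing}.
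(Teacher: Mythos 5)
Your treatment of \eqref{du.omega.detailed}, \eqref{dv.omega.detailed} and of $\Gamma$ is essentially the paper's argument: the paper also recasts \eqref{RaychU}--\eqref{RaychV} (in the form $-r^2\rd_u\log\Omega^2+r|\nu| = -r^2\rd_u\log(r|\nu|)+r|\nu|\,\bigl|\Psi-\tfrac{r^2X\phi}{2}\bigr|^2$, i.e.\ your first display) and upgrades $r^2\barL\phi$, $r^2L\phi$ to $\Psi$ via \eqref{LvsPsi}, then computes $\rd_u\log\Gamma$, cancels $-\nu/r$ against $|\Psi|^{-2}\rd_u\log(\Omega^2 r)$ using \eqref{du.omega.detailed}, and integrates up to $\mathcal{S}$. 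The one genuinely different step is \eqref{dudv.omega.detailed}: the paper does not differentiate the first-order identities transversally, but uses the Einstein equation \eqref{Omega} for $\rd_u\rd_v\log\Omega^2$ and observes that, after inserting \eqref{dv.omega.detailed}, the leading terms $|\nu||\lambda|(1-|\Psi|^2)$, $-2r|\lambda|\rd_u\log\Omega^2$ and $2\lambda\nu$ cancel exactly, leaving only the $\frac{r^2}{r_0^2}(1+\log^2)$ error. Your route -- applying $\rd_v$ to the $u$-Raychaudhuri identity and letting the transverse weights in \eqref{dPsi.main}, \eqref{X.highr2} (where $r|\lambda|\approx v^{-2s}$ offsets $v^{2s}$) do the work -- also delivers the weight-free bound, \emph{provided} you differentiate the exact identity, with the remainder written explicitly through $r^2X\phi$ and $\rd_u(r\nu)$ (controlled by \eqref{X.highr2}, \eqref{dudunu}, \eqref{dv.lamba}), rather than the $O(\cdot)$ statement \eqref{du.omega.detailed} itself; the same remark applies to \eqref{dudu.omega.detailed}, where first derivatives of $\Psi$ and $r^2X\phi$ suffice and \eqref{duduPsi} is not actually needed. (Also, your parenthetical rewriting of \eqref{du.omega.detailed} as $r\rd_u\log\Omega^2=\nu(|\Psi|^2-1)+O\bigl(v\frac{r^2}{r_0^2}(1+\log^2)\bigr)$ misplaces a factor $|\nu|\approx r^{-1}$ in the error; since you only use it inside $\frac{r}{\nu}\rd_u\log\Omega^2$, the final bound $v^{2s-1}(1+\log^2)$ is unaffected.)

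The one genuine gap is the claim that integrating $\rd_u\log\Gamma$ up to $\mathcal{S}$ ``produces at once'' $\log\Gamma=O(1)$. The integration of $|\rd_u\log\Gamma|\ls v^{2s-1}(1+\log^2(\tfrac{r_0}{r}))$ over a $u$-interval of length $\approx r^2(u,v)$ only yields the \emph{relative} statements: existence of $\Gamma_{\mathcal{S}}(v)$ and $|\log\Gamma-\log\Gamma_{\mathcal{S}}(v)|\ls \frac{r^2}{r_0^2}(1+\log^2(\tfrac{r_0}{r}))\ls 1$; it gives no absolute control of $\log\Gamma_{\mathcal{S}}(v)$, which a priori could grow in $v$. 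You must anchor the estimate at a point where both factors in \eqref{Gamma.def} are controlled two-sidedly: the paper evaluates at $r=\ep r_0(v)$, where \eqref{Omega.S} gives $|\log(\Omega^2 r)|\ls v$ (alternatively \eqref{logOmega} with $\log(\tfrac{r_0}{r})=\log(\tfrac1\ep)=O(1)$) and \eqref{Psi.lower} gives $|\Psi|^2\gtrsim v$, so that $|\Psi|^{-2}\log(\Omega^2 r)=O(1)$ and hence $\log\Gamma_{\mathcal{S}}(v)=O(1)$. Note that the lower bound $|\Psi|\gtrsim v^{1/2}$, which you also invoke when estimating $|\Psi|^{-3}|\rd_u\Psi|\,|\log(\Omega^2 r)|$, is only available in the region $r\leq\ep r_0(v)$ (that is the content of \eqref{Psi.lower}), which is precisely why the anchoring at $r=\ep r_0(v)$ is the natural choice; with that one evaluation added, your argument closes.
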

\begin{proof}
We start with  \eqref{RaychU} that we write in the form $$ -r^2 \rd_u\log(\Omega^2) + r|\nu|= -r^2 \rd_u \log(r|\nu|) + r|\nu|\bigl| \Psi - \frac{r^2 X\phi}{2}\bigr|^2.$$ Then, taking a $\rd_u$ derivative of the above and using \eqref{du.nu.S}, \eqref{dudunu}, \eqref{Psi.main} and the estimates of Proposition~\ref{prop.commuted2} and Proposition~\ref{prop.commuted3} gives \eqref{dudu.omega.detailed}.  Then, we write \eqref{RaychU} in the form \begin{equation}\label{new.RaychU}
	\rd_u	\log(\Omega^2 r)-\rd_u\log(r|\nu|)=\rd_u	\log(\frac{\Omega^2}{|\nu|})=- r^{3}|\nu| |\barL\phi|^2.\end{equation}

Then, we use \eqref{nu.S.eq2} and \eqref{du.nu}, together with \eqref{LvsPsi} to obtain \begin{equation}\label{int.duomega}
	\bigl|\rd_u	\log(\Omega^2 r)+ \frac{|\nu|}{r}|\Psi|^2 \bigr|\ls  v^{2s} (1+\log^2(\frac{r_0(v)}{r(u,v)})).
\end{equation} \eqref{du.omega.detailed} then follows immediately from \eqref{int.duomega}. Now, note that by  \eqref{dPsi.main}, \eqref{Psi.lower} and the estimates of Proposition~\ref{main.prop.spacelike}:  $$\bigl|\rd_u \log( [\Omega^2 r]^{\frac{1}{|\Psi|^2(u,v)}})-|\Psi|^{-2}\rd_u \log(\Omega^2r)\bigr| \ls |\Psi|^{-3} |\rd_u \Psi|| \log(\Omega^2 r)|\ls  v^{2s-1}(1+\log^2(\frac{r_0(v)}{r(u,v)})),$$ and combining with \eqref{int.duomega}, we arrive at \begin{equation*}
	\bigl|\rd_u \log( r^{-1} [\Omega^2 r]^{\frac{1}{|\Psi|^2(u,v)}})\bigr|=\bigl|\rd_u \log( [\Omega^2 r]^{\frac{1}{|\Psi|^2(u,v)}})+\frac{|\nu|}{r}\bigr| \ls   v^{2s-1}(1+\log^2(\frac{r_0(v)}{r(u,v)})).
\end{equation*}  This clearly means that $$ \log( \Gamma_{\mathcal{S}}(v)):=\lim_{u \rightarrow u_{\mathcal{S}}(v)} \log( \frac{r_0(v)} {r(u,v)}[\Omega^2 r]^{\frac{1}{|\Psi|^2(u,v)}})  \text{ exists, and } \bigl|\log( \frac{r_0(v) [\Omega^2 r]^{\frac{1}{|\Psi|^2(u,v)}}}{r(u,v)\Gamma_{\mathcal{S}}(v)})\bigr|\ls \frac{r^2(u,v)}{r_0^2(v)}  (1+\log^2(\frac{r_0(v)}{r(u,v)})),$$ which is \eqref{omega.detailed}. Now, evaluating the above estimate at $u=u_{\ep}(v)$, i.e., $r=\ep r_0(v)$ gives that $\log(\Gamma_{\mathcal{S}}(v))=O(1)$, using \eqref{Omega.S}. Now, for \eqref{dv.omega.detailed} we proceed analogously with \eqref{RaychV} and obtain, using the estimates of Proposition~\ref{main.prop.spacelike}:\begin{equation}
	\bigl| \rd_v	\log(\Omega^2 r)+\frac{|\lambda|}{r} |\Psi|^2\bigr| \ls 1+\log^2(\frac{r_0(v)}{r(u,v)}),\end{equation} and \eqref{dv.omega.detailed} then follows. For \eqref{dudv.omega.detailed}, we rewrite  \eqref{Omega}, using the fact that $r^2 L\phi = \Psi+\frac{r^2 X\phi}{2}$ and $r^2 \barL\phi = \Psi-\frac{r^2 X\phi}{2}$: $$ \rd_v(r^2 \rd_u\log(\Omega^2)) = -2r|\lambda| \rd_u \log(\Omega^2) -2 |\nu||\lambda|\Re\left( [\Psi+\frac{r^2 X\phi}{2}] [\overline{\Psi}-\frac{r^2 \overline{X\phi}}{2}]\right)+ \frac{ \Omega^{2}}{2}+2\lambda\nu- \frac{\Omega^{2}}{r^{2}} Q^2. $$
Now using \eqref{dv.omega.detailed}, \eqref{Psi.main} and the estimates of Proposition~\ref{prop.commuted2} and Proposition~\ref{prop.commuted3}, we finally obtain \begin{equation}\begin{split}
		&\rd_u(r^2 \rd_v\log(\Omega^2)) = |\nu||\lambda| \left[ 1-|\Psi|^2(u,v) - 2r^2 \frac{\rd_v \log(\Omega)}{r|\lambda|} + O ( \frac{r^2(u,v)}{r_0^2(v)} [1+ \log(\frac{r_0^2(v)}{r^2(u,v)})])\right] \\ & = |\nu||\lambda|   O ( \frac{r^2(u,v)}{r_0^2(v)} [1+ \log(\frac{r_0^2(v)}{r^2(u,v)})]),
	\end{split}
\end{equation}  from which the desired \eqref{dudv.omega.detailed} follows easily, thanks to the main terms canceling (we also need to revert the roles of $u$ and $v$ to also obtain the analogous estimate on $\rd_u(r^2 \rd_v\log(\Omega^2))$).

\end{proof}

\subsection{Kasner asymptotics with degenerate coefficients}
We now turn to  the problem of putting the spacetime metric $g$ in the  asymptotic form (by which we mean as $r\rightarrow 0$) of a 	Kasner metric with variable exponents. To this effect, we will take advantage of the higher order commuted estimates of Proposition~\ref{prop.commuted}.

%, adopting the Weingarten formalism (see HR, \cite{Warren1}), for which we will take advantage of the higher order commuted estimates of Proposition~\ref{prop.commuted}. 

\begin{prop}\label{kasner.prop} We define the approximate proper time \begin{equation}\label{tau.def}
	\tau(u,v) = [\frac{\Omega^2(u,v) }{r|\nu|(u,v) \cdot r|\lambda|(u,v)}]^{\frac{1}{2}} r^2(u,v)[3+|\Psi|^2(u,v)]^{-1}= \frac{r^{\frac{3}{2}}(u,v)}{[\rho(u,v)-\frac{r}{2}(u,v)]^{\frac{1}{2}}}[3+|\Psi|^2(u,v)]^{-1},
\end{equation}  and $N^2(u,v):= -g(\nabla \tau,\nabla\tau)$. There exists $\ep>0$, $D>0$ such that for all $ u_{\ep^2}(v)\leq u \leq \uS$, $v\geq v_0$,

%\begin{equation}\label{T.estimate}\bigl| g(\nabla T,\nabla T) -1 \bigr|,\ \bigl| \rd_u [g(\nabla T,\nabla T) ] \bigr|,\ \bigl|\rd_v[ g(\nabla T,\nabla T) ] \bigr| \ls r^{D v},		\end{equation}
\begin{equation}\label{tau.est}
	\tau(u,v) = \frac{\Omega(u,v) r^2(u,v)}{[r|\lambda|]_{\mathcal{S}}^{\frac{1}{2}}(v)} \frac{1}{3+ |\Psi_{\mathcal{S}}|^2(v)} \left[1+  O(\frac{r^2(u,v)}{r^2_0(v)} [1+\log^2(\frac{r^2_0(v)}{r^2(u,v)})])\right] \approx [\frac{r(u,v)}{r_0(v)} \Gamma_{\mathcal{S}}(v) ]^{\frac{|\Psi|^2_{\mathcal{S}}(v)}{2}}r^{\frac{3}{2}}(u,v) v^{s-1}.
\end{equation}
\begin{equation}\label{x.def}  					x(u,v) =  \int_{v}^{+\infty} (3+|\Psi_{\mathcal{S}}|^2(v')) (r|\lambda|)_{\mathcal{S}}(v') dv' -(3+|\Psi|^2(u,v))\frac{r^2(u,v)}{4}\cdot  				 [1+ O( \frac{r^2}{r_0^2(v)}(1+\log^2(\frac{r_0(v)}{r})))] \approx v^{2-2s},
\end{equation} 
\begin{equation}\label{kasner.form}
	g= - N^2(u,v) d\tau^2 +  a_{x x}(u,v)dx^2 + r^2(u,v) ( d\theta^2 + \sin^2(\theta)d\varphi^2),
\end{equation} 
\begin{equation}\label{N.est}
	\bigl| N^2(u,v)-1\bigr| \ls \frac{r^2(u,v)}{r^2_0(v)} [1+\log^2(\frac{r_0(v)}{r(u,v)})],
\end{equation} and defining  $p(u,v)$ by the expression \begin{equation}\label{p.def}
	r^2(u,v) =  \tau^{2p(u,v)},
\end{equation} we also have
\begin{equation}\label{axx}
	a_{xx}(u,v)= \tau^{2(1-2p(u,v))} (1+O(\frac{r^2(u,v)}{r^2_0(v)} [1+{\log^2}(\frac{r_0(v)}{r(u,v)})])),
\end{equation}
\begin{equation}\label{p.est}
	p(u,v) = \frac{2}{3+|\Psi|^2(u,v)} \left(1+O(\frac{\log(v)}{v |\log|(r)}) \right)\approx v^{-1}.
\end{equation}
%\begin{equation}\label{p.est2}	|\rd_u p|(u,v), \frac{|\rd_v p|(u,v)}{r|\lambda|(u,v)} \ls v^{-1+2s}.		\end{equation}
Moreover, we have the following estimates on $\rd_x$: \begin{equation}\label{dx.est}
	\rd_x = \frac{1}{2} \left[\frac{\rd_u}{\rd_u x}+\frac{\rd_u}{\rd_v x} \right] = \frac{1}{2[3+|\Psi|^2(u,v)]} \left(-[1+ O( \frac{r^2}{r_0^2(v)}(1+{\log^2}(\frac{r_0(v)}{r})))]L + [1+ O( \frac{r^2}{r_0^2(v)}(1+{\log^2}(\frac{r_0(v)}{r})))] \barL \right).
\end{equation} 
\end{prop}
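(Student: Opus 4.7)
The plan is to verify each claim by a direct calculation starting from the explicit definitions of $\tau$ and $x$, using the sharp asymptotic expansions already established in Proposition~\ref{main.prop.spacelike}, Corollary~\ref{cor.Gamma}, Lemma~\ref{lemma.Psi} and Corollary~\ref{cor.phi.refined}. First, for the asymptotic form \eqref{tau.est}: the identity $\rho-r/2=2r|\nu||\lambda|/\Omega^2$ recasts \eqref{tau.def} as $\tau=\Omega r^2/[\sqrt{r|\nu|\cdot r|\lambda|}(3+|\Psi|^2)]$. One substitutes $r|\nu|=1+O(e^{K_- v/4}r^{90})$ and $r|\lambda|=(r|\lambda|)_{\mathcal S}(v)+O(r^{Dv})$ from Proposition~\ref{S.curve.prop}, and $|\Psi|-|\Psi_{\mathcal S}(v)|$ from \eqref{Psi.diff.est}; the second form of \eqref{tau.est} follows from Corollary~\ref{cor.Gamma}, which gives $\Omega^2 r=[\Gamma_{\mathcal S}(v) r/r_0(v)]^{|\Psi|^2}(1+O(\cdot))$. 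For $x$, the asymptotic $x\approx v^{2-2s}$ is obtained by evaluating the explicit formula \eqref{x.def} at $u=u_{\mathcal S}(v)$ and using $(r|\lambda|)_{\mathcal S}(v')\approx (v')^{-2s}$ and $|\Psi_{\mathcal S}|^2(v')\approx v'$.

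Next, for the lapse estimate \eqref{N.est} and the orthogonality needed for the diagonal form \eqref{kasner.form}, I would logarithmically differentiate \eqref{tau.def}. Combining the Raychaudhuri-type expansions from Corollary~\ref{cor.Gamma} (namely \eqref{du.omega.detailed}, \eqref{dv.omega.detailed}), the estimates $\rd_u(r|\nu|), \rd_u(r|\lambda|)=O(e^{K_-v/4}r^{95})$, and the estimates \eqref{dPsi.main} on $\rd\Psi$, one obtains
\begin{equation*}
\rd_u\log\tau=-\frac{(3+|\Psi|^2)|\nu|}{2r}+O(\varepsilon_1),\qquad \rd_v\log\tau=-\frac{(3+|\Psi|^2)|\lambda|}{2r}+O(\varepsilon_2),
\end{equation*}
with errors controlled by the right hand side of \eqref{N.est} divided by $\tau$. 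Then $N^2=4\rd_u\tau\rd_v\tau/\Omega^2=(3+|\Psi|^2)^2|\nu||\lambda|\tau^2/(r^2\Omega^2)+\mathrm{(lower\ order)}$; plugging in $\tau^2=\Omega^2 r^4/[r|\nu|\cdot r|\lambda|(3+|\Psi|^2)^2]$ from \eqref{tau.def} shows the leading terms cancel to exactly $1$, with the remainder of the size stated in \eqref{N.est}. Orthogonality $g^{\tau x}\propto \rd_u\tau\rd_v x+\rd_v\tau\rd_u x=0$ is verified by differentiating the explicit expression \eqref{x.def}: a direct computation yields $\rd_u x\approx (3+|\Psi|^2)r|\nu|/2$ and $\rd_v x\approx -(3+|\Psi_{\mathcal S}|^2)(r|\lambda|)_{\mathcal S}(v)+(3+|\Psi|^2)r|\lambda|/2$, and the cancellation $r|\lambda|(3+|\Psi|^2)-(r|\lambda|)_{\mathcal S}(3+|\Psi_{\mathcal S}|^2)=O(\cdots)$ from \eqref{Psi.diff.est} and \eqref{rlambdaS.prop} gives the desired cross-term smallness, which contributes only to the error terms in \eqref{N.est}, \eqref{axx}.

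Third, formula \eqref{dx.est} follows from inverting the Jacobian $(\rd_u\tau,\rd_v\tau;\rd_u x,\rd_v x)$ and using the expressions for $\rd_u x, \rd_v x$ derived above. With \eqref{dx.est} in hand, the $a_{xx}$ estimate \eqref{axx} is a direct computation: $a_{xx}=g(\rd_x,\rd_x)=-\Omega^2(\rd_x u)(\rd_x v)$; substituting and using again Corollary~\ref{cor.Gamma}'s $\Omega^2 r$-asymptotics along with $\tau\approx r^{3/2}[\Gamma_{\mathcal S} r/r_0]^{|\Psi|^2/2}v^{s-1}$ gives $a_{xx}\approx \tau^{2(1-2p)}$ once $p$ is identified through \eqref{p.def}. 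Finally, \eqref{p.est} is obtained by taking logarithms: $\log\tau=\tfrac{3+|\Psi|^2}{2}\log r-\tfrac{|\Psi|^2}{2}\log r_0(v)+O(\log v)$, so $p=\log r/\log\tau=\tfrac{2}{3+|\Psi|^2}[1+O(\log v/(v|\log r|))]$, and $|\Psi|^2\approx v$ from \eqref{Psi.lower} yields $p\approx v^{-1}$.

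The main obstacle will be Step~2: verifying that the precise coefficient $(3+|\Psi|^2)^{-1}$ in \eqref{tau.def} produces an \emph{exact} cancellation of the leading-order terms in $N^2-1$. This rests on the compatibility with the Kasner constraint $3p^2-2p+|p_\phi|^2=0$ (with $|p_\phi|=p|\Psi|$), which fixes the $(3+|\Psi|^2)^{-1}$ normalisation uniquely. Tracking the error terms throughout then requires the sharp bounds on $\rd\log\Omega$, $\rd\Psi$ and their commutators from Corollary~\ref{cor.ddX}, Lemma~\ref{lemma.Psi} and Corollary~\ref{cor.Gamma}, whose logarithmic factor $1+\log^2(r_0/r)$ is absorbed using $r\le \varepsilon^2 r_0(v)$ and $x^2\log^2 x\to 0$ as $x\to 0$. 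Secondary care is needed to show that the approximate (rather than exact) orthogonality of $\tau$ and $x$ contributes only to the stated error terms, so that the diagonal Kasner form \eqref{kasner.form} holds to the claimed precision.
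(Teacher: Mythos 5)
Your treatment of $\tau$, $N^2$ and $p$ is essentially the paper's: rewrite \eqref{tau.def} using $\rho-\tfrac r2=\tfrac{2r|\nu|\,r|\lambda|}{\Omega^2}\cdot\tfrac{r}{4}$-type identities, logarithmically differentiate $\tau$ using \eqref{du.omega.detailed}, \eqref{dv.omega.detailed}, \eqref{dPsi.main} to get $\rd_u\tau=-\tfrac{\Omega}{2}\sqrt{\tfrac{r|\nu|}{r|\lambda|}}(1+O(\cdot))$ and its $v$-analogue, whence $N^2=\tfrac{4\rd_u\tau\,\rd_v\tau}{\Omega^2}=1+O(\cdot)$, and read off $p=\log r/\log\tau$. (Your gloss that the normalisation $(3+|\Psi|^2)^{-1}$ is "fixed by the Kasner constraint" is only heuristic; what is actually used is that \eqref{du.omega.detailed} forces $\rd_u\log\tau\approx-\tfrac{3+|\Psi|^2}{2}\tfrac{|\nu|}{r}$, so the prefactor is exactly what makes $4\rd_u\tau\,\rd_v\tau=\Omega^2(1+O(\cdot))$.)

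The genuine gap is your construction of the coordinate $x$. You propose to take the explicit expression \eqref{x.def} as the definition of $x$ and then "verify" orthogonality, arguing that the resulting cross-term smallness "contributes only to the error terms in \eqref{N.est}, \eqref{axx}". This does not work as stated. First, \eqref{x.def} contains an unspecified $O(\cdot)$ factor, so it is an estimate, not a definition; and for any concrete approximate choice (e.g.\ dropping the error term) the identity $g(\nabla\tau,\nabla x)=0$ fails, so the metric in $(\tau,x)$ coordinates acquires a genuine $d\tau\,dx$ component. That component cannot be absorbed into the errors of \eqref{N.est} or \eqref{axx}, because \eqref{kasner.form} is an exact diagonal decomposition — the errors live inside $N^2$ and $a_{xx}$, not as an extra off-diagonal metric coefficient — and removing it would force yet another change of coordinates, i.e.\ redoing the construction. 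The paper instead \emph{defines} $x$ by the exact transport equation $Tx=0$, where $T=\tfrac{\Omega}{\sqrt{r|\nu|\,r|\lambda|}}\bigl[\tfrac{\rd_u}{\rd_u\tau}+\tfrac{\rd_v}{\rd_v\tau}\bigr]$ is parallel to $\nabla\tau$, with data $x(\uS,v)=\int_v^{+\infty}(3+|\Psi_{\mathcal S}|^2)(r|\lambda|)_{\mathcal S}\,dv'$ prescribed on $\mathcal S$; exact orthogonality, hence the exact form \eqref{kasner.form}, is then automatic, and \eqref{x.def}, \eqref{dx.est}, \eqref{axx} are \emph{derived} by commuting $Tx=0$ with $\rd_u$, controlling $[T,\rd_u]$ through the higher-order bounds (e.g.\ \eqref{dudu.omega.detailed}, \eqref{duduPsi}), and running a Gr\"onwall argument along the integral curves of $T$ starting from $\rd_u x(\uS,v)$, $\rd_v x(\uS,v)$ computed via \eqref{duS.eq}. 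Relatedly, your plan to obtain $\rd_u x$, $\rd_v x$ "by differentiating the explicit expression \eqref{x.def}" is not legitimate: the error term there is not controlled in $C^1$, so its derivative cannot be discarded; the commuted-transport/Gr\"onwall step is precisely what replaces it and is the missing ingredient in your proposal.
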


\begin{proof} First, note that \eqref{tau.est} follows directly from a re-organization of the estimates of Corollary~\ref{cor.Gamma} and Lemma~\ref{lemma.Psi}. Then, proceeding to the proof of \eqref{N.est}, we start writing
\begin{equation*}
	-	N^2(u,v):=	g(\nabla \tau,\nabla \tau) = -\frac{4\rd_u \tau \cdot \rd_v \tau}{\Omega^2}.\end{equation*} Then, we compute that

\begin{align*}
	& \rd_u \tau = [  \rd_u \log \Omega-\frac{2|\nu|}{r}-\frac{\rd_u(|\Psi|^2)}{3+|\Psi|^2}-\frac{1}{2}\rd_u\log(r|\nu|)-\frac{1}{2}\rd_u\log(r|\lambda|)] \tau,\\ &\rd_v \tau =[  \rd_v \log \Omega-\frac{2|\lambda|}{r}-\frac{\rd_v(|\Psi|^2)}{3+|\Psi|^2}-\frac{1}{2}\rd_v\log(r|\nu|)-\frac{1}{2}\rd_v\log(r|\lambda|)]\tau.
\end{align*} Now, by Lemma~\ref{lemma.Psi} and Corollary~\ref{cor.Gamma}, we obtain \begin{align}
	&\label{dutau}\rd_u \tau = -\frac{|\Psi|^2+3}{2} \frac{|\nu|}{r}[1+ O( \frac{r^2}{r_0^2(v)}(1+\log^2(\frac{r_0(v)}{r})))]\tau=-\frac{\Omega}{2}[\frac{r|\nu|}{r|\lambda|}]^{\frac{1}{2}}[1+ O( \frac{r^2}{r_0^2(v)}(1+\log^2(\frac{r_0(v)}{r})))],\\  &\label{dvtau}\rd_v \tau = -\frac{|\Psi|^2+3}{2} \frac{|\lambda|}{r}[1+ O( \frac{r^2}{r_0^2(v)}(1+\log^2(\frac{r_0(v)}{r})))]\tau=-\frac{\Omega}{2}[\frac{r|\lambda|}{r|\nu|}]^{\frac{1}{2}}[1+ O( \frac{r^2}{r_0^2(v)}(1+\log^2(\frac{r_0(v)}{r})))],
\end{align} %\red{comment: unfortunately, this looks sharp because of the $\rd_u\log(\Omega)$ involving $X\phi$, there must be a $\log^2$}
which immediately establishes \eqref{N.est}. %\blue{Now we turn to higher derivatives of $\tau$. It will first be useful to compute, similarly to the above: \begin{align*}& \rd_u [\frac{\tau}{r^2}] = [  \rd_u \log \Omega-\frac{\rd_u(|\Psi|^2)}{3+|\Psi|^2}-\frac{1}{2}\rd_u\log(r|\nu|)-\frac{1}{2}\rd_u\log(r|\lambda|)] \frac{\tau}{r^2}=-\frac{|\Psi|^2+1}{2} \frac{|\nu|}{r}[1+ O( \frac{r^2}{r_0^2(v)}(1+\log^2(\frac{r_0(v)}{r})))]\frac{\tau}{r^2},\\ &\rd_v (\frac{\tau}{r^2}) =[  \rd_v \log \Omega-\frac{\rd_v(|\Psi|^2)}{3+|\Psi|^2}-\frac{1}{2}\rd_v\log(r|\nu|)-\frac{1}{2}\rd_v\log(r|\lambda|)]\frac{\tau}{r^2}=-\frac{|\Psi|^2+1}{2} \frac{|\lambda|}{r}[1+ O( \frac{r^2}{r_0^2(v)}(1+\log^2(\frac{r_0(v)}{r})))]\frac{\tau}{r^2}.\end{align*} }
Now, we want to define the variable $x$ which is ``orthogonal'' to $\tau$. Before doing so, note the following general formulae:  \begin{align*}
	& dv = \frac{(\rd_u x) d\tau - (\rd_u \tau) dx}{\rd_v \tau \rd_u x - \rd_v x \rd_u \tau},\\ & du = \frac{-(\rd_v x) d\tau +(\rd_v \tau) dx}{\rd_v \tau \rd_u x - \rd_v x \rd_u \tau},\\ & -du dv = \frac{(\rd_u x) (\rd_v x) d\tau^2 +(\rd_u \tau) (\rd_v \tau) dx^2+(\rd_v \tau \rd_u x + \rd_v x \rd_u \tau) d\tau dx}{\left(\rd_v \tau \rd_u x - \rd_v x \rd_u \tau \right)^2}.
\end{align*} Now, recalling \eqref{rlambdaS.prop} and its notations, we set the initial value for $x(u,v)$ to be \begin{equation}\label{xS.def}
	x(\uS,v) = \int_v^{+\infty}[ 3+|\Psi_{\mathcal{S}}(v)]^2(v') r|\lambda|_{\mathcal{S}}(v') dv' \approx v^{2-2s},
\end{equation}  where the last estimate follows from Lemma~\ref{lemma.Psi}. Then, we will introduce the vector field\begin{equation}\label{T.def}
	T(u,v)  = \frac{\Omega(u,v)}{[r|\nu|(u,v)\cdot r|\lambda|]^{\frac{1}{2}}} \left[\frac{\rd_u }{\rd_u \tau}+ \frac{\rd_v }{\rd_v \tau} \right] =[1+O(\frac{r^2(u,v)}{r_0^2(v)}[1+ \log^2(\frac{r_0(v)}{r(u,v)})])] L + [1+O(\frac{r^2(u,v)}{r_0^2(v)} [1+\log^2(\frac{r_0(v)}{r(u,v)})])] \barL,
\end{equation} where for the second estimate we have used \eqref{dutau} and \eqref{dvtau}. We then define $x(u,v)$ as the solution of \eqref{xS.def} and \begin{equation}\label{Tx}
	Tx=0.
\end{equation} Note that, in view of \eqref{duS.eq}, \eqref{xS.def}, \eqref{T.def} and \eqref{Tx}, that \begin{equation}\label{dxS} \rd_u x (\uS,v)= \frac{3+|\Psi_{\mathcal{S}}|^2(v)}{2}\approx v,\ \rd_v x (\uS,v)= - (r|\lambda|)_{\mathcal{S}}(v) \frac{3+|\Psi_{\mathcal{S}}|^2(v)}{2}\approx -v^{1-2s}.  \end{equation}

Integrating \eqref{Tx} along the integral curves of $T$ using \eqref{xS.def}, we obtain \begin{equation}\label{x}
	\bigl| x(u,v) - x(\uS,v)\bigr| \ls  r^2(u,v)[1+{\log^2}(\frac{r_0(v)}{r(u,v)})] \cdot v \lesssim \frac{r^2(u,v)}{r_0^2(v)}[1+{\log^2}(\frac{r_0(v)}{r(u,v)})] \cdot x(\uS,v).
\end{equation}

%$$ \bigl| L\mathcal{E}+ \barL\mathcal{E} \bigr|\ls [r(u,v)]^2 v^{2s} [1+ \log^2(\frac{r_0(v)}{r(u,v)})] $$

Now, we want to estimate $\rd_u x$ and $\rd_v x$. We will first commute \eqref{Tx} with $\rd_u$ to estimate $\rd_u x$% and then recover an estimate for $\rd_v\mathcal{E}$ using \eqref{TE}
: $$ T(\rd_u x)= [T,\rd_u](x) .$$

To compute $[T,\rd_u]$, note from \eqref{dutau}, \eqref{dvtau} that \begin{align*}
	& [T^{u}]^{-1}=\sqrt{r|\nu|\cdot r|\lambda|} \Omega^{-1} \rd_u \tau= [3+|\Psi|^2]^{-1} \left[ r^2 \rd_u \log(\Omega)- 2|\nu| r -r^2 \frac{\rd_u(|\Psi|^2)}{3+|\Psi|^2}-\frac{r^2}{2} \rd_u\log(r|\nu|)-\frac{r^2}{2} \rd_u\log(r|\lambda|)\right],\\ &[T^{v}]^{-1}=\sqrt{r|\nu|\cdot r|\lambda|} \Omega^{-1} \rd_v \tau= [3+|\Psi|^2]^{-1} \left[ r^2 \rd_v \log(\Omega)- 2|\lambda| r -r^2 \frac{\rd_v(|\Psi|^2)}{3+|\Psi|^2}-\frac{r^2}{2} \rd_v\log(r|\nu|)-\frac{r^2}{2} \rd_v\log(r|\lambda|)\right]. 
\end{align*} 
From the above estimates, particularly those of Lemma~\ref{lemma.Psi} and Corollary~\ref{cor.Gamma}, we obtain \begin{align*}
	& [T^{u}]^{-1} \approx 1,\  |\rd_u([T^{u}]^{-1})|\ls v^{2s-1} [1+{\log^2}(\frac{r_0(v)}{r(u,v)})],\\ &[T^{v}]^{-1} \approx v^{-2s},\ |\rd_u([T^{v}]^{-1})|\ls v^{-1} [1+{\log^2}(\frac{r_0(v)}{r(u,v)})],
\end{align*}  from which we obtain $$ \bigl|T \rd_u x\bigr|\ls  %v^{2s-1} [1+\log(\frac{r_0(v)}{r(u,v)})]\bigl| \rd_u\mathcal{E}+ \rd_u y\bigr|+ v^{4s-1} [1+\log(\frac{r_0(v)}{r(u,v)})]  \bigl| \rd_v\mathcal{E}+ \rd_v y\bigr|\approx 
v^{2s-1} [1+{\log^2}(\frac{r_0(v)}{r(u,v)})]\bigl| \rd_u x\bigr|+ v^{2s-1} [1+{\log^2}(\frac{r_0(v)}{r(u,v)})]  \bigl| L x\bigr|  .$$ Then, we use \eqref{T.def} the definition  of $T$  to express $L$ in terms of $\barL$ and $T$, and in view of \eqref{Tx}, we then obtain $$ \bigl|T(\rd_u x)\bigr|\ls  v^{2s-1} [1+{\log^2}(\frac{r_0(v)}{r(u,v)})]\bigl| \rd_u x\bigr|.$$ Then, we can integrate the above along the integral curves of $T$ as we did above and use Gr\"{o}nwall's inequality (note that $v^{2s-1} [1+\log(\frac{r_0(v)}{r(u,v)})]$ is indeed integrable with respect to the parameter-time of $T$) to obtain, recalling \eqref{dxS}: \begin{equation} \label{dux.est}\bigl|\rd_u x (u,v)-\rd_u x (\uS,v) \bigr| \lesssim   v \cdot \frac{r^2(u,v)}{r_0^2(v)} [1+{\log}^2(\frac{r_0(v)}{r(u,v)})].\end{equation}  Integrating \eqref{dux.est} in $u$ then gives \eqref{x.def}. To obtain the analogous estimate to \eqref{dux.est} on $\rd_v x$, we combine \eqref{T.def}, \eqref{Tx} with \eqref{dux.est}, \eqref{dxS}. From this, we conclude that \begin{equation}\label{dx}
	\rd_u x=  \frac{3+|\Psi|^2(u,v)}{2}r|\nu| [1+O(\frac{r^2(u,v)}{r_0^2(v)} [1+{\log^2}(\frac{r_0(v)}{r(u,v)})])],\ \rd_v x= \frac{-3+|\Psi|^2(u,v)}{2}r |\lambda|[1+O(\frac{r^2(u,v)}{r_0^2(v)} [1+{\log^2}(\frac{r_0(v)}{r(u,v)})])].
\end{equation}

Therefore, we get $$ a_{x x}(u,v) = \frac{\Omega^2 }{4|\rd_u x\rd_vx|}= \frac{\Omega^2}{r|\lambda| \cdot r|\nu| \cdot [3+|\Psi|^2(u,v)]^2} (1+O(\frac{r^2}{r_0^2(v)}[1+{\log^2}(\frac{r_0(v)}{r(u,v)})]),$$ from which \eqref{axx} follows (after invoking \eqref{p.def}). For \eqref{p.est}, note that \eqref{p.def} takes the form $$ p(u,v) = \frac{\log(r(u,v))}{\log(\tau(u,v))}.$$  \eqref{p.est} then follows from \eqref{tau.est} together with the estimates of Proposition~\ref{main.prop.spacelike}. %Then, we turn to \eqref{p.est2}: note that by \eqref{dutau} $$\rd_u \log(p) =\frac{-|\nu|}{r\log(r)}-\frac{\rd_u \tau}{\tau\log(\tau)}=   \frac{-|\nu|}{r\log(r)}-\frac{|\Psi|^2+3}{2} \frac{|\nu|}{r\log(\tau)}[1+ O( \frac{r^2}{r_0^2(v)}(1+\red{\log}(\frac{r_0(v)}{r\log(r)})))]= O( v \frac{1+\red{\log}(\frac{r_0(v)}{r})}{r_0^2(v)}).$$ The estimate for $\rd_v p$ follows similarly, and thus \eqref{p.est2} is proved.

\end{proof}

\begin{rmk}\label{rmk.Weingarten}
Instead of casting the metric in the asymptotic Kasner form \eqref{kasner.form} and defining the (variable) Kasner exponents to be $(1-2p(u,v),p(u,v),p(u,v))$ defined by $r^2(u,v) = \tau^{2p(u,v)}$ as in \eqref{p.def}, one could have instead adopted the Weingarten formalism to define the Kasner exponents (see e.g. \cite{Warren1}). More precisely, in view of $\tau$ being the approximate proper time defined in \eqref{tau.def}, we can introduce an orthonormal frame as such \begin{align*}
	& e_0 = \frac{\nabla \tau}{\sqrt{-g(\nabla\tau,\nabla \tau)}}= \Omega^{-1}\left( \sqrt{\frac{\rd_u \tau}{\rd_v \tau} }\rd_v+ \sqrt{\frac{\rd_v \tau}{\rd_u \tau} }\rd_u\right), \\ &  e_1 = e_0^v  \rd_v - e_0^u \rd_u =\Omega^{-1}\left(\sqrt{ \frac{\rd_u \tau}{\rd_v \tau}} \rd_v- \sqrt{\frac{\rd_v \tau}{\rd_u \tau}} \rd_u\right),\\  &e_3=\frac{1}{r} \rd_{\theta},\ e_4=\frac{1}{r \sin(\theta)} \rd_{\varphi}.
\end{align*}  The Weingarten map is given by  $k_{i j}(u,v)=	g(\nabla_{e_i}e_0,e_j)$, which is a diagonal tensor with eigenvalues $k_{11}(u,v)$, $k_{22}(u,v)$, $k_{33}(u,v)$. One  then alternatively defines the Kasner exponents $p_1(u,v)$, $p_2(u,v)=p_3(u,v)$ as the renormalized eigenvalues, namely: $$ p_I(u,v) = \frac{k_{II}(u,v)}{k_{11}(u,v)+k_{22}(u,v)+k_{33}(u,v)}=\frac{k_{II}(u,v)}{k_{11}(u,v)+2k_{22}(u,v)}.$$ Note that $p_1(u,v)+p_2(u,v)+p_3(u,v)=1$, by definition. As it turns out, our estimates also allow to prove that \begin{equation*}\begin{split}
		&		k_{11}(u,v) = (|\Psi|^2(u,v)-1)  \Omega^{-1}(u,v) r^{-2}(u,v) \sqrt{r|\nu|(u,v) \cdot r|\lambda|(u,v) }(1+ O(\frac{r^2}{r_0^2(v)}[1+\log^2(\frac{r_0(v)}{r(u,v)})])),\\ & k_{22}(u,v) =k_{33}(u,v) =2 \Omega^{-1}(u,v) r^{-2}(u,v) \sqrt{r|\nu|(u,v) \cdot r|\lambda|(u,v) }(1+ O(\frac{r^2}{r_0^2(v)}[1+\log^2(\frac{r_0(v)}{r(u,v)})])).
	\end{split},
\end{equation*} from which we deduce

\begin{equation}\label{k.est}\begin{split}
		p_2(u,v) =p_3(u,v) = \frac{2}{3+|\Psi|^2(u,v)}(1+ O(\frac{r^2}{r_0^2(v)}[1+\log^2(\frac{r_0(v)}{r(u,v)})])),
	\end{split}
\end{equation} 		\eqref{k.est} is consistent with  \eqref{p.est} (which uses another definition of the Kasner exponents).

\end{rmk}

The Kasner asymptotics of Proposition~\ref{kasner.prop} finally conclude the proof of  Statement~\ref{main.thm.II} of Theorem~\ref{main.thm}.

\section{Construction of the initial data}\label{section.data}
In this section, we seek a proof of Statement~\ref{main.thm.III} of Theorem~\ref{main.thm}, which follows from the construction of initial data  such that $\Cin$ is trapped with $r(u,v_0) \rightarrow 0$ as $u\rightarrow u_F$. Note indeed that, in this case, one applies Proposition~\ref{localbreak.prop}, therefore $\uch<u_F$ so the assumptions of Statement~\ref{main.thm.II} of Theorem~\ref{main.thm} are satisfied, and its conclusions  are thus valid.
Hence, we intend to construct bicharacteristic initial data  $\Cin=\{ u_0\leq u< u_F,\ v=v_0\} \cup C_{out} =\{v_0 \leq v <+\infty,\ u=u_0\}$, on which we impose the electromagnetic gauge \begin{equation}\label{A.gauge.data.all}
A_v \equiv 0.
\end{equation}
First, we construct the ingoing cone $\Cin=\{ u_0\leq u< u_F,\ v=v_0\}$ with the gauge choice\footnote{Note that \eqref{gauge.data.u}, \eqref{A.gauge.data.u} chosen for convenience in this section, are slightly different from the gauge choices  of  \eqref{gauge2}, \eqref{A.gauge2}. However,  it is not difficult to switch between these two gauges since we have systematically written gauge-invariant estimates.} \begin{align}\label{gauge.data.u}
&	-r\rd_u r(u,v_0) \equiv 1,\\ & A_u(u,v_0) \equiv 0, \label{A.gauge.data.u}
\end{align} and we arrange $r(u,v_0)\rightarrow 0$ as $u\rightarrow u_F$ by setting $$ u_F = u_0 +\frac{r^2(u_0,v_0)}{2}. $$ Under this condition,  $\Cin$ is an ingoing spherical cone collapsing to a tip $(u_F,v_0)$ with $r(u_F,v_0)=0$ (however, strictly speaking, this tip  $(u_F,v_0)$ does not belong to $\Cin$).

The following proposition is the key to constructing initial data to which Theorem~\ref{main.thm}, Statement~\ref{main.thm.III} applies. The main condition to obtain is that $\Cin$ is trapped, in the sense that $\lambda(u,v_0)<0$, for all $u_0 \leq u < u_F$. While this condition is always satisfied if $m^2=0$, and $Q(u,v_0) \equiv 0$ (the so-called Christodoulou model) by monotonicity of \eqref{Radius3}, it requires specific initial data constructions in general, carried out below. Basically, if $\rho(u_0,v_0)$ is large compared to $|Q|(u_0,v_0)$ and $\phi_{\Cin}$, then $\Cin$ is trapped. In Proposition~\ref{prop.data}, we provide five large classes  of examples under which $\Cin$ is trapped, including (surprisingly) two in which $\phi_{\Cin}$ is allowed to be arbitrarily large. For the benefit of the reader, we already describe in words these five scenarios. In what follows, we denote $\rho_0=\rho(u_0,v_0)$, $Q_0=Q(u_0,v_0)$, and we fix the constant $r_0=r(u_0,v_0)>0$.   \begin{enumerate}[A.]
\item (Large initial Hawking mass) $\rho_0>0$ is large  compared to $|Q_0|$ and $\phi$.
\item (Small  coupling parameters and initial charge) $|Q_0|$, $m^2$ and $|q_0|$ are small compared to $\rho_0$ and $\phi$.
\end{enumerate} The next three examples are formulated under  the assumption of stable scalar-field Kasner asymptotics on $\Cin$, i.e., $\phi =~ \Psi_0 \log(r^{-1})+\tphi$, with a Kasner exponent $|\Psi_0|>1$, and $\tphi = O(1)$ as $r\rightarrow 0 $. 
\begin{enumerate}[A.]\setcounter{enumi}{2}

\item (Small scalar field  and initial charge)  $|Q_0|$ and $\phi$ are small compared to $\Psi_0$ and $\rho_0$.
\item (Large Kasner exponents) $|\Psi_0|$ is large compared to $\rho_0$, $|Q_0|$ and $\tphi$.
\item (Perturbation of exact Kasner asymptotics, small initial charge and small Klein--Gordon mass) $|Q_0|$, $\tphi$ and $m^2$ are small compared to $\rho_0$ and $|\Psi_0|$.
\end{enumerate} These five classes of examples will be stated more precisely and proven in Proposition~\ref{prop.data} below. In light of the discussion at the beginning of this section, Statement~\ref{main.thm.III} of Theorem~\ref{main.thm} follows from Proposition~\ref{prop.data} below.
\begin{prop}\label{prop.data} Under the gauge conditions \eqref{gauge.data.u}, \eqref{A.gauge.data.u} on $\Cin$, we additionally assume
\begin{equation}\label{phi.upper.data}\begin{split}
		&	|\phi|(u,v_0) \lesssim |\log|\left(r^{-1}(u,v_0)\right),\\ & |D_u \phi|(u,v_0) \lesssim r^{-2}(u,v_0).
	\end{split}
\end{equation}
Moreover, we assume that	there exists $\Psi_0 \in \mathbb{C}$ with $|\Psi_0|>1$ such that  \begin{equation}\label{lower.data}
	|\Psi_0|\leq  \liminf_{u \rightarrow u_F}r^2|D_u \phi|(u,v_0).
\end{equation}
Then, the following (gauge-invariant) quantities are finite (re-expressed in the gauge \eqref{gauge.data.u}, \eqref{A.gauge.data.u}) \begin{equation}\label{I.def}\begin{split}
		I(\phi)=  &    \sup_{u_0 \leq u < u_F } \left( r^2(u,v_0)[m^2 r^2 |\phi|^2(u,v_0)-1] +  [Q_0+q_0\int_{u_0}^{u} r^2 \Im(\bar{\phi}D_u\phi)(u',v_0) du' ]^2\right)\\= &    \sup_{u_0 \leq u < u_F } \left( r^2(u,v_0)[m^2 r^2 |\phi|^2(u,v_0)-1] +  [Q_0+q_0\int_{u_0}^{u} r^2 \Im(\bar{\phi}\rd_u\phi)(u',v_0) du' ]^2\right), \end{split}
\end{equation}
%\begin{equation}\label{J.def}\begin{split}		J(\phi)=  &    \sup_{u_0 \leq u < u_F } \left( r^2(u,v_0)[m^2 r^2 |\phi|^2(u,v_0)-1] +  2q_0^2[\int_{u_0}^{u} r^2 \Im(\bar{\phi}D_u\phi)(u',v_0) du' ]^2\right)\\= &    \sup_{u_0 \leq u < u_F } \left( r^2(u,v_0)[m^2 r^2 |\phi|^2(u,v_0)-1] +  2q_0^2 [\int_{u_0}^{u} r^2 \Im(\bar{\phi}\rd_u\phi)(u',v_0) du' ]^2\right),\end{split}\end{equation}
\begin{equation}\label{N.def}
	N_k(\phi)= \int_{u_0}^{u_F} r^{-2}(u,v_0)  \mathcal{F}^k(u) \exp(-\mathcal{F}(u)) |\nu|(u,v_0)  du,%= \int_{u_0}^{u_F} r^{-3}(u,v_0)  \exp(-\int_{u_0}^{u}r^2|\rd_u \phi|^2(u',v_0) du') du.
\end{equation}  for any $k\geq 0$, where we have introduced the notation $\mathcal{F}(u)=\int_{u_0}^{u} \frac{r|D_u \phi|^2(u',v_0)}{|\nu|(u',v_0)}  du'=\int_{u_0}^{u} r^2|\rd_u \phi|^2(u',v_0) du'$. Moreover, introducing the following norm which is critical with respect to scaling \begin{equation}\label{norm.def}
	\| F\|_{crit} = \sqrt{ \int_{u_0}^{u_F} r^{-1} [1+ \log(\frac{r_0}{r})]^{-4} |F|^2(u,v_0)|\nu|(u,v_0) du}.
\end{equation} Then, the following  (gauge-invariant) quantity is finite (re-expressed in the gauge \eqref{gauge.data.u}, \eqref{A.gauge.data.u})
\begin{equation}\label{norm.finite}
	\| \frac{ rD_u\phi}{\nu} \|_{crit} = \sqrt{ \int_{u_0}^{u_F} r  [1+ \log(\frac{r_0}{r})]^{-4} \frac{|D_u \phi|^2(u,v_0)}{|\nu|(u,v_0)} du}=\sqrt{ \int_{u_0}^{u_F} r^2  [1+ \log(\frac{r_0}{r})]^{-4} |\rd_u \phi|^2(u,v_0) du}.
\end{equation}

Denote $\phi_0= \phi(u_0,v_0)$, $\rho_0=\rho(u_0,v_0)$, $r_0=r(u_0,v_0)$, $Q_0=Q(u_0,v_0)$. If the following condition is satisfied \begin{equation}\label{condition1}
	2\rho_0 -r_0 > I(\phi) N_0(\phi),
\end{equation} then $\Cin$ is trapped, in the sense that $\lambda(u,v_0)<0$, for all $u_0 \leq u < u_F$. In particular, \begin{enumerate}[A.] 
	\item \label{example.I} fixing $(\phi(u_0,v),Q_0,r_0)$, there exists $A>0$ sufficiently large, such that if $\rho_0>A$, then \eqref{condition1} holds. Therefore, $\Cin$ is trapped, in the sense that $\lambda(u,v_0)<0$, for all $u_0 \leq u < u_F$.
	% Moreover, if $|Q_0|$ and $\phi_{\Cin}(u,v_0)$ are sufficiently small compared to $\rho_0$, in the following sense \begin{equation}\label{condition2}	2\rho_0 -r_0 > [2Q^2_0+J(\phi) ]N(\phi),\end{equation}  then \eqref{condition1} is satisfied, thus $\Cin$ is trapped, in the sense that $\lambda(u,v_0)<0$, for all $u_0 \leq u < u_F$. 
\end{enumerate} 
Moreover,  fixing  $\rho_0>0$ and $r_0>0$, if $|Q_0|$ and $\phi_{\Cin}(u,v_0)$ are sufficiently small in the following sense \begin{equation}\label{condition2}	2\rho_0 -r_0 > 2\left( Q_0^2+  m^2 r_0^4 |\phi_0|^2 \right) N_0(\phi)  +2 r_0^4 \left(2q_0^2  |\phi_0|^2+  m^2%\exp(-1)
	\right) N_1(\phi)+ 4q_0^2 r_0^4 N_2(\phi),\end{equation}  then \eqref{condition1} is satisfied, thus $\Cin$ is trapped, in the sense that $\lambda(u,v_0)<0$, for all $u_0 \leq u < u_F$.  In particular, 
\begin{enumerate}[A.]\setcounter{enumi}{1}
	\item\label{example.II} fixing $(\phi(u_0,v),r_0,\rho_0)$, there exists  $\delta>0$ sufficiently small, such that if $\max\{|Q_0|, |q_0|, |m^2|\}<\delta$, then $\Cin$ is trapped, in the sense that $\lambda(u,v_0)<0$, for all $u_0 \leq u < u_F$.
\end{enumerate}

We now turn to the construction of more specific examples with Kasner-like asymptotics. %of examples for general (i.e., non-small) parameters $(m^2,q_0)$ in which  $\rho_0>0$ is fixed but  we assume the smallness of the initial charge and the scalar field. For this, 
Assume   $\phi(u,v_0)$ takes the following form for some constant $\Psi_0 \in \mathbb{C}$ with $|\Psi_0|>1$, $\epsilon\in \mathbb{C}$,  arbitrary \begin{equation}\label{Kasner.profile.data}
	\phi(u,v_0) = \Psi_0 \log(  \frac{r_0}{r(u,v_0)})+ \epsilon\ \tilde{\phi}(u),
\end{equation}where   $\tilde{\phi_0}$ is a  complex-valued function  such that $\tilde{\phi}(u)=O(1),\  \rd_u\tilde{\phi}(u)=o\left(r^{-2}(u,v_0)\right) \text{ as } u\rightarrow u_F$, and  denote \begin{equation}\label{phi0.ext}|\tilde{\phi}^{F}| := \underset{u_0 \leq u \leq u_F}{\sup}  |\tilde{\phi}|(u,v_0)<\infty.\end{equation}    %and we introduce the scale-critical norm adapted to the Kasner asymptotics \eqref{Kasner.profile.data}\begin{equation}	\| \phi\|_{K}=\| \tilde{\phi_0}\|_{L^{\infty}(\Cin)}+ 	\| \frac{ rD_u\phi}{\nu} \|_{crit}.\end{equation}

All the following examples will assume $(\rho_0,r_0)$ are fixed and $\phi$ takes the form \eqref{Kasner.profile.data}.  We start with a small data construction as measured in the scale-critical norm \eqref{norm.def}.
\begin{enumerate}[A.]\setcounter{enumi}{2}
	\item \label{example.III} Take  $\ep=1$  (with no loss of generality). Fixing $(r_0,\rho_0,\Psi_0)$ and $\Delta>0$, there exists  $\delta>0$ sufficiently small (depending on $(r_0,\rho_0,\Psi_0,\Delta)$), such that if \begin{equation}\label{condition.exampleIII}
		\max\{|Q_0|,\ |\phi_0|,\  	\| \frac{ rD_u\phi}{\nu} \|_{crit}\}<\delta \text{ and }  |\tilde{\phi}^{F}|\leq\Delta,
	\end{equation} then $\Cin$ is trapped, in the sense that $\lambda(u,v_0)<0$, for all $u_0 \leq u < u_F$. Moreover, for any $(\Psi_0,\Delta)$, there  exists a large class of examples $\phi(u_0,v)$ of  the form  \eqref{Kasner.profile.data}  such that $|\phi_0|,\ \| \frac{ rD_u\phi}{\nu} \|_{crit}<\delta$ and $|\tilde{\phi}^{F}|<\Delta$. %is given assuming with $R_0=\delta^{\frac{1}{2}}$ for $r \leq R_0$, and $|\ep|$ sufficiently small (depending on $\delta$).
\end{enumerate}%Then, we also construct examples for  large scalar-field initial data. % where we assume that \begin{equation}\label{psi0.cond}\tilde{\phi_0}(u)=O(1),\  \rd_u\tilde{\phi_0}(u)=o\left(r^{-2}(u,v_0)\right) \text{ as } u\rightarrow u_F.\end{equation}
We continue with a  large-data construction, fixing $\ep$  and other quantities, but taking $|\Psi_0|$ as large as needed.% More specifically,

%class of triplets $(\Psi_0,\ep,\tilde{\phi_0})$ for which \eqref{condition2} is satisfied,  in the following situations:
\begin{enumerate}[A.]\setcounter{enumi}{3}
	\item \label{example.IV} Fixing $(\tilde{\phi_0},\epsilon, Q_0,\rho_0,r_0)$, there exists $A>0$ sufficiently large, such that if $|\Psi_0|>A$, then  $\Cin$ is trapped, in the sense that $\lambda(u,v_0)<0$, for all $u_0 \leq u < u_F$.%\eqref{condition1} holds.
\end{enumerate}
Finally, we turn to small perturbations of asymptotics of the form \eqref{Kasner.profile.data} (fixing the constant $|\Psi_0|>1$), which are still large-data examples, although they are more restrictive in that they assume $m^2$ to be small (or $m^2=0$).
\begin{enumerate}[A.]\setcounter{enumi}{4}
	\item \label{example.V}Fixing $(\Psi_0,\tilde{\phi_0},\rho_0,r_0)$, there exists $\epsilon_0>0$ sufficiently small, such that if  $\max\{|\ep|,Q_0,m^2\}< \ep_0$, then $\Cin$ is trapped, in the sense that $\lambda(u,v_0)<0$, for all $u_0 \leq u < u_F$. 
\end{enumerate}

\end{prop}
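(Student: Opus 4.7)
} The strategy is to integrate the Raychaudhuri equation \eqref{Radius3} along $\Cin$ and derive a single master lower bound for $-r\lambda(u,v_0)$ from which every claim follows. The gauge \eqref{gauge.data.u} forces $\nu(u,v_0) = -1/r$ so that, combined with the Raychaudhuri equation \eqref{RaychU} (as already used in the proof of Lemma~\ref{lemma.hardy}), one obtains the explicit identity
\begin{equation*}
\Omega^2(u,v_0) = \frac{r_0}{r(u,v_0)} \Omega_0^2\, e^{-\mathcal{F}(u)}.
\end{equation*}
The relation \eqref{murelation} together with $r\nu = -1$ gives $2\rho - r = -4\lambda/\Omega^2$ on $\Cin$, hence $-r_0\lambda_0 = (r_0\Omega_0^2/4)(2\rho_0 - r_0)$. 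Integrating \eqref{Radius3} from $u_0$ to $u$ and substituting the formula for $\Omega^2$ produces the master identity
\begin{equation*}
-r\lambda(u,v_0) = \frac{r_0\Omega_0^2}{4}\left[(2\rho_0 - r_0) + \int_{u_0}^{u} \frac{e^{-\mathcal{F}(u')}\,\Xi(u')}{r(u',v_0)}\, du'\right], \qquad \Xi := 1-\frac{Q^2}{r^2}-m^2 r^2|\phi|^2.
\end{equation*}
Since $\Xi/r = -(r^2[m^2r^2|\phi|^2-1]+Q^2)/r^3$, the pointwise bound $|r^2[m^2r^2|\phi|^2-1]+Q^2| \le I(\phi)$ and the definition of $N_0$ give $-r\lambda \ge (r_0\Omega_0^2/4)[(2\rho_0 - r_0) - I(\phi)N_0(\phi)]$, so \eqref{condition1} implies $\Cin$ is trapped. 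Example \ref{example.I} is then immediate, since $I(\phi)$ and $N_0(\phi)$ depend only on $\phi(\cdot,v_0)$ and $Q_0$, not on $\rho_0$.

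For the sharper condition \eqref{condition2}, I would replace the sup-norm estimate by a pointwise one: Hardy's inequality (cf.\ Lemma~\ref{lemma.hardy}) gives $|\phi|^2(u,v_0) \le 2|\phi_0|^2 + 2\mathcal{F}(u)\log(r_0/r)$ and $|Q-Q_0|^2 \le Cq_0^2 r_0^4(|\phi_0|^2 + \mathcal{F})\mathcal{F}$, while $r^4\log(r_0/r) \lesssim r_0^4$. Together they yield a pointwise bound
\begin{equation*}
r^2[m^2r^2|\phi|^2-1]+Q^2 \lesssim Q_0^2 + m^2 r_0^4|\phi_0|^2 + r_0^4(m^2 + q_0^2|\phi_0|^2)\mathcal{F}(u) + q_0^2 r_0^4 \mathcal{F}^2(u),
\end{equation*}
and integrating against $r^{-3}e^{-\mathcal{F}(u)}du$ bounds the deficit by the RHS of \eqref{condition2}. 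Example \ref{example.II} then follows by sending $|Q_0|, m^2, |q_0| \to 0$ with $\phi(\cdot,v_0)$ fixed, since the $N_k$'s are finite provided $|\Psi_0|>1$ (see next paragraph).

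The three Kasner-type examples hinge on the key consequence of \eqref{lower.data}: near $u_F$ one has $r^2|\rd_u\phi| \geq |\Psi_0|-o(1)$, so $\mathcal{F}(u) \geq (|\Psi_0|-o(1))^2 \log(r_0/r(u,v_0))$, whence $e^{-\mathcal{F}(u)} \lesssim (r/r_0)^{(|\Psi_0|-o(1))^2}$. A direct calculation with $du = -r\,dr$ then gives $N_k(\phi) \lesssim r_0^{-1}(|\Psi_0|^2-1)^{-(k+1)}$ and $\|rD_u\phi/\nu\|_{\mathrm{crit}}^2 < \infty$ (the weight $[1+\log(r_0/r)]^{-4}$ ensures convergence at $r=0$). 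For Example \ref{example.IV}, plugging the Kasner ansatz \eqref{Kasner.profile.data} into the bound of the previous paragraph one sees that the RHS of \eqref{condition2} is $O(|\Psi_0|^{-2})$ uniformly in the other fixed parameters, so for $|\Psi_0|$ large enough \eqref{condition1} is met. For Example \ref{example.V} ($\Psi_0$ fixed, $\ep,Q_0,m^2$ small), the dominant term of $\mathcal{F}$ is the exact Kasner contribution $|\Psi_0|^2\log(r_0/r)$; the perturbative piece is linear in $\ep$ and $m^2$, so the full RHS of \eqref{condition2} is $O(\max\{\ep,Q_0,m^2\})$ and becomes smaller than the fixed $2\rho_0-r_0$.

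The hard part will be Example \ref{example.III}, where smallness is measured in the scale-critical norm $\|rD_u\phi/\nu\|_{\mathrm{crit}}$ that does \emph{not} control $\mathcal{F}$ (or even $|\phi|$) pointwise because of the $[1+\log(r_0/r)]^{-4}$ weight. The plan is to split $\mathcal{F}$ into a Kasner piece (contributing the factor $(r/r_0)^{|\Psi_0|^2}$ after exponentiation, independent of the small parameter $\delta$) and an error piece controlled by $\|rD_u\phi/\nu\|_{\mathrm{crit}}$ via a weighted Cauchy--Schwarz argument; the exponential suppression $e^{-|\Psi_0|^2\log(r_0/r)}$ then absorbs the polynomial losses, and the bound $|\tphi^F|\le \Delta$ handles cross-terms. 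The main technical point is that one cannot linearise naively around the exact Kasner profile (as in Example \ref{example.V}) because $|\tphi|$ is not assumed small; one must instead exploit the strong decay of $e^{-\mathcal{F}}$ to show that each of the three integrals on the RHS of \eqref{condition2} is controlled by $\delta$ times a constant depending only on $(\rho_0,r_0,\Psi_0,\Delta)$.
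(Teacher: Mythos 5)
Your master criterion and its use are exactly the paper's argument: in the gauge \eqref{gauge.data.u}, \eqref{A.gauge.data.u} one integrates \eqref{RaychU} to get $\Omega^2(u,v_0)=\frac{r_0}{r}\Omega_0^2e^{-\mathcal{F}(u)}$, uses \eqref{murelation} to convert $-r_0\lambda_0$ into $\frac{r_0\Omega_0^2}{4}(2\rho_0-r_0)$, and integrates \eqref{Radius3}; your deduction of trapping from \eqref{condition1}, your Hardy-type (Lemma~\ref{lemma.hardy}) passage to \eqref{condition2}, and Examples~\ref{example.I}, \ref{example.II}, \ref{example.IV} all follow the paper's proof. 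One slip worth noting: your claimed bound $N_k(\phi)\ls r_0^{-1}(|\Psi_0|^2-1)^{-(k+1)}$ forgets the growth of $\mathcal{F}^k$ (for the Kasner profile $\mathcal{F}\approx|\Psi_0|^2\log(r_0/r)$, so the correct constant carries a factor like $k!\,|\Psi_0|^{2k}$); the bound as stated is false for large $|\Psi_0|$, although the weaker correct statement $N_k=O_k(|\Psi_0|^{-2})$ (obtained as in the paper via $\mathcal{F}^k e^{-\mathcal{F}/2}\le C_k$) is all that Example~\ref{example.IV} needs. You should also record why $I(\phi)<\infty$, i.e.\ $\sup_{\Cin}|Q|<\infty$, which follows from \eqref{chargeUEinstein} and \eqref{phi.upper.data}.

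The genuine gap is Example~\ref{example.V}. You try to reach it through \eqref{condition2}, but the right-hand side of \eqref{condition2} contains $4q_0^2r_0^4N_2(\phi)$, and in this example neither factor is small: $q_0$ is an arbitrary fixed coupling (the paper emphasizes that $q_0$ can be chosen arbitrarily here), and $N_2(\phi)$ is determined by $\mathcal{F}$, which is dominated by the \emph{fixed} Kasner piece $|\Psi_0|^2\log(r_0/r)$ and does not shrink as $|\ep|,Q_0,m^2\to0$. Hence the right-hand side of \eqref{condition2} is not $O(\max\{|\ep|,Q_0,m^2\})$ and your argument breaks. The paper instead verifies \eqref{condition1} directly: writing $\phi=\Psi_0\log(r_0/r)+\ep\tphi$, the leading self-interaction cancels in the imaginary part, $\Im(\bar{\Psi}_0\Psi_0)=0$, so $\int r^2|\Im(\bar\phi\,\rd_u\phi)|\,du=O(\ep)$; by \eqref{chargeUEinstein} the charge then stays close to $Q_0$, making $I(\phi)$ small while $N_0(\phi)$ is merely bounded. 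This cancellation is the missing idea; the Cauchy--Schwarz/Hardy route behind \eqref{condition2} cannot see it, which is precisely why the paper switches back to \eqref{condition1} for this example.

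Example~\ref{example.III} is also not complete. Your plan is the right one and matches the paper: bound $\mathcal{F}^k(u)\le\|rD_u\phi/\nu\|_{crit}^{2k}[1+\log(r_0/r(u))]^{4k}$ and use $e^{-\mathcal{F}}\le(r/r_0)^{|\Psi_0|^2}e^{2|\Psi_0|(\Delta+|\phi_0|)}$ from the Kasner piece, so that $N_1\ls\delta^2$, $N_2\ls\delta^4$ while $N_0$ is only bounded but is paired in \eqref{condition2} with the small prefactors $Q_0^2,|\phi_0|^2$ --- so, contrary to your phrasing, it is not true that each of the three integrals is $O(\delta)$; the smallness is distributed differently among the three terms. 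More importantly, you do not address the second assertion of Example~\ref{example.III}: the existence of a large class of profiles of the form \eqref{Kasner.profile.data} with $|\Psi_0|>1$ fixed but $|\phi_0|$ and the critical norm \eqref{norm.def} below $\delta$. This is not automatic (the critical norm of the pure Kasner profile over the whole cone is not small); the paper produces such data by taking the exact Kasner profile on $\{r\le\eta\}$, zero on $\{r\ge2\eta\}$, and checking that the $[1+\log(r_0/r)]^{-4}$ weight makes the total critical norm $O(\log^{-1}(\eta^{-1}))$.
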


\begin{proof} 
We want to arrange this construction such that \begin{equation}\label{trapped.condition}
	r\lambda(u,v_0)<0 \text{ for all } u_0 \leq u < u_F.
\end{equation}% If we assume $m^2=0$, in the uncharged case $Q(u,v_0) \equiv 0$,  \eqref{trapped.condition} is satisfied by the monotonicity of \eqref{Radius3}, since $\lambda(u_0,v_0)<0$. In general, r
Re-writing \eqref{RaychU} as $ \rd_u \log(\frac{-\rd_u r}{\Omega^2}) = \frac{r |D_u \phi|^2}{|\nu|} $ and using \eqref{lower.data}, \eqref{gauge.data.u}, \eqref{A.gauge.data.u}  gives, for $u$ close enough to $u_F$  \begin{equation*}
	\Omega^2(u,v_0) %\leq (r\Omega^2)(u_0,v_0)\ 
	\ls r^{-1+|\Psi_0|^2}(u,v_0).
\end{equation*} Therefore, since $|\Psi_0|^2>1$, we have \begin{equation}\label{omega.data2}
	\int_{u_0}^{u_F}r^{-2}\Omega^2(u,v_0)  du <\infty.
\end{equation} 
Moreover, \eqref{phi.upper.data} show that $I(\phi)$ from \eqref{I.def} is well-defined and \eqref{chargeUEinstein}    gives $|\rd_u Q|\lesssim r|\nu|\log(r^{-1})$, thus   $\underset{u_0 \leq u < u_F}{\sup} |Q|(u,v_0)<~\infty$. \eqref{omega.data2} shows that $N_0(\phi)$ from \eqref{N.def} is  finite. In view of the fact (using \eqref{Radius3}) that \begin{equation}\label{lambda.data.u}
	-r\lambda(u,v_0)= 	-r\lambda(u_0,v_0) + \int_{u_0}^{u} r^{-2}\Omega^2  [ r^2 - Q^2 -m^2 r^4 |\phi|^2](u,v_0) du
\end{equation} and by \eqref{omega.data2}, \eqref{phi.upper.data}, and $\underset{u_0 \leq u < u_F}{\sup} |Q|(u,v_0)<~\infty$ shows that $\sup r|\lambda|(u,v_0)<\infty$, and moreover \begin{equation*}
	-r\lambda(u,v_0) \geq  -r\lambda(u_0,v_0) + \frac{\Omega^2(u_0,v_0)}{|\nu|(u_0,v_0)} I(\phi) N_0(\phi)=  -r\lambda(u_0,v_0) \left[ 1-  [2\rho(u_0,v_0)-r(u_0,v_0)]^{-1} I(\phi)N_0(\phi)\right].
\end{equation*} From the above inequality we indeed infer that  Condition~\eqref{condition1}   implies \eqref{trapped.condition}, from which Example~\ref{example.I} follows immediately. Condition~\eqref{condition2} is a bit more subtle, so we come back to \eqref{lambda.data.u}. Recall from the proof of  Lemma~\ref{lemma.hardy} and the notation $\mathcal{F}$ that $$ \Omega^2(u,v_0) =  \frac{\Omega^2(u_0,v_0)}{|\nu|(u_0,v_0)}|\nu|(u,v_0)  \exp(-\mathcal{F}(u))=  \frac{r|\lambda|(u_0,v_0)}{2\rho_0-r_0}\ r^{-1}(u,v_0)\exp(-\mathcal{F}(u)).$$ Now, from the estimates of Lemma~\ref{lemma.hardy} (in particular \eqref{Hardy.phi} and \eqref{Hardy.Q}), we obtain \begin{align*} 
	&	\int_{u_0}^{u_F} r^{-2}\Omega^2 Q^2(u,v_0 ) du \leq \frac{r|\lambda|(u_0,v_0)}{2\rho_0-r_0}\left( 2 Q_0^2 N_0(\phi)+ 4q_0^2 r_0^4 |\phi_0|^2  N_1(\phi)+ 4q_0^2 r_0^4   N_2(\phi)\right),\\ & \int_{u_0}^{u_F} r^{2}\Omega^2 |\phi|^2(u,v_0 ) du \leq 2\frac{r|\lambda|(u_0,v_0)}{2\rho_0-r_0}\ r_0^4\left(|\phi_0|^2 N_0(\phi) + \int_{u_0}^{u_F} \frac{\mathcal{F}(u)}{r^{2}(u,v_0)} \exp(-\mathcal{F}(u)) \frac{r^2(u,v_0)}{r_0^2}\log(\frac{r_0}{r(u,v_0)}) |\nu|(u,v_0)du\right),
\end{align*} and thus Condition~\eqref{condition2} indeed implies \eqref{trapped.condition}, remarking that $x^2 \log(x^{-1}) \leq 1$ for all $x \in (0,1]$. Example~\ref{example.II} follows immediately.

For Example~\ref{example.III}, we want to show that Condition~\ref{condition2} is satisfied. Note that
\begin{align*}  &  N_k(\phi)	\leq  \| \frac{rD_u\phi}{\nu}\|_{crit}^{2k}\int_{u_0}^{u_F} r^{-2}|\nu|[ 1+ \log(\frac{r_0}{r})]^{4k} \exp(-\mathcal{F}(u)) du\\ & \leq   \| \frac{rD_u\phi}{\nu}\|_{crit}^{2k}\ r_0^{-|\Psi_0|^2}\int_{u_0}^{u_F} r^{-2+|\Psi_0|^2}|\nu| [ 1+ \log(\frac{r_0}{r})]^{4k} \exp\left(-2\int_{u_0}^u  \Re(\ep \bar{\Psi}_0\  \rd_u \tilde{\phi})(u',v_0) du' \right) du\\ & \leq   \| \frac{rD_u\phi}{\nu}\|_{crit}^{2k}\ r_0^{-1} \exp\left(2|\ep||\Psi_0| [ |\tilde{\phi}^{F}|+ |\tilde{\phi}|(u_0,v_0)]\right)\underbrace{\int_0^{1} y^{-2+|\Psi_0|^2}[1+\log(y^{-1})]^{4k} dy}_{=J_k(|\Psi_0|^2)} \\ & \leq  \| \frac{rD_u\phi}{\nu}\|_{crit}^{2k}\ r_0^{-1} \exp\left(2|\Psi_0| [ \Delta+ |\phi_0|]\right) J_k(|\Psi_0|^2),  \end{align*} %\begin{align*} 	& \int_{u_0}^{u_I} r^{-2}|\nu| \mathcal{F}^k(u)\exp(-\mathcal{F}(u)) du\leq \int_{u_0}^{u_I} r^{-2}|\nu| du\leq r^{-1}_I,\\ &  \int_{u_I}^{u_F} r^{-2}|\nu| \exp(-\mathcal{F}(u)) du 	\leq  \int_{u_I}^{u_F} r^{-2}|\nu| \exp(-\int_{u_I}^u r^2 |\rd_u \phi|^2(u',v_0) du') du\\ & \leq  r_I^{-|\Psi_0|^2}\int_{u_I}^{u_F} r^{-2+|\Psi_0|^2}|\nu| \exp\left(-2\int_{u_I}^u  \Re(\ep \bar{\Psi}_0\  \rd_u \tilde{\phi})(u',v_0) du' \right) du\leq \frac{r_I^{-1}}{|\Psi_0|^2-1}  \exp(2|\ep||\Psi_0|\| \tilde{\phi}_0\|_{\dot{W}^{1,1}(\Cin)}),  \end{align*} from which we deduce that 

%, note the following (easily-proven using \eqref{gauge.data.u}) Hardy inequalities: \begin{align*}	&  \int_{u_0}^{u_F}|\phi|^2(u,v_0) du \leq r_0^2 |\phi_0|^2 + \int_{u_0}^{u_F} r^4 |\rd_u \phi|^2(u,v_0) du,\\  & r^2 |\phi|^2(u,v_0)\leq 2r_0^2 |\phi_0|^2+3 \int_{u_0}^{u_F} r^4 |\rd_u \phi|^2(u,v_0) du,\\& \int_{u_0}^{u_F}r^2|\phi| |\rd_u \phi|(u,v_0) du \leq \frac{r_0^2 |\phi_0|^2}{2} + \frac{3}{2}\int_{u_0}^{u_F} r^4 |\rd_u \phi|^2(u,v_0) du.\end{align*}

%We introduce the notation $\mathcal{F}(u):= \int^u_{u_0} \frac{r |D_u \phi|^2(u',v_0)}{|\nu|(u',v_0)} du'$ of 

where in the last line we have used $|\ep|\leq 1$ and $|\tilde{\phi}^{F}|\leq \Delta$, which then shows that  \eqref{condition.exampleIII} implies  \eqref{condition2} which, in turn, implies \eqref{trapped.condition}. To conclude the discussion of Example~\ref{example.III}, we must also construct a class of $\phi(u_0,v)$ satisfying \eqref{condition.exampleIII}: this construction will be a generalization of profiles of the following form for some small $\eta>0$: \begin{align*}
	&  \phi(r)= \Psi_0 \log(\frac{r_0}{r}) \text{ for } r \leq \eta,\text{ i.e., } \tphi \equiv 0 \text{ for } r \leq \eta \\ & \phi(r)  \equiv 0 \text{ for } r \geq 2\eta .
\end{align*}
Starting from the form \eqref{Kasner.profile.data} with fixed $|\Psi_0|>1$, $\Delta>0$, we impose  $|\tilde{\phi}^{F}| <\Delta$ and a smallness condition of $\tilde{\phi}$ restricted to $\{r\leq \eta\}$ in the scale-critical norm \eqref{norm.def}, specifically (consistently with  $O(r^{-2})$ asymptotics of $D_u\tphi$) \begin{equation*}
	\| 1_{r \leq \eta}\ \frac{rD_u\tphi}{\nu}\|_{crit}=O(\log^{-\frac{3}{2}}(\eta^{-1})).
\end{equation*}(a particular case would be to impose $\tphi\equiv 0$ for $r\leq \eta$). Defining  $F_K = \Psi_0 \log(\frac{r_0}{r})= \phi-\tphi$, we also have  \begin{equation*}
	\| 1_{r \leq \eta}\ \frac{rD_uF_K}{\nu}\|_{crit}=O(\log^{-\frac{3}{2}}(\eta^{-1})),
\end{equation*} from which we get by triangular inequality  \begin{equation}\label{exIII1}
	\| 1_{r \leq \eta}\ \frac{rD_u\phi}{\nu}\|_{crit}=O(\log^{-\frac{3}{2}}(\eta^{-1})).
\end{equation} Now we impose a smallness condition on $\phi$ on  $\{r\geq 2\eta\}$ in the scale-critical norm \eqref{norm.def}, in the sense that  \begin{equation}\label{exIII3}
	|\phi_0|+\| 1_{r \geq 2\eta}\ \frac{rD_u \phi}{\nu}\|_{crit}=O(\log^{-10}(\eta^{-1})).
\end{equation} (a particular case would be to impose $\phi\equiv 0$ for $r\geq 2\eta$). Note that for $\eta>0$ small enough, $\phi \approx \Psi_0 \log(\eta^{-1})$ when $r=\eta$, hence $\rd_r \phi=\frac{\rd_u \phi}{\nu}$ is at least of size $\eta^{-1} \log(\eta^{-1})$ when $ \eta \leq r\leq 2\eta$, yet due to the definition of \eqref{norm.def}, it is possible to arrange that \begin{equation}\label{exIII4}
	\| 1_{\eta \leq r \leq 2\eta}\ \frac{rD_u \phi}{\nu}\|_{crit}=O(\log^{-1}(\eta^{-1})).
\end{equation} Combining \eqref{exIII1}, \eqref{exIII3}, \eqref{exIII4} gives \begin{equation*}
	|\phi_0|+\|\frac{rD_u \phi}{\nu}\|_{crit}=O(\log^{-1}(\eta^{-1})),
\end{equation*} which is enough to satisfy  \eqref{condition.exampleIII}, taking $\delta=\log^{-1}(\eta^{-1})$ (assuming, of course, that $|Q_0|$ is appropriately small).

For Example~\ref{example.IV}, we want to show that \eqref{condition2} is satisfied by proving that $N_k(\phi)$ can be made arbitrarily small by increasing $|\Psi_0|$. Then, using the inequality $x^k e^{-\frac{x}{2}} \leq C_k$ for some $C_k>0$, we write the key estimate \begin{align*}
	N_k(\phi) \leq &  \int_{u_0}^{u_F}   r^{-2}[\mathcal{F}(u)^{k}\exp(- \frac{\mathcal{F}(u)}{2})]\exp(- \frac{\mathcal{F}(u)}{2})|\nu| du \leq   C_k \int_{u_0}^{u_F}   r^{-2}\exp(- \frac{\mathcal{F}(u)}{2})|\nu| du\\ & \lesssim  \int_{u_0}^{u_F}   r^{-2+\frac{|\Psi_0|^2}{4}}|\nu| du=  O(|\Psi_0|^{-2}),
\end{align*} assuming that $|\Psi_0|$ is sufficiently large, thus \eqref{condition2} is satisfied, which implies \eqref{trapped.condition}.

For Example~\ref{example.V}, we want to show instead that Condition~\ref{condition1} holds. Note that $N_0(\phi)$ obeys a fixed upper bound (in view of $\Psi_0$ and $\tphi$ being fixed and $\ep$ being in a bounded set), so the objective is to make $I(\phi)$ small enough. Let $\ep_0>0$ small, then  we can make $m^2$ and $|Q_0|$ sufficiently small depending on $\ep_0$. Thus, only the term $\int r^2 \Im(\bar{\phi} D_u\phi ) du$ remains in $I(\phi)$: the key is to notice the following cancellation \begin{equation*}
	\int_{u_0}^{u_F}r^2 |\Im(\bar{\phi} D_u\phi )| du \leq  \int_{u_0}^{u_F}r^2 \left(|\Psi_0||\ep|[r^{-2}|\tphi|+ \log(\frac{r_0}{r}) |\rd_u \tphi|]+|\ep|^2 |\tphi \rd_u\tphi | \right) du=O(\ep),
\end{equation*} where the last $O(\ep)$ estimate follows from the fact that $\Psi_0$ and $\tphi$ are fixed. Thus, \eqref{condition2} and the claim of Example~\ref{example.V} are established.

%, it suffices to observe that $[Q+ q_0 \int_{u_0}^{u} r^2 \Im(\bar{\phi} \rd_u \phi) du']^2 \leq 2 Q^2 +2q_0^2 [\int_{u_0}^{u} r^2 \Im(\bar{\phi} \rd_u \phi) du']^2$. To conclude the proof, we must still show that the examples \eqref{example.I}, \eqref{example.II}, \eqref{example.III} satisfy \eqref{condition2}. Example~\eqref{example.I} follows immediately from condition~\eqref{condition2}, in view of the fact that $J(\phi)$ and $N(\phi)$ do not involve $\rho_0$. For Example~\ref{example.II}, the key is to notice the cancellation in $J(\phi)$ for profiles of the form \eqref{Kasner.profile.data}: indeed, 
\end{proof}

\bibliographystyle{plain}
\bibliography{bibliography.bib}

\begin{thebibliography}{10}

\bibitem{DejanAn}
Xinliang An and Dejan Gajic.
\newblock {Curvature Blow-up and Mass Inflation in Spherically Symmetric
  Collapse to a Schwarzschild Black Hole}.
\newblock {\em Arch. Ration. Mech. Anal.}, 247(3):51, 2023.

\bibitem{AnZhang}
Xinliang An and Ruixiang Zhang.
\newblock Polynomial blow-up upper bounds for the {E}instein-scalar field
  system under spherical symmetry.
\newblock {\em Comm. Math. Phys.}, 376(2):1671--1704, 2020.

\bibitem{AAG1}
Yannis Angelopoulos, Stefanos Aretakis, and Dejan Gajic.
\newblock Late-time asymptotics for the wave equation on spherically symmetric,
  stationary spacetimes.
\newblock {\em Adv. Math.}, 323:529--621, 2018.

\bibitem{AAG3}
Yannis Angelopoulos, Stefanos Aretakis, and Dejan Gajic.
\newblock Late-time tails and mode coupling of linear waves on {K}err
  spacetimes.
\newblock {\em Adv. Math.}, 417:108939, 2023.

\bibitem{AAG2}
Yannis Angelopoulos, Stefanos Aretakis, and Dejan Gajic.
\newblock {Price\textquoteright{}s Law and Precise Late-Time Asymptotics for
  Subextremal Reissner\textendash{}Nordstr\"om Black Holes}.
\newblock {\em Ann. Henri Poincar\'{e}}, 24(9):3215--3287, 2023.

\bibitem{BKL1}
Vladimir~A. Belinski, Isaak~M. Khalatnikov, and Evgeny~M. Lifshitz.
\newblock Oscillatory approach to a singular point in relativistic cosmology.
\newblock {\em Soviet Physics Uspekhi}, 13(6):745, 1971.

\bibitem{BKL2}
Vladimir~A. Belinski, Isaak~M. Khalatnikov, and Evgeny~M. Lifshitz.
\newblock A general solution of the {E}instein equations with a time
  singularity.
\newblock {\em Advances in Physics}, 31(6):639--667, 12, 1982.

\bibitem{BurkoKhanna}
Lior Burko and Gaurav Khanna.
\newblock Universality of massive scalar field late-time tails in black-hole
  spacetimes.
\newblock {\em Phys. Rev. D (3)}, 70(4):044018, 8, 2004.

\bibitem{Chesler21}
Paul~M. Chesler.
\newblock Numerical evolution of the interior geometry of charged black holes.
\newblock {\em Gen. Relativity Gravitation}, 53(9):84, 20, 2021.

\bibitem{GHD}
Yvonne Choquet-Bruhat and Robert Geroch.
\newblock Global aspects of the {C}auchy problem in general relativity.
\newblock {\em Comm. Math. Phys.}, 14:329--335, 1969.

\bibitem{Christo1}
Demetrios Christodoulou.
\newblock The formation of black holes and singularities in spherically
  symmetric gravitational collapse.
\newblock {\em Comm. Pure Appl. Math.}, 44(3):339--373, 1991.

\bibitem{Christo2}
Demetrios Christodoulou.
\newblock Bounded variation solutions of the spherically symmetric
  {E}instein-scalar field equations.
\newblock {\em Comm. Pure Appl. Math.}, 46(8):1131--1220, 1993.

\bibitem{Christo.existence}
Demetrios Christodoulou.
\newblock Examples of naked singularity formation in the gravitational collapse
  of a scalar field.
\newblock {\em Ann. of Math.}, 140(3):607--653, 1994.

\bibitem{Christo3}
Demetrios Christodoulou.
\newblock The instability of naked singularities in the gravitational collapse
  of a scalar field.
\newblock {\em Ann. of Math.}, 149(1):183--217, 1999.

\bibitem{ChristoCQG}
Demetrios Christodoulou.
\newblock On the global initial value problem and the issue of singularities.
\newblock {\em Classical Quantum Gravity}, 16(12A):A23--A35, 1999.

\bibitem{ChristoSCC}
Demetrios Christodoulou.
\newblock {\em The formation of black holes in general relativity}.
\newblock EMS Monographs in Mathematics. European Mathematical Society (EMS),
  Z\"{u}rich, 2009.

\bibitem{MihalisPHD}
Mihalis Dafermos.
\newblock Stability and instability of the {C}auchy horizon for the spherically
  symmetric {E}instein-{M}axwell-scalar field equations.
\newblock {\em Ann. of Math.}, 158(3):875--928, 2003.

\bibitem{Dafermos:2004jp}
Mihalis Dafermos.
\newblock {Price's law, mass inflation, and strong cosmic censorship}.
\newblock In {\em {7th Hungarian Relativity Workshop (RW 2003)}}, pages 79--90,
  2004.

\bibitem{Dafermos.Wheeler}
Mihalis Dafermos.
\newblock Stability and instability of the {R}eissner-{N}ordstr\"{o}m {C}auchy
  horizon and the problem of uniqueness in general relativity.
\newblock In {\em Noncompact problems at the intersection of geometry,
  analysis, and topology}, volume 350 of {\em Contemp. Math.}, pages 99--113.
  Amer. Math. Soc., Providence, RI, 2004.

\bibitem{Mihalis1}
Mihalis Dafermos.
\newblock The interior of charged black holes and the problem of uniqueness in
  general relativity.
\newblock {\em Comm. Pure Appl. Math.}, 58(4):445--504, 2005.

\bibitem{MihalisSStrapped}
Mihalis Dafermos.
\newblock Spherically symmetric spacetimes with a trapped surface.
\newblock {\em Classical Quantum Gravity}, 22(11):2221--2232, 2005.

\bibitem{nospacelike}
Mihalis Dafermos.
\newblock Black holes without spacelike singularities.
\newblock {\em Comm. Math. Phys.}, 332(2):729--757, 2014.

\bibitem{MihalisICM}
Mihalis Dafermos.
\newblock The mathematical analysis of black holes in general relativity.
\newblock In {\em Proceedings of the {I}nternational {C}ongress of
  {M}athematicians---{S}eoul 2014. {V}ol. {III}}, pages 747--772. Kyung Moon
  Sa, Seoul, 2014.

\bibitem{KerrStab}
Mihalis Dafermos and Jonathan Luk.
\newblock The interior of dynamical vacuum black holes {I}: {T}he
  {$C^0$}-stability of the {K}err {C}auchy horizon.
\newblock {\em Ann. of Math.}, 202(2):309--630, 2025.

\bibitem{PriceLaw}
Mihalis Dafermos and Igor Rodnianski.
\newblock A proof of {P}rice's law for the collapse of a self-gravitating
  scalar field.
\newblock {\em Invent. Math.}, 162(2):381--457, 2005.

\bibitem{claylecturenotes}
Mihalis Dafermos and Igor Rodnianski.
\newblock Lectures on black holes and linear waves.
\newblock In {\em Evolution equations}, volume~17 of {\em Clay Math. Proc.},
  pages 97--205. Amer. Math. Soc., Providence, RI, 2013.

\bibitem{blueyakovM}
Mihalis Dafermos and Yakov Shlapentokh-Rothman.
\newblock Time-translation invariance of scattering maps and blue-shift
  instabilities on {K}err black hole spacetimes.
\newblock {\em Comm. Math. Phys.}, 350(3):985--1016, 2017.

\bibitem{MGHD}
Yvonne Four\`es-Bruhat.
\newblock Th\'{e}or\`eme d'existence pour certains syst\`emes d'\'{e}quations
  aux d\'{e}riv\'{e}es partielles non lin\'{e}aires.
\newblock {\em Acta Math.}, 88:141--225, 1952.

\bibitem{FournodavlosLuk}
Grigorios Fournodavlos and Jonathan Luk.
\newblock Asymptotically {K}asner-like singularities.
\newblock {\em Amer. J. Math.}, 145(4):1183--1272, 2023.

\bibitem{FournodavlosRodnianskiSpeck}
Grigorios Fournodavlos, Igor Rodnianski, and Jared Speck.
\newblock Stable big bang formation for {E}instein's equations: the complete
  sub-critical regime.
\newblock {\em J. Amer. Math. Soc.}, 36(3):827--916, 2023.

\bibitem{JanGreg}
Grigorios Fournodavlos and Jan Sbierski.
\newblock Generic blow-up results for the wave equation in the interior of a
  {S}chwarzschild black hole.
\newblock {\em Arch. Ration. Mech. Anal.}, 235(2):927--971, 2020.

\bibitem{MoiDejan}
Dejan Gajic and Maxime Van~de Moortel.
\newblock Late-time tails for scale-invariant wave equations with a potential
  and the near-horizon geometry of null infinity.
\newblock \emph{arXiv:2401.13047, preprint}, 2024.

\bibitem{GerochHorowitz1979}
Robert Geroch and Gary Horowitz.
\newblock {The structure of {S}pacetime}.
\newblock In {\em General Relativity: An Einstein Centenary Survey}, pages
  212--293. 1980.

\bibitem{Hawking}
Stephen~W. Hawking and George~F.R. Ellis.
\newblock {\em The large scale structure of space-time}.
\newblock Cambridge University Press, London-New York, 1973.

\bibitem{Hintz}
Peter Hintz.
\newblock A sharp version of {P}rice’s law for wave decay on asymptotically
  flat spacetimes.
\newblock {\em Comm. Math. Phys.}, 389:491–542, 2022.

\bibitem{HodPiran1}
Shahar Hod and Tsvi Piran.
\newblock Late-time evolution of charged gravitational collapse and decay of
  charged scalar hair. {II}.
\newblock {\em Phys. Rev. D (3)}, 58(2):024018, 6, 1998.

\bibitem{HodPiran2}
Shahar Hod and Tsvi Piran.
\newblock Mass inflation in dynamical gravitational collapse of a charged
  scalar field.
\newblock {\em Phys. Rev. Lett.}, 81:1554--1557, 1998.

\bibitem{Kasner}
Edward Kasner.
\newblock Geometrical theorems on {E}instein's cosmological equations.
\newblock {\em Amer. J. Math. {\bf 43}, 217--221}, 1921.

\bibitem{MoiChristoph}
Christoph Kehle and Maxime Van~de Moortel.
\newblock Strong {C}osmic {C}ensorship in the presence of matter: the decisive
  effect of horizon oscillations on the black hole interior geometry.
\newblock {\em Anal. PDE}, 17(5):1501–1592, 2024.

\bibitem{MR0156674}
Roy~P. Kerr.
\newblock Gravitational field of a spinning mass as an example of algebraically
  special metrics.
\newblock {\em Phys. Rev. Lett.}, 11:237--238, 1963.

\bibitem{Kommemi}
Jonathan Kommemi.
\newblock The global structure of spherically symmetric charged scalar field
  spacetimes.
\newblock {\em Comm. Math. Phys.}, 323(1):35--106, 2013.

\bibitem{KoyamaTomimatsu}
Hiroko Koyama and Akira Tomimatsu.
\newblock Asymptotic power-law tails of massive scalar fields in a
  {R}eissner--{N}ordstr\"om background.
\newblock {\em Phys. Rev. D}, 63:064032, 2001.

\bibitem{KoyamaTomimatsu3}
Hiroko Koyama and Akira Tomimatsu.
\newblock Asymptotic tails of massive scalar fields in a {S}chwarzschild
  background.
\newblock {\em Phys. Rev. D}, 64:044014, 2001.

\bibitem{KoyamaTomimatsu2}
Hiroko Koyama and Akira Tomimatsu.
\newblock Slowly decaying tails of massive scalar fields in spherically
  symmetric spacetimes.
\newblock {\em Phys. Rev. D}, 65:084031, 2002.

\bibitem{Warreninv}
Warren Li.
\newblock {BKL} bounces outside homogeneity: {E}instein-{M}axwell-scalar field
  in surface symmetry.
\newblock \emph{arXiv:2408.12434, preprint}, 2024.

\bibitem{Warreninv2}
Warren Li.
\newblock {BKL} bounces outside homogeneity: {G}owdy symmetric spacetimes.
\newblock \emph{arXiv:2408.12427, preprint}, 2024.

\bibitem{Warrendeg}
Warren Li.
\newblock Scattering towards the singularity for the wave equation and the
  linearized {E}instein-scalar field system in {K}asner spacetimes.
\newblock \emph{arXiv:2401.08437, preprint}, 2024.

\bibitem{Warren1}
Warren Li.
\newblock Kasner-{L}ike {D}escription of {S}pacelike {S}ingularities in
  {S}pherically {S}ymmetric {S}pacetimes with {S}calar {M}atter.
\newblock {\em Ann. PDE}, 11(1):5, 2025.

\bibitem{fluctuating}
Warren Li and Maxime Van~de Moortel.
\newblock Kasner {B}ounces and {F}luctuating {C}ollapse {I}nside {H}airy
  {B}lack {H}oles with {C}harged {M}atter.
\newblock {\em Ann. PDE}, 11(1):3, 2025.

\bibitem{JonathanICM}
Jonathan Luk.
\newblock Singularities in general relativity.
\newblock In {\em I{CM}---{I}nternational {C}ongress of {M}athematicians.
  {V}ol. 5. {S}ections 9--11}, pages 4120--4141. EMS Press, Berlin, 2023.

\bibitem{JonathanStab}
Jonathan Luk and Sung-Jin Oh.
\newblock Strong cosmic censorship in spherical symmetry for two-ended
  asymptotically flat initial data {I}. {T}he interior of the black hole
  region.
\newblock {\em Ann. of Math.}, 190(1):1--111, 2019.

\bibitem{JonathanStabExt}
Jonathan Luk and Sung-Jin Oh.
\newblock Strong cosmic censorship in spherical symmetry for two-ended
  asymptotically flat initial data {II}: the exterior of the black hole region.
\newblock {\em Ann. PDE}, 5(1):6, 194, 2019.

\bibitem{twotails}
Jonathan Luk and Sung-Jin Oh.
\newblock Late time tail of waves on dynamic asymptotically flat spacetimes of
  odd space dimensions.
\newblock \emph{arXiv:2504.12370, preprint}, 2024.

\bibitem{KerrInstab}
Jonathan Luk and Jan Sbierski.
\newblock Instability results for the wave equation in the interior of {K}err
  black holes.
\newblock {\em J. Funct. Anal.}, 271(7):1948--1995, 2016.

\bibitem{Tataru2}
Jason Metcalfe, Daniel Tataru, and Mihai Tohaneanu.
\newblock Price's law on nonstationary space-times.
\newblock {\em Adv. Math.}, 230(3):995--1028, 2012.

\bibitem{mixmaster}
Charles~W. Misner.
\newblock Mixmaster universe.
\newblock {\em Phys. Rev. Lett.}, 22:1071--1074, May 1969.

\bibitem{gravitation}
Charles~W. Misner, Kip~S. Thorne, and John~Archibald Wheeler.
\newblock {\em Gravitation}.
\newblock W. H. Freeman and Co., San Francisco, CA, 1973.

\bibitem{OppenheimerSnyder}
J.~Robert Oppenheimer and Hartland Snyder.
\newblock On continued gravitational contraction.
\newblock {\em Phys. Rev.}, 56(5):455--459, 1939.

\bibitem{Ringstrom23}
Hans Oude~Groeniger, Oliver Petersen, and Hans Ringstr\"om.
\newblock Formation of quiescent big bang singularities.
\newblock \emph{arxiv:2309.11370, preprint}, 2023.

\bibitem{KGSchw1}
Federico Pasqualotto, Yakov Shlapentokh-Rothman, and Maxime Van~de Moortel.
\newblock The asymptotics of massive fields on stationary spherically symmetric
  black holes for all angular momenta.
\newblock \emph{arXiv:2303.17767, preprint}, 2023.

\bibitem{Penroseblue}
Roger Penrose.
\newblock {Structure of space-time}.
\newblock In {\em {Battelle Rencontres}}, pages 121--235, 1968.

\bibitem{PenroseSCC}
Roger Penrose.
\newblock Gravitational collapse: The role of general relativity.
\newblock {\em Nuovo Cimento}, 1:252, 1969.

\bibitem{penrose1974gravitational}
Roger Penrose.
\newblock Gravitational collapse.
\newblock In C{\'e}cile DeWitt-Morette, editor, {\em Gravitational Radiation
  and Gravitational Collapse}, volume 64 of IAU Symposium, pages 82--91.
  Springer, 1974.

\bibitem{Penrose1979}
Roger Penrose.
\newblock {Singularities and time asymmetry}.
\newblock In {\em General Relativity: An Einstein Centenary Survey}, pages
  581--638. 1980.

\bibitem{Pricepaper}
Richard~H. Price.
\newblock Nonspherical perturbations of relativistic gravitational collapse.
  {I}. {S}calar and gravitational perturbations.
\newblock {\em Phys. Rev. D (3)}, 5:2419--2438, 1972.

\bibitem{ringstrombianchi}
Hans Ringstr{\"o}m.
\newblock The {B}ianchi {IX} attractor.
\newblock {\em Ann. Henri Poincar{\'e}}, 2(3):405--500, 2001.

\bibitem{JanC0}
Jan Sbierski.
\newblock {The $C^0$-inextendibility of the Schwarzschild spacetime and the
  spacelike diameter in Lorentzian geometry}.
\newblock {\em Journal of Differential Geometry}, 108(2):319 -- 378, 2018.

\bibitem{JanC1}
Jan Sbierski.
\newblock On holonomy singularities in general relativity and the {$C_{\rm
  loc}^{0,1}$}-inextendibility of space-times.
\newblock {\em Duke Math. J.}, 171(14):2881--2942, 2022.

\bibitem{SbierskiTeukolsky}
Jan Sbierski.
\newblock Instability of the {K}err {C}auchy horizon under linearised
  gravitational perturbations.
\newblock {\em Ann. PDE}, 9(1):7, 133, 2023.

\bibitem{Penrose}
Michael Simpson and Roger Penrose.
\newblock {Internal instability in a Reissner--Nordstr{\"o}m black hole}.
\newblock {\em Int. J. Theor. Phys.}, 7:183--197, 1973.

\bibitem{Tataru}
Daniel Tataru.
\newblock Local decay of waves on asymptotically flat stationary space-times.
\newblock {\em Amer. J. Math.}, 130(3):571--634, 2008.

\bibitem{Moi}
Maxime Van~de Moortel.
\newblock Stability and instability of the sub-extremal
  {R}eissner-{N}ordstr\"{o}m black hole interior for the
  {E}instein-{M}axwell-{K}lein-{G}ordon equations in spherical symmetry.
\newblock {\em Comm. Math. Phys.}, 360(1):103--168, 2018.

\bibitem{Moi4}
Maxime Van~de Moortel.
\newblock Mass inflation and the {$C^2$}-inextendibility of spherically
  symmetric charged scalar field dynamical black holes.
\newblock {\em Comm. Math. Phys.}, 382(2):1263--1341, 2021.

\bibitem{Moi2}
Maxime Van~de Moortel.
\newblock Decay of weakly charged solutions for the spherically symmetric
  {M}axwell-charged-scalar-field equations on a {R}eissner-{N}ordstr\"{o}m
  exterior space-time.
\newblock {\em Ann. Sci. \'{E}c. Norm. Sup\'{e}r. (4)}, 55(2):283--404, 2022.

\bibitem{breakdown}
Maxime Van~de Moortel.
\newblock The breakdown of weak null singularities inside black holes.
\newblock {\em Duke Math. J.}, 172(15):2957--3012, 2023.

\bibitem{violent}
Maxime Van~de Moortel.
\newblock Violent {N}onlinear {C}ollapse in the {I}nterior of {C}harged {H}airy
  {B}lack {H}oles.
\newblock {\em Arch. Ration. Mech. Anal.}, 248, 2024.

\bibitem{bif2}
Maxime Van~de Moortel.
\newblock Asymptotically flat black holes with a singular {C}auchy horizon and
  a spacelike singularity.
\newblock \emph{arXiv:2510.07431, preprint}, 2025.

\bibitem{review}
Maxime Van~de Moortel.
\newblock The {S}trong {C}osmic {C}ensorship {C}onjecture.
\newblock {\em Recent advances in general relativity: an issue in memory of
  Yvonne Choquet-Bruhat. Comptes Rendus. Mécanique}, 353:415--454, 2025.

\end{thebibliography}

\end{document}